\def\submissiontype{0}			%
\newif\ifShowComments			%
\newif\ifEprint					%
\newif\ifSubmission				%
\newif\ifCameraReady			%
\newif\ifSupplementaryMaterial	%
\newif\ifTightOnSpace			%
\definecolor{linkcolor}{rgb}{0.65,0,0}
\definecolor{citecolor}{rgb}{0,0.65,0}
\definecolor{urlcolor}{rgb}{0,0,0.65}
		\crefname{appendix}{Suppl. Mat.}{Supplementary Material} 
		\crefname{appendix}{Supplementary Material}{Supplementary Material} 
\crefname{theorem}{Thm.}{Thms.}
\crefname{lemma}{Lem.}{Lems.}
\crefname{corollary}{Cor.}{Cors.}
\crefname{section}{Sect.}{Sect.}
\newcommand{\eps}{\varepsilon}
\newcommand{\A}{\ensuremath{\Ad{A}}\xspace}
\newcommand{\id}{\ensuremath{\textrm{id}}\xspace}
\newcommand{\todo}[1]{\ifShowComments{\textcolor{red}{TODO: #1}}\fi}
\def\subheading#1{\medskip\noindent{\boldmath\textbf{#1}}~\ignorespaces}
\definecolor{dgreen}{rgb}{.1,.5,.1}
\newcommand{\cm}[1]{\ifShowComments{\textcolor{dgreen}{[CM: #1]}}\fi}
\newcommand{\tinka}[1]{\ifShowComments{\textcolor{cyan}{[Tinka: #1]}}\fi}
\newcommand{\proj}[1]{\ket{#1}\!\!\bra{#1}}
\newcommand{\Tr}{\mathrm{Tr}}
\DeclareMathOperator*{\E}{\mathbb E}
\newcommand\bits{\ensuremath{\{0,1\}}\xspace}					%
\newcommand{\uni}{\ensuremath{\leftarrow_\$}\xspace}			%
\newcommand{\bool}[1]{\ensuremath{\llbracket #1\rrbracket}\xspace}	\DeclareMathOperator\supp{supp}									%
\DeclareMathOperator*{\Exp}{\E}
\newcommand{\Adv}{\ensuremath{\mathrm{Adv}}\xspace}				%
\newcommand{\Ad}[1]{\ensuremath{\mathcal{#1}}\xspace}			%
\newcommand{\Time}{\textnormal{Time}}							%
\newcommand{\QMem}{\textnormal{QMem}}							%
\newcommand{\state}{\textnormal{st}}
\newcommand{\inputVar}{\ensuremath{inp}\xspace}
\newcommand{\mathsc}[1]{\text{\textsc{#1}}}
\newcommand{\heading}[1]{{\vspace{1ex}\noindent\sc{#1}}}
\mathchardef\ordinarycolon\mathcode`\:				%
\newcommand{\FIND}{\ensuremath{\mathsc{FIND} }\xspace}
\newcommand{\FINDshort}{\ensuremath{\mathsc{F} }\xspace}
\newcommand{\EVENTEXT}{\ensuremath{\mathsc{EXT} }\xspace}
\newcommand{\EVENTEXTshort}{\ensuremath{\mathsc{EXT} }\xspace}
\newcommand{\EVENT}{\ensuremath{\mathsc{E} }\xspace}
\newcommand{\RO}[1]{\ensuremath{\mathsf{{#1}}}\xspace}				%
\newcommand{\pcif}{\textbf{if }}
\newcommand{\pcelse}{\textbf{else }}
\newcommand{\pcand}{\textbf{and }}
\newcommand{\pcor}{\textbf{or }}
\newcommand{\pcreturn}{\textbf{return }}
\newcommand{\gcom}[1]{\hfill $\sslash$#1}				%
\newcommand{\before}[1]{\the\numexpr\value{#1}-1\relax}	%
\newcommand{\gameDist}[2]{|\Pr[G_{\before{#1}}^{\Ad{#2}}=1] - \Pr[G_{\the\numexpr\value{#1}\relax}^{\Ad{#2}}=1]|}		%
\newcommand{\game}[1]{Game G\ensuremath{_{#1}}\xspace}
\newcommand{\gameseq}[2]{Games\ \ensuremath{G_{#1}-G_{#2}}\xspace}
\newcommand{\bfgame}[1]{\ensuremath{\mathbf{Game\ G_{#1}}}\xspace}
\newcommand{\bfgameseq}[2]{\ensuremath{\mathbf{Games\ G_{#1}-G_{#2}}}\xspace}
\newcommand{\RightarrowROM}{\ensuremath{\stackrel{\mathrm{\scalebox{.6}{ROM}}}{\Rightarrow}}\xspace}%
\newcommand{\Measure}{\mathsc{Measure}}												%
\newcommand{\RightarrowaugQROM}[1]{\ensuremath{\stackrel{%
		\mathrm{\scalebox{.6}{\augQROM{#1}}}}{\Rightarrow}}\xspace}					%
\newcommand{\GDPB}{\ensuremath{\mathsf{GDPB}}\xspace}
\newcommand{\augQROM}[1]{\ensuremath{\mathrm{eQROM}_{#1}}\xspace}
\newcommand{\augQRO}[1]{\ensuremath{\mathrm{eQRO}_{#1}}\xspace}
\newcommand{\augmentedORextended}{extended\xspace}
\def\SupOr{\ensuremath{\mathsf{eCO}}\xspace}
\def\SE{\ensuremath{\SupOr.\mathsf{E}}\xspace}
\def\SRO{\ensuremath{\SupOr.\mathsf{RO}}\xspace}
\def\OrUnit{\ensuremath{O_{XYD}}\xspace}
\newcommand{\orSemiClassical}[1]{\ensuremath{O^{\textsf{SC}}_{#1}}\xspace}					%
\def\OWTH{\ensuremath{\mathsf{OWTH}}\xspace}												%
\def\Extractor{\ensuremath{\mathsf{ExtractSet}}\xspace}										%
\def\qExtract{\ensuremath{ {q_\mathrm{E} } }\xspace}										%
\def\GenInput{\ensuremath{\mathsf{GenInp}}\xspace}
\newcommand{\puncture}[2]{\ensuremath{#1{\setminus}#2}\xspace}								%
\newcommand{\reproSet}{\ensuremath{\mathcal{S}}\xspace}										%
\newcommand{\extractionSet}{\ensuremath{\mathcal{S'}}\xspace}								%
\newcommand{\commitmentSet}{\ensuremath{\mathcal{T}}\xspace}								%
\def\SupOrNotProg{\ensuremath{\SupOr^0}\xspace}												%
\def\SRONotProg{\ensuremath{\SupOr.\!\!{}^0\mathsf{RO}}\xspace}
\def\SENotProg{\ensuremath{\SupOr.\!\!{}^0\mathsf{E}}\xspace}
\def\SupOrProg{\ensuremath{\SupOr^1}\xspace}												%
\def\SEProg{\ensuremath{\SupOr.\!\!{}^1\mathsf{E}}\xspace}
\def\logRegForFind{\ensuremath{ {L_{F} } }\xspace}											%
\newcommand{\unitaryForFind}[1]{\ensuremath{ {U_S^{#1}}  }\xspace}									%
\def\logRegForExtract{\ensuremath{ {L_{E} } }\xspace}										%
\newcommand{\unitaryForExtract}[1]{\ensuremath{ {U_{f}^{#1} } }\xspace}						%
\newcommand{\unitaryForExtractX}[2]{\ensuremath{ {U_{f, #2}^{#1} } }\xspace}				%
\def\MeasureLogs{\ensuremath{\mathcal{E}_{\logRegForFind, \logRegForExtract}}\xspace}		%
\def\FinalStateAD{\ensuremath{\rho_0'}\xspace}
\def\FinalStateNotPuncWithExtract{\ensuremath{\rho_0''}\xspace} 							%
\def\FinalStateNotPuncWithoutExtract{\ensuremath{\rho_0}\xspace}							%
\def\FinalStatePuncWithExtract{\ensuremath{\rho_1''}\xspace}								%
\def\FinalStatePuncWithoutExtract{\ensuremath{\ensuremath{\rho_1}}\xspace}					%
\def\instanceOWTH{\ensuremath{ins}\xspace}
\def\GenInstance{\ensuremath{\mathsf{GenInst}}\xspace}
\def\FinalStatePureNotPunc{\ensuremath{\phi_0}\xspace}
\def\FinalStatePureNotPuncIndexed{\ensuremath{\FinalStatePureNotPunc^{\instanceOWTH}}\xspace}
\def\FinalStatePureKeepTrack{\ensuremath{\phi_1}\xspace}
\def\FinalStatePureKeepTrackIndexed{\ensuremath{\FinalStatePureKeepTrack^{\instanceOWTH}}\xspace}
\def\InitialState{\ensuremath{\ket{\Phi^{(0)}}\xspace}}
\def\InbetweenState{\ensuremath{\Phi}\xspace}
\newcommand{\InbetweenStateNotPunc}[1]{\ensuremath{\ket{\InbetweenState_0^{#1}}}\xspace}
\newcommand{\InbetweenStateKeepTrack}[1]{\ensuremath{\ket{\InbetweenState_1^{#1}}}\xspace}
\newcommand{\bracket}[2]{\left\langle#1|#2\right\rangle}
\newcommand{\CNOT}{\mathsf{CNOT}}
\newcommand{\EventDecapsSimDiffers}{\ensuremath{ \mathsf{DIFF}}\xspace }
\newcommand{\EventGuessedCT}{\ensuremath{ \mathsf{GUESS}}\xspace }
\newcommand{\ListQueriesG}{\ensuremath{ \mathcal{L}_{\RO{G}}}\xspace }
\newcommand{\ListFail}{\ensuremath{ \mathcal{L}_{\mathsc{FAIL}}}\xspace }
\newcommand{\numberDecQueries}{\ensuremath{ q_{\mathsf{D}}}\xspace }
\newcommand{\numberDecapsQueries}{\ensuremath{ q_{\mathsf{D}}}\xspace }
\newcommand{\numberDecOrDecapsQueries}{\ensuremath{ q_{D}}\xspace }
\newcommand{\PKE}{\ensuremath{\mathsf{PKE}}\xspace}
\newcommand{\KG}{\ensuremath{\mathsf{KG}}\xspace}
\newcommand{\pk}{\ensuremath{\mathit{pk}}\xspace}
\newcommand{\sk}{\ensuremath{\mathit{sk}}\xspace}
\newcommand{\Encrypt}{\ensuremath{\mathsf{Enc}}\xspace}
\newcommand{\Decrypt}{\ensuremath{\mathsf{Dec}}\xspace}
\newcommand{\MSpace}{\ensuremath{\mathcal{M}}\xspace}
\newcommand{\RSpace}{\ensuremath{\mathcal{R}}\xspace}
\newcommand{\CSpace}{\ensuremath{\mathcal{C}}\xspace}
\newcommand{\KeySpace}{\ensuremath{\mathcal{K}}\xspace}
\newcommand{\Tone}{\ensuremath{\mathsf{T}}\xspace}
\newcommand{\notationDerand}[1]{{#1}^{\RO{G}}}
\newcommand{\PKEDerand}{\ensuremath{ \notationDerand{\PKE}}\xspace}
\newcommand{\EncryptDerand}{\ensuremath{\notationDerand{\Encrypt}}\xspace}
\newcommand{\DecryptDerand}{\ensuremath{\notationDerand{\Decrypt}}\xspace}
\newcommand{\KEM}{\ensuremath{\mathsf{KEM}}\xspace}
\newcommand{\KemGen}{\ensuremath{\KG}\xspace}
\newcommand{\Encaps}{\ensuremath{\mathsf{Encaps}}\xspace}
\newcommand{\Decaps}{\ensuremath{\mathsf{Decaps}}\xspace}
\newcommand{\Ttwo}{\ensuremath{\mathsf{U}}\xspace}
\newcommand{\FO}{\ensuremath{\mathsf{FO}}\xspace}
\newcommand{\explicitReject}{{\bot}}										%
\newcommand{\hashOnlyMessage}{{\mathit{m}}}									%
\newcommand{\TtwoExplicit}{\ensuremath{\Ttwo^\explicitReject}\xspace}		%
\newcommand{\FOExplicit}{\ensuremath{\FO^\explicitReject}\xspace}			%
\newcommand{\KemExplicit}{\ensuremath{\mathsf{KEM}^\explicitReject}\xspace}	%
\newcommand{\TtwoExplicitMess}{\ensuremath{\TtwoExplicit_\hashOnlyMessage}\xspace}	%
\newcommand{\KemExplicitMess}{\ensuremath{\KemExplicit_\hashOnlyMessage}\xspace}	%
\newcommand{\implicitReject}{{\not\bot}} 									%
\newcommand{\FOImplicit}{\ensuremath{\FO^\implicitReject}\xspace}			%
\newcommand{\KemImplicitMess}{\ensuremath{\mathsf{KEM}^\implicitReject_\hashOnlyMessage}\xspace}	%
\newcommand{\atk}{\mathsf{ATK}}
\newcommand{\IND}{\ensuremath{\mathsf{IND}}}
\newcommand{\OW}{\ensuremath{\mathsf{OW}}\xspace}
\newcommand{\CPA}{\ensuremath{\mathsf{CPA}}\xspace}
\newcommand{\CCA}{\ensuremath{\mathsf{CCA}}\xspace}
\newcommand{\OWCPA}{\ensuremath{\OW\text{-}\CPA}\xspace}
\newcommand{\OWVA}{\ensuremath{\OW\text{-}\mathsf{VA}}\xspace}
\newcommand{\INDCPA}{\ensuremath{\IND\text{-}\CPA}\xspace}
\newcommand{\INDCCA}{\ensuremath{\IND\text{-}\CCA}\xspace}
\newcommand{\INDCCAKEM}{\ensuremath{\INDCCA\text{-}\KEM}\xspace}
\newcommand{\INDCPAKEM}{\ensuremath{\INDCPA\text{-}\KEM}\xspace}
\newcommand{\INDATKKEM}{\ensuremath{\IND\text{-}\atk\text{-}\KEM}\xspace}
\newcommand{\INDATK}{\INDATKKEM}
\newcommand{\oracleDecaps}{\ensuremath{\mathsc{oDecaps}}\xspace}
\newcommand{\oracleDecapsSim}{\ensuremath{\oracleDecaps'}\xspace}
\newcommand{\oracleDecapsSimFail}{\ensuremath{\oracleDecaps''}\xspace}
\newcommand{\deltaWorstCase}{{\ensuremath{\delta_{\mathrm{wc}}}}\xspace}
\newcommand{\deltaIndKey}{\ensuremath{{\delta_{\mathrm{ik}}}}\xspace}
\newcommand{\deltaRandKey}{\deltaIndKey}
\newcommand{\FFP}{\ensuremath{\mathsf{FFP}}\xspace}
\newcommand{\FFPCPA}{\ensuremath{\FFP\text{-}\mathsf{CPA}}\xspace}
\newcommand{\FFPNoKey}{\ensuremath{\mathsf{FFP}\text{-}\mathsf{NK}}\xspace}
\newcommand{\FFPCCA}{\ensuremath{\FFP\text{-}\mathsf{CCA}}\xspace}
\newcommand{\FFPATK}{\ensuremath{\FFP\text{-}\atk}\xspace}
\newcommand{\COR}{\ensuremath{\mathsf{COR}}\xspace}
\newcommand{\NonGeneric}{\ensuremath{\mathsf{NG}}\xspace}
\newcommand{\FFPNG}{\ensuremath{\FFP\text{-}\NonGeneric}\xspace}
\newcommand{\FngFPCPA}{\FFPNG}
\newcommand{\oracleDecrypt}{\ensuremath{\mathsc{oDecrypt}}\xspace}
\newcommand{\oracleDecryptSim}{\ensuremath{\oracleDecrypt'}\xspace}
\newcommand{\FCO}{\ensuremath{\mathsf{FCO}}\xspace}
\newcommand{\FrodoPKE}{\ensuremath{\mathsf{FrodoPKE}}\xspace}
\newcommand{\FrodoKG}{\ensuremath{\FrodoPKE.\KG}\xspace}
\newcommand{\FrodoMSpace}{\ensuremath{\FrodoPKE.\MSpace}\xspace}
\newcommand{\FrodoCSpace}{\ensuremath{\FrodoPKE.\CSpace}\xspace}
\newcommand{\FrodoEnc}{\ensuremath{{\FrodoPKE.\Encrypt}}\xspace}
\newcommand{\FrodoEncode}{\ensuremath{{\mathsf{Frodo}.\mathsf{Encode}}}\xspace}
\newcommand{\HQCPKE}{\ensuremath{\mathsf{HQC.PKE}}\xspace}
\newcommand{\HQCKG}{\ensuremath{\HQCPKE.\KG}\xspace}
\newcommand{\HQCMSpace}{\ensuremath{\HQCPKE.\MSpace}\xspace}
\newcommand{\HQCCSpace}{\ensuremath{\HQCPKE.\CSpace}\xspace}
\newcommand{\HQCEnc}{\ensuremath{{\HQCPKE.\Encrypt}}\xspace}
\newcommand{\seedA}{\ensuremath{\mathit{seed}_A}\xspace}
\newcommand{\instance}{\ensuremath{\mathit{i}}\xspace}
\newcommand{\hashBoth}{{\mathit{m,c}}}													%
\newcommand{\FOBothGeneral}{\ensuremath{\FO_\hashBoth}\xspace}							%
\newcommand{\FOExplicitBoth}{\ensuremath{\FO^\explicitReject_\hashBoth}\xspace}			%
\newcommand{\FOImplicitBoth}{\ensuremath{\FO^\implicitReject_\hashBoth}\xspace}			%
\newcommand{\FOMessGeneral}{\ensuremath{\FO_\hashOnlyMessage}\xspace}					%
\newcommand{\FOExplicitMess}{\ensuremath{\FO^\explicitReject_\hashOnlyMessage}\xspace}	%
\newcommand{\FOImplicitMess}{\ensuremath{\FO^\implicitReject_\hashOnlyMessage}\xspace}	%
\newcommand{\FOquantumExplicit}{\ensuremath{\mathsf{QFO}^\explicitReject_\hashOnlyMessage}\xspace}		%
\newcommand{\FOquantumImplicit}{\ensuremath{\mathsf{QFO}^\implicitReject_\hashOnlyMessage}\xspace}		%
\newenvironment{nicodemus}[1][\thenicolinenr]{%
	\begin{enumerate}[
		topsep=0ex,
		label=\nicolinenrformat\PaddingUp*,
		ref=\nicorefprefix\PaddingUp*,
		align=right,
		leftmargin=0em,
		itemindent=!,
		labelindent=0em,
		labelwidth=\nicolinenrwidth,
		labelsep=\nicolinenrsep,
		listparindent=\parindent,
		noitemsep,
		]%
		\setcounter{enumi}{#1}%
		\addtocounter{enumi}{-1}%
	}{%
	\end{enumerate}%
	\addtocounter{enumi}{1}%
	\setcounter{nicolinenr}{\theenumi}%
}
\titlerunning{FO and decryption failures}
\title{Failing gracefully: Decryption failures and the Fujisaki-Okamoto transform%
}
	\author{\vspace{-0.5in}}
	\institute{}
	\author{
	  Kathrin Hövelmanns\inst{1}
	\and
	  Andreas Hülsing\inst{1}
	\and
	  Christian Majenz\inst{2}
	}
	\institute{
		Eindhoven University of Technology,
		The Netherlands\\
		\and
	Department of Applied Mathematics and Computer Science, Technical University of Denmark \\
		\email{authors-fo-failure@huelsing.net} %
	}
\authorrunning{K. Hövelmanns, A. Hülsing, C. Majenz}
\def\acknowledgmenttext{
	 A.H. was funded by an NWO VIDI grant (Project No. VI.Vidi.193.066).
	 C.M. was funded by a NWO VENI grant (Project No. VI.Veni.192.159).
}
\begin{document}

\maketitle
	
\begin{abstract}
In known security reductions for the Fujisaki-Okamoto transformation, decryption failures are handled via a reduction solving the rather unnatural task of finding failing plaintexts \emph{given the private key}, resulting in a Grover search bound. Moreover, they require an implicit rejection mechanism for invalid ciphertexts to achieve a reasonable security bound in the QROM. We present a reduction that has neither of these deficiencies: %
We introduce two %
security games related to finding decryption failures, one capturing the \emph{computationally hard} task of \emph{using the public key} to find a decryption failure, and one capturing the \emph{statistically hard} task of searching the random oracle for \emph{key-independent} failures like, e.g., large randomness.
As a result, our security bounds in the QROM are tighter than previous ones with respect to the generic random oracle search attacks: The attacker can only partially compute the search predicate, namely for said key-independent failures. %
In addition, our entire reduction works for the explicit-reject variant of the %
transformation and improves significantly over all of its known reductions. Besides being the more natural variant of the %
transformation, security of the explicit reject mechanism is also relevant for side channel attack resilience of the implicit-rejection variant.
Along the way, we prove several technical results characterizing preimage extraction and certain search tasks  in the QROM that might be of independent interest.
\\[6pt]
  \textbf{Keywords:} Public-key encryption, post-quantum security, QROM, Fujisaki-Okamoto transformation, decryption failures, NIST
\end{abstract}

\ifSubmission \else
	\begingroup
	\makeatletter
	\def\@thefnmark{} \@footnotetext{\relax \acknowledgmenttext
	Date: \today}
	\endgroup
\fi

	\setcounter{tocdepth}{2}
	\tableofcontents
\section{Introduction}\label{sec:Intro}

The Fujisaki-Okamoto (FO) transform~\cite{C:FujOka99,JC:FujOka13} is a well known transformation that combines a weakly secure public-key encryption scheme and a weakly secure secret-key encryption scheme into an \INDCCA secure public-key encryption scheme in the random oracle model. Dent~\cite[Table 5]{IMA:Dent03} gave an adoption for the setting of key-encapsulation.
This adoption for key encapsulation mechanisms (KEM) is now the de-facto standard to build secure KEMs.
In particular, it was used in virtually all KEM submissions to the NIST PQC standardisation  process~\cite{NIST:Competition}.
In the context of post-quantum security, however, two novel issues surfaced:
First, many of the PKE schemes being transformed into KEM are not perfectly correct, i.e., they sometimes fail to decrypt a ciphertext to its plaintext.
Second, security proofs have to be done in the quantum-accessible random oracle model (QROM) to be applicable to quantum attackers. 

Both problems were tackled in~\cite{TCC:HofHovKil17} and a long sequence of follow-up works (among others \cite{EC:SaiXagYam18,C:JZCWM18,TCC:BHHHP19,PKC:HKSU20, EC:KSSSS20}).
While these works made great progress towards achieving tighter reductions in the QROM, the treatment of decryption failures did not improve significantly.
In this work, we make significant progress on the treatment of decryption failures.
Along the way, we obtain several additional results relevant on their own.

An additional quirk of existing QROM reductions for the FO transform is that they require an \emph{implicit rejection} variant, where pseudorandom session keys are returned instead of reporting decapsulation errors%
, to avoid extreme reduction losses. (The only known concrete bound~\cite{DFMS21} for Dent's variant is much weaker then those known for the implicit rejection variant.)

\subheading{The Fujisaki-Okamoto transformation.} %
We recall the FO transformation for KEM as introduced in \cite[Table 5]{IMA:Dent03} and revisited by \cite{TCC:HofHovKil17}, there called \FOExplicitMess.
\FOExplicitMess constructs a KEM from a public-key encryption scheme \PKE, and the overall transformation \FOExplicitMess can be described by first modifying \PKE to obtain a deterministic scheme \PKEDerand, and then applying a PKE-to-KEM transformation (called \TtwoExplicitMess in \cite{TCC:HofHovKil17}) to \PKEDerand:

{\heading{Modified scheme \PKEDerand.}
	Starting from \PKE and a hash function \RO{G}, deterministic encryption scheme \PKEDerand is built by letting \EncryptDerand encrypt messages $m$ according to the encryption algorithm \Encrypt of \PKE, but using the hash value $\RO{G}(m)$ as the random coins for \Encrypt:
        \ifTightOnSpace
        $\EncryptDerand(\pk,m):=\Encrypt(\pk,m; \RO{G}(m)).$
        \else
	\[
		\EncryptDerand(\pk,m):=\Encrypt(\pk,m; \RO{G}(m)) \enspace ,
	\] 
        \fi
	\DecryptDerand uses the decryption algorithm \Decrypt of \PKE to decrypt a ciphertext $c$ %
	to obtain $m'$, and rejects by returning \ifTightOnSpace \else a failure symbol \fi $\bot$ if $c$ fails to decrypt or $m'$ fails to encrypt back to $c$.
	\ifTightOnSpace\else(For the formal definition, see \cref{fig:Def-Derandomized-PKE} on page~\pageref{fig:Def-Derandomized-PKE}).\fi
}

{\heading{PKE-to-KEM transformation \TtwoExplicitMess.}
	Starting from a deterministic encryption scheme \PKE' and a hash function \RO{H},
	key encapsulation algorithm $\KemExplicitMess := \TtwoExplicitMess[\PKE',\RO{H}]$ is built by letting
	\ifTightOnSpace
	$\Encaps(\pk):= (c :=  \Encrypt'(\pk,m), K:= \RO{H}(m)),$
	\else
	\[
		\Encaps(\pk):= (c :=  \Encrypt'(\pk,m), K:= \RO{H}(m)), 
	\]
        \fi
	where $m$ is picked at random from the message space.
	Decapsulation will return $K:= \RO{H}(m)$ unless $c$ fails to decrypt, in which case it returns failure symbol $\bot$.
	\ifTightOnSpace\else(For the formal definition, see \cref{fig:Def-FOExplicit} on page~\pageref{fig:Def-FOExplicit}).\fi
}

{\heading{Combined PKE-to-KEM transformation \FOExplicitMess.}
	The 'full FO' transformation \FOExplicitMess is defined by taking \PKE and hash functions \RO{G} and \RO{H},  and defining $\FOExplicitMess[\PKE,\RO{G},\RO{H}]  :=  \TtwoExplicitMess[\PKEDerand, \RO{H}]$.
	While there exists a plethora of variants \ifTightOnSpace\ifCameraReady\else (see \cref{sec:appendix:FO})\fi\fi that differ from \FOExplicitMess, it was proven~\cite{TCC:BHHHP19} that security of these variants is either equivalent to or implied by security of \FOExplicitMess.
\ifTightOnSpace \else
	To offer a more complete picture, we  recap these variants and their relations in
	\ifCameraReady the full version.
	\else \cref{sec:appendix:FO} (page~\pageref{sec:appendix:FO}).
	\fi
	The take-away message is that any security result for \FOExplicitMess also covers its variants.
\fi
}

\subheading{The role of correctness errors in security proofs for FO.} %
{
	Correctness errors play a role during the proof that an FO-transformed KEM is \INDCCA secure:
	To tackle the \CCA part, it is necessary to simulate the decapsulation oracle \oracleDecaps without the secret key,
	meaning the plaintext has to be obtained via strategies different from decrypting.
	While different strategies for this exist in both ROM and QROM, they all have in common that the obtained plaintext is rather a plaintext that encrypts to the queried ciphertext (a ``ciphertext preimage'') than the decryption.
	Consequently, the simulation fails to recognise \emph{failing ciphertexts}, i.e., ciphertexts for which decryption results in a plaintext different from the ciphertext preimage (or even in $\bot$), and will in this case behave differently from \oracleDecaps.
	Hence, the simulations are distinguishable from \oracleDecaps if the attacker can craft such failing ciphertexts.
	
    The approach chosen by~\cite{TCC:HofHovKil17} was to show that the distinguishing advantage between the two cases can be bounded by the advantage in a game \COR. Game \COR (defined in~\cite{TCC:HofHovKil17}) provides an adversary with a key pair (including the secret key) and asks to return a {\em failing message}, i.e., a message that encrypts to a failing ciphertext, for the derandomized scheme \PKEDerand. \cite{TCC:HofHovKil17} further bounded the maximal advantage in game \COR for \PKEDerand in terms of a statistical worst-case quantity $\deltaWorstCase$ of \PKE,
    which is the expected maximum probability for plaintexts to cause a decryption failure, with the expectation being taken over the %
    key pair.
    This results in a typical %
    search bound as the adversary can use the secret key to check if a %
    ciphertext fails. In the QROM, the resulting bound is therefore $8q^2\deltaWorstCase$, %
    $q$ being the number of queries to $\RO G$.\footnote{
		Some publications (e.g., \cite{C:JZCWM18}) use the bound $2 q \cdot \sqrt{\deltaWorstCase}$,
		it is however straightforward to verify that the bound above can be achieved by using \cite[Lemma 2.9]{PKC:HKSU20} as a drop-in replacement. Note that this is indeed a quadratic improvement unless $4 q \cdot \sqrt{\deltaWorstCase} >  1$,
		in which case the $\INDCCA$ bound is meaningless, anyways.
	}

	Intuitively, this notion suffers from two related unnatural features:
	\ifTightOnSpace
 
 \noindent -- First, it is unnatural to provide any adversary with the secret key, as long as the scheme has at least some basic %
 security.\footnote{Schemes that allow for a key recovery attack serve as pathological examples why this argument does not hold in generality.}
In particular, this %
observation %
 applies to adversaries tasked with finding failing plaintexts,
which is not a mere issue of aesthetics: If the secret key is given to the adversary, an analysis of this bound can't make use of computational assumptions without becoming heuristic.\footnote{An example we happen to be aware of is the analysis of the correctness error bound of Kyber~\cite{kyber}.} 

\noindent --  Second, it is unnatural that the bound contains a Grover-like search term with regard to \deltaWorstCase: 
As \INDCCA adversaries don't have access to the secret key, they can only check if ciphertexts fail via their \emph{classical} \CCA oracle, which should render a Grover search impossible. Furthermore, in ROM and QROM, it should be the (usually much smaller) number of  \CCA queries that limits the adversary's ability to search, not the number of random oracle queries. %
Hence this bound seems overly conservative\ifTightOnSpace\else as long as the scheme achieves at least some basic notion of security\fi.
\else
	\begin{itemize}
		\item  First, it looks rather unnatural to provide any adversary with the secret key, as long as the scheme achieves at least some basic notion of security.\footnote{Schemes that allow for a key recovery attack serve as pathological examples why this argument does not hold in generality.}
		In particular, this observation applies to adversaries tasked with finding failing plaintexts,
		and in fact, this is not a mere issue of aesthetics: If the secret key is given to the adversary, an analysis of this bound cannot make use of computational assumptions without becoming heuristic.\footnote{An example we happen to be aware of is the analysis of the correctness error bound of Kyber~\cite{kyber}.} 
		
		\item Second, it seems unnatural that the bound contains a Grover-like search term with regard to \deltaWorstCase: 
		As \INDCCA adversaries do not have access to the secret key, they can only check whether ciphertexts fail via their \emph{classical} \CCA oracle, which should render a Grover search impossible. Furthermore, in both ROM and QROM, it should be the (usually much smaller) number of  \CCA queries that limits the adversary's ability to search, and not the number of random oracle queries. %
		Hence this bound seems overly conservative as long as the scheme achieves at least some basic notion of security.
	\end{itemize}
\fi

	While follow-up works have used different games in place of \COR to deal with decryption errors, all result in the same quantum search bound in terms of \deltaWorstCase.
}

{\ifodd0		
     In the QROM, there exists a common simulation strategy for FO variants whose decapsulation rejects ciphertexts that fail to pass the de-/reencryption check rather by returning a pseudorandom value derived from $c$ than by returning $\bot$.
     (This is often called ``implicit reject''; returning $\bot$ is then called ``explicit reject''.)	
     The simulation, however, is distinguishable from \oracleDecaps if the attacker can craft encryptions that exhibit decryption failure.

	Note that \Encaps computes its ciphertexts by executing the deterministic encryption algorithm \EncryptDerand.
	By plugging \EncryptDerand into the key derivation oracle, i.e., by letting $\RO{H} \coloneqq \RO{H'}(\EncryptDerand(-))$,
	$\oracleDecaps(c)$ can be simulated by simply returning $\RO{H'}(c)$.
	This is where correctness errors come into play: If an attacker is able to come up with some $c$ that is a proper encryption of some message $m$, but decrypts to some message $m' \neq m$, it will recognise this simulation as $\oracleDecaps(c)$ will respond with $\RO{H}(m)$ instead of $\RO{H}(m')$.
	It is furthermore clear that this simulation approach would not work for FO variants whose decapsulation algorithm returns $\bot$ for invalid ciphertexts.

	As a ``sledge-hammer solution'' to the distinguishability issue, it is then shown that an attacker succeeding in distinguishing between \oracleDecaps and its simulation can be used to 
	win a correctness game, introduced in~\cite{TCC:HofHovKil17}. In this game, the adversary is given access to the full key pair, including the secret key, and 
	
	succeed in a particular quantum distinguishing problem, called \GDPB.
	In the \GDPB game, the distinguisher \Ad{D} has access to a function $f: \MSpace \rightarrow \left[0, 1\right]$ that is either the constant zero function or a function that returns 1 on message $m$ with the probability that $m$ fails to decrypt,
	and succeeds if it correctly guesses to which function it had access to.
	Invoking \GDPB, however, adds a Grover-like term of $8 q^2 \cdot \deltaWorstCase$ to the security bound, where \deltaWorstCase is the worst-case correctness term of \PKE as introduced in \cite{TCC:HofHovKil17} and $q$ is the number of random oracle and decapsulation queries.%
	\footnote{
		Some publications \cite{C:JZCWM18, PKC:JiaZhaMa19,EPRINT:JiaZhaMa19b} use the bound $2 q \cdot \sqrt{\deltaWorstCase}$,
		it is however straightforward to verify that the bound above can be achieved by using \cite[Lemma 2.9]{PKC:HKSU20} as a drop-in replacement. Note that this is indeed a quadratic improvement unless $4 q \cdot \sqrt{\deltaWorstCase} >  1$,
		in which case the $\INDCCA$ bound is meaningless, anyways.
	}
	Furthermore, we want to stress that \deltaWorstCase is a statistical term that can be viewed as the advantage of an attacker trying to find a failing ciphertext, \emph{given the private key}. As the private key is not handed to \INDCCA attackers, we would view it as more natural to also handle decryption failures without handing the private key to adversaries attacking the correctness of the scheme. \todo{mention that we also have a $\deltaWorstCase$ -related bound in the ROM.}
	
	Compared to the respective ROM bound, we lose an additional factor of $q$, simply because \INDCCA adversaries now have quantum access to their random oracles.
	But is this scaling up really necessary? In the \GDPB game, \Ad{D} can search the function for failing messages in superposition. \INDCCA adversaries, however, are neither provided with the secret key nor do they have quantum access to \oracleDecaps. How are they supposed to search for failing messages in superposition?
	There is reason to suspect that with a more sophisticated simulation strategy, we might be able to achieve several goals in one:
	\begin{itemize}
		\item Not having to invoke \GDPB at all, thereby hopefully improving the bound related to decryption failures in the QROM,
		\item having a simulation that is applicable to explicitly rejecting FO variants in the QROM, 
		\item working with a correctness requirement that is more natural in the context of security games, both for ROM and QROM.
	\end{itemize}
	
	Since such a result for \FOExplicitMess would imply respective results for all other variants (see \cref{sec:appendix:FO}),
	it would immediately also yield a result for all other FO variants with similar bounds and the same correctness requirement.
\fi }

\subheading{Main contribution.} 
Our main contribution is a new security reduction for the FO transformation that improves over existing ones in two ways.\vspace{.1cm}

\heading{Decryption failures.} We introduce a family of new security games, the \emph{\underline{F}ind \underline{F}ailing \underline{P}laintext} (\FFP) games. These provide a much more natural framework for dealing with decryption errors in the FO transformation, and it is the novel structure of our reduction that allows their usage.
Two important members of the \FFP family are as follows:
The first one, 
\emph{\underline{F}ind \underline{F}ailing \underline{P}laintext that is \underline{N}on-\underline{G}eneric} (\FFPNG),
gives a public key to the adversary and %
asks it to find
a message that triggers a decryption failure more likely 
with respect to this key pair than with respect to an independent key pair.
The second one, \emph{\underline{F}ind \underline{F}ailing \underline{P}laintext with \underline{N}o \underline{K}ey} (\FFPNoKey), tasks an adversary with producing a message that triggers a decryption failure with respect to an independently sampled key pair, without providing any key to the adversary. 
As summarised in \cref{fig:Intro}, we provide a reduction from \FFPNG and passive security of \PKE together with \FFPNoKey for \PKEDerand to \INDCCA security of the FO-transformed of \PKE.
This new reduction structure avoids both unnatural features mentioned above: 
	\ifTightOnSpace 
	
	\noindent -- None of the two failure-related games \FFPNG and \FFPNoKey provide the adversary with the secret key. In particular, we show how to bound an adversary's advantage in game \FFPNoKey in terms of \deltaIndKey, the worst-case decryption error rate when the message is picked \emph{independently of the key}, and additional related statistical parameters %
	. We give two concrete example bounds, one involving the variance based on Chebyshev's inequality and one based on a Gaussian-shaped tail bound. We expect that these ``independent-key'' statistical parameters can be estimated more conveniently and \emph{without heuristics}, by exploiting the computational assumptions of the PKE scheme at hand.
	
	\noindent -- Game \FFPNoKey still allows for a Grover search advantage, but only when searching for 
	messages that are more likely to cause a failure \emph{on average over the key}. This game corresponds, e.g., to the first attempt at finding a failure in attacks like \ifTightOnSpace\cite{DVV18,BS20, EC:DAnRosVir20}\else \cite{EPRINT:DAnVerVer18a,BS20, EC:DAnRosVir20}\fi. \cm{cite the jan-pieter papers, Nina and John Schanck}\tinka{added citations, I guess this is sufficiently extensive?}
	In the context of the entire security reduction for FO, the advantage in this game is multiplied with the number of decapsulation queries a \CCA attacker makes, correctly reflecting the fact that the ability of \emph{identifying} a decryption failure should depend on the \CCA oracle and is thus limited.
	\else
\begin{itemize}
	\item None of the two failure-related games \FFPNG and \FFPNoKey provide the adversary with the secret key. In particular, we show how to bound an adversary's advantage in game \FFPNoKey in terms of \deltaIndKey, the worst-case decryption error rate when the message is picked \emph{independently of the key}, and additional statistical parameters of the probability distributions of decryption failures for fixed message. We give two concrete example bounds, one involving the variance based on Chebyshev's inequality and one based on a Gaussian-shaped tail bound. We expect that these ``independent-key'' statistical parameters can be estimated more conveniently and \emph{without heuristics}, by exploiting the computational assumptions of the PKE scheme at hand.
	\item Game \FFPNoKey still allows for a Grover search advantage, but only when searching for 
	messages that are more likely to cause a failure \emph{on average over the key}. This game corresponds, e.g., to the first attempt at finding a failure in attacks like \ifTightOnSpace\cite{DVV18,PQCRYPTO:BinSch20, EC:DAnRosVir20}\else \cite{EPRINT:DAnVerVer18a,PQCRYPTO:BinSch20, EC:DAnRosVir20}\fi. 
	In the context of the entire security reduction for the FO transformation, the advantage in this game is multiplied with the number of decapsulation queries a \CCA attacker makes, correctly reflecting the fact that the ability of \emph{identifying} a decryption failure should depend on the \CCA oracle and is thus limited.
\end{itemize}
 \fi
Game \FFPNG defines a property of the underlying PKE scheme, it thus allows to analyze the hardness of finding meaningful decryption failures independently from the hardness of searching a random oracle for them. \FFPNG seems thus more amenable to both security reductions and cryptanalysis.\vspace{.1cm}

\heading{FO with explicit rejection.} Our reduction employs a technique for generalized \emph{preimage extraction} in the QROM that was recently introduced in \cite{DFMS21}. As shown by \cite{DFMS21}, this technique is well-suited for proving  \FOExplicitMess secure. We furthermore generalize the one-way to hiding (\OWTH) lemma \cite{C:AmbHamUnr19} such that it is compatible with the technique from \cite{DFMS21}.
\OWTH was used to derive the state-of-the-art bounds for implicitly rejecting variants, and combining the two techniques,
we obtain a security bound for \FOExplicitMess that is competitive with said state-of-the-art bounds. 

\heading{QROM tools.} 
To facilitate the above-described reduction, we provide two technical tools that might be of independent interest: Firstly, we generalize the \OWTH framework from \cite{C:AmbHamUnr19} such that it can be combined with the extractable quantum random oracle simulation from \cite{DFMS21}, rendering the two techniques compatible with being used together in the same security reduction. We make crucial use of this possibility to avoid the additional reduction losses that \cite{DFMS21} need to accept to be able to use the plain one-way to hiding framework in juxtaposition with the extractable simulator.

Secondly, we prove query lower bounds for tasks where an algorithm has access to a QRO (or even an extractable simulator thereof) and has to output an input value $x$ which, together with the corresponding oracle output $\RO{RO}(x)$, achieves a large value under some figure-of-merit function. We use this technical result to provide the aforementioned bounds for the adversarial advantage in the \FFPNoKey game, but they might prove of independent interest.

\ifTightOnSpace\else 
	\subheading{Organisation of this work.}
	\cref{sec:prelims} recalls standard definitions for PKE schemes/KEMs,
	and the formal definition of \FOExplicitMess.
	\cref{sec:ROM} gives our random oracle model reduction, substantiating the upper half of \cref{fig:Intro} in the ROM.
	\cref{sec:QROM} is the QROM equivalent of \cref{sec:ROM}.
	Since \cref{sec:QROM} uses the extractable quantum random oracle simulation from \cite{DFMS21}, we squeeze in a recap of this extension in \cref{sec:prels:compressed} to establish notation and for the reader's convenience.
	\cref{sec:PKEDerand:FFPCPA} analyzes \FFPCPA security of \PKEDerand further, thereby substantiating the lower half of \cref{fig:Intro}.
	\cref{sec:final-result} ties together \cref{sec:ROM}/\ref{sec:QROM} with \cref{sec:PKEDerand:FFPCPA} by providing corollaries that use concrete bounds for the \INDCCA security of $\FOExplicitMess[\PKE,\RO{G},\RO{H}]$.
	The bounds include a term in $\gamma$, the spreadness of \PKE. In \cref{sec:spreadness}, we calculate this term for two easy-to-analyze candidates, \HQCPKE and \FrodoPKE.
	
\fi 

\subheading{TL;DR for scheme designers.}
\cref{sec:final-result} provides concrete bounds for the \INDCCA security of $\FOExplicitMess[\PKE,\RO{G},\RO{H}]$.
Besides having to analyze the conjectured passive security of \PKE, applying the bounds to a concrete scheme \PKE requires to analyze the following computational and statistical properties:
\ifTightOnSpace

\noindent -- $\gamma$, the spreadness of \PKE.%

\noindent -- An upper bound for \FngFPCPA against \PKE.

\noindent -- Either an upper bound for \FFPNoKey for \PKEDerand, in our extended oracle model that allows preimage extractions, or \ifTightOnSpace \else alternatively\fi, two statistical values: \deltaIndKey, the worst-case decryption error rate when the message is picked \emph{independently of the key}, and $\sigma_\deltaIndKey$, the maximal variance of \deltaIndKey.
\else 
\begin{itemize}
	\item $\gamma$, the spreadness of \PKE.%
	\item An upper bound for \FngFPCPA against \PKE.
	\item Either an upper bound for \FFPNoKey for \PKEDerand, in our extended oracle model that allows preimage extractions, or alternatively, two statistical values: \deltaIndKey, the worst-case decryption error rate when the message is picked \emph{independently of the key}, and $\sigma_\deltaIndKey$, the maximal variance of \deltaIndKey.
\end{itemize}
\fi

\cm{Do we want the $\FngFPCPA_{\PKE}$+failure tail bound$\implies\FFPCPA_{\PKEDerand}$ in the ROM as well?}\tinka{Maybe after the deadline ;) }

\begin{figure}[t]
	\begin{center}\scalebox{0.9}{
\begin{tikzpicture} \small
			\node (FFPCPA){\begin{minipage}{2cm}\centering \PKEDerand \\ \FFPCPA \end{minipage}};
			\node at ($(FFPCPA)+(4.5,0)$) (FFPCCA) {\begin{minipage}{2cm}\centering \PKEDerand \\ \FFPCCA\end{minipage}};
			
			\node (PKEind) [below = 0.5cm of FFPCPA] {\begin{minipage}{2cm}\centering \PKE \\ \INDCPA \end{minipage}};
			\node (PKEow) [below = 0.5cm of PKEind] {\begin{minipage}{2cm}\centering \PKE \\ \OWCPA \end{minipage}};
			
			\node at ($(PKEind) +(4.5,0)$) (KEMindCPA) {\begin{minipage}{2cm}\centering \KemExplicitMess \\ \INDCPA \end{minipage}};
			\node at ($(FFPCCA) !.5! (KEMindCPA)+(1.2,0)$) (connectFFPandIND) {};
			
			\node  (belowKEMindCPA) [below = 0.5cm of KEMindCPA] {\begin{minipage}{2cm}\centering  \phantom{X} \\ \phantom{X} \end{minipage}};
			
			\node at ($(connectFFPandIND)+(3.3,0)$) (KEMindCCA)
			{\begin{minipage}{2cm}\centering \KemExplicitMess \\ \INDCCA \end{minipage}};
			
			\node at ($(belowKEMindCPA) +(4.5,0)$) (KEMImplicitINDCCA) {\begin{minipage}{2cm}\centering \KemImplicitMess \\ \INDCCA \end{minipage}};

			\draw[>=latex, dashed, ->] (FFPCPA.east) -- (FFPCCA.west) node [draw=none, midway, above]{Thms. \ref{thm:SimDec:ROM}/\ref{thm:SimDec:QROM}};
			
			\draw[>=latex, dashed,->] (PKEind.east) -- ($(KEMindCPA.west)$)
				node [draw=none, midway, above] {\FOExplicitMess,
												Thms. \ref{thm:PKEpassiveToINDCPAKEM:ROM}/\ref{thm:INDPKEToINDCPAKEM}};
			
			\draw[>=latex, dashed] (PKEow.east) -- ($(belowKEMindCPA.west)$)
				node [draw=none, midway, above] {\FOExplicitMess,
												Thms. \ref{thm:PKEpassiveToINDCPAKEM:ROM}/\ref{thm:OWPKEToINDCPAKEM}};			
											
			\draw[>=latex, dashed, ->] ($(belowKEMindCPA.west)$) -| ($(KEMindCPA.south)-(0,0.1)$);			
			
			\draw (FFPCCA.east) -| (connectFFPandIND.east) ;
			\draw (KEMindCPA.east) -| (connectFFPandIND.east) ;
			
			\draw[>=latex, ->] (connectFFPandIND.east) -- (KEMindCCA.west)
			node [draw=none, midway, above] {Thm. \ref{thm:INDandFFPtoCCA:ROM}/\ref{thm:INDandFFPtoCCA:QROM}};
			
			\draw[>=latex, ->] (KEMindCCA.south) -- (KEMImplicitINDCCA.north)
			node [draw=none, midway, left] {Rmk. \ref{remark:FOexplicitToFOImplicit}};
\end{tikzpicture} 	}\end{center}
	\vspace{-7pt} {\unskip\ \hrule\ } \vspace{-3pt}
	\begin{center}\scalebox{0.9}{
		\begin{tikzpicture} \small
			\node (deltaSigma){\begin{minipage}{2cm}\centering \deltaIndKey, $\sigma_{\deltaIndKey}$ small \end{minipage}};
	
			\node at ($(deltaSigma)+(4.5,0)$) (FFPNoKey) {\begin{minipage}{2cm}\centering \PKEDerand \\ \FFPNoKey \end{minipage}};
			
			\node (FngFP) [below = 0.5cm of FFPNoKey] {\begin{minipage}{2cm}\centering \PKE \\ \FngFPCPA \end{minipage}};

			\node at ($(FFPNoKey) !.5! (FngFP)+(1.2,0)$) (connectNoKeyandFngFP) {};
			
			\node at ($(connectFFPandIND)+(2.8,0)$) (FFPCPA) {\begin{minipage}{2cm}\centering \PKEDerand\\\FFPCCA\end{minipage}};
			
			\draw[>=latex, dashed, ->] (deltaSigma.east) -- (FFPNoKey.west)node [draw=none, midway, above]{Thm. \ref{thm:PKEDerand:FFPNoKey}};
			
			\draw (FFPNoKey.east) -| (connectNoKeyandFngFP.east) ;
			\draw (FngFP.east) -| (connectNoKeyandFngFP.east) ;
			
			\draw[>=latex, ->] (connectNoKeyandFngFP.east) -- (FFPCPA.west) node [draw=none, midway, above] {Thm. \ref{thm:PKEDerand:FFPCPA}};
		\end{tikzpicture}
	}\end{center}
\ifTightOnSpace \vspace{-18pt} \fi
\caption{Summary of our results.
	Top: ''Ths. X/Y`` indicates that we provide a ROM theorem X (in \cref{sec:ROM}) and a QROM theorem Y (in \cref{sec:QROM}).
	Bottom: Breaking down \FFPCPA security of \PKEDerand (\cref{sec:PKEDerand:FFPCPA}).
	Solid (dashed) arrows indicate tight (non-tight) reductions in the QROM.
	We want to emphasize that \cref{thm:SimDec:ROM,thm:SimDec:QROM} have comparably mild tightness loss:
	The loss is linear in the number of decryption queries.
	The QROM loss for \cref{thm:INDPKEToINDCPAKEM,thm:OWPKEToINDCPAKEM} is like the one for previously known reductions.	
}
\label{fig:Intro}
\ifTightOnSpace \vspace{-12pt} \fi
\end{figure}

\subheading{Acknowledgements.}
We would like to thank Dominique Unruh for valuable discussions about the semi-classical one-way to hiding lemma and Manuel Barbosa for pointing out the use of heuristics in bounds for delta.

\ifCameraReady \else %
	
	\ifTightOnSpace %
		
		\subsubsection{Preliminaries.}
	
		For convenience, we recall standard definitions for public-key encryption and key encapsulation algorithms in \cref{sec:Prels:PKE:Security} (page~\pageref{sec:Prels:PKE:Security}), and the formal definition of the Fujisaki-Okamoto transformation with explicit rejection (as already described above) in \cref{sec:prels:FO}.
		
	\else
		
		\section{Preliminaries.}\label{sec:prelims}
		
		For convenience, we recall the formal definition of the Fujisaki-Okamoto transformation with explicit rejection (as already described above) in \cref{sec:prels:FO},
		and standard definitions for public-key encryption and key encapsulation algorithms in \cref{sec:Prels:PKE:Security}.
		
		For a finite set $S$, we denote the sampling of a uniform random element $x$ by $x \uni S$,
		and we denote deterministic computation of an algorithm \Ad{A} on input $x$ by $y := \Ad{A}(x)$.
		By $\bool{B}$ we denote the bit that is 1 if the Boolean statement $B$ is true,
		and otherwise 0.

\ifTightOnSpace
	\section{The Fujisaki-Okamoto transformation with explicit rejection} \label{sec:prels:FO}
\else
	\subsection{The Fujisaki-Okamoto transformation with explicit rejection} \label{sec:prels:FO}
\fi

This section recalls the definition of \FOExplicitMess.
To a public-key encryption scheme $\PKE = (\KG, \Encrypt, \Decrypt)$
with message space $\MSpace$, randomness space $\RSpace$, and 
hash functions $\RO{G}:  \MSpace \rightarrow \RSpace$
and
$\RO{H}: \{0,1\}^* \rightarrow \{0,1\}^n$,
we associate 
\begin{eqnarray*}
	\KemExplicitMess & := & \FOExplicitMess[\PKE,\RO{G},\RO{H}] :=  (\KG, \Encaps, \Decaps)\enspace .
\end{eqnarray*}
Its constituting algorithms are given in \cref{fig:Def-FOExplicit}.
\FOExplicitMess uses the underlying scheme \PKE in a derandomized way by using $\RO{G}(m)$ as the encryption coins (see line \ref{line:Def-FO:Derandomise})
and checks during decapsulation whether the decrypted plaintext does re-encrypt to the ciphertext (see line \ref{line:Def-FO:Reencrypt}).
This building block of \FOExplicitMess, i.e., the derandomisation of \PKE and performing a reencryption check, is incorporated in the following transformation \Tone:
\begin{eqnarray*}
	\PKEDerand & := & \Tone[\PKE,\RO{G}] :=  (\KG, \EncryptDerand, \DecryptDerand) \enspace ,
\end{eqnarray*}
with its constituting algorithm given in \cref{fig:Def-Derandomized-PKE}.

\begin{figure}[b]\begin{center} 
		
	\nicoresetlinenr
	
	\fbox{\small
		
		\begin{minipage}[t]{3.7cm}	
			\underline{$\Encaps(\pk)$}
			\begin{nicodemus}
				\item $m \uni \MSpace$
				\item $c := \Encrypt(\pk,m; \RO{G}(m))$ \label{line:Def-FO:Derandomise}
				\item $K:=\RO{H}(m)$ \label{line:KeyDerivationModeEncaps}
				\item \pcreturn $(K, c)$
			\end{nicodemus}
		\end{minipage}
		
		\quad 
		
		\begin{minipage}[t]{6.1cm}	
			
			\underline{$\Decaps(\sk,c)$}
			\begin{nicodemus}
				\item $m' := \Decrypt(\sk,c)$
				\item \pcif $m' = \bot$ \pcor $c \neq \Encrypt(\pk, m'; \RO{G}(m'))$\label{line:Def-FO:Reencrypt}
				\item \quad \pcreturn $\bot$
				
				\item \pcelse
				\item \quad \pcreturn $K:=\RO{H}(m')$  
			\end{nicodemus}
			
		\end{minipage}
	}	
\end{center}
	\caption{Key encapsulation mechanism $\KemExplicitMess = (\KemGen,\Encaps, \Decaps)$,
		obtained from $\PKE= (\KG, \Encrypt, \Decrypt)$ by setting  $\KemExplicitMess \coloneqq \FOExplicitMess[\PKE, \RO{G}, \RO{H}]$.}
	\label{fig:Def-FOExplicit}
\end{figure}

\begin{figure}[tb]\begin{center}
		
	\nicoresetlinenr
		
	\fbox{\small
			
		\begin{minipage}[t]{3.7cm}	
			\underline{$\Encrypt^{\RO G}(\pk)$} 
			\begin{nicodemus}
				\item $m \uni \MSpace$
				\item $c := \Encrypt(\pk,m; \RO{G}(m))$
				\item \pcreturn $c$
			\end{nicodemus}
		\end{minipage}
		
		\;
		
		\begin{minipage}[t]{5.7cm}	
			
			\underline{$\Decrypt^{\RO G}(\sk,c)$}
			\begin{nicodemus}
				\item $m' := \Decrypt(\sk,c)$
				\item \pcif $m' = \bot$ \pcor $c \neq \Encrypt(\pk, m'; \RO{G}(m'))$
				\item \quad \pcreturn $\bot$ %
				\item \pcelse
				\item \quad \pcreturn $m'$  %
			\end{nicodemus}
		\end{minipage}
	
	}	
		
	\caption{Derandomized \PKE scheme $\PKE^\RO{G}=(\KG, \Encrypt^{\RO G}, \Decrypt^\RO{G})$,
		obtained from \PKE $=(\KG,\Encrypt,\Decrypt)$ by encrypting a message $m$ with randomness $\RO G(m)$ for a random oracle $\RO G$, and incorporating a re-encryption check during $\Decrypt^{\RO G}$.}
	\label{fig:Def-Derandomized-PKE}
	\end{center}
\end{figure}

\ifEprint %
	\subsection{Security Notions for Public-Key Encryption} \label{sec:Prels:PKE:Security}
\else %
	\section{Security Notions for Public-Key Encryption} \label{sec:Prels:PKE:Security}
\fi

We also consider all security games in the (quantum) random oracle model,
where \PKE and adversary \Ad{A} are given access to (quantum) random oracles.
(How we model quantum access is made explicit in \cref{sec:prels:compressed}.)

\ifEprint %
	\subsubsection{Definitions for \PKE}
\else %
	\subsection{Definitions for \PKE}
\fi

\begin{definition}[$\gamma$-spreadness]
	We say that \PKE is $\gamma$-spread iff for all key pairs $(\pk, \sk) \in \supp(\KG)$
	and all messages $m \in \MSpace$ it holds that
	\[ \max_{c \in \CSpace} \Pr[\Encrypt(\pk, m)= c] \leq 2^{- \gamma} \enspace , \]
	where the probability is taken over the internal randomness \Encrypt.
\end{definition}

We also recall two standard security notions for public-key encryption:
\underline{O}ne-\underline{W}ayness under
\underline{C}hosen \underline{P}laintext \underline{A}ttacks (\OWCPA)
and
\underline{Ind}istinguishability under \underline{C}hosen-\underline{P}laintext \underline{A}ttacks (\INDCPA).
\begin{definition}[\OWCPA, \INDCPA]
	Let $\PKE = (\KG,\Encrypt,\Decrypt)$ be a public-key encryption scheme with message space \MSpace.
	We define the \OWCPA game as in \cref{fig:Def:PKE:passive} and the \OWCPA \textit{advantage function of an adversary \Ad{A} against \PKE} as
	\[ \Adv^{\OWCPA}_{\PKE}(\Ad{A}) := \Pr [\OWCPA^{\Ad{A}}_\PKE \Rightarrow 1 ] \enspace .\]

	Furthermore, we define the 'left-or-right' version of \INDCPA by defining games $\INDCPA_b$, where $b\in \lbrace 0,1 \rbrace$ (also in \cref{fig:Def:PKE:passive}),
	and the \INDCPA \textit{advantage function of an adversary $\Ad{A} = (\Ad{A}_1, \Ad{A}_2)$ against \PKE}
	(where \(\Ad{A}_2\) has binary output)
	as
	\[ \Adv^{\INDCPA}_{\PKE}(\Ad{A}) := |\Pr [ \INDCPA_0^{\Ad{A}} \Rightarrow 1 ] - \Pr [ \INDCPA_1^{\Ad{A}} \Rightarrow 1 ]| \enspace . \]
	
	\begin{figure}[h!]\begin{center}\fbox{\small
		
		\nicoresetlinenr
		
		\begin{minipage}[t]{3.5cm}	
			\underline{{\bf Game} \OWCPA}
			\begin{nicodemus}
				\item $(\pk, \sk) \leftarrow \KG$
				\item $m^* \uni \MSpace$
				\item $c^* \leftarrow \Encrypt(\pk,m^*)$
				\item $m' \leftarrow \Ad{A}(\pk, c^*)$
				\item \pcreturn $\bool{m' = m^*}$
			\end{nicodemus}
		\end{minipage}
		
		\quad
		
		\begin{minipage}[t]{3.7cm}
			\underline{{\bf Game} $\INDCPA_b$}
			\begin{nicodemus}
				\item $(\pk, \sk) \leftarrow \KG$
				\item $(m^*_0, m^*_1, \state) \leftarrow \Ad{A}_1(\pk)$
				\item $c^* \leftarrow \Encrypt(\pk, m^*_b)$
				\item $b' \leftarrow \Ad{A}_2(\pk, c^*, \state)$
				\item \pcreturn $b'$
			\end{nicodemus}	
		\end{minipage}%
	}
	\end{center}
		\caption{Games \OWCPA and $\INDCPA_b$ for \PKE.}
		\label{fig:Def:PKE:passive}
	\end{figure}
	
\end{definition}

\ifEprint %
	\subsubsection{Standard notions for \KEM}
\else %
	\subsection{Standard notions for \KEM}
\fi

We now define \underline{Ind}istinguishability under \underline{C}hosen-\underline{P}laintext \underline{A}ttacks (\INDCPA)
and under \underline{C}hosen-\underline{C}iphertext \underline{A}ttacks (\INDCCA). 

\begin{definition}[\INDCPA, \INDCCA]\label{def:KEM:CPACCA}
	Let $\KEM = (\KemGen, \Encaps, \Decaps)$ be a key encapsulation mechanism with key space \KeySpace.
	For $\atk \in \lbrace \CPA, \CCA\rbrace$, we define \INDATKKEM games as in \cref{fig:Def:KEM:CPACCA}, where
	\[\mathsf{O}_\atk := \left\{
	\begin{array}{ll}
		-								&	\atk = \CPA \\
		\textnormal{\oracleDecaps}		&	\atk = \CCA
	\end{array}	\right. \enspace .	\]
	
	We define the \INDATKKEM \textit{advantage function of an adversary \Ad{A} against \KEM} as
	\[ \Adv^{\INDATKKEM}_{\KEM}(\Ad{A}) := |\Pr[\INDATKKEM^{\Ad{A}} \Rightarrow 1] - \nicefrac{1}{2}| \enspace. \]
	
	\begin{figure}[h!]\begin{center}\fbox{\small
				
		\nicoresetlinenr
				
		\begin{minipage}[t]{3.9cm}
					\underline{{\bf Game} \INDATKKEM}
					\begin{nicodemus}
						\item $(\pk,\sk) \leftarrow \KemGen$
						\item $b \uni \bits$
						\item $(K_0^*,c^*) \leftarrow \Encaps(\pk)$
						\item $K_1^* \uni \KeySpace$
						\item $b'\leftarrow \Ad{A}^{\mathsf{O}_\atk}(\pk, c^*,K_b^*)$
						\item \pcreturn $\bool{b' = b}$
					\end{nicodemus}%
				\end{minipage}%
				\;
				\begin{minipage}[t]{3.2cm}
					\underline{$\oracleDecaps(c \neq c^*)$}
					\begin{nicodemus}
						\item $K := \Decaps(\sk,c)$
						\item \pcreturn $K$
					\end{nicodemus}%
				\end{minipage}%
			}
		\end{center}
		\caption{Game \INDATKKEM for \KEM, where $\atk \in \lbrace \CPA, \CCA\rbrace$ and
			$\mathsf{O}_\atk$ is defined in \cref{def:KEM:CPACCA}.}
		\label{fig:Def:KEM:CPACCA}
	\end{figure}
	
\end{definition}
 		
	\fi

\fi

\section{ROM reduction}\label{sec:ROM}
This section substantiates the upper half of \cref{fig:Intro} in the 
\ifTightOnSpace
ROM.
\else
random oracle model.
\fi
The first step of  common security reductions for the FO transformation consists of simulating the decapsulation oracle without using the secret key.
This simulation allows transforming an $\INDCCAKEM$-adversary $\Ad{A}$ against $\KemExplicitMess := \FOExplicitMess[\PKE, \RO{G},\RO{H}]$
into an $\INDCPAKEM$-adversary $\tilde{\Ad{A}}$ against the same \KemExplicitMess.
The oracle simulation, however, will not accurately simulate the behaviour of \Decaps for ciphertexts that trigger decryption errors.
We will show that from an adversary capable of distinguishing between the real decapsulation oracle and its simulation,
we can construct an adversary \Ad{B} that is able to extract failing plaintexts for the derandomised version \PKEDerand of \PKE %
\ifTightOnSpace\else 
	(as defined in \cref{fig:Def-Derandomized-PKE} on page \pageref{fig:Def-Derandomized-PKE})%
\fi.
In more detail, we formalise extraction of failing plaintexts as the winning condition of two \underline{F}ind \underline{F}ailing \underline{P}laintext (\FFP) games, which we formally define in \cref{def:FFPATK} (also see \cref{fig:def:FFPATK}).
For $\mathsf{ATK}\in\{\CPA, \CCA\}$, an adversary $\Ad{B}$ playing the  $\FFP$-$\mathsf{ATK}$ game for a deterministic encryption scheme \PKE gets access to the same oracles as in the respective $\IND$-$\mathsf{ATK}$ game, outputs a message $m$,
and wins if $\Decrypt(\Encrypt(m))\neq m$. 
(Here, and in the following, we sometimes omit the arguments $pk$ and $sk$, respectively.)
For such messages $m$ we say that $m$ is a \emph{failing plaintext}, or shorter, that $m$ \emph{fails}.
\ifTightOnSpace \else 
	We will first show in \cref{thm:FFP1} that any attacker against the \INDCCAKEM security of $\FOExplicitMess[\PKE, \RO{G},\RO{H}]$ can be used to construct an \INDCPAKEM attacker against $\FOExplicitMess[\PKE, \RO{G},\RO{H}]$ and an attacker against the correctness of \PKEDerand that has access to \oracleDecrypt,
	i.e., an attacker that succeeds in game $\FFPCCA$.
	The maximum winning probability in the \FFPCCA game is still quite an unwieldy object.
	In particular, clever strategies have been devised to make adaptive use of a decryption oracle towards finding failing plaintexts. Fortunately, we can use the same strategy underlying our simulation of \oracleDecaps once more to show in \cref{thm:FFP2} that any successful \FFPCCA adversary can be used to construct an adversary succeeding in the \FFPCPA game, meaning that it is sufficient to analyse the success probability of attackers trying to come up with failing plaintexts, having nothing on their hands but the public key.
	It is then shown how \cite{TCC:HofHovKil17} can be used to argue that \INDCPAKEM security of $\FOExplicitMess[\PKE, \RO{G},\RO{H}]$ can be based on either \OWCPA or \INDCPA security of \PKE, with the latter implication being tight up to a factor of 3.
	Lastly, we discuss that the main result of this section also works if we consider the implicitly rejecting variant $\FOImplicitMess[\PKE, \RO{G},\RO H]$ instead of $\FOExplicitMess[\PKE, \RO{G},\RO{H}]$ (see \cref{remark:FOexplicitToFOImplicit}).		
\fi
The final bounds we obtain are essentially similar to the ones in \cite{TCC:HofHovKil17} except for involving a different correctness definition, see the discussion after \cref{remark:FOexplicitToFOImplicit}.

\begin{definition}[\FFPATK] \label{def:FFPATK}
	Let $\PKE=(\KG,\Encrypt,\Decrypt)$ be a deterministic public-key encryption scheme. 
	For $\atk \in \lbrace \CPA, \CCA \rbrace$,
	we define \FFPATK games as in \cref{fig:def:FFPATK},
	where \ifTightOnSpace $\mathsf{O}_\atk$ is trivial if $\atk =\CPA$ and
	\[\mathsf{O}_\atk :=		\oracleDecrypt		\ \text{ if }	\atk = \CCA.	\]
	\else
		\[\mathsf{O}_\atk := \left\{
	\begin{array}{ll}
		-								&	\atk = \CPA \\
		\textnormal{\oracleDecrypt}		&	\atk = \CCA
	\end{array}
	\right. \enspace .
	\]
		\fi
We define the \FFPATK \textit{advantage function of an adversary \Ad{A} against \PKE} as
	\[ \Adv^{\FFPATK}_{\PKE}(\Ad{A}) 	:= \Pr [\FFPATK^{\Ad{A}}_\PKE \Rightarrow 1 ] \enspace .\]

	\begin{figure}[tb] \begin{center} \fbox{\small
				
		\nicoresetlinenr
		
		\begin{minipage}[t]{4.0cm}	
			
			\underline{{\bf Game} \FFPATK}
			\begin{nicodemus}
				\item $(\pk, \sk) \leftarrow \KG$
				\item $m \leftarrow \Ad{A}^{\textsc{O}_\atk, \RO{G}}(\pk)$
				\item $c := \Encrypt(\pk,m)$
				\item $m' := \Decrypt(\sk,c)$
				\item \pcreturn $\bool{m' \neq m}$
			\end{nicodemus}
			
		\end{minipage}
		
		\quad
		
		\begin{minipage}[t]{4.0cm}
			\underline{$\oracleDecrypt(c)$}
			\begin{nicodemus}
				\item $m := \Decrypt(\sk,c)$
				\item \pcreturn $m$
			\end{nicodemus}
		\end{minipage}
	}
	\end{center}
	\ifTightOnSpace \vspace{-12pt} \fi
	\caption{
		Games \FFPATK for a deterministic \PKE, where $\atk \in \lbrace \CPA, \CCA \rbrace$.
		$\mathsf{O}_\atk$ is the decryption oracle present in the respective \INDATK game (see \cref{def:FFPATK}) and \RO{G} is a random oracle, provided if it is used in the definition of \PKE.
	}
	\label{fig:def:FFPATK}
	\ifTightOnSpace \vspace{-12pt} \fi
	\end{figure}
	
\end{definition}

\noindent Note that in neither \FFPATK game, the adversary has access to the secret key.
In particular, the \FFPCPA game only differs from the correctness game \COR defined in \cite{TCC:HofHovKil17} in exactly this fact, as game \COR additionally provides the secret key.
We note that an adversary winning either \FFPATK game for a deterministic scheme \PKE can be used to win in game \COR.

We begin by introducing two simulations of the \Decaps oracle, oracle \oracleDecapsSim
and a variant \oracleDecapsSimFail of \oracleDecapsSim.
\oracleDecapsSimFail extracts failing plaintexts from adversarial decapsulation queries,
and is simulatable by \FFP adversaries with access to the decryption oracle \oracleDecrypt for \PKEDerand.
Both simulations of the \Decaps oracle make use of a list $\mathcal{L}$ of previous queries to $\RO{G}$ and their respective encryptions.
For this to work, we replace \RO{G} with a modification $\RO{G'}$ that keeps track of all issued queries and compiles $\mathcal{L}$.
The original \Decaps oracle and its simulations are defined in \cref{fig:SimDecaps}, using the following conventions.
For a set of pairs $ \mathcal{L} \subset \mathcal X \times \mathcal Y$, we assume that a total order is chosen on $\mathcal X$ and $\mathcal Y$.
We denote by $\mathcal{L}^{-1}(y)$ the first preimage of $y$. Formally, we define $\mathcal{L}^{-1}(y)$ by setting
\begin{equation}\label{eq:PickPreimageFromList}
	\mathcal{L}^{-1}(y) \coloneqq \begin{cases}
		x		& \text{if }(x,y)\in \mathcal{L} \text{ and } x\le x' \text{ for all } x' \text{ s. th. } (x',y)\in \mathcal{L}\\
		\bot	&\nexists \ x \text{ s. th. } (x,y)\in \mathcal{L}.
	\end{cases}
\end{equation}

The simulation \oracleDecapsSim can, however, only \emph{reverse} encryptions that were already computed by the adversary (with a query to oracle \RO{G'}) \emph{before} their query to oracle \oracleDecapsSim, which is where the spreadness of \PKE comes into play:
If $\gamma$ is large, it becomes unlikely that the attacker can guess an encryption $c = \Encrypt(\pk, m; \RO{G}(m))$ without a respective query to \RO{G}.
\oracleDecapsSim will furthermore answer inconsistently if the reversion (in other words, the preimage) of $c$ differs from its decryption, meaning that $c$ belongs to a failing plaintext that can be recognized by the failure-extracting variant \oracleDecapsSimFail.

\begin{figure}[tb]\begin{center}\makebox[\textwidth][c]{
	\nicoresetlinenr
		\centering
	\resizebox{\textwidth}{!}{
	\fbox{\small
		\begin{minipage}[t]{4.3cm}	
			
			\underline{$\oracleDecaps(c)$}
			\begin{nicodemus}
				\item $m' := \Decrypt(\sk,c)$
				\item \pcif $m'=\bot$
					\item \quad $\pcreturn K {\coloneqq} \bot$
				\item \pcelse
					\item \quad  $c' := \Encrypt(\pk, m'; \RO{G}(m'))$
					\item \quad \pcif $c \neq c'$
						\item \quad\quad \pcreturn $\bot$
					\item \quad \pcelse
						\item \quad\quad \pcreturn $\RO{H}(m')$  
			\end{nicodemus}
			
			\ \\
		
			\underline{$\RO{G'} (m)$}
			\begin{nicodemus}
				\item $r := \RO{G}(m)$
				\item $c \coloneqq \Encrypt(\pk, m; r)$
				\item $\ListQueriesG\coloneqq\ListQueriesG\cup \{(m, c)\}$
				\item \pcreturn $r$
			\end{nicodemus}
			
		\end{minipage}
		\;
		
		\begin{minipage}[t]{4.5cm}	
			
			\underline{$\oracleDecapsSim(c \neq c^*)$}
			\begin{nicodemus}
				\item $m \coloneqq \ListQueriesG^{-1}(c)$ \label{line:SimDecaps:Extract1}
				\item \pcif $m=\bot$
					\item \quad $\pcreturn K {\coloneqq} \bot$
				\item \pcelse
				\item \quad \pcreturn $K \coloneqq \RO{H}(m)$
			\end{nicodemus}
		
			\ \\
		
		\underline{$\oracleDecrypt(c\neq c^*)$}
		\begin{nicodemus}
			\item $m' \coloneqq \Decrypt(\sk, c)$
			\item \pcif $m' = \bot $ 
				\item \quad \pcreturn $\bot$
			\item \pcelse
				\item \quad  \pcif $\Encrypt(\pk, m'; \RO{G}(m')) \neq c $ \label{line:SimDecaps:Reenc}
					\item \quad \quad \pcreturn $\bot$
				\item \quad \pcelse \pcreturn $m'$
		\end{nicodemus}
		\end{minipage}
		
		\;
		
		\begin{minipage}[t]{4.9cm}	
			
			\underline{$\oracleDecapsSimFail(c\neq c^*)$}
			\begin{nicodemus}
				\item $m\coloneqq \ListQueriesG^{-1}(c)$ \label{line:SimDecaps:Extract2}
				\item $m' \coloneqq \oracleDecrypt(c)$
				\item \pcif $m \neq \bot \pcand m \neq m'$
					\item \quad $\ListFail \coloneqq \ListFail \cup \{ m \}$ 
				\item \pcif $m=\bot$
					\item \quad $\pcreturn K {\coloneqq} \bot$
				\item \pcelse
					\item \quad \pcreturn $K \coloneqq \RO{H}(m)$
			\end{nicodemus}

		\end{minipage}
		
	}	
	}}
	\ifTightOnSpace \vspace{-8pt} \fi
	\caption{
		Simulation \oracleDecapsSim of oracle \oracleDecaps for \KemExplicitMess, failing-plaintext-extracting version \oracleDecapsSimFail of \oracleDecapsSim, and decryption oracle \oracleDecrypt for \PKEDerand.
		Oracles \oracleDecapsSim and \oracleDecapsSimFail use in lines \ref{line:SimDecaps:Extract1} and \ref{line:SimDecaps:Extract2} the notation introduced in Equation \eqref{eq:PickPreimageFromList}. Note that \RO{G'} only differs from \RO{G} by compiling list $\ListQueriesG$ (which we assume to be initialized to $\emptyset$).
	}
	\label{fig:SimDecaps}
	\end{center}
	\ifTightOnSpace \vspace{-30pt} \fi
\end{figure}

\begin{theorem}[$\FOExplicitMess\lbrack\PKE\rbrack$ \INDCPA and \PKEDerand \FFPCCA \RightarrowROM $\FOExplicitMess\lbrack\PKE\rbrack$ $\INDCCA$]\label{thm:FFP1}\label{thm:INDandFFPtoCCA:ROM}
	Let \PKE be a (randomised) \PKE scheme that is $\gamma$-spread, and let $\KemExplicitMess \coloneqq \FOExplicitMess[\PKE, \RO{G},\RO H]$.
	Let \Ad{A} be an \INDCCAKEM-adversary (in the ROM) against \KemExplicitMess, making at most \numberDecapsQueries many queries to its decapsulation oracle \oracleDecaps %
	.
	Then there exist an \INDCPAKEM adversary $\tilde{\Ad{A}}$ and an \FFPCCA adversary \Ad{B} against \PKEDerand such that
	\begin{equation}\label{eq:FFP1}
		\Adv^{\INDCCAKEM}_{\KemExplicitMess }(\Ad{A})
			\le \Adv^{\INDCPAKEM}_{\KemExplicitMess }\left(\tilde{\Ad{A}}\right)
				+\Adv^{\FFPCCA}_{\PKEDerand}\left(\Ad{B}\right)
				+\numberDecapsQueries\cdot 2^{-\gamma}.
	\end{equation}
	\ifTightOnSpace \else Adversary \fi $\tilde{\Ad A}$ makes $q_\RO{G}$ queries to \RO{G} and $q_\RO{H}+\numberDecapsQueries$ queries to \RO{H}, \ifTightOnSpace \else adversary \fi \Ad B makes $q_\RO{G}$ queries to \RO{G} and $\numberDecapsQueries$ decryption queries, and both adversaries run in about the time of \Ad{A}.
{%
}
\end{theorem}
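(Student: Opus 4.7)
The plan is a three-game hopping argument that substitutes the decapsulation oracle by a simulation not depending on $\sk$, with the distinguishing loss charged either to the $\FFPCCA$ game or to $\gamma$-spreadness. I start from game $G_0$, the original $\INDCCAKEM$ game against $\KemExplicitMess$, and move to $G_1$ by replacing $\RO{G}$ with the bookkeeping variant $\RO{G'}$ from \cref{fig:SimDecaps}; since $\RO{G'}$ only records the list $\ListQueriesG$ of adversarial queries together with their induced ciphertexts without changing any answer, the two games are perfectly indistinguishable. Game $G_2$ additionally replaces $\oracleDecaps$ by $\oracleDecapsSim$, which returns $\RO{H}(m)$ whenever $c$ has been computed via an $\RO{G'}$-query on some $m$, and $\bot$ otherwise. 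In $G_2$ the decapsulation oracle no longer uses $\sk$, so the $\INDCPAKEM$-adversary $\tilde{\Ad{A}}$ is defined as the algorithm that runs $\Ad{A}$ while simulating its oracles using only $\pk$, $\RO{G'}$, and $\RO{H}$, giving $|\Pr[G_2=1]-1/2| \le \Adv^{\INDCPAKEM}_{\KemExplicitMess}(\tilde{\Ad{A}})$.

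The heart of the proof is bounding $|\Pr[G_1=1]-\Pr[G_2=1]|$. Because $\oracleDecapsSimFail$ returns exactly the same values to $\Ad{A}$ as $\oracleDecapsSim$ (differing only in the extra bookkeeping into $\ListFail$), this distance is upper bounded by the probability of the event that some decapsulation query $c$ makes $\oracleDecaps$ and $\oracleDecapsSimFail$ disagree. I would split this event into two subevents. Case (a): $m := \ListQueriesG^{-1}(c) \neq \bot$ but $m' := \Decrypt(\sk, c)$ differs from $m$; then $\oracleDecapsSimFail$ records $m$ in $\ListFail$, and $m$ is a failing plaintext for $\PKEDerand$ regardless of whether the re-encryption check of $m'$ passes or returns $\bot$. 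Case (b): $\ListQueriesG^{-1}(c) = \bot$ yet $\oracleDecaps(c) \neq \bot$, meaning that $m' := \Decrypt(\sk, c)$ satisfies $\Encrypt(\pk, m'; \RO{G}(m'))=c$.

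For case (b), the absence of $c$ from $\ListQueriesG$ implies that no $\RO{G'}$-query on $m'$ has preceded the decapsulation call; hence $\RO{G}(m')$ is uniform and independent of $\Ad{A}$'s view at the moment of that query, so by $\gamma$-spreadness of \PKE the re-encryption check succeeds with probability at most $2^{-\gamma}$. A union bound over the at most $\numberDecapsQueries$ decapsulation queries yields the term $\numberDecapsQueries \cdot 2^{-\gamma}$. I expect the careful lazy-sampling bookkeeping underlying this step to be the main subtlety, since one must track precisely what the adversary has learned about $\RO{G}(m')$ at the moment the decapsulation query is processed, including scenarios where $m'$ is queried to $\RO{G'}$ only after the decapsulation call.

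For case (a), I build the $\FFPCCA$-adversary $\Ad{B}$: given $\pk$ and access to $\oracleDecrypt$ and $\RO{G}$, it runs $\Ad{A}$ internally while computing the $\INDCCA$ challenge $(c^*, K_b^*)$ itself, lazily sampling $\RO{H}$, wrapping $\RO{G}$ as $\RO{G'}$ to maintain $\ListQueriesG$, and running $\oracleDecapsSimFail$ for decapsulation queries, forwarding its internal $\oracleDecrypt$-calls to its own $\oracleDecrypt$-oracle. When $\Ad{A}$ halts, $\Ad{B}$ outputs any element of $\ListFail$, defaulting to an arbitrary message if the list is empty. By the case (a) analysis, every $m \in \ListFail$ satisfies $\DecryptDerand(\Encrypt(\pk, m; \RO{G}(m))) \neq m$, hence $\Pr[\text{case (a) occurs}] \le \Adv^{\FFPCCA}_{\PKEDerand}(\Ad{B})$. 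Combining with case (b) via the triangle inequality yields \eqref{eq:FFP1}; the stated query counts follow from inspection of the simulations (one extra $\RO{H}$-call inside $\oracleDecapsSim$ per decapsulation query for $\tilde{\Ad{A}}$, one $\oracleDecrypt$-call inside $\oracleDecapsSimFail$ per decapsulation query for $\Ad{B}$).
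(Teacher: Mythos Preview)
Your proposal is correct and follows essentially the same approach as the paper: define $\tilde{\Ad{A}}$ via the list-based simulation $\oracleDecapsSim$, charge the divergence event to either an $\FFPCCA$ win (your case (a), which merges the paper's first two sub-cases) or to the $\gamma$-spreadness bound for the \EventGuessedCT event (your case (b)), and build $\Ad{B}$ by running $\Ad{A}$ under $\oracleDecapsSimFail$ and outputting from $\ListFail$. The only cosmetic differences are that the paper argues via a direct one-sided probability chain rather than an explicit $G_0$--$G_2$ hop, and that the paper's $\Ad{B}$ aborts as soon as $\ListFail$ becomes non-empty whereas yours waits until $\Ad{A}$ terminates; neither affects the bound or the query counts.
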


\begin{proof}
Let \Ad{A} be an adversary against \KemExplicitMess.
We define $\tilde{\Ad{A}}$ as the \INDCPAKEM adversary against \KemExplicitMess that runs $b' \leftarrow \Ad{A}^{\RO G', \RO H, \oracleDecapsSim}$ and returns $b'$.
We furthermore define our \FFPCCA adversary \Ad{B} against \PKEDerand as follows:
\Ad{B} runs $\Ad{A}^{\RO G', \RO H, \oracleDecapsSimFail}$, using its own \FFPCCA oracle \oracleDecrypt to simulate \oracleDecapsSimFail.
As soon as $\oracleDecapsSimFail$ adds a plaintext $m$ to \ListFail, \Ad{B} aborts \Ad{A} and returns $m$.
If \Ad{A} finishes and \ListFail is still empty, \Ad{B} returns $\bot$.

First, we will relate \Ad{A}'s success probability to the one of $\tilde{\Ad{A}}$.
Note that unless $\tilde{\Ad{A}}$'s simulation \oracleDecapsSim of the decapsulation oracle fails,
$\tilde{\Ad{A}}$ perfectly simulates the game to \Ad{A} and wins if \Ad{A} wins.
Let $\EventDecapsSimDiffers$ be the event that $\Ad{A}$ makes a decryption query $c$
such that $\Decaps(\sk, c) \neq \oracleDecapsSim(c)$. 
We bound

\ifTightOnSpace
\begin{align*}
	&\frac 1 2\! +\! \Adv^{\INDCCAKEM}_{\KemExplicitMess}\!(\Ad{A})\!
	= \! \Pr\left[\Ad{A} \mathrm{\ wins}\right]\!= \! \Pr\left[\Ad{A} \mathrm{\ wins} \!\wedge\! \neg \EventDecapsSimDiffers \right]
\!	+ \!\Pr\left[\Ad{A} \mathrm{\ wins} \!\wedge\! \EventDecapsSimDiffers \right]\\
	&=  \Pr\left[\tilde{\Ad{A}} \mathrm{\ wins} \wedge \neg \EventDecapsSimDiffers \right]
	+\Pr\left[\Ad{A} \mathrm{\ wins} \wedge \EventDecapsSimDiffers \right]\le  \Pr\left[\tilde{\Ad{A}} \mathrm{\ wins}\right]+\Pr\left[\EventDecapsSimDiffers\right]\\
	=&\frac 1 2+ \Adv^{\INDCPAKEM}_{\KemExplicitMess}\left(\tilde{\Ad{A}}\right)+\Pr\left[\EventDecapsSimDiffers\right].
\end{align*}
\else
\begin{align}
	\frac 1 2 + \Adv^{\INDCCAKEM}_{\KemExplicitMess}(\Ad{A})
		= & \Pr\left[\Ad{A} \mathrm{\ wins}\right]\\
		= & \Pr\left[\Ad{A} \mathrm{\ wins} \wedge \neg \EventDecapsSimDiffers \right]
			+ \Pr\left[\Ad{A} \mathrm{\ wins} \wedge \EventDecapsSimDiffers \right]\\
		= & \Pr\left[\tilde{\Ad{A}} \mathrm{\ wins} \wedge \neg \EventDecapsSimDiffers \right]
			+\Pr\left[\Ad{A} \mathrm{\ wins} \wedge \EventDecapsSimDiffers \right]\\
		\le & \Pr\left[\tilde{\Ad{A}} \mathrm{\ wins}\right]+\Pr\left[\EventDecapsSimDiffers\right]\\
		=&\frac 1 2+ \Adv^{\INDCPAKEM}_{\KemExplicitMess}\left(\tilde{\Ad{A}}\right)+\Pr\left[\EventDecapsSimDiffers\right].
\end{align}
\fi

To analyze the probability of event  $\EventDecapsSimDiffers$, we note that
it
covers several cases:
\begin{itemize}
	\item[-]  Original oracle $\oracleDecaps(c)$ rejects, whereas simulation $\oracleDecapsSim(c)$ does not,
		meaning that $c$ is an encryption belonging to a previous query $m$ to \RO{G'}, but fails the reencryption check performed by $\oracleDecaps(c)$. Since the latter means that either $m' \coloneqq  \Decrypt(\sk, c) = \bot$ or that $\Encrypt(\pk, m'; \RO{G}(m')) \neq c = \Encrypt(\pk, m; \RO{G}(m))$, this cases only occurs if $\Decrypt(\sk, c) \neq m$, meaning $m$ fails.
 	\item[-] Neither oracle rejects, but the return values differ, 
        i.e., $c$ is an encryption belonging to a previous query $m$ to \RO{G'}, but decrypts to some message $m' \neq m$.
	\item[-] $\oracleDecapsSim(c)$ rejects, whereas $\oracleDecaps(c)$ does not, 
	i.e., 
	while $c$ would pass the reencryption check, its decryption $m$ has not yet been queried to \RO{G'}.
\end{itemize}

In either of the former two cases, \RO{G'} has been queried on a failing plaintext $m$ and the decapsulation oracle has been queried on its encryption $c$, meaning that the failing plaintext can be found and recognized by \Ad{B} since \Ad{B} can use its own \FFPCCA oracle \oracleDecrypt to simulate \oracleDecapsSimFail.
We will denote the last case by \EventGuessedCT since \Ad{A} has to find a guess for a ciphertext $c$ that passes the reencryption check, meaning it is indeed of the form $c = \Encrypt(\pk, m; \RO{G'}(m))$ for $m := \Decrypt(\sk, c)$, while not having queried \RO{G'} on $m$ yet.
Whenever \EventDecapsSimDiffers occurs, \Ad{B} succeeds unless \EventGuessedCT occurs. In formulae,
\ifTightOnSpace
 \begin{align*}
 		\Pr\!\left[\EventDecapsSimDiffers\right]\! =\! \Pr\!\left[\EventDecapsSimDiffers\!\wedge\!\neg\EventGuessedCT\right] \!+ \!\Pr\!\left[\EventDecapsSimDiffers\!\wedge\! \EventGuessedCT\right]
 		\!\le\! \Adv^{\FFPCCA}_{\PKEDerand}\!\left(\Ad{B}\right)\!+\!\Pr\left[\EventGuessedCT\right]\!.
 \end{align*}
\else
\begin{align*}
	\Pr\left[\EventDecapsSimDiffers\right] =& \Pr\left[\EventDecapsSimDiffers\wedge\neg\EventGuessedCT\right] + \Pr\left[\EventDecapsSimDiffers\wedge \EventGuessedCT\right]\\
	\le &\Adv^{\FFPCCA}_{\PKEDerand}\left(\Ad{B}\right)+\Pr\left[\EventGuessedCT\right] .
\end{align*}
\fi

Together with Lemma \ref{lem:pr-guess} below, this yields the desired bound. 
\qed
\end{proof}

We continue by bounding the probability of event \EventGuessedCT.
We will also need to analyze a very similar event in \cref{thm:FFP2}, in which we revisit the \FFPCCA attacker \Ad{B} against \PKEDerand, and where we will simulate \Ad{B}'s oracle \oracleDecrypt via an oracle \oracleDecryptSim (see \cref{fig:SimDecrypt}).
Therefore, we generalize the definition of event \EventGuessedCT accordingly.
\ifTightOnSpace Since \EventGuessedCT means that \Ad{A} computed a ciphertext $c = \Encrypt(\pk, m; \RO G(m))$ \emph{before} querying \RO{G} on $m$, the probability can be upper bounded in terms of the maximal probability of any ciphertext being hit by $\Encrypt(\pk, -; -)$. For completeness, we prove \cref{lem:pr-guess} in \cref{sec:proof:GuessGROM}. \fi
\begin{restatable}{lemma}{GuessROM}\label{lem:pr-guess}
	Let \PKE be $\gamma$-spread,
	and let \Ad{A} be an adversary expecting \ifTightOnSpace\else random \fi oracles \RO{G}, \RO{H} as well as either a decapsulation oracle \oracleDecaps for \ifTightOnSpace \KemExplicitMess \else $\KemExplicitMess \coloneqq \FOExplicitMess[\PKE, \RO{G},\RO H]$ \fi
	or a decryption oracle \oracleDecrypt for \PKEDerand,
	issuing at most \numberDecOrDecapsQueries queries to the latter.
	When run with \RO{G'} and simulated oracle \oracleDecapsSim (or \oracleDecryptSim, respectively),
	there is only a small probability that original oracle \oracleDecaps (\oracleDecrypt) would not have rejected,
	but simulation \oracleDecapsSim (\oracleDecryptSim) does. 
	Concretely, we have
	\begin{equation}
		\Pr\left[\EventGuessedCT\right]\le \numberDecOrDecapsQueries\cdot 2^{-\gamma}.
	\end{equation}
\end{restatable}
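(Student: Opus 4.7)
The plan is to establish the bound via a union bound over the at most $\numberDecOrDecapsQueries$ oracle queries, combined with a per-query analysis that exploits lazy sampling of $\RO{G}$ together with $\gamma$-spreadness of $\PKE$. Since $\RO{G'}$ only differs from $\RO{G}$ by bookkeeping into $\ListQueriesG$, I would work in an equivalent game in which $\RO{G}$ is lazily sampled, so that $\RO{G}(m)$ is drawn uniformly from $\RSpace$ the first time $m$ is queried (via $\RO{G'}$ or indirectly through the simulator).

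Next, I would fix an index $i \in \{1,\dots,\numberDecOrDecapsQueries\}$ and let $\EVENT_i$ denote the event that the $i$-th decapsulation (resp.\ decryption) query $c_i$ is the first one witnessing $\EVENTGUESSEDCT$. Unpacking the definition, this means that $m_i := \Decrypt(\sk, c_i)$ satisfies $m_i \neq \bot$ and $\Encrypt(\pk, m_i; \RO{G}(m_i)) = c_i$ (so the real oracle would not reject), while simultaneously $m_i$ has not been queried to $\RO{G'}$ at any time before the $i$-th oracle query (so that the simulated oracle does reject). In particular, $\RO{G}(m_i)$ has not yet been lazily sampled at this point in the execution.

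The key step is then a standard lazy-sampling argument: conditioning on the adversary's entire view, all oracle randomness used so far, and $\sk$, the ciphertext $c_i$ and the value $m_i = \Decrypt(\sk,c_i)$ are fully determined, while $\RO{G}(m_i)$ is still uniformly distributed on $\RSpace$ and independent of this conditioning (because $m_i$ is fresh). Hence the conditional probability of $\EVENT_i$ is upper-bounded by
\[
\Pr_{r \uni \RSpace}\!\bigl[\Encrypt(\pk, m_i; r) = c_i\bigr] \;\le\; \max_{c} \Pr_{r \uni \RSpace}\!\bigl[\Encrypt(\pk, m_i; r) = c\bigr] \;\le\; 2^{-\gamma},
\]
by $\gamma$-spreadness. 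Taking expectations and applying a union bound over $i$ would then yield $\Pr[\EVENTGUESSEDCT] \le \sum_{i=1}^{\numberDecOrDecapsQueries} \Pr[\EVENT_i] \le \numberDecOrDecapsQueries \cdot 2^{-\gamma}$.

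The main obstacle is formal rather than mathematical: one has to justify carefully that the conditioning isolates the single uniformly random value $\RO{G}(m_i)$ from the rest of the transcript despite the adversary's adaptivity. I would address this by specifying explicitly the lazy-sampling view of the game and observing that, for the first query $c_i$ triggering the event, the freshness condition on $m_i$ with respect to $\RO{G'}$ implies that $\RO{G}(m_i)$ has never been used elsewhere in the computation of the transcript up to that point; in particular, the sampling of $\sk$ and the construction of $c_i$ by the adversary both occur independently of $\RO{G}(m_i)$. No extra work is needed to cover the decapsulation and decryption cases simultaneously, since both simulated oracles detect the event in the same way.
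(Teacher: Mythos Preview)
Your proposal is correct and follows essentially the same approach as the paper: a per-query lazy-sampling argument showing that, at a query $c$ for which the real oracle would accept but $m:=\Decrypt(\sk,c)$ has not yet been queried to $\RO{G'}$, the value $\RO{G}(m)$ is still fresh uniform randomness, so $\Pr_{r\uni\RSpace}[\Encrypt(\pk,m;r)=c]\le 2^{-\gamma}$ by $\gamma$-spreadness, followed by a union bound over the $\numberDecOrDecapsQueries$ queries. Your extra care in conditioning and in restricting to the \emph{first} triggering query is a slight refinement but not a different idea.
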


\ifTightOnSpace\else %
\ifTightOnSpace
	\section{Proof of \cref{lem:pr-guess}} \label{sec:proof:GuessGROM}
	
	For easier reference, we repeat the statement of \cref{lem:pr-guess}.
	
	\GuessROM*
	
\fi

\begin{proof}
	The event \EventGuessedCT, i.e. the case that \oracleDecapsSim (\oracleDecryptSim)
	rejects on a ciphertext $c$ where \oracleDecaps (\oracleDecrypt) does not,
	requires that $c = \Encrypt(\pk, m; \RO G(m))$ for $m \coloneqq \Decrypt(\sk, c)$,
	and that \RO{G'} was not yet queried on $m$.
	Let $c$ be any ciphertext queried by the adversary for which \oracleDecaps does not reject,
	and let $m \coloneqq \Decrypt(\sk,c)$. We can bound
	\begin{align*}
		\Pr \left[ \oracleDecapsSim(c) = \bot \right]
		\le & \Pr[\Encrypt(\pk, m, \RO G' (m)) = c \wedge \RO G' \text{ not yet queried on } m] \\ 
		\le & \Pr_{r \uni \RSpace}\left[\Encrypt(\pk, m; r) = c\right]
		\le 2^{-\gamma},
	\end{align*}
	where the penultimate step used that $\RO{G}'$ has the same distribution as random oracle \RO{G} and that $\RO{G}(m)$ has not yet been sampled, and the last step used that \PKE scheme is $\gamma$-spread.
	Applying a union bound, we conclude that
	\begin{align*}
		\Pr\left[\EventGuessedCT\right]\le \numberDecOrDecapsQueries\cdot 2^{-\gamma}.
	\end{align*}
	\vspace{-1.1cm}{}\\
	\qed
\end{proof} %
\fi

So far, we have shown that whenever an \INDCCA adversary \Ad{A}'s behaviour is significantly changed by being run with simulation \oracleDecapsSim instead of the real oracle \oracleDecaps, we can use \Ad{A} to find a failing plaintext,
assuming access to the \FFPCCA decryption oracle \oracleDecrypt for \PKEDerand.
\ifTightOnSpace
We now show
\else
We now proceed by showing 
\fi
that \oracleDecrypt can be simulated via oracle \oracleDecryptSim (see \cref{fig:SimDecrypt}) without the secret key,
thereby being able to construct an \FFPCPA adversary from any \FFPCCA adversary that succeeds with the same probability up to (at most) a multiplicative factor equal to the number of decryption queries the \FFPCCA adversary makes.

\begin{theorem}[\PKEDerand \FFPCPA \RightarrowROM \PKEDerand \FFPCCA]\label{thm:FFP2}\label{thm:SimDec:ROM}
	Let \PKE be -$\gamma$-spread, and let \Ad{B} be an \FFPCCA adversary
	against \PKEDerand\ifTightOnSpace\else (in the ROM)\fi, issuing at most $\numberDecQueries$ many decryption queries. Then there exists an \FFPCPA adversary $\tilde{\Ad{B}}$ such that 
	\begin{equation}\label{eq:guessing-qB}
		\Adv^{\FFPCCA}_{\PKEDerand}(\Ad{B})
			\le (\numberDecQueries+1) \cdot \Adv^{\FFPCPA}_{\PKEDerand}\left(\tilde{\Ad{B}}\right)
				+\numberDecQueries \cdot2^{-\gamma} \enspace .
	\end{equation}
	Adversary $\tilde{\Ad{B}}$ makes at most the same number of queries to \RO{G} as \Ad{B} and runs in about the time of \Ad{B} .
\end{theorem}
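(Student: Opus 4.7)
The plan is to mimic the simulation strategy from the proof of \cref{thm:FFP1}, but one level down: construct an \FFPCPA adversary $\tilde{\Ad B}$ that runs $\Ad B$ while internally answering $\Ad B$'s \oracleDecrypt-queries by the key-independent simulation \oracleDecryptSim (defined in \cref{fig:SimDecrypt}, analogously to \oracleDecapsSim from \cref{fig:SimDecaps}), i.e., $c \mapsto \ListQueriesG^{-1}(c)$ after passing all $\RO G$-queries through the bookkeeping variant \RO{G'}. Since the winning message of an \FFPCCA adversary need not be its final output—any message in $\ListQueriesG$ whose image has been queried to the decryption oracle and triggered a failure is just as good—the key idea is for $\tilde{\Ad B}$ to record $\Ad B$'s final message $m^*$ and its decryption queries $c_1,\dots,c_{\numberDecQueries}$, then sample $i\uni\{0,1,\dots,\numberDecQueries\}$ and output $m^*$ if $i=0$ and $\ListQueriesG^{-1}(c_i)$ otherwise. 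This costs a factor $\numberDecQueries+1$ (the standard guessing loss) but no additional $\RO G$-queries.

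For the analysis, let $G_0$ be the \FFPCCA game with the real \oracleDecrypt and $G_1$ the game in which \oracleDecrypt is replaced by \oracleDecryptSim; both can be simulated simultaneously by keeping the secret key aside. Let \EventDecapsSimDiffers be the event that the two oracles answer differently on some query. An identical-until-bad argument gives
\[
	\Adv^{\FFPCCA}_{\PKEDerand}(\Ad B)\le \Pr[\Ad B\text{ wins in }G_1]+\Pr[\EventDecapsSimDiffers].
\]
I would then rerun the case analysis that was carried out in the proof of \cref{thm:FFP1}: on any query $c$ where the oracles disagree, either (i) some $m\in\ListQueriesG$ satisfies $\Encrypt(\pk,m;\RO G(m))=c$ and $\Decrypt(\sk,c)\neq m$, so $m$ is a failing plaintext of \PKEDerand that is visible in $\ListQueriesG$, or (ii) the event \EventGuessedCT has occurred. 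By \cref{lem:pr-guess}, the second contribution is at most $\numberDecQueries\cdot 2^{-\gamma}$.

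To finish, let $F_0$ denote "$m^*$ is failing" (equivalent to $\Ad B$ winning in $G_1$) and, for $i\ge 1$, $F_i$ denote "$\ListQueriesG^{-1}(c_i)$ is a failing plaintext". By construction, $\Adv^{\FFPCPA}_{\PKEDerand}(\tilde{\Ad B})=\tfrac{1}{\numberDecQueries+1}\sum_{i=0}^{\numberDecQueries}\Pr[F_i]$. The event "$\Ad B$ wins in $G_1$" implies $F_0$, while $\EventDecapsSimDiffers\wedge\neg\EventGuessedCT$ implies $F_j$ for at least one $j\ge 1$ (by case (i) above), so Markov's identity yields $\sum_{i=1}^{\numberDecQueries}\Pr[F_i]\ge\Pr[\EventDecapsSimDiffers\wedge\neg\EventGuessedCT]$. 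Combining the three inequalities gives exactly \eqref{eq:guessing-qB}.

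The main obstacle is the bookkeeping in the second step: one must argue that in every differing query the responsible failing plaintext is already stored in $\ListQueriesG$ (so $\tilde{\Ad B}$ can find it without the secret key), and that the remaining "bad" cases are exactly those absorbed by \cref{lem:pr-guess}. Once this case analysis is matched up with the one from \cref{thm:FFP1}, the multiplicative guessing loss of $\numberDecQueries+1$ is forced because $\tilde{\Ad B}$ has no way of telling a priori which of the $\numberDecQueries+1$ candidate messages it should return.
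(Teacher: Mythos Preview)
Your approach is essentially the paper's: both construct $\tilde{\Ad B}$ by running $\Ad B$ with the key-free simulation $\oracleDecryptSim$ and then guessing which of the $\numberDecQueries+1$ candidate messages (the final output, or a preimage attached to one of the decryption queries) is failing; the decomposition via \EventDecapsSimDiffers, \EventGuessedCT and \cref{lem:pr-guess} is identical.

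There is one small technical wrinkle in your construction that the paper's version sidesteps. You let $\tilde{\Ad B}$ run $\Ad B$ to completion and then output $\ListQueriesG^{-1}(c_i)$ computed from the \emph{final} list $\ListQueriesG$. But the case analysis only guarantees that $\oracleDecryptSim(c_j)=\ListQueriesG^{-1}(c_j)$ \emph{at the time of query $j$} is failing; if $\Ad B$ afterwards queries $\RO{G'}$ on some smaller $m'$ that also encrypts to $c_j$ and happens to equal $\Decrypt(\sk,c_j)$, the final $\ListQueriesG^{-1}(c_j)=m'$ is not failing, so your $F_j$ need not hold. The paper's $\tilde{\Ad B}$ avoids this by sampling $i$ upfront and \emph{stopping} at the $i$-th query, so it extracts from $\ListQueriesG$ as it stood at that moment. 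Your variant is repaired just as easily: have $\tilde{\Ad B}$ record the simulated responses $m_j:=\oracleDecryptSim(c_j)$ as they are produced and output $m_i$ instead of recomputing from the final list. With that adjustment your argument goes through verbatim. (A terminological aside: the inequality $\sum_j\Pr[F_j]\ge\Pr[\exists j\colon F_j]$ you use is the union bound, not ``Markov's identity''.)
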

\begin{proof}
	To simulate \oracleDecrypt, we use a similar strategy as in the proof of Theorem \ref{thm:FFP1}. We define the events $\EventDecapsSimDiffers$ and $\EventGuessedCT$ in the same way as in the proof of Theorem \ref{thm:FFP1}, except now with respect to the adversary $\Ad{B}$ and oracles \oracleDecrypt (\oracleDecryptSim) instead of \oracleDecaps (\oracleDecapsSim). If our simulation does not fail, then a reduction can simulate the \FFPCCA game to \Ad{B} and use \Ad{B}'s output to win its own \FFPCPA game. The simulation will fail if either \EventGuessedCT happens (with probability at most $\numberDecOrDecapsQueries\cdot 2^{-\gamma}$ due to \cref{lem:pr-guess}), or \EventDecapsSimDiffers, while \EventGuessedCT does not, meaning that the failing message triggering \EventDecapsSimDiffers can be extracted from \ListQueriesG.
	Our reduction $\tilde{\Ad{B}}$ combines both approaches (using \Ad{B}'s output and \ListQueriesG).
	Since $\tilde{\Ad{B}}$ has no knowledge of the secret key, it cannot determine which message will let it succeed and hence has to guess.

	Assume without loss of generality that $\Ad{B}$ makes exactly \numberDecQueries many queries to oracle \oracleDecrypt.
	Consider the adversary $\tilde{\Ad{B}}^{\RO G}$ in \cref{fig:SimDecrypt}. $\tilde{\Ad{B}}$  samples $i\leftarrow\{1,...,\numberDecQueries+1\}$ and either runs $\Ad{B}^{\RO G', \oracleDecryptSim}$ until its $i$-th query to \oracleDecryptSim or until the end if $i=\numberDecQueries+1$. To implement \RO{G'}, $\tilde{\Ad{B}}$ uses its oracle \RO{G}. Simulation \oracleDecryptSim is defined in \cref{fig:SimDecrypt} and works analogous to \oracleDecapsSim in the previous proof.
	Finally, $\tilde{\Ad{B}}$ outputs query preimage $\ListQueriesG^{-1}(c_i)$, where $c_i$ is  \Ad{B}'s $i$-th query to decryption oracle \oracleDecryptSim, unless $i=\numberDecQueries+1$, in which case $\tilde{\Ad{B}}$ outputs the output of \Ad{B}.
	
	\begin{figure}[tb]\begin{center}
			
			\nicoresetlinenr
			\resizebox{\textwidth}{!}{
                                \fbox{\small
				
				\begin{minipage}[t]{3.2cm}	
					
					\underline{$\oracleDecryptSim(c)$}
					\begin{nicodemus}
						\item $m\coloneqq \ListQueriesG^{-1}(c)$
						\item \pcreturn $m$
					\end{nicodemus}
\vspace{.1in}
					
					\underline{$\RO{G'} (m)$}
					\begin{nicodemus}
						\item $c \coloneqq \Encrypt(m; \RO{G}(m))$
						\item $\ListQueriesG\coloneqq\ListQueriesG\cup \{(m, c)\}$
						\item \pcreturn $\RO{G}(m)$
					\end{nicodemus}
					
				\end{minipage}
				
				\quad
				
				\begin{minipage}[t]{8.5cm}	
					
					\underline{$\tilde{\Ad{B}}^{\RO G}$}
					\begin{nicodemus}
						\item $i \uni \{1,...,\numberDecQueries+1\}$
						\item \pcif $i < \numberDecQueries+1$
							\item \quad Run $\Ad{B}^{\RO G', \oracleDecryptSim}$(\pk) until $i$-th query $c_i$ to \oracleDecryptSim
							\item \quad $m := \ListQueriesG^{-1}(c_i)$
						\item \pcelse 
							\item \quad $m \leftarrow \Ad{B}^{\RO G', \oracleDecryptSim}(\pk)$
						\item \pcreturn $m$
					\end{nicodemus}
				\end{minipage}
			
			}		
		}
	\ifTightOnSpace \vspace{-10pt} \fi
	\caption{Simulation \oracleDecryptSim of oracle \oracleDecrypt for \PKEDerand, which is defined analogously to \oracleDecapsSim (see Figure \ref{fig:SimDecaps}),
				and \FFPCPA adversary $\tilde{\Ad{B}}$. For the reader's convenience, we repeat the definition of \RO{G'}. }
			\label{fig:SimDecrypt}
	\end{center}
	\ifTightOnSpace \vspace{-18pt} \fi
	\end{figure}

	Using the same chain of inequalities as in the proof of \cref{thm:FFP1}, and again using Lemma \ref{lem:pr-guess}, we obtain
	\begin{equation}\label{eq:smalladvB}
			\Adv^{\FFPCCA}_{\PKEDerand}(\Ad{B})\le \Pr\left[\Ad{B}\text{ wins }\wedge \neg\EventDecapsSimDiffers\right]+\Pr\left[\EventDecapsSimDiffers\wedge\neg\EventGuessedCT\right]+\numberDecQueries\cdot 2^{-\gamma}.
	\end{equation}

	Adversary $\tilde{\Ad{B}}$ perfectly simulates game \FFPCCA unless \EventDecapsSimDiffers occurs, and wins with probability \nicefrac{1}{\numberDecQueries+1} if \Ad{B} wins by returning a failing plaintext or if \Ad{B} issues a decryption query that triggers \EventDecapsSimDiffers but not \EventGuessedCT.
	
	\begin{equation}\label{eq:largeadvBpr}
		\Adv^{\FFPCPA}_{\PKEDerand}\left(\tilde{\Ad{B}}\right)
			= \frac{1}{\numberDecQueries+1} \cdot \left(
				\Pr\left[\Ad{B}\text{ wins }\wedge \neg\EventDecapsSimDiffers\right] + \Pr\left[\EventDecapsSimDiffers\wedge\neg\EventGuessedCT\right]\right)
	\end{equation}
	
	Combining Equations \eqref{eq:smalladvB} and \eqref{eq:largeadvBpr} yields the desired bound.
\qed
\end{proof}

\ifTightOnSpace\else
Combining Theorems \ref{thm:FFP1} and \ref{thm:FFP2}, we obtain the following straightforwardly.
\begin{corollary}[$\FOExplicitMess\lbrack\PKE\rbrack$ \INDCPA and \PKEDerand \FFPCPA \RightarrowROM $\FOExplicitMess\lbrack\PKE\rbrack$ $\INDCCA$]\label{cor:FFP3}
	\ifTightOnSpace
		Let \PKE  and \Ad{A} be as in \cref{thm:FFP1}.%
	\else
		Let \PKE be $\gamma$-spread, and let $\KemExplicitMess  \coloneqq \FOExplicitMess[\PKE, \RO{G},\RO H]$. 
		Let \Ad{A} be an \INDCCAKEM adversary (in the ROM) against \KemExplicitMess,
		issuing at most $q_\RO{G}$ many queries to its oracle \RO{G}, $q_\RO{H}$ many queries to its oracle \RO{H}, and at most \numberDecQueries many queries to its decapsulation oracle \oracleDecaps.
	\fi
	Then there exist an \INDCPAKEM adversary $\tilde{\Ad{A}}$ and an \FFPCPA adversary \Ad{B} such that
	\begin{equation}\label{eq:FFP3}
		\Adv^{\INDCCAKEM}_{\KemExplicitMess}(\Ad{A})\le \Adv^{\INDCPAKEM}_{\KemExplicitMess}\left(\tilde{\Ad{A}}\right)
			+(\numberDecQueries+1) \cdot\Adv^{\FFPCPA}_{\PKEDerand}\left(\Ad{B}\right)+2\numberDecQueries\cdot 2^{-\gamma}
			\enspace .
	\end{equation}
	Adversary $\tilde{\Ad A}$ makes $q_\RO{G}$ queries to \RO{G} and $q_\RO{H}+\numberDecapsQueries$ queries to \RO{H}, adversary \Ad{B} makes $q_\RO{G}$ queries to \RO{G}, and both run in about the time of \Ad{A}.
\end{corollary}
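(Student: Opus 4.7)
The plan is to compose the two preceding theorems directly, since the corollary claims this composition is straightforward. First I would apply \cref{thm:FFP1} to the given \INDCCAKEM-adversary $\Ad{A}$ to obtain an \INDCPAKEM-adversary $\tilde{\Ad{A}}$ against \KemExplicitMess and an \FFPCCA-adversary, which I will call $\Ad{B}'$, against \PKEDerand, together with the bound
\begin{equation*}
\Adv^{\INDCCAKEM}_{\KemExplicitMess}(\Ad{A})\le \Adv^{\INDCPAKEM}_{\KemExplicitMess}(\tilde{\Ad{A}})+\Adv^{\FFPCCA}_{\PKEDerand}(\Ad{B}')+\numberDecapsQueries\cdot 2^{-\gamma}.
\end{equation*}
By the query-count information in \cref{thm:FFP1}, $\Ad{B}'$ issues at most $\numberDecQueries$ decryption queries and runs in about the time of $\Ad{A}$.

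Next I would feed $\Ad{B}'$ into \cref{thm:FFP2} to obtain an \FFPCPA-adversary $\Ad{B}\coloneqq\tilde{\Ad{B}'}$ satisfying
\begin{equation*}
\Adv^{\FFPCCA}_{\PKEDerand}(\Ad{B}')\le (\numberDecQueries+1)\cdot \Adv^{\FFPCPA}_{\PKEDerand}(\Ad{B})+\numberDecQueries\cdot 2^{-\gamma}.
\end{equation*}
Substituting this into the previous bound and using that $\numberDecapsQueries$ and $\numberDecQueries$ denote the same quantity (the number of decapsulation/decryption queries) yields
\begin{equation*}
\Adv^{\INDCCAKEM}_{\KemExplicitMess}(\Ad{A})\le \Adv^{\INDCPAKEM}_{\KemExplicitMess}(\tilde{\Ad{A}})+(\numberDecQueries+1)\cdot \Adv^{\FFPCPA}_{\PKEDerand}(\Ad{B})+2\numberDecQueries\cdot 2^{-\gamma},
\end{equation*}
which matches \eqref{eq:FFP3}. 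Tracking the query counts through the two reductions gives the stated bounds on the number of random-oracle queries of $\tilde{\Ad{A}}$ and $\Ad{B}$, and each runs in about the time of $\Ad{A}$ since both underlying reductions merely wrap a single execution of their input adversary with oracle simulators.

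There is essentially no obstacle here beyond bookkeeping: the only things to be careful about are (i) matching the notation for the number of decapsulation queries against the notation for the number of decryption queries inherited by $\Ad{B}'$ in the invocation of \cref{thm:FFP2}, and (ii) verifying that the two $2^{-\gamma}$ terms indeed add up to $2\numberDecQueries\cdot 2^{-\gamma}$. Both are immediate, which is why the corollary can legitimately be stated as following straightforwardly from the composition.
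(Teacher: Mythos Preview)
Your proposal is correct and matches the paper's approach exactly: the paper simply states that the corollary follows ``straightforwardly'' from combining \cref{thm:FFP1} and \cref{thm:FFP2}, and your two-step composition with the bookkeeping of query counts and the two $\numberDecQueries\cdot 2^{-\gamma}$ terms is precisely that combination.
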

We remark that the factor $2$ in front of the additive term $\numberDecQueries\cdot 2^{-\gamma}$ is an artefact of our modular proof (in terms of Theorems \ref{thm:FFP1} and \ref{thm:FFP2}). It is straightforward to show that the bound of \cref{cor:FFP3} can be proven without the factor of $2$, when directly analyzing the composition of the reductions from Theorems \ref{thm:FFP1} and \ref{thm:FFP2}.
\fi

\ifTightOnSpace
	Next, we observe that \INDCPA security of \KemExplicitMess can be based on passive security of \PKE.
	This result is implicitly contained in \cite{TCC:HofHovKil17} since \cite{TCC:HofHovKil17} proved such a result for \INDCCA security of \KemExplicitMess,
	however, we provide in \cref{sec:INDCPAHHK} an explicit \cref{thm:PKEpassiveToINDCPAKEM:ROM} that gives simplified concrete bounds and a discussion how the simplified bound can be easily obtained from \cite{TCC:HofHovKil17}.
	Combining Thms. \ref{thm:FFP1} and \ref{thm:FFP2} with \cref{thm:PKEpassiveToINDCPAKEM:ROM}, we obtain the following
\else
	Next, we observe in \cref{thm:PKEpassiveToINDCPAKEM:ROM} that \INDCPA security of 
	\ifTightOnSpace \KemExplicitMess \else $\FOExplicitMess[\PKE, \RO{G},\RO H]$ \fi
	can be based on passive security of \PKE. While \cref{thm:PKEpassiveToINDCPAKEM:ROM} is implicitly contained in \cite{TCC:HofHovKil17},
	we make explicit in \ifCameraReady the full version \else \cref{sec:INDCPAHHK} \fi how it can be easily obtained.
	
	\begin{restatable}[\PKE \OWCPA or \INDCPA \RightarrowROM $\FOExplicitMess\lbrack\PKE\rbrack$ \INDCPA] {theorem}{INDCPAHHK}\label{thm:PKEpassiveToINDCPAKEM:ROM}
		Let $\KemExplicitMess \coloneqq \FOExplicitMess[\PKE, \RO{G},\RO H]$ for some PKE scheme \PKE.
		For any \INDCPA adversary \Ad{A} against $\KemExplicitMess$, issuing at most 
		$q_\RO{G}$ many queries to its oracle \RO{G} and $q_\RO{H}$ many queries to its oracle \RO{H},
		there exist an \OWCPA adversary $\Ad{B_{\OWCPA}}$ and an \INDCPA adversary $\Ad{B_{\INDCPA}}$ of roughly the same running time such that
		\[\Adv^{\INDCPA}_{\KemExplicitMess}(\Ad{A}) \leq (q_\RO{G} + q_\RO{H} +1 ) \cdot \Adv^{\OW}_{\PKE}(\Ad{B_{\OWCPA}})\ifTightOnSpace \text{ and}\fi  \]
		and 
		\[\Adv^{\INDCPA}_{\KemExplicitMess}(\Ad{A}) \leq 3 \cdot \Adv^{\INDCPA}_{\PKE}(\Ad{B_{\INDCPA}}) 
		+ \frac{2\cdot(q_\RO{G} + q_\RO{H} )+1}{|\MSpace|} \enspace .
		\]
	\end{restatable}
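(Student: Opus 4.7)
The plan is to use a standard game-hopping argument, exploiting the fact that in the \INDCPA-KEM game for \KemExplicitMess the adversary can gain no distinguishing advantage unless it ``finds'' the challenge message $m^*$ in a suitable sense. Concretely, I would hop from the real game $G_0$ to a game $G_1$ in which the real key $K_0^* = \RO{H}(m^*)$ is replaced by a key $K^*$ sampled uniformly and independently of the random oracles. Since \RO{H} is a random oracle, the two games are statistically indistinguishable conditioned on \Ad{A} never querying $m^*$ to \RO{H}; more generally the transition can be bounded by the probability that \Ad{A} queries $m^*$ to \emph{either} \RO{G} or \RO{H}, since a query to \RO{G} on $m^*$ also allows \Ad{A} to recognise $m^*$ via the public re-encryption check $c^* \stackrel{?}{=} \Encrypt(\pk, m; \RO{G}(m))$. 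In $G_1$ the challenge bit $b$ is information-theoretically hidden, so \Ad{A}'s advantage is zero.

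For the \OWCPA bound, I would construct $\Ad{B}_{\OWCPA}$ as follows: on input $(\pk, c^*)$ it simulates $G_1$ to \Ad{A} by using $c^*$ as the KEM challenge ciphertext, a freshly sampled uniform $K^*$ as the challenge key, and lazy sampling for \RO{G} and \RO{H}. After \Ad{A} halts, $\Ad{B}_{\OWCPA}$ outputs a uniformly chosen element from the list of queried inputs to \RO{G}, the list of queried inputs to \RO{H}, together with \Ad{A}'s (possibly explicit) guess, giving $q_\RO{G}+q_\RO{H}+1$ candidates. The simulation is faithful as long as \Ad{A} never queries the true $m^*$; if it does, the uniform choice hits $m^*$ with probability at least $1/(q_\RO{G}+q_\RO{H}+1)$, yielding the first claimed bound after a union-bound-style accounting.

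For the \INDCPA bound, $\Ad{B}_{\INDCPA}$ samples two uniform messages $m_0^*, m_1^* \uni \MSpace$, forwards them to its own \INDCPA challenger and receives $c^* \leftarrow \Encrypt(\pk, m_{b}^*)$ for an unknown bit $b$; it then simulates the KEM \INDCPA game using $c^*$ and a uniform key, and outputs $1$ if \Ad{A} produces a query to \RO{G} or \RO{H} that matches one of $m_0^*, m_1^*$. The factor $3$ arises from comparing the simulations against each other via a two-step triangle inequality (replacing $m_0^*$ then $m_1^*$ by ``reprogrammed'' messages so that the simulation is independent of $b$), each step paid for by the \INDCPA advantage of \Ad{B}_{\INDCPA}. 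The additive term $(2(q_\RO{G}+q_\RO{H})+1)/|\MSpace|$ precisely accounts for the probability that \Ad{A} happens to query one of the two random target messages \emph{before} seeing the challenge, or collides with $m_0^*$/$m_1^*$ through its output; this corresponds directly to the uniform sampling of $m_0^*, m_1^*$ and a union bound over all random oracle queries.

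The main obstacle, and the only subtle point, is the treatment of \RO{G} at the unknown input $m^*$: the challenge ciphertext is of the form $c^*=\Encrypt(\pk, m^*; \RO{G}(m^*))$, but in the reduction we know only $c^*$ and not the corresponding randomness. The usual one-way-to-hiding argument in the ROM resolves this, because it bounds the adversary's distinguishing advantage between the ``real'' \RO{G} and an independently sampled value at $m^*$ by the probability that $m^*$ appears in the \RO{G}-query list; this probability is exactly what feeds into the extractor above. Once one accepts that the extractor may search the \emph{union} of \RO{G} and \RO{H} queries (and, in the \OWCPA case, \Ad{A}'s own output), the counting of queries gives the exact constants stated in the theorem, and one recovers the bound as a direct specialisation of \cite{TCC:HofHovKil17} to the setting without a decapsulation oracle.
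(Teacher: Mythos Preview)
Your overall strategy is sound, but the paper takes a different, purely modular route: it performs no new game hops and instead invokes the existing theorems of \cite{TCC:HofHovKil17}. Concretely, the paper observes that \cite[Thm.~3.5]{TCC:HofHovKil17} already reduces \INDCCA of $\KemExplicitMess=\TtwoExplicit[\PKEDerand,\RO H]$ to \OWVA of $\PKEDerand$; setting the number of decapsulation queries to zero and discarding every term that stems from simulating the decapsulation and validity oracles yields $\Adv^{\INDCPA}_{\KemExplicitMess}(\Ad A)\le\Adv^{\OWVA}_{\PKEDerand}(\tilde{\Ad A})$ for an $\tilde{\Ad A}$ that makes no validity queries. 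Then \cite[Thms.~3.1 and 3.2]{TCC:HofHovKil17} are cited (again stripping the redundant-oracle terms) to pass from \OWVA of $\PKEDerand$ to \OWCPA resp.\ \INDCPA of \PKE with exactly the stated constants. So the paper's proof is three citations plus the observation that specialising to $q_D=0$ kills the extra terms; your proposal instead reconstructs what those cited proofs do internally.

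That reconstruction is a legitimate alternative, but your \INDCPA reduction as written has a gap: the adversary $\Ad{B}_{\INDCPA}$ you describe outputs $1$ whenever \Ad A queries \emph{either} $m_0^*$ or $m_1^*$, a predicate that is symmetric in the challenge bit $b$ and hence gives $\Ad{B}_{\INDCPA}$ zero distinguishing advantage. You need $\Ad{B}_{\INDCPA}$ to test an asymmetric predicate, e.g.\ output $1$ iff \Ad A queries $m_0^*$; with that fix a single \INDCPA hop already bounds $\Pr[\text{query }m^*]$, and your ``two-step triangle inequality'' explanation for the factor $3$ is neither correct (two steps give $2$, not $3$) nor needed. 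The factor $3$ and the exact additive term $(2(q_\RO{G}+q_\RO{H})+1)/|\MSpace|$ in the theorem are artefacts of the specific game sequence in \cite[Thm.~3.2]{TCC:HofHovKil17}; a careful direct argument would in fact yield a tighter constant, which of course still implies the stated bound.
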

	
	Combining \cref{cor:FFP3} and \cref{thm:PKEpassiveToINDCPAKEM:ROM}, we obtain the following straightforward
\fi

\begin{corollary}[\PKE \OWCPA or \INDCPA and \PKEDerand \FFPCPA \RightarrowROM $\FOExplicitMess\lbrack\PKE\rbrack$ $\INDCCA$]\label{cor:PKEpassiveAndFFPCPAtoINDCPAKEM:ROM}
	\ifTightOnSpace
		Let \PKE  and \Ad{A} be as in \cref{thm:FFP1}.
	\else
		Let \PKE be a (randomized) \PKE scheme that is $\gamma$-spread, and let $\KemExplicitMess \coloneqq \FOExplicitMess[\PKE, \RO{G},\RO H]$.
		Let \Ad{A} be an \INDCCAKEM adversary (in the ROM) against \KemExplicitMess, making at most $q_\RO{RO}$ many queries to its random oracles \RO{G} and \RO{H}, and \numberDecQueries many queries to its decapsulation oracle \oracleDecaps.
	\fi
	Then there exist a \OWCPA adversary $\Ad{B_{\OWCPA}}$ and an \INDCPA adversary $\Ad{B_{\INDCPA}}$ such that
	\begin{align*}
		\Adv^{\INDCCAKEM}_{\KemExplicitMess}(\Ad{A})
			\le & (q_\RO{RO} + \numberDecQueries + 1) \cdot \Adv^{\OW}_{\PKE}(\Ad{B_{\OWCPA}})
		\\
		&\quad\ +(\numberDecQueries+1) \cdot\Adv^{\FFPCPA}_{\PKEDerand}\left(\Ad{C}\right)+2\numberDecQueries\cdot 2^{-\gamma}
	\end{align*}
	and
	\begin{align*}
		\Adv^{\INDCCAKEM}_{\KemExplicitMess}(\Ad{A})
		& \le  3 \cdot \Adv^{\INDCPA}_{\PKE}(\Ad{B_{\INDCPA}}) + \frac{2\cdot(q_\RO{RO} + \numberDecQueries )+1}{|\MSpace|} \\
		& +(\numberDecQueries+1) \cdot\Adv^{\FFPCPA}_{\PKEDerand}\left(\Ad{B}\right)+2\numberDecQueries\cdot 2^{-\gamma}.
	\end{align*}
	\ifTightOnSpace\else Adversary \fi \Ad{C} makes $q_\RO{G}$ queries to \RO{G}, and all adversaries run in about the time of \Ad{A}.
\end{corollary}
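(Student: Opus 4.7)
The plan is to obtain the corollary by a straightforward two-step composition, with no new reduction work required. First, I would invoke the combined result for \INDCCA of \KemExplicitMess in terms of \INDCPA of \KemExplicitMess and \FFPCPA of \PKEDerand: applying \cref{thm:INDandFFPtoCCA:ROM} to \Ad{A} yields an \INDCPAKEM adversary $\tilde{\Ad{A}}$ and an \FFPCCA adversary \Ad{B} against \PKEDerand with the bound from Equation \eqref{eq:FFP1}, and then applying \cref{thm:SimDec:ROM} to \Ad{B} produces an \FFPCPA adversary \Ad{C} against \PKEDerand satisfying Equation \eqref{eq:guessing-qB}. Chaining these gives
\[
  \Adv^{\INDCCAKEM}_{\KemExplicitMess}(\Ad{A})
    \le \Adv^{\INDCPAKEM}_{\KemExplicitMess}(\tilde{\Ad{A}})
      + (\numberDecQueries+1)\cdot\Adv^{\FFPCPA}_{\PKEDerand}(\Ad{C})
      + 2\numberDecQueries\cdot 2^{-\gamma},
\]
which is exactly the intermediate bound one would read off from \cref{cor:FFP3}.

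The second step is to replace $\Adv^{\INDCPAKEM}_{\KemExplicitMess}(\tilde{\Ad{A}})$ by a bound in terms of a passive security notion for the underlying \PKE. For this I would apply \cref{thm:PKEpassiveToINDCPAKEM:ROM} to $\tilde{\Ad{A}}$ once for each of the two variants of the statement: once to extract an \OWCPA adversary $\Ad{B}_{\OWCPA}$, yielding an $(q_{\tilde{\Ad A},\RO{G}} + q_{\tilde{\Ad A},\RO{H}} + 1)$-factor loss, and once to extract an \INDCPA adversary $\Ad{B}_{\INDCPA}$, yielding the tight $3$-factor bound with an additive $(2(q_{\tilde{\Ad A},\RO{G}} + q_{\tilde{\Ad A},\RO{H}}) + 1)/|\MSpace|$ term. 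Substituting into the intermediate bound gives the two displayed inequalities.

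The only bookkeeping subtlety is to track query counts correctly. By the statements of \cref{thm:INDandFFPtoCCA:ROM,thm:SimDec:ROM}, $\tilde{\Ad{A}}$ makes $q_\RO{G}$ queries to \RO{G} and $q_\RO{H} + \numberDecQueries$ queries to \RO{H}, so that the total random-oracle query count of $\tilde{\Ad{A}}$ is at most $q_\RO{RO} + \numberDecQueries$; this is precisely what appears in the stated additive/multiplicative factors. Adversaries \Ad{B}, $\Ad{B}_{\OWCPA}$, $\Ad{B}_{\INDCPA}$ all inherit roughly the running time of \Ad{A} from the corresponding theorems, and \Ad{C} inherits at most $q_\RO{G}$ queries to \RO{G} from \cref{thm:SimDec:ROM}. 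There is no real obstacle here; the proof is an essentially mechanical composition, and the only judgement call is whether to leave the factor $2$ in the $\numberDecQueries \cdot 2^{-\gamma}$ term as-is (the modular artefact noted after \cref{cor:FFP3}) or to tighten it via a direct analysis, which is not needed for the stated corollary.
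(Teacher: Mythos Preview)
Your proposal is correct and follows essentially the same approach as the paper: the paper states the corollary as obtained by ``Combining \cref{cor:FFP3} and \cref{thm:PKEpassiveToINDCPAKEM:ROM}'', which is exactly your two-step composition (first chaining \cref{thm:INDandFFPtoCCA:ROM} and \cref{thm:SimDec:ROM} to get the intermediate bound of \cref{cor:FFP3}, then applying \cref{thm:PKEpassiveToINDCPAKEM:ROM} to $\tilde{\Ad{A}}$). Your query-count bookkeeping and the remark about the factor-of-$2$ artefact in the $\numberDecQueries\cdot 2^{-\gamma}$ term also match the paper's discussion.
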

\ifTightOnSpace
	We remark that the factor $2$ in front of the additive term $\numberDecQueries\cdot 2^{-\gamma}$ is an artefact of our modular proof (in terms of Theorems \ref{thm:FFP1} and \ref{thm:FFP2}). It is straightforward to show that the bound of \cref{cor:PKEpassiveAndFFPCPAtoINDCPAKEM:ROM} can be proven without the factor of $2$, when directly analyzing the composition of the reductions from Theorems \ref{thm:FFP1} and \ref{thm:FFP2}.
\fi

When comparing our bounds with the respective bounds from \cite{TCC:HofHovKil17}, we note that our bounds are still in the same asymptotic ball park and differ from the bounds in \cite{TCC:HofHovKil17} essentially by replacing the worst-case correctness term \deltaWorstCase (there denoted by $\delta$) present in \cite{TCC:HofHovKil17} by $\Adv^{\FFPCPA}_{\PKEDerand}\left(\Ad{B}\right)$, and having an additional term in $\gamma$ even for \KemImplicitMess. We believe that the additional $\gamma$-term could be removed by doing a direct proof for \KemImplicitMess, but redoing the whole proof for this variant was outside the scope of this work.
We will further analyze $\Adv^{\FFPCPA}_{\PKEDerand}\left(\Ad{B}\right)$ in \cref{sec:PKEDerand:FFPCPA}.

\begin{remark}[Obtaining the results for $\FOImplicitMess\lbrack\PKE\rbrack$]\label{remark:FOexplicitToFOImplicit}
	We can use the results from \cite{TCC:BHHHP19} to furthermore show that the bounds given in \cref{cor:PKEpassiveAndFFPCPAtoINDCPAKEM:ROM} also hold if $\KemExplicitMess \coloneqq \FOExplicitMess[\PKE, \RO{G},\RO H]$ is replaced with $\KemImplicitMess \coloneqq \FOExplicitMess[\PKE, \RO{G},\RO H]$:
	In more detail, it follows directly from \cite[Theorem 3]{TCC:BHHHP19} that
	for any \INDCCAKEM attacker \Ad{A} against \KemImplicitMess,
	there exists an \INDCCAKEM attacker \Ad{B} against \KemExplicitMess such that
	\ifTightOnSpace $\Adv^{\INDCCAKEM}_{\KemImplicitMess}(\Ad{A}) \leq  \Adv^{\INDCCAKEM}_{\KemExplicitMess}(\Ad{B})$
	\else
	\[\Adv^{\INDCCAKEM}_{\KemImplicitMess}(\Ad{A}) \leq  \Adv^{\INDCCAKEM}_{\KemExplicitMess}(\Ad{B}) \enspace , \]
	\fi
	and \cref{cor:PKEpassiveAndFFPCPAtoINDCPAKEM:ROM} does not contain any terms relative to \KemExplicitMess itself,
	it only contains terms relative to the underlying schemes \PKE and \PKEDerand.
\end{remark} 
\section{Compressed oracles and extraction}\label{sec:compressed-prels} \label{sec:prels:compressed}

We want to generalize the ROM results obtained in \cref{sec:ROM} to the QROM.
To this end, we will use an extension of the compressed oracle technique~\cite{C:Zhandry19} that was introduced in \cite{DFMS21}
\ifTightOnSpace.\else and that we will now quickly recap. \fi
To describe the technique, we start with the observation that for each input value $x$, its oracle value $\RO{O}(x)$ is a uniformly distributed random variable that can equivalently be sampled by measuring a uniform superposition in the computational basis.
It was shown in~\cite{C:Zhandry19} how a quantum-accessible random oracle $\RO{O}: X \rightarrow Y$ can %
be simulated by preparing a database $D$ with an entry $D_x$ for each input value $x$, with each $D_x$ being initialized as a uniform superposition of all elements of $Y$, and omitting the ``oracle-generating'' measurements until after the algorithm accessing \RO{O} has finished.
In \cite{DFMS21}, this oracle simulation was generalized to obtain %
an \emph{extractable} oracle simulator \SupOr (for \underline{e}xtractable \underline{C}ompressed \underline{O}racle) that has two interfaces, the random oracle interface \SRO and an %
 extraction interface $\SE_{f}$, defined relative to a function $f: X \times Y \rightarrow T$. %
 Informally, 
$\SE_{f}$ takes as input a classical value $t$. Consider the classical procedure of going through a lexicographically ordered list of lazy-sampled input output pairs $(x,y)$ and outputting the first one such that %
	$f(x, y) = t$. %
	$\SE_{f}$ performs the quantum analogue of that: a measurement that partially collapses the oracle database, just enough so that the classical procedure would yield one particular outcome  $x$ for all parts of the superposition.
	After the measurement, $D$ is thus in a state such that the superposition held in database entry $D_x$ only contains possibilities $y$ for $\SRO(x)$ such that $f(x,y) = t$, and no entry $D_{x'}$ for any $x'< x$ will have any possibilities $y'$ left such that also $f(x',y') = t$. 
		Whenever it is clear from context which function $f$ is used, we simply write \SE instead of $\SE_{f}$.

In general, $\SE_{f}$ can extract preimage entries from the %
``database'' $D$ during the runtime of an adversary instead of only after the adversary terminated. This allows for adaptive behaviour of a reduction, based on an adversary's queries.
In \cite{DFMS21}, it was already used for the same purpose we need it for -- the simulation of a decapsulation oracle, by having \SE extract a preimage plaintext from the ciphertext on which the decapsulation oracle was queried. 
We will denote oracles %
 modelled as \underline{e}xtractable \underline{q}uantum-accessible \underline{RO}s by \augQRO{f},
and a proof that uses an \augQRO{f} will be called \emph{a proof in the \augQROM{f}}.

We will now make this description more formal, closely following notation and conventions from \cite{DFMS21}.
Like in \cite{DFMS21}, we \ifTightOnSpace describe \else keep the formalism as simple as possible by describing \fi an inefficient variant of the oracle that is not (yet) ``compressed''.
Efficient simulation %
is possible via a standard sparse encoding, see %
\cite[Appendix A]{DFMS21}.
The simulator \SupOr for a random function $\mathsf O:\{0,1\}^m\to\{0,1\}^n$  is a stateful oracle with a state stored in a quantum register $D=D_{0^m}\ldots D_{1^m}$, where for each \ifTightOnSpace \else input value \fi $x \in \{0,1\}^m$, register $D_x$ has $n+1$ qubits used to store superpositions of $n$-bit output strings $y$, encoded as $0y$, and an additional symbol $\bot$, encoded as $10^n$. We adopt the convention that an operator expecting $n$ input qubits acts on the last $n$ qubits when applied to \ifTightOnSpace \else one of the registers \fi $D_x$. The compressed oracle %
has the following three components.
\begin{itemize}
	\item The initial state of the oracle, $\ket{\phi}=\ket\bot^{2^m}$
	\item A quantum query with query input register $X$ and output register $Y$ is answered using the oracle unitary $O_{XYD}$ defined by
		\begin{equation}
			O_{XYD}\ket x_X=\ket x_X\otimes \left(F_{D_x}\CNOT^{\otimes n}_{D_x:Y}F_{D_x}\right),
		\end{equation}
		where $F\ket\bot=\ket{\phi_0}$, $F\ket{\phi_0}=\ket \bot $ and $F\ket\psi=\ket\psi$ for all $\ket\psi$ such that $\bracket{\psi}{\bot}=\bracket\psi{\phi_0}=0$,
		with $\ket{\phi_0}=\ket +^{\otimes n}$ being the uniform superposition. The CNOT operator here is responsible for XORing the function value (stored in $D_x$, now in superposition) into the query algorithm's output register.
	\item A \emph{recovery algorithm} that recovers a standard QRO $\RO O$:	%
	apply $F^{\otimes 2^m}$ to $D$ and measure it to obtain the function table of $\mathsf O$.
\end{itemize}

\ifTightOnSpace \else 
	In section \ref{sec:QROM:OWTH}, we will use the superposition oracle to analyze algorithms that make parallel (quantum) queries to a random oracle. For a standard quantum oracle for a function \RO{H}, an algorithm that makes $w$ parallel queries sends $2w$ quantum regisers $X_i, Y_i$, $i=1,...,w$ to the oracle. The query is then processed by applying the oracle unitary $U_H$ to each pair $X_i, Y_i$. We can think of this parallel-query oracle as being implemented by a simulator with query access to the non-parallel oracle for \RO{H}:
	upon input regisers $X_i, Y_i$, $i=1,...,w$ the simulator sends the register pairs $X_i, Y_i$ to its own oracle sequentially. Using this trivial reformulation, it is clear how parallel queries can be handled when \RO{H} is a random function and the oracle for \RO{H} is simulated using the compressed oracle. 
\fi

We now make our description of the extraction interface \SE formal: Given a random oracle $\mathsf O: \{0,1\}^m\to\{0,1\}^n$, let $f: \{0,1\}^m\times\{0,1\}^n\to \{0,1\}^\ell$ be a function. We define a family of measurements $\left(\mathcal M^t\right)_{t\in\{0,1\}^\ell}$. The measurement $\mathcal M^t$ has measurement projectors $\{\Sigma^{t,x}\}_{x\in\{0,1\}^m\cup\{\emptyset\}}$ defined as follows. For $x\in\{0,1\}^m$, the projector selects the case where $D_x$ is the first (in lexicographical order) register that contains $y$ such that $f(x,y)=t$, i.e.
\begin{equation}\label{eq:extraction-measurement}
	\Sigma^{t,x}=\bigotimes_{x'<x}\bar\Pi^{t,x'}_{D_x'}\otimes \Pi^{t,x}_{D_x},\  \text{ with }\ \ 
	\Pi^{t,x}=\sum_{\substack{y\in\{0,1\}^n:\\ f(x,y)=t}}\proj{y}
\end{equation}
and $\bar\Pi=\mathds{1}-\Pi$. \ifTightOnSpace $\Sigma^{t,\emptyset}$ covers \else The remaining projector corresponds to \fi the case where no register contains such a $y$, i.e.
\begin{equation}
	\Sigma^{t,\emptyset}=\bigotimes_{x'\in\{0,1\}^m}\bar\Pi^{t,x'}_{D_x'}.
\end{equation}
As an example, say we model a random oracle \RO{H} as such an \augQRO{f}. Using $f(x,y) := \bool{\RO{H}(x) = y}$,
$M^1$ allows us to extract a preimage of $y$.

\SupOr %
is initialized with the inital state of the compressed oracle. \SRO is %
quantum-accessible %
and applies the compressed oracle query unitary $O_{XYD}$.
\SE is \ifTightOnSpace classically-accessible. On input $t$, it \else a classical oracle interface that, on input $t$, \fi applies $\mathcal M^t$ to \SupOr's internal state \ifTightOnSpace \else (i.e. the state of the compressed oracle) \fi and returns the result. \ifTightOnSpace \SupOr has \else The simulator \SupOr has several \fi useful properties that were characterized in \cite[Theorem 3.4]{DFMS21}, for convenience included \ifTightOnSpace in \cref{sec:eCO}\else below\fi. These characterisations are in terms of the quantity 
\ifTightOnSpace
{\small
\begin{align}
	\Gamma(f)=\max_{t}\Gamma_{R_{f,t}}\text{, with}\,R_{f,t}(x,y):\Leftrightarrow f(x,y)=1\text{ and}\ \Gamma_R := \max_{x} | \{y \mid R(x,y) \} | .\label{eq:Gamma_R}
\end{align} }
\else 
\begin{align}
	\Gamma(f)=\max_{t}\Gamma_{R_{f,t}}\text{, with}\nonumber\\ 
	R_{f,t}(x,y):\Leftrightarrow f(x,y)= t\text{ and}\nonumber\\
	\Gamma_R := \max_{x} | \{y \mid R(x,y) \} | .\label{eq:Gamma_R}
\end{align} 
\fi
For $f=\Encrypt(\cdot;\cdot)$, %
the encryption function of a PKE that takes as \ifTightOnSpace inputs a  message $m$ and an encryption randomness $r$ \else first input a message $m$ and as second input an encryption randomness $r$\fi, we have $\Gamma(f)=2^{-\gamma}|\RSpace|$ if \PKE is $\gamma$-spread. In this case, $\SE(c)$ outputs a plaintext $m$ such that $\Encrypt(m, \SRO(m))=c$, or $\bot$ if the ciphertext $c$ has not been computed using $\SRO$ before.
\ifCameraReady \else %
	\ifTightOnSpace
		In \cref{sec:eCO} we provide some more information about \SupOr for convenience.
	\else 
\ifTightOnSpace %
	\section{More details about the extractable QRO simulator \SupOr} \label{sec:eCO}
	In this section we include some more details about the extractable QRO simulator \SupOr for the reader's convenience.	

	In section \ref{sec:QROM:OWTH}, we will use the superposition oracle to analyze algorithms that make parallel (quantum) queries to a random oracle. For a standard quantum oracle for a function \RO{H}, an algorithm that makes $w$ parallel queries sends $2w$ quantum regisers $X_i, Y_i$, $i=1,...,w$ to the oracle. The query is then processed by applying the oracle unitary $U_H$ to each pair $X_i, Y_i$. We can think of this parallel-query oracle as being implemented by a simulator with query access to the non-parallel oracle for \RO{H}:
	upon input regisers $X_i, Y_i$, $i=1,...,w$ the simulator sends the register pairs $X_i, Y_i$ to its own oracle sequentially. Using this trivial reformulation, it is clear how parallel queries can be handled when \RO{H} is a random function and the oracle for \RO{H} is simulated using the compressed oracle. 
\fi 

We now state the parts of \cite[Theorem 3.4]{DFMS21} that we will use in our proofs.
\begin{lemma}[Part of theorem 3.4 in \cite{DFMS21}]\label{lem:extractable-sim-properties}
	The extractable RO simulator \SupOr described above, with interfaces \SRO and \SE, satisfies the following properties. 
	\begin{itemize}\vspace{-1ex}\setlength{\parskip}{0.5ex}
		\item[1.\!] If $\SE$ is unused, $\SupOr$ is perfectly indistinguishable from a random oracle.  \\[-2ex]
		\item[2.a] Any two subsequent independent queries to $\SRO$ %
		commute. In particular,  two subsequent {\em classical} $\SRO$-queries with the same input $x$ give identical responses. 
		\item[2.b] Any two subsequent independent queries to  $\SE$ %
		commute. In particular,  two subsequent %
		$\SE$-queries with the same input $t$ give identical responses. 
		\item[2.c] Any two subsequent independent queries to $\SE$ and $\SRO$ %
		$8\sqrt{2\Gamma(f)/2^n}$-almost-commute. \\[-2ex]
	\end{itemize}
	Furthermore, the total runtime and quantum memory footprint of $\SupOr$, when using the sparse representation of the compressed oracle, are bounded as
	\begin{align*}
		\Time(\SupOr,q_{RO}, q_E)&= O\bigl(q_{RO} \cdot q_E\cdot \mathrm{Time}[f] + q_{RO}^2\bigr),\text{ and }\\
		\QMem(\SupOr,q_{RO}, q_E)&=O\bigl( q_{RO}\bigr)  .
	\end{align*}
	\ifTightOnSpace
	where $q_E(q_{RO})$ is the number of queries to $\SE(\SRO)$%
	.
	\else
	where $q_E$ and $q_{RO}$ are the number of queries to $\SE$ and $\SRO$, respectively%
	.
	\fi
\end{lemma} 	\fi
\fi

\tinka{I think this might go to the intro:
\sloppy For a fixed function $f$, we can now formally consider an \augmentedORextended quantum-accessible random oracle model \augQROM{f}, where the random oracle is instantiated with the \SRO interface of the simulator \SupOr, and where adversaries in addition have classical query access to \SE. While this is not a realistic model of a cryptographic hash function, it can serve as a useful model for \emph{intermediate} results.  In particular, we will use this model in the following way. We want to base the QROM-\CCA-security of $\FO[\PKE]$ on quantum problems defined for $\PKE$. To that end, we simulate the decapsulation oracle using $\SE$-queries. This simplifies things (no decaps oracle), but at the same time complicates things (the adversary is now in the $\augQROM{\Encrypt}$). We then further reduce the \augQROM{\Encrypt}-\CPA-security  to security properties of the underlying scheme $\PKE$. Just as done in reductions that use the (regular) QROM, the $\augQROM{\Encrypt}$ can then be simulated by the reduction. Among the security properties of the underlying scheme \PKE that we need is the hardness of finding \emph{non-generic} failing plaintext randomness pairs of \PKE, i.e. finding failing plaintext randomness pairs by making non-trivial use of the public key. This is the security property that replaces the decryption failure probability that features in existing security bounds for the FO transformation.
}\cm{It might actually help to have this in the intro for the full version. Where could we put it?}

\section{QROM reduction}\label{sec:QROM}

In this section, we generalize the reductions from \cref{sec:ROM} to the \ifTightOnSpace QROM\else quantum-accessible random oracle model\fi.
To do so, we give in \cref{fig:QuantumSimDecaps} the quantum analogues of the simulated decapsulation oracles \oracleDecapsSim and \oracleDecapsSimFail from \cref{fig:SimDecaps}, which were (essentially) developed in \cite{DFMS21}.
We have to adapt our simulations since the ROM simulations from \cref{fig:SimDecaps} use book-keeping techniques and therefore cannot be easily implemented in the standard QROM. Instead, we use the formalism described in \cref{sec:compressed-prels}, i.e., we use a simulation of a quantum-accessible random oracle and \emph{make use of the additional extraction interface} \SE:
While the simulations in \cref{fig:SimDecaps} had access to a list \ListQueriesG that could be used to extract potential ciphertext preimages, the simulations in \cref{fig:QuantumSimDecaps} can now extract them by accessing extractor \SE (see lines~\ref{line:QuantumSimDecaps:Extract1} and~\ref{line:QuantumSimDecaps:Extract2}).
The rest of the simulation is exactly as before.
Using the notation from \cref{sec:compressed-prels}, we denote the modelling of the ROM as extractable by \augQROM{\Encrypt},
as we extract preimages relative to function $f=\Encrypt(\pk, \cdot, \cdot)$, with the message being $f$'s first and the randomness being $f$'s second input.

	\begin{figure}[tb] \begin{center} \fbox{\small
			
			\nicoresetlinenr
			
			\begin{minipage}[t]{4.0cm}	
				
				\underline{{\bf Game} \FFPATK}
				\begin{nicodemus}
					\item $(\pk, \sk) \leftarrow \KG$
					\item $m \leftarrow \Ad{A}^{\textsc{O}_\atk, \SupOr}(\pk)$
					\item $c := \Encrypt(\pk,m)$
					\item $m' := \Decrypt(\sk,c)$
					\item \pcreturn $\bool{m' \neq m}$
				\end{nicodemus}
				
			\end{minipage}
			
			\quad
			
			\begin{minipage}[t]{4.0cm}
				\underline{$\oracleDecrypt(c)$}
				\begin{nicodemus}
					\item $m := \Decrypt(\sk,c)$
					\item \pcreturn $m$
				\end{nicodemus}
			\end{minipage}
			
		}
	\end{center}
	\ifTightOnSpace \vspace{-12pt} \fi
	\caption{
		Games \FFPATK for a deterministic \PKE, where $\atk \in \lbrace \CPA, \CCA \rbrace$, in the \augQROM{f}.
		Like in its classical counterpart (see \cref{fig:def:FFPATK}, page~\pageref{fig:def:FFPATK}),
		$\mathsf{O}_\atk$ is the decryption oracle present in the respective \INDATK game (see \cref{def:FFPATK} on page~\pageref{def:FFPATK}).
		The only difference is that the random oracle \RO{G} is now modelled as an extractable superposition oracle \SupOr. 
	}
	\label{fig:def:FFPATK:QROM}
	\ifTightOnSpace \vspace{-12pt} \fi
\end{figure}

While \cref{sec:ROM} concluded by showing in \cref{thm:PKEpassiveToINDCPAKEM:ROM} how to base \INDCPAKEM security of
\ifTightOnSpace \KemExplicitMess \else $\FOExplicitMess[\PKE, \RO{G},\RO{H}]$ \fi
on passive security of \PKE in the ROM, we need to develop an additional tool to do the same in the \augQROM{\Encrypt}.
Therefore, we split this section as follows:
\cref{sec:QROM:CCA-to-CPA-KEM} ends with
\ifTightOnSpace\else \INDCCA security of $\FOExplicitMess[\PKE, \RO{G},\RO{H}]$ being based on \fi
\INDCPA security of
\ifTightOnSpace \KemExplicitMess \else $\FOExplicitMess[\PKE, \RO{G},\RO{H}]$ \fi and \FFPCPA security of \PKEDerand%
\ifTightOnSpace, in the \augQROM{\Encrypt}.\else.
Note that the notions on which we base \INDCCA security are now in the \augQROM{\Encrypt}.\fi
We give the \augQROM{f} definition of \FFPATK in \cref{fig:def:FFPATK:QROM}.
\cref{sec:QROM:OWTH} develops the necessary \augQROM{\Encrypt} tools to further analyze \INDCPA security of \ifTightOnSpace \KemExplicitMess \else $\FOExplicitMess[\PKE, \RO{G},\RO{H}]$\fi.
Concretely, \cref{sec:QROM:OWTH} provides an \augQROM{\Encrypt}-compatible variant of the one-way to hiding (\OWTH) lemma for semi-classical oracles as introduced in \cite{C:AmbHamUnr19}.
\ifTightOnSpace\else
	Intuitively, the \augQROM{\Encrypt}-\OWTH lemma states that input depending on particular random oracle values $\SRO(x)$ (like, e.g., $\RO{G}(m^*)$) can be replaced with input that replaced all involved oracle values with fresh uniform randomness. The change goes unnoticed unless one of the $x$ can be detected in the oracle queries. 
	\cref{sec:QROM:OWTH} is given in a general way and might prove to be of independent interest.
\fi 
Equipped with the results from \cref{sec:QROM:OWTH}, we show in \cref{sec:QROM:CPA-to-passive} that \emph{also in the \augQROM{\Encrypt}}, \INDCPA security of $\FOExplicitMess[\PKE, \RO{G},\RO{H}]$ can be based on passive security of \PKE. 

\subsection{From $\INDCPA_{\FO[\PKE]}$ and $\FFPCCA_\PKEDerand$ to $\INDCCA_{\FO[\PKE]}$} \label{sec:QROM:CCA-to-CPA-KEM}

\begin{figure}[tb]\begin{center}\makebox[\textwidth][c]{
		
		\nicoresetlinenr
		\resizebox{\textwidth}{!}{
		\fbox{\small
			
			\begin{minipage}[t]{4.3cm}	
				
				\underline{$\oracleDecaps(c\neq c^*)$}
				\begin{nicodemus}
					\item $m' := \Decrypt(\sk,c)$
					\item \pcif $m'=\bot$
					\item \quad $\pcreturn K {\coloneqq} \bot$
					\item \pcelse
					\item \quad  $c' := \Encrypt(\pk, m'; \RO{G}(m'))$
					\item \quad \pcif $c \neq c'$
					\item \quad\quad \pcreturn $\bot$
					\item \quad \pcelse
					\item \quad\quad \pcreturn $\RO{H}(m')$  
				\end{nicodemus}
				
				\ \\
				
				\underline{$\RO G'$, input registers $X,Y$}
				\begin{nicodemus}
					\item Apply $\SRO_{XYD}$
					\item \pcreturn registers $XY$
				\end{nicodemus}
				
			\end{minipage}
			\;
			
			\begin{minipage}[t]{3.1cm}	
				
				\underline{$\oracleDecapsSim(c\neq c^*)$}
				\begin{nicodemus}
					\item $m\leftarrow \SE(c)$ \label{line:QuantumSimDecaps:Extract1}
					\item \pcif $m=\bot$
					\item \quad $\pcreturn \bot$
					\item \pcelse
					\item \quad \pcreturn $\RO{H}(m)$
				\end{nicodemus}
			\ifTightOnSpace
			\ \\
			
			\underline{$\oracleDecrypt(c)$}
			\begin{nicodemus}
				\item $m' \coloneqq \Decrypt(\sk, c)$
				\item \pcif $m' = \bot$
				\item \quad \pcreturn $\bot$
				\item \pcelse \pcif $\Encrypt(\pk, m'; \RO{G}(m')) \!\neq\! c$
				\item \quad \pcreturn $\bot$
				\item \pcelse
				\item \quad \pcreturn $m'$
			\end{nicodemus}
		\fi
			
			\end{minipage}
		
			\;
			
			\begin{minipage}[t]{4.5cm}	
				
				\underline{$\oracleDecapsSimFail(c\neq c^*)$}
				\begin{nicodemus}
					\item $m\leftarrow \SE(c)$ \label{line:QuantumSimDecaps:Extract2}
					\item $m' \coloneqq \oracleDecrypt(c)$\label{line:QuantumSimDecaps:Decaps}
					\item \pcif $m \neq \bot \pcand m \neq m'$
					\item \quad $\ListFail \coloneqq \ListFail \cup \{ m \}$ 
					\item \pcif $m=\bot$
						\item \quad $\pcreturn \bot$
					\item \pcelse
						\item \quad \pcreturn $\RO{H}(m)$
				\end{nicodemus}
					
			\ifTightOnSpace
			\else
				\ \\
				
				\underline{$\oracleDecrypt(c)$}
				\begin{nicodemus}
					\item $m' \coloneqq \Decrypt(\sk, c)$
					\item \pcif $m' = \bot$
					\item \quad \pcreturn $\bot$
					\item \pcelse 
						\item \quad  \pcif $\Encrypt(\pk, m'; \RO{G}(m')) \neq c$
							\item \quad \quad \pcreturn $\bot$
					\item \quad \pcelse
					\item \quad \quad  \pcreturn $m'$
				\end{nicodemus}
			\fi
			\end{minipage}
			
		}	}   }
		\ifTightOnSpace \vspace{-10pt} \fi
		\caption{Simulated and failing-plaintext-extracting versions of the decapsulation oracle \oracleDecaps for  $\FOExplicitMess[\PKE, \RO{G}, \RO{H}]$, using the extractable QRO simulator \SupOr from \cite{DFMS21} (see \cref{sec:compressed-prels}). The simulations of \oracleDecaps are exactly like the ROM ones in \cref{fig:SimDecaps} except for how they extract ciphertext preimages in lines \ref{line:QuantumSimDecaps:Extract1} and \ref{line:QuantumSimDecaps:Extract2}. We assume \SupOr to be freshly initialized before \oracleDecapsSim or \oracleDecapsSimFail is used for the first time in a security game, and extraction interface \SE is defined with respect to function $f =\Encrypt(\pk, \cdot; \cdot)$\ifTightOnSpace.\else, where $\Encrypt$ is the encryption algorithm of \PKE.\fi %
		}
		\label{fig:QuantumSimDecaps}
	\end{center}
\ifTightOnSpace \vspace{-12pt} \fi
\end{figure}

We begin by proving a quantum analogue of \cref{thm:FFP1}.
\begin{theorem}[{\small $\FOExplicitMess\lbrack\PKE\rbrack$ \INDCPA and \PKEDerand \FFPCCA \RightarrowaugQROM{\Encrypt} $\FOExplicitMess\lbrack\PKE\rbrack$ $\INDCCA$}]%
\label{thm:QFFP1}\label{thm:INDandFFPtoCCA:QROM}%
Let $\PKE$ be a (randomized) \PKE that is $\gamma$-spread, and $\KemExplicitMess \coloneqq \FOExplicitMess[\PKE, \RO{G},\RO H]$.
	Let \Ad{A} be an \INDCCAKEM-adversary (in the QROM) against \KemExplicitMess, making at most \ifTightOnSpace \numberDecapsQueries, $q_{\RO G}$ and $q_{\RO H}$ queries to \oracleDecaps, \RO G and $\RO H$, respectively. \else \numberDecapsQueries many queries to its decapsulation oracle \oracleDecaps, and making $q_{\RO G}$, $q_{\RO H}$ queries to its respective random oracles .\fi Let furthermore $d$ and $w$ be the combined query depth and query width of \Ad A's random oracle queries.
	Then there exist an \INDCPAKEM adversary $\tilde{\Ad{A}}$ and an \FFPCCA adversary \Ad{B} against \PKEDerand, both in the $\augQROM{\Encrypt}$, such that
	\ifTightOnSpace
	\begin{align*}
		\Adv^{\INDCCAKEM}_{\KemExplicitMess}(\Ad{A})\le &\Adv^{\INDCPAKEM}_{\KemExplicitMess}\!\left(\!\tilde{\Ad{A}}\!\right)\!\!+\!\Adv^{\FFPCCA}_{\PKE^{\RO G}}\!\left(\mathcal B\right)\!+\!12 \numberDecapsQueries(q_{\RO G}\!+\!4\numberDecapsQueries)\!\cdot\! 2^{-\gamma/2}%
		.\label{eq:QFFP1}
	\end{align*}
	\else
	\begin{align}
		\Adv^{\INDCCAKEM}_{\KemExplicitMess}(\Ad{A})\le &\Adv^{\INDCPAKEM}_{\KemExplicitMess}\left(\tilde{\Ad{A}}\right)+\Adv^{\FFPCCA}_{\PKE^{\RO G}}\left(\mathcal B\right)\nonumber\\
		&+12 \numberDecapsQueries(q_{\RO G}+4\numberDecapsQueries)\cdot 2^{-\gamma/2}%
		.\label{eq:QFFP1}
	\end{align}
\fi
	The adversary $\tilde{\Ad{A}}$ makes $q_{\RO G}+q_{\RO H}+\numberDecapsQueries$ queries to $\SRO$ with a combined depth of $d+\numberDecapsQueries$ and a combined width of $w$,  and \numberDecapsQueries queries to \SE. Here, \SRO simulates $\RO G\times \RO H$. The adversary $\mathcal B$ makes \numberDecapsQueries many queries to \oracleDecrypt and \SE and $q_{\RO G}$ queries to $\SRO$%
	,
	and neither $\tilde{\Ad{A}}$ nor $\mathcal B$ query \SE on the challenge ciphertext. The running times of the adversaries $\tilde{\Ad A}$ and $\Ad B$ are bounded as $\Time (\tilde{\Ad A})=\Time (\Ad A)+O(\numberDecapsQueries)$ and $\Time (B)=\Time (\Ad A)+O(\numberDecapsQueries)$.

\end{theorem}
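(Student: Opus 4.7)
The plan is to mirror the modular structure of the ROM proof of \cref{thm:FFP1} while replacing the classical book-keeping of \ListQueriesG by the extractable simulator \SupOr, and then absorb the quantum error terms using the near-commutation property 2.c of \cref{lem:extractable-sim-properties}. Concretely, I would define $\tilde{\Ad{A}}$ as the \INDCPA adversary against \KemExplicitMess that instantiates the random oracles through \SRO (with two disjoint domain labels implementing $\RO G$ and $\RO H$), runs $\Ad A$, and answers each decapsulation query by \oracleDecapsSim from \cref{fig:QuantumSimDecaps}, i.e.\ by issuing an \SE-query to extract a ciphertext preimage. The \FFPCCA adversary \Ad B is defined in the same way but using \oracleDecapsSimFail: it combines the \SE-extracted message $m$ with the message $m'$ obtained by forwarding the decapsulation query to its own classical \oracleDecrypt oracle, and outputs $m$ the first time $m\neq m'$ and $m\neq\bot$.

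The analysis then follows the two-event decomposition of the ROM proof. Letting \EventDecapsSimDiffers denote the event that $\Ad A$ makes a decapsulation query on which \oracleDecaps and \oracleDecapsSim disagree, I would write
\[
\Adv^{\INDCCAKEM}_{\KemExplicitMess}(\Ad A)\le \Adv^{\INDCPAKEM}_{\KemExplicitMess}(\tilde{\Ad A})+\Pr[\EventDecapsSimDiffers].
\]
The event \EventDecapsSimDiffers now splits into two cases: either \SE returned a preimage $m\neq\bot$ which disagrees with the true decryption $m'$ (in which case \oracleDecapsSimFail records a genuine failing plaintext and \Ad B wins \FFPCCA), or \SE returned $\bot$ even though the ciphertext decrypts and passes the re-encryption check (the quantum analogue of \EventGuessedCT). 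The first case is upper bounded by $\Adv^{\FFPCCA}_{\PKE^{\RO G}}(\Ad B)$ exactly as in the ROM.

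The main obstacle is the quantum analogue of \cref{lem:pr-guess}: bounding the probability that \SE returns $\bot$ on a ciphertext $c=\Encrypt(\pk,m;\RO G(m))$ for some $m$ on which the adversary has \emph{effectively} queried \SRO in superposition. Classically this is a straightforward union bound over the $\numberDecapsQueries$ decapsulation queries, each contributing $2^{-\gamma}$ by $\gamma$-spreadness. Quantumly one cannot argue per-query without disturbance, so I would proceed as follows. Using property~2.c of \SupOr with $f=\Encrypt(\pk,\cdot;\cdot)$, each pair consisting of an $\SE$-query and an interleaved $\SRO$-query can be commuted at a cost of $8\sqrt{2\Gamma(f)/|\RSpace|}\le 8\sqrt{2}\cdot 2^{-\gamma/2}$, since $\gamma$-spreadness gives $\Gamma(f)\le 2^{-\gamma}|\RSpace|$. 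Commuting all \SE-queries past all \SRO-queries of $\Ad A$ (including the \SRO-queries triggered internally by \oracleDecapsSim, which add at most $\numberDecapsQueries$ additional $\SRO$-queries, and accounting for the adversary's query width) yields a hybrid in which extraction can be analyzed in isolation at the end of the execution; the total incurred error is of order $\numberDecapsQueries\cdot (q_{\RO G}+\numberDecapsQueries)\cdot 2^{-\gamma/2}$, matching the claimed $12\numberDecapsQueries(q_{\RO G}+4\numberDecapsQueries)2^{-\gamma/2}$ up to small constants.

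In this commuted hybrid, each $\SE$-query on $c$ either finds a preimage already fixed in the compressed database, or it finds none; in the latter case the probability over the yet-to-be-measured superposition that $\Encrypt(\pk,m;\RO G(m))=c$ for $m=\Decrypt(\sk,c)$ is at most $2^{-\gamma}$ by $\gamma$-spreadness, and this term is absorbed into the $2^{-\gamma/2}$ bound. The query bookkeeping (depths, widths, and the \SE query count for $\tilde{\Ad A}$ and $\Ad B$) follows from the construction: $\tilde{\Ad A}$ uses $\numberDecapsQueries$ additional \SRO-queries (one per simulated decapsulation) and $\numberDecapsQueries$ \SE-queries, while $\Ad B$ inherits the \SRO-queries of $\Ad A$ directly and uses one \SE-query plus one \oracleDecrypt-query per decapsulation. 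Neither adversary ever queries \SE on the challenge ciphertext because the INDCCA restriction $c\neq c^*$ is passed through unchanged. This yields the claimed bound.
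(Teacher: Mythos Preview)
Your high-level decomposition into \EventDecapsSimDiffers and \EventGuessedCT, and your identification of the near-commutation property~2.c as the quantum ingredient, match the paper. But there is a structural gap in where you apply commutation.

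The inequality you write first,
\[
\Adv^{\INDCCAKEM}_{\KemExplicitMess}(\Ad A)\le \Adv^{\INDCPAKEM}_{\KemExplicitMess}(\tilde{\Ad A})+\Pr[\EventDecapsSimDiffers],
\]
does not carry over from the ROM as stated. In the ROM, replacing \oracleDecaps by \oracleDecapsSim is a pure output substitution (the list $\ListQueriesG$ is read-only), so the two runs are identical until \EventDecapsSimDiffers. In the \augQROM{\Encrypt} this fails: the real \oracleDecaps performs an $\SRO$-query (the re-encryption check on $\RO G$) but no $\SE$-query, whereas \oracleDecapsSim performs an $\SE$-query but no re-encryption $\SRO$-query. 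Both operations perturb the compressed-oracle state, so even on a query where the \emph{outputs} agree, the post-query adversary--oracle states differ. Hence \EventDecapsSimDiffers is not even well-defined as an event in a single experiment, and the identical-until-bad step does not go through.

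The paper fixes this with a six-game hybrid that uses commutation \emph{twice}, not once. It first inserts all $\SE(c_i)$ calls at the \emph{end} of the run (Game~1, free by property~1), then commutes them forward to right after each decapsulation query (Game~1$\to$2, cost $8\sqrt 2\,\numberDecapsQueries(q_{\RO G}+\numberDecapsQueries)2^{-\gamma/2}$). Now each decapsulation query computes \emph{both} $\hat m_i=\SE(c_i)$ and the genuine $\oracleDecaps(c_i)$ in the same run, so \EventDecapsSimDiffers is well-defined and Games~2 and~3 differ only in which of the two answers is handed to $\Ad A$; this is where your \FFPCCA/\EventGuessedCT split applies cleanly. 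A \emph{second} commutation (Games~3$\to$4, cost $8\sqrt 2\,\numberDecapsQueries^2 2^{-\gamma/2}$) then pushes the now-unused $\oracleDecaps$ calls to the end so they can be dropped, yielding $\tilde{\Ad A}$ (Game~5). You invoke commutation only once, and only to analyse \EventGuessedCT, so the first use---the one that makes the game hop sound---is missing. As a side remark, the paper bounds $\Pr[\EventGuessedCT]$ not by commuting $\SE$ to the end but by a direct projector computation (\cref{lem:pr-guess-augQROM}), yielding $2\numberDecapsQueries\cdot 2^{-\gamma}$ which is then absorbed into the $2^{-\gamma/2}$ term.
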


Before proving the theorem, we \ifTightOnSpace \else briefly \fi point out similarities and differences to the ROM counterpart, \cref{thm:FFP1}.  First note that the bounds look very similar. The only difference lies in the additive error term that depends on the spreadness parameter $\gamma$. In the above theorem, this additive error term $O(\numberDecapsQueries q_\RO{G} 2^{-\gamma/2})$ is much larger than the term $O(\numberDecapsQueries 2^{-\gamma})$ present in \cref{thm:FFP1}.
\ifTightOnSpace It \else This larger additive loss \fi originates from dealing with the fact that the extraction technique used to simulate the \Decaps oracle inflicts an error onto the simulation of the QRO.
We expect that for many real-world schemes, the additive security loss of $O(\numberDecapsQueries q_\RO{G} 2^{-\gamma/2})$ is still small enough to be neglected, and calculate the term for two example cases in \cref{sec:spreadness}.
Another important difference between \cref{thm:QFFP1} and \cref{thm:FFP1} is of course that the adversaries $\tilde{\Ad{A}}$ and \Ad B are now in the non-standard $\augQROM{\Encrypt}$. Looking ahead, we provide further reductions \ifTightOnSpace\else in \cref{sec:QROM:CPA-to-passive} \fi culminating in \cref{cor:main-result} which gives a standard-QROM \ifTightOnSpace\else \INDCCAKEM security \fi bound for \KemExplicitMess in terms of (standard model) security properties of \PKE.

\begin{proof}
We prove this theorem via a number of hybrid games, drawing some inspiration from the reduction for the entire FO transformation given in \cite{DFMS21}.

\bfgame{0} is $\INDCCAKEM_{\KemExplicitMess}(\Ad{A})$.

\bfgame{1} is like \bfgame{0}, except for two modifications: The quantum-accessible random oracle \RO{G} is replaced by $\RO G'$ as defined in \cref{fig:QuantumSimDecaps}
\ifTightOnSpace, \else (i.e., it is simulated using an \augQRO{\Encrypt}), \fi
and after the adversary has finished,
\ifTightOnSpace\else we compute oracle preimages for all ciphertexts on which \oracleDecaps was queried, i.e., \fi
we compute $\hat{m_i} \coloneqq \SE(c_i)$ for all $i=1,...,\numberDecapsQueries$, where $c_i$ is the input to the adversary's $i$th decapsulation query. By property 1 in \cite[Lem. 3.4]{DFMS21}/\cref{lem:extractable-sim-properties}, $\RO G'$ perfectly simulates \RO{G} until the first \SE-query, and since the first \SE-query occurs only after \Ad{A} finishes, we have
\begin{equation}
	\Adv^{\INDCCAKEM}_{\KemExplicitMess}(\Ad{A})=\Adv^\bfgame{0}=\Adv^\bfgame{1} \enspace .
\end{equation}

\bfgame{2} is like \bfgame{1}, except that $\hat{m_i} \coloneqq \SE(c_i)$ is computed right after \Ad{A} submits $c_i$ instead of computing it in the end. Note that \bfgame{2} can be obtained from  \bfgame{1} by first swapping the \SE call that produces $\hat m_1$ with all $\SRO$ calls that happen after the adversary submits $c_1$, including the calls inside \oracleDecaps, then continuing with the \SE-call that produces $\hat m_2$, etc. \ifTightOnSpace\else We will now use that \SRO and \SE almost-commute: \fi
By property 2.c \ifTightOnSpace\else and possibly 2.b) \fi of \cite[Lem. 3.4]{DFMS21}/\cref{lem:extractable-sim-properties} and since $\Gamma(\Encrypt(\cdot;\cdot))=2^{-\gamma}|\RSpace|$ for $\gamma$-spread PKE schemes, we have that
\begin{equation}
	\left|\Adv^\bfgame{1}-\Adv^\bfgame{2}\right|\le 8\sqrt 2%
	\numberDecapsQueries(q_{\RO G}+\numberDecapsQueries) \cdot 2^{-\gamma/2} \enspace .
\end{equation}

\bfgame{3} is the same as \bfgame{2}, except that \Ad{A} in run with access to the oracle \oracleDecapsSim instead of \oracleDecaps, meaning that upon a decapsulation query on $c_i$, \Ad{A} receives $\oracleDecapsSim(c_i)=\RO{H}(\hat m_i)$ instead of $\oracleDecaps(c_i) = \Decaps(\sk, c_i)$ (using the convention $\RO{H}(\bot) \coloneqq \bot$).
We still let the game also compute $\oracleDecaps(c_i)$, as $\oracleDecaps$ makes queries to $\SRO$ which can influence the behavior of \SE in subsequent queries. (Note that the reencryption step of \oracleDecaps triggers a call to \RO{G'}, which in turn uses \SRO.)
We define \Ad{B} exactly as in the proof of \cref{thm:FFP1}, except that it uses the oracles \RO{G'} and \oracleDecapsSimFail defined in \cref{fig:QuantumSimDecaps}:
\Ad{B} runs $\Ad{A}^{\RO G', \RO H, \oracleDecapsSimFail}$, using its own \FFPCCA oracle \oracleDecrypt to simulate \oracleDecapsSimFail and answering $\RO H$ queries by simulating a fresh compressed  oracle.\footnote{We remark that a $t$-wise independent function for sufficiently large $t=O(q_{\RO H}+\numberDecapsQueries)$ also suffices, which is more efficient as it doesn't require (nearly as much) quantum memory.} As soon as \oracleDecapsSimFail adds a plaintext $m$ to \ListFail, \Ad{B} aborts \Ad{A} and returns $m$. If \Ad{A} finishes and \ListFail is still empty, \Ad{B} returns $\bot$.

Let $\EventDecapsSimDiffers$ be the event that \Ad{A} makes a decryption query $c$ in \bfgame{2} such that $\oracleDecaps(c)\neq \oracleDecapsSim(c)$. 
\ifTightOnSpace
Like in \cref{thm:FFP1}, we bound
\begin{align*}
	&	\frac 1 2 + \Adv^\bfgame{2} =  \Pr\left[\Ad{A} \textnormal{ wins in } \bfgame{2}\right]\\
	=& \Pr\left[\Ad{A} \textnormal{ wins in } \bfgame{2} \wedge \neg\EventDecapsSimDiffers\right]
	+\Pr\left[\Ad{A} \textnormal{ wins in } \bfgame{2}\wedge\EventDecapsSimDiffers\right]\\
	=& \Pr\left[\Ad{A} \textnormal{ wins in } \bfgame{3}\wedge\neg\EventDecapsSimDiffers\right]
	+\Pr\left[\Ad{A} \textnormal{ wins in } \bfgame{2}\wedge\EventDecapsSimDiffers\right]\\
	\le&\Pr\left[\Ad{A} \textnormal{ wins in } \bfgame{3}\right]+\Pr\left[\EventDecapsSimDiffers\right]=\frac 1 2 +\Adv^\bfgame{3}+\Pr\left[\EventDecapsSimDiffers\right] \enspace .
\end{align*}
\else
Like in the respective proof step for \cref{thm:FFP1}, we bound
\begin{align*}
&	\frac 1 2 + \Adv^\bfgame{2} =  \Pr\left[\Ad{A} \textnormal{ wins in } \bfgame{2}\right]\\
	=& \Pr\left[\Ad{A} \textnormal{ wins in } \bfgame{2} \wedge \neg\EventDecapsSimDiffers\right]
		+\Pr\left[\Ad{A} \textnormal{ wins in } \bfgame{2}\wedge\EventDecapsSimDiffers\right]\\
	=& \Pr\left[\Ad{A} \textnormal{ wins in } \bfgame{3}\wedge\neg\EventDecapsSimDiffers\right]
		+\Pr\left[\Ad{A} \textnormal{ wins in } \bfgame{2}\wedge\EventDecapsSimDiffers\right]\\
	\le&\Pr\left[\Ad{A} \textnormal{ wins in } \bfgame{3}\right]+\Pr\left[\EventDecapsSimDiffers\right]\\
	=&\frac 1 2 +\Adv^\bfgame{3}+\Pr\left[\EventDecapsSimDiffers\right] \enspace .
\end{align*}
\fi

Again, event $\EventDecapsSimDiffers$ encompasses three cases: For some decapsulation query $c$,
\begin{itemize}
	\item[-]  Original decapsulation oracle $\oracleDecaps(c)$ rejects, but the simulation
		\ifTightOnSpace
			$\oracleDecapsSim(c)= \RO{H}(\hat m)$ does not. By construction of the oracles, this implies that $\Decrypt(\sk, \Encrypt(\pk, \hat{m}, \SRO(\hat{m})))\neq \hat{m}$
			if the $\SRO$ call in the previous equation is performed right after the considered $\oracleDecapsSimFail$ call.
		\else
			$\oracleDecapsSim(c)$ does not, the latter meaning that $\oracleDecapsSim(c) = \RO{H}(\hat m_i)$
			for $\hat{m} \coloneqq \SE(c)$.
			By construction of the oracles this implies that while $\hat{m}$ encrypts to $c$, $c$ does not decrypt to $\hat{m}$ (under \PKEDerand, right after).
			(Otherwise, $\oracleDecaps(c)$ would not reject.) Hence, this case only occurs if $c$'s preimage $\hat{m}$ fails.			
		\fi	
	\item[-] Neither oracle rejects, but the return values differ, i.e., calling $\SE(c)$ in line \ref{line:QuantumSimDecaps:Extract1}
		yielded something different than $\Decrypt(\sk, c)$.
		Like above, this implies that preimage $\hat{m} \coloneqq \SE(c)$ fails 
	\item[-] $\oracleDecaps(c)$ does not reject, while $\oracleDecapsSim(c)$ does,
		i.e., $\hat{m} \coloneqq \SE(c)$ in line \ref{line:QuantumSimDecaps:Extract1} yielded $\bot$,
		but the re-encryption check inside the \oracleDecaps call in line \ref{line:QuantumSimDecaps:Decaps} checked out, 
		meaning that $\Encrypt(\pk, m,\SRO(m)=c$ for $m\coloneqq \Decrypt(\sk, c)$.
		(Equivalently, the latter means that $\oracleDecrypt(c) = m$.)
		\ifTightOnSpace\else Intuitively, this case again implies that \Ad{A} managed to compute a valid encryption without the respective oracle query on $m$.\fi
\end{itemize}
In the above, any statements about \SupOr calls that are not actually performed by the adversary or an oracle are assumed to be made right after the query $c$ and do not cause any measurement disturbance in that case. \cm{There were todos in the above, I took care of them by adding this sentence.}

We will again denote the last case by \EventGuessedCT. Whenever \EventDecapsSimDiffers occurs, \Ad{B} succeeds unless only case \EventGuessedCT occurs: If $\EventDecapsSimDiffers\wedge \neg\EventGuessedCT$ occurs, then a failing plaintext is extractable from the ciphertext that triggered $\EventDecapsSimDiffers\wedge \neg\EventGuessedCT$ (this time due to access to \SE),
and the plaintext is recognisable as failing by \Ad{B} due to its \FFPCCA oracle \oracleDecrypt. In formulae,
\begin{align*}
		\Pr\!\left[\EventDecapsSimDiffers\right]\!
	=\! \Pr\!\left[\EventDecapsSimDiffers \!\wedge\! \neg\EventGuessedCT\right]
	\!+\!\Pr\!\left[\EventDecapsSimDiffers\!\wedge \!\EventGuessedCT \right]
	\!\le\! \Adv^{\FFPCCA}_{\PKEDerand}\!\left[\Ad{B} \right] \!+\! \Pr\left[\EventGuessedCT\right]\!.
\end{align*}

In summary, we can bound the difference in advantages between \bfgame{2} and \bfgame{3} as
\begin{align*}
	\left|\Adv^\bfgame{2}-\Adv^\bfgame{3}\right|\le&\Adv^{\FFPCCA}_{\PKE^{\RO G}}\left(\Ad{B} \right)+\Pr\left[\EventGuessedCT\right].
\end{align*}

The following two steps are in a certain sense symmetric to the steps for \textbf{Games 0-2}:
\Ad{A} playing \bfgame{3} can almost be simulated without using the \oracleDecaps oracle, except that \oracleDecaps is still invoked before each call of \oracleDecapsSim, without the result ever being used.
This is an artifact from \bfgame{2}.
Omitting the \oracleDecaps invocations might introduce changes in \Ad{A}'s view, as these invocations might influence the behavior of \SE in subsequent queries.
We therefore define \bfgame{4} like \bfgame{3}, %
	except that the \oracleDecaps invocations are postponed until after \Ad{A} finishes.
By a similar argument as for the transition from \bfgame{1} to \bfgame{2}, we obtain
\begin{equation*}
	\left|\Adv^\bfgame{3}-\Adv^\bfgame{4}\right|\le 8\sqrt 2 \numberDecapsQueries^22^{-\gamma/2} \enspace .
\end{equation*}
Finally, \bfgame{5} is like \bfgame{4}, except that the computations of $\oracleDecaps(c_i)$ are omitted entirely.
In game 4, all invocations of \oracleDecaps already happened after the execution of \Ad{A}, hence this omission does not influence \Ad{A}'s success probability\ifTightOnSpace. \else \ and
	\begin{equation*}
		\Adv^\bfgame{4} = \Adv^\bfgame{5} \enspace .
	\end{equation*}	
\fi

Let $\tilde{\Ad{A}}$ be an \INDCPAKEM adversary against \KemExplicitMess in the $\augQROM{\Encrypt}$, simulating \bfgame{5} to \Ad{A}:
$\tilde{\Ad{A}}$ has access to a single extractable oracle whose oracle interface \SRO simulates the combination of \RO{G} and \RO{H}, i.e., \SRO simulates $\RO G\times \RO H$.
(We decided to combine \RO{G} and \RO{H} into one oracle to simplify the subsequent analysis of the \INDCPA advantage against  \KemExplicitMess that will be carried out in \cref{sec:QROM:CPA-to-passive}.)
$\tilde{\Ad{A}}$ runs $b' \leftarrow \Ad{A}^{\RO G', \RO H, \oracleDecapsSim}$ and returns $b'$.
The simulation of \Ad{A}'s oracles using \SRO is straightforward (preparing the redundant register in uniform superposition, querying the combined oracle, and uncomputing the redundant register), but for completeness,
\ifTightOnSpace we explain the technique in more detail in \cref{sec:SimulatingGtimesH}.
\else we now explain the technique in more detail:  %
\ifTightOnSpace

	\section{Simulating $\RO{G} \times \RO{H}$ as an The Fujisaki-Okamoto transformation with explicit rejection} \label{sec:SimulatingGtimesH}
	
	This section discusses why combining oracles \RO{G} and \RO{H} into a single oracle \SupOr is merely of a conceptual nature,
	as it does not introduce any differences in the probabilities:
\fi
For any algorithm \Ad{A} expecting an \augQRO{\Encrypt}-modelled oracle \RO{G} and a QRO \RO{H},
one can define an algorithm $\tilde{A}$ with access to a single oracle \SupOr whose oracle interface \SRO represents $\RO{G} \times \RO{H}$ and whose extraction interface is only relative to \RO{G}, that perfectly simulates \Ad{A}'s view.
Whenever \Ad{A} issues a query to \RO{G}, $\tilde{A}$ prepares an additional output register $H_{\textnormal{out}}$ for \RO{H} in a uniform superposition, queries \SRO, uncomputes $H_{\textnormal{out}}$ by applying the Hadamard transform to $H_{\textnormal{out}}$,
and forwards the input-output registers belonging to \RO{G} to \Ad{A}. The same idea with reversed oracle roles can be used to answer queries to \RO{H}.
The extraction oracle \SE represents an extraction interface for $\Encrypt(\pk, \cdot;\cdot)$ with respect to \RO{G}:
This is possible as the oracle database for $\RO{G} \times \RO{H}: \MSpace \rightarrow \RSpace \times \KeySpace$ consists of registers $D_m$,
of which each register $D_m$ now consist of
one register $R_m$ to accommodate a superposition of elements in \RSpace (or $\bot$) and
one register $K_m$ to accommodate a superposition of elements in \KeySpace (or $\bot$).
The projectors of the measurements performed by extraction interface \SE can hence be defined in a way such that when \SE is queried on some ciphertext $c$, 
they select the message $m$ where $D_m$ is the first (in lexicographical order) register whose register $R_m$ contains an $r$ such that $\Encrypt(\pk, m; r) = c$. \fi

We now have
\ifTightOnSpace
	\begin{equation}
		\Adv^\bfgame{4}=\Adv^\bfgame{5}=\Adv^{\INDCPAKEM}_{\KemExplicitMess}(\tilde{\Ad{A}}).
	\end{equation}		
\else 
	\begin{equation}\label{explanation:GtimesH}
		\Adv^\bfgame{5}=\Adv^{\INDCPAKEM}_{\KemExplicitMess}(\tilde{\Ad{A}}).
	\end{equation}
\fi

Collecting the terms from the hybrid transitions, using \cref{lem:pr-guess-augQROM} below, and bounding $\numberDecapsQueries 2^{-\gamma}\le \numberDecapsQueries^2 2^{-\gamma/2}$ yields the desired bound. The statements about query numbers, width and depth, as well as the runtime, are straightforward.
\qed\end{proof}

Like in \cref{sec:ROM}, we continue by bounding the probability of event \EventGuessedCT, and \cref{lem:pr-guess-augQROM} below is the \augQROM{\Encrypt} analogue of \cref{lem:pr-guess}.
Again, we will soon revisit \FFPCCA attacker \Ad{B} against \PKEDerand, and we will simulate \Ad{B}'s oracle \oracleDecrypt via an oracle \oracleDecryptSim (see \cref{fig:SimDecrypt:QROM}) that differs from \oracleDecrypt if an event equivalent to \EventGuessedCT occurs.
Therefore, we again generalize the definition of event \EventGuessedCT accordingly.

\begin{figure}[tb]\begin{center}
		
		\nicoresetlinenr
		
		\fbox{\small
			
			\begin{minipage}[t]{3cm}	
				
				\underline{$\oracleDecryptSim(c)$}
				\begin{nicodemus}
					\item $m \leftarrow \SE(c)$
					\item \pcreturn $m$
				\end{nicodemus}
			\end{minipage}
			
			\quad
			
			\begin{minipage}[t]{3.7cm}	
				
				\underline{\RO{G'}, input registers $X,Y$}
				\begin{nicodemus}
					\item Apply $\SRO_{XYD}$
					\item \pcreturn registers $XY$
				\end{nicodemus}
				
			\end{minipage}
			
		}		
		\ifTightOnSpace \vspace{-5pt} \fi
		\caption{Simulation \oracleDecryptSim of oracle \oracleDecrypt for \PKEDerand. For the reader's convenience, we repeat the definition of \RO{G'}.}
		\label{fig:SimDecrypt:QROM}
	\end{center}
	\ifTightOnSpace \vspace{-18pt} \fi
\end{figure}

\begin{lemma}\label{lem:pr-guess-augQROM}
	\ifTightOnSpace
		Let \PKE and \Ad{A} be like in \cref{lem:pr-guess} (see page~\pageref{lem:pr-guess}), except that \Ad{A} is now considered in the \augQROM{\Encrypt}.
	\else
		Let \PKE be $\gamma$-spread,
		and let \Ad{A} be an \augQROM{\Encrypt} adversary that expects random oracles \RO{G}, \RO{H} as well as either a decapsulation oracle \oracleDecaps for \ifTightOnSpace \KemExplicitMess \else $\KemExplicitMess \coloneqq \FOExplicitMess[\PKE, \RO{G},\RO H]$ \fi
		or a decryption oracle \oracleDecrypt for \PKEDerand,
		issuing at most \numberDecOrDecapsQueries queries to the latter.
	\fi
	Let \Ad{A} be run with $\RO G'$ and \oracleDecaps or \oracleDecapsSim (\oracleDecrypt or \oracleDecryptSim), but for each query $c_i$, both $\hat m_i= \oracleDecryptSim(c_i)$ and $m_i=\oracleDecrypt(c_i)$  are computed in that order, regardless of which of the two oracles \oracleDecaps and \oracleDecapsSim (\oracleDecrypt and \oracleDecryptSim) \Ad{A} has access to.
	Then \EventGuessedCT, the event that $\hat m_i = \bot$ while $m_i \neq \bot$, is very unlikely. 
	Concretely,
	\begin{equation}
		\Pr\left[\EventGuessedCT\right]\le 2\numberDecOrDecapsQueries\cdot  2^{-\gamma}.
	\end{equation}
\end{lemma}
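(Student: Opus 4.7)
The plan is to bound the probability of $\EventGuessedCT$ per index $i \in \{1,\dots,\numberDecOrDecapsQueries\}$ separately and then apply a union bound. Fix $i$ and recall the setup: at query $c_i$, the reduction first calls $\hat m_i \leftarrow \oracleDecryptSim(c_i)$, which invokes $\SE(c_i)$, and then $m_i \leftarrow \oracleDecrypt(c_i)$; the latter internally computes $m'_i \coloneqq \Decrypt(\sk, c_i)$ and, for the re-encryption check, makes a classical $\SRO(m'_i)$ query. The event $\EventGuessedCT$ asks that $\hat m_i = \bot$ while $m_i \neq \bot$, i.e., $\SE(c_i)$ returns $\bot$ and the re-encryption check subsequently accepts, so $\Encrypt(\pk, m'_i; \SRO(m'_i)) = c_i$. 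Crucially, these two calls happen in direct succession, so no other oracle interaction interleaves.

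The core of the argument is to track the state of the compressed-oracle database right after conditioning on $\SE(c_i) = \bot$. By \eqref{eq:extraction-measurement}, this state lies in the image of $\Sigma^{c_i,\emptyset} = \bigotimes_m \bar\Pi^{c_i,m}_{D_m}$, so in particular the reduced state $\ket{\chi}$ on $D_{m'_i}$ has no computational-basis support on any $r$ with $\Encrypt(\pk, m'_i; r) = c_i$. Writing $\ket{\chi} = \alpha \ket{\bot} + \ket{\psi}$ with $\ket{\psi}$ supported on $\{\ket{0r} : \Encrypt(\pk, m'_i; r) \neq c_i\}$, I then trace the $F\,\CNOT\,F$ structure of $O_{XYD}$ applied to $\ket{m'_i}_X \ket{0}_Y \ket{\chi}_{D_{m'_i}}$ and read off the amplitude landing on $\ket{r^*}_Y$ for each preimage $r^*$. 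The $\ket{\psi}$-component is annihilated at the $Y$-projection step on any preimage, while the $\ket{\bot}$-component contributes amplitude of order $\beta/\sqrt{2^n}$ per output value, where $\beta = \alpha - \langle \phi_0 | \psi\rangle$ collects the effective coefficient produced by the first $F$-gate.

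By Cauchy--Schwarz, $|\beta|^2 \le 2(|\alpha|^2 + |\langle \phi_0 | \psi\rangle|^2) \le 2(|\alpha|^2 + \|\psi\|^2) = 2$, so each preimage is measured with probability at most $2/2^n$. There are at most $2^n \cdot 2^{-\gamma}$ such preimages by $\gamma$-spreadness of \PKE, giving a per-query contribution to $\Pr[\EventGuessedCT]$ of at most $2 \cdot 2^{-\gamma}$; a union bound over the $\numberDecOrDecapsQueries$ queries yields the claim.

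The main obstacle is the coherent character of the post-$\SE$ state $\ket{\chi}$: since $D_{m'_i}$ can be in a genuine superposition of $\ket{\bot}$ and populated-but-non-preimage branches, the $F$-gates inside a query can generate interference that superficially increases the preimage-measurement amplitude above what the $\ket{\bot}$-branch alone would contribute. The factor of $2$ in the claimed bound precisely absorbs this interference via the Cauchy--Schwarz estimate on $\beta$; carrying this out carefully through the query unitary is the bulk of the technical work, and is why the QROM bound is a factor of $2$ looser than its ROM counterpart in \cref{lem:pr-guess}.
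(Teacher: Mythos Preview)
Your approach is correct and reaches the same bound, but it is organised differently from the paper's argument. Both proofs fix an index $i$, analyse the joint event $\hat m_i=\bot \wedge m_i\neq\bot$ by projecting the post-$\SE$ database state through the query unitary, and then take a union bound. The difference is in how the core operator estimate is obtained. The paper writes the probability amplitude as $\|\Pi^{c,x}_{D_{m_i}} F_{D_{m_i}} \Sigma^{c,\emptyset}\ket{\cdot}\|$, observes that $\Pi^{c,x}\bar\Pi^{c,x}=0$ kills the ``commuted'' term $F\Pi\Sigma^{c,\emptyset}$ outright, and then invokes the ready-made commutator bound $\|[\Pi^{c,x},F]\|\le\sqrt 2\cdot 2^{-\gamma/2}$ from \cite[Lemma~3.3]{DFMS21}; squaring gives $2\cdot 2^{-\gamma}$ per query. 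You instead unfold the $F\,\CNOT\,F$ query unitary by hand, track the amplitude $\beta=\alpha-\langle\phi_0|\psi\rangle$ that lands on each preimage in $Y$, and close with Cauchy--Schwarz on $\beta$. In effect you are re-deriving the content of that commutator lemma from scratch; the paper's route is shorter and more modular, while yours is self-contained and makes the origin of the factor~$2$ visible without citing \cite{DFMS21}.

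One point that deserves a sentence of care in your write-up: you speak of ``the reduced state $\ket\chi$ on $D_{m'_i}$'' as if it were pure, but after $\Sigma^{c,\emptyset}$ the database cell $D_{m'_i}$ is generally entangled with the adversary's workspace and the other $D_x$. Your amplitude bound is really an operator-norm statement (it holds for every normalised $\ket\chi$ in the image of $\bar\Pi^{c,m'_i}$), and that is what lets it apply to the mixed/entangled case by convexity. Making this explicit---or, equivalently, phrasing the estimate as $\|\Pi F\bar\Pi\|\le\sqrt 2\cdot 2^{-\gamma/2}$---is exactly how the paper sidesteps the issue.
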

\begin{proof} 
	We begin by bounding the probability that for some fixed $i\in\{1,...,\numberDecOrDecapsQueries\}$ we have $\hat m_i=\bot$ but $m_i\neq \bot$. From the definitions of \oracleDecaps and \oracleDecapsSim, as well as the definitions of the interfaces \SRO and \SE, we obtain 
	\ifTightOnSpace
	\else
	the expression\fi
	\begin{align}
		\sqrt{\Pr[\hat m_i=\bot\wedge m_i\neq \bot]}=&\sqrt{\Pr[\hat m_i=\bot\wedge \Encrypt(m_i, \SRO(m_i))=c_i]}\nonumber\\
		=&\left\|\Pi_Y^{c,x}O_{XYF}\Sigma^{c,\emptyset}_F\ket{ m_i}_X\ket 0_Y\ket{\psi_i}_{FE}\right\| \label{eq:prob2op}
	\end{align}
	Here, $\ket{\psi_i}$ is the adversary-oracle state before \Ad{A} submits the query $c_i$ and the projectors $\Pi_Y^{c,x}$ and $\Sigma^{c,\emptyset}$ are with respect to $f=\Encrypt$ (see \cref{eq:extraction-measurement}). We begin by simplifying the expression on the right hand side. We have $O_{XYF}\ket{m_i}_X=F_{F_{m_i}}\CNOT^{\otimes n}_{F_{m_i}:Y}F_{F_{m_i}}\otimes \ket{m_i}_X$ and $\Pi_Y\CNOT^{\otimes n}_{F_{m_i}:Y}\ket 0_Y=\CNOT^{\otimes n}_{F_{m_i}:Y}\Pi_{F_{m_i}}\ket 0_Y$ for any projector $\Pi$ that is diagonal in the computational basis. We can thus simplify
	\begin{align}
		&\left\|\Pi_Y^{c,x}O_{XYF}\Sigma^{c,\emptyset}_F\ket{ m_i}_X\ket 0_Y\ket{\psi_i}_{FE}\right\|=\left\|\Pi_{F_{m_i}}^{c,x}F_{F_{m_i}}\Sigma^{c,\emptyset}\ket{ m_i}_X\ket 0_Y\ket{\psi_i}_{FE}\right\|\nonumber\\
		\le &\left\|F_{F_{m_i}}\Pi_{F_{m_i}}^{c,x}\Sigma^{c,\emptyset}_F\ket{ m_i}_X\ket 0_Y\ket{\psi_i}_{FE}\right\|+\left\|\left[\Pi^{c,x}, F\right]\right\|\nonumber\\
		\le &\left\|F_{F_{m_i}}\Pi_{F_{m_i}}^{c,x}\Sigma^{c,\emptyset}_F\ket{ m_i}_X\ket 0_Y\ket{\psi_i}_{FE}\right\|+\sqrt 2\cdot 2^{-\gamma/2}\label{eq:commutator-bound}
	\end{align}
	where we have applied the two observations and omitted any final unitary operators in the first equality, and the last inequality is due to Lemma 3.3 in \cite{DFMS21}. But the remaining norm term vanishes as
	\begin{align}\label{eq:orth-projectors}
		\Pi_{F_{m_i}}^{c,x}\Sigma^{c,\emptyset}_F=(\Pi^{c,x}\bar \Pi^{c,x})_{F_{m_i}}\otimes (\bar \Pi^{c,x})^{\otimes |\mathcal M|-1}_{F_{\mathcal M\setminus \{m_i\}}}=0.
	\end{align}
	Combining \cref{eq:prob2op,eq:commutator-bound,eq:orth-projectors} and squaring the resulting inequality yields
	\begin{equation}
		\Pr[\hat m_i=\bot\wedge m_i\neq \bot]\le 2\cdot 2^{-\gamma}.
	\end{equation}
	 Collecting the terms and applying a union bound over the $\numberDecOrDecapsQueries$ decapsulation queries yields the desired bound.
\qed\end{proof}

So far, we have shown that whenever an \INDCCA adversary \Ad{A}'s behaviour is significantly changed by being run with simulation \oracleDecapsSim instead of the real oracle \oracleDecaps, we can use \Ad{A} to find a failing plaintext,
assuming access to the decryption oracle \oracleDecrypt provided in the \FFPCCA game. 
We continue by proving an $\augQROM{\Encrypt}$-analogue of \cref{thm:FFP2}, i.e.,
we show that \oracleDecrypt can be simulated via oracle \oracleDecryptSim (see \cref{fig:SimDecrypt:QROM}) without the secret key,
thereby being able to construct an \FFPCPA adversary from any \FFPCCA adversary (both in the \augQROM{\Encrypt}).

\begin{theorem}[\PKEDerand \FFPCPA \RightarrowaugQROM{\Encrypt} \PKEDerand \FFPCCA]\label{thm:QFFP2}\label{thm:SimDec:QROM}
	\ifTightOnSpace
		Let \PKE and \Ad{B} be like in \cref{thm:FFP2} (see page~\pageref{thm:FFP2}), except that \Ad{B} is now considered in the \augQROM{\Encrypt}, issuing at most $q_{\SRO}$/$q_{\SE}$ many queries to its respective oracle \SRO/\SE.
	\else
		Let \PKE be $\gamma$-spread, and let \Ad{B}  be an \FFPCCA adversary in the \augQROM{\Encrypt} against \PKEDerand that makes at most \ifTightOnSpace \numberDecQueries, $q_{\SRO}$ and $q_{\SE}$ to its oracles $\oracleDecaps$, $\SRO$ and $\SE$, \else \numberDecQueries many decryption queries, and at most $q_{\SRO}$ and $q_{\SE}$ to the two interfaces of the \augQROM{\Encrypt}, \fi respectively.
	\fi
	Then there exist an \FFPCPA adversary $\tilde{\Ad{B}}$ in the \augQROM{\Encrypt} such that 
	\begin{equation}\label{eq:guessing-qB-quantum}
		\Adv^{\FFPCCA}_{\PKEDerand}(\Ad{B})
		\le (\numberDecQueries+1) \Adv^{\FFPCPA}_{\PKEDerand}(\tilde{\Ad{B}})
		+12%
		\numberDecQueries(q_{\RO G}+4\numberDecQueries)2^{-\gamma/2}%
	\end{equation}
	The adversary $\tilde{\Ad{B}}$ makes $q_{\SRO}$ queries to \SRO and $q_{\SE}+\numberDecQueries$ queries to \SE, and its runtime satisfies $\Time (\tilde{\Ad B})=\Time (\Ad B)+O(\numberDecapsQueries)$.
\end{theorem}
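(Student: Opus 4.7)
The plan is to adapt the hybrid-game structure underlying the proof of \cref{thm:QFFP1} and combine it with the guessing strategy from the ROM analogue, \cref{thm:FFP2}. The goal is to construct an $\FFPCPA$ adversary $\tilde{\Ad B}$ that, without a decryption oracle, either outputs $\Ad B$'s final message or outputs $\SE(c_i)$ for a randomly selected decryption query $c_i$ made by $\Ad B$.

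First I would introduce hybrid games starting from the $\FFPCCA$ game for $\Ad B$. In the first hybrid, for every adversarial decryption query $c_i$, the game additionally computes $\hat m_i \leftarrow \SE(c_i)$ immediately before invoking $\oracleDecrypt(c_i)$; viewing the original game as having these $\SE$ calls deferred until after $\Ad B$ terminates, commuting each $\SE$ call past the intervening $\SRO$ calls (property~2.c of \cref{lem:extractable-sim-properties}) contributes an additive error of order $\numberDecQueries (q_{\SRO} + \numberDecQueries) \cdot 2^{-\gamma/2}$. In the next hybrid, $\Ad B$'s oracle responses are replaced by $\hat m_i$ in place of $\oracleDecrypt(c_i)$: defining $\EventDecapsSimDiffers$ as the event that $\hat m_i \neq \oracleDecrypt(c_i)$ for some $i$, the advantage changes by at most $\Pr[\EventDecapsSimDiffers]$, which splits into the case where $\SE$ extracts a failing plaintext (the ``good'' case) and the bad event $\EventGuessedCT$, bounded by $2\numberDecQueries \cdot 2^{-\gamma}$ via \cref{lem:pr-guess-augQROM}.

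Next I would postpone the (now unused) $\oracleDecrypt$ computations until after $\Ad B$ finishes, paying another almost-commutation cost of order $\numberDecQueries^2 \cdot 2^{-\gamma/2}$, and drop them entirely. In the resulting game, $\Ad B$'s view can be simulated using only $\pk$ and the two interfaces of the extractable QRO, without the secret key. Define $\tilde{\Ad B}$ exactly as in the proof of \cref{thm:FFP2}: sample $i \uni \{1, \ldots, \numberDecQueries + 1\}$, run $\Ad B$ against this simulation; if $i \le \numberDecQueries$ abort at $\Ad B$'s $i$-th decryption query $c_i$ and output $\SE(c_i)$, otherwise run $\Ad B$ to completion and output its message. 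Whenever $\Ad B$ wins the simulated game, either by outputting a failing plaintext or by triggering $\EventDecapsSimDiffers \wedge \neg \EventGuessedCT$ at some query, $\tilde{\Ad B}$ succeeds with probability at least $1/(\numberDecQueries + 1)$, producing the $(\numberDecQueries + 1)$ factor in the final bound.

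The main obstacle will be bookkeeping the almost-commutation errors between $\SE$ and $\SRO$ across both hybrid transitions so that the constants combine into $12 \numberDecQueries (q_{\RO G} + 4 \numberDecQueries) \cdot 2^{-\gamma/2}$; in particular the $\EventGuessedCT$ bound $2\numberDecQueries \cdot 2^{-\gamma}$ must be absorbed into that term via the slack $2\numberDecQueries \cdot 2^{-\gamma} \le 2 \numberDecQueries^2 \cdot 2^{-\gamma/2}$. The query-count and runtime statements for $\tilde{\Ad B}$ then follow by inspection: its $q_{\SE} + \numberDecQueries$ calls to $\SE$ account for forwarding $\Ad B$'s own $\SE$ queries together with the $\numberDecQueries$ extra queries used to simulate $\oracleDecryptSim$, and its overhead beyond $\Ad B$ is $O(\numberDecQueries)$.
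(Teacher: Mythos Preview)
Your proposal is correct and follows essentially the same route as the paper: the paper explicitly says the proof ``follows the proof of \cref{thm:QFFP1} with deviations similar as in the proof of \cref{thm:FFP2}'', defines \bfgameseq{0}{5} exactly as you describe, obtains $\Adv^{\bfgame{0}}\le \Adv^{\bfgame{5}}+12\numberDecQueries(q_{\RO G}+2\numberDecQueries)2^{-\gamma/2}+\Pr[\EventDecapsSimDiffers\wedge\neg\EventGuessedCT]+\Pr[\EventGuessedCT]$, and then uses the same guessing adversary $\tilde{\Ad B}$ to get $\Adv^{\FFPCPA}_{\PKEDerand}(\tilde{\Ad B})\ge \frac{1}{\numberDecQueries+1}(\Adv^{\bfgame{5}}+\Pr[\EventDecapsSimDiffers\wedge\neg\EventGuessedCT])$. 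Your observation about absorbing the $\EventGuessedCT$ term via $2\numberDecQueries\cdot 2^{-\gamma}\le 2\numberDecQueries^2\cdot 2^{-\gamma/2}$ into the final constant is also exactly what the paper does.
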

\begin{proof}
	On a high level, the proof works as follows. Analogous to \cref{thm:QFFP1}, we simulate \oracleDecrypt by \oracleDecryptSim. As we wish to remove the usage of \oracleDecrypt entirely, however, we cannot use it to determine at which \oracleDecryptSim query a failure occurs. We thus resort to guessing that information.
\ifTightOnSpace
\else

\fi
	On a technical level this proof follows the proof of \cref{thm:QFFP1} with deviations similar as in the proof of \cref{thm:FFP2}.
	Let \oracleDecryptSim be the simulation defined in \cref{fig:SimDecrypt:QROM}. Let \bfgame{0} be the \FFPCCA-game, and let \bfgameseq{1}{5} be defined based on \bfgame{0}  like in the proof of \cref{thm:QFFP1}. Like in the proof of \cref{thm:QFFP1}, we have 
	\ifTightOnSpace
	\begin{align}
		&\Adv^\bfgame{0}\le \Adv^\bfgame{5}+ 12%
		\numberDecQueries(q_{\RO G}+2\numberDecQueries)2^{-\gamma/2}+\Pr[\EventDecapsSimDiffers]\nonumber\\
		\le &\Adv^\bfgame{5}\!\!\!+\! 12%
		\numberDecQueries(q_{\RO G}+2\numberDecQueries)2^{-\gamma/2}\!\!\!+\Pr[\EventDecapsSimDiffers\wedge \neg \EventGuessedCT]+\Pr[\EventGuessedCT].\label{eq:QFFP2-firstineq}
	\end{align}
	\else
	\begin{align}
		&\Adv^\bfgame{0}\le \Adv^\bfgame{5}+ 12%
		\numberDecQueries(q_{\RO G}+2\numberDecQueries)2^{-\gamma/2}+\Pr[\EventDecapsSimDiffers]\nonumber\\
		\le &\Adv^\bfgame{5}+ 12%
		\numberDecQueries(q_{\RO G}+2\numberDecQueries)2^{-\gamma/2}+\Pr[\EventDecapsSimDiffers\wedge \neg \EventGuessedCT]+\Pr[\EventGuessedCT].\label{eq:QFFP2-firstineq}
	\end{align}
\fi
	Assume without loss of generality that $\Ad{B}$ makes exactly \numberDecQueries many queries to the oracle for $\DecryptDerand$ (if it does not, we modify $\Ad{B}$ by adding a number of useless decryption queries in the end). We define an \FFPCPA adversary $\tilde{\Ad{B}}^{\SupOr}$ defined exactly like the classical one in \cref{fig:SimDecrypt} (except that it has  quantum access to its oracles), i.e., $\tilde{\Ad{B}}$  samples $i\leftarrow\{1,...,\numberDecQueries+1\}$ and runs $\Ad{B}^{\RO G', \oracleDecryptSim}$ until the $i$-th query, or until the end if $i=\numberDecQueries+1$. Finally, $\tilde{\Ad{B}}$ outputs $m_i$, the output of $\Ad{B}^{\RO G', \oracleDecryptSim}$'s $i$-th decryption query, unless $i=\numberDecQueries+1$, in which case $\tilde{\Ad{B}}$ outputs the output of  $\Ad{B}^{\RO G', \oracleDecryptSim}$. %
	 By construction, 
\ifTightOnSpace
\begin{align}
	\Adv^{\FFPCPA}_{\PKEDerand}(\tilde{\Ad{B}})\ge &\left(\Adv^{\bfgame{5}}+\Pr[\EventDecapsSimDiffers\wedge \neg \EventGuessedCT]\right)/(\numberDecQueries+1)\label{eq:QFFP2-secondineq}
\end{align}
\else
	\begin{align}
	\Adv^{\FFPCPA}_{\PKEDerand}(\tilde{\Ad{B}})\ge &\frac{1}{\numberDecQueries+1}\left(\Adv^{\bfgame{5}}+\Pr[\EventDecapsSimDiffers\wedge \neg \EventGuessedCT]\right)\label{eq:QFFP2-secondineq}
	\end{align}
\fi
(note that all instances of $\Adv^{\mathbf{Game\ i}}$ are for $\Ad{B}$ playing \textbf{Game i}.)  Combining \cref{eq:QFFP2-firstineq,eq:QFFP2-secondineq,lem:pr-guess-augQROM} yields the desired bound. The statement about $\tilde{\Ad B}$'s running time and number of queries is straightforward.
\qed\end{proof}

Combining Theorems \ref{thm:QFFP1} and \ref{thm:QFFP2}, we obtain the \ifTightOnSpace following \else $\augQROM{}$-analogue of \cref{cor:FFP3}.\fi
\begin{corollary}[$\FOExplicitMess\lbrack\PKE\rbrack$ \INDCPA and \PKEDerand \FFPCPA \RightarrowaugQROM{\Encrypt} $\FOExplicitMess\lbrack\PKE\rbrack$ $\INDCCA$] \label{cor:QFFP3}
		Let \PKE and \Ad A be as in \cref{thm:QFFP1}. %
	Then there exist an \INDCPAKEM adversary $\tilde{\Ad{A}}$ and an \FFPCPA adversary \Ad{B}, both in the \augQROM{\Encrypt},  such that
	\begin{align}
		\Adv^{\INDCCAKEM}_{\KemExplicitMess}(\Ad{A})
			\le & \Adv^{\INDCPAKEM}_{\KemExplicitMess}\left(\tilde{\Ad{A}}\right)
			+ (\numberDecQueries+1) \Adv^{\FFPCPA}_{\PKEDerand}\left({\Ad{B}}\right)
		\nonumber\\&+24%
		\numberDecapsQueries(q_{\RO G}+4\numberDecapsQueries)2^{-\gamma/2}%
		\label{eq:QFFP3}
	\end{align}
	Both \ifTightOnSpace \else adversaries \fi $\tilde{\Ad A}$ and $\Ad B$ make $q_{\RO G}+q_{\RO H}+\numberDecapsQueries$ queries to $\SRO$, with a \ifTightOnSpace combined depth (width) of $d+\numberDecQueries$  ($w$)\else combined depth of $d+\numberDecQueries$ and a combined width of $w$\fi,  and \numberDecQueries queries to \SE%
	. %
	The running times of  $\tilde{\Ad{A}}$ and $\Ad B$ satisfy
	$\Time (\tilde{\Ad A})=\Time (\Ad A)+O(\numberDecapsQueries)$  and $\Time (\Ad B)=\Time (\Ad A)+O(\numberDecapsQueries)$.
\end{corollary}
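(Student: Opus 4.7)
The plan is to obtain the corollary as a straightforward composition of \cref{thm:QFFP1} and \cref{thm:QFFP2}, with only bookkeeping required for the query/runtime counts and the additive $2^{-\gamma/2}$ terms. The conceptual work has already been done in proving those two theorems: the first replaces the decapsulation oracle by \oracleDecapsSim and pays with an $\FFPCCA$ term against \PKEDerand, while the second removes the dependence on \oracleDecrypt inside that $\FFPCCA$ bound, at the cost of a factor $\numberDecQueries+1$ and another spreadness term. Chaining them gives exactly the stated inequality.

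First, I would apply \cref{thm:QFFP1} to the \INDCCAKEM-adversary \Ad{A}, producing an $\INDCPAKEM$-adversary $\tilde{\Ad{A}}$ against $\KemExplicitMess$ and an intermediate $\FFPCCA$-adversary $\Ad{B}'$ against \PKEDerand, both in the $\augQROM{\Encrypt}$, with
\ifTightOnSpace
\[\Adv^{\INDCCAKEM}_{\KemExplicitMess}(\Ad{A}) \le \Adv^{\INDCPAKEM}_{\KemExplicitMess}(\tilde{\Ad{A}}) + \Adv^{\FFPCCA}_{\PKEDerand}(\Ad{B}') + 12\numberDecapsQueries(q_{\RO G}+4\numberDecapsQueries)\cdot 2^{-\gamma/2}.\]
\else
\begin{equation*}
\Adv^{\INDCCAKEM}_{\KemExplicitMess}(\Ad{A}) \le \Adv^{\INDCPAKEM}_{\KemExplicitMess}(\tilde{\Ad{A}}) + \Adv^{\FFPCCA}_{\PKEDerand}(\Ad{B}') + 12\numberDecapsQueries(q_{\RO G}+4\numberDecapsQueries)\cdot 2^{-\gamma/2}.
\end{equation*}
\fi
Here $\Ad{B}'$ makes \numberDecapsQueries decryption queries, $q_{\RO G}$ queries to \SRO and \numberDecapsQueries queries to \SE. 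Next, I would feed $\Ad{B}'$ into \cref{thm:QFFP2}, obtaining an \FFPCPA-adversary \Ad{B} in the $\augQROM{\Encrypt}$ satisfying
\ifTightOnSpace
\[\Adv^{\FFPCCA}_{\PKEDerand}(\Ad{B}') \le (\numberDecQueries+1)\Adv^{\FFPCPA}_{\PKEDerand}(\Ad{B}) + 12\numberDecQueries(q_{\RO G}+4\numberDecQueries)\cdot 2^{-\gamma/2}.\]
\else
\begin{equation*}
\Adv^{\FFPCCA}_{\PKEDerand}(\Ad{B}') \le (\numberDecQueries+1)\Adv^{\FFPCPA}_{\PKEDerand}(\Ad{B}) + 12\numberDecQueries(q_{\RO G}+4\numberDecQueries)\cdot 2^{-\gamma/2}.
\end{equation*}
\fi
Substituting this bound into the previous one and adding the two spreadness terms yields the factor $24$ in front of $\numberDecapsQueries(q_{\RO G}+4\numberDecapsQueries)\cdot 2^{-\gamma/2}$ claimed in the corollary.

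It remains to account for the query, depth, width and runtime parameters of the final reductions. For $\tilde{\Ad{A}}$, the counts are inherited directly from \cref{thm:QFFP1}: it makes $q_{\RO G}+q_{\RO H}+\numberDecapsQueries$ \SRO-queries of combined depth $d+\numberDecapsQueries$ and width $w$, and \numberDecapsQueries \SE-queries, with $\Time(\tilde{\Ad A})=\Time(\Ad A)+O(\numberDecapsQueries)$. For \Ad{B}, \cref{thm:QFFP1} produces $\Ad{B}'$ with the same \SRO/\SE/decryption-query profile (up to constants) as the ones \Ad{A} and $\tilde{\Ad A}$ induce, and \cref{thm:QFFP2} only adds $\numberDecapsQueries$ further \SE-queries while leaving the \SRO-queries unchanged. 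Thus \Ad{B} matches the profile stated for \Ad{B} in the corollary, and by composing the two $O(\numberDecapsQueries)$ overheads we likewise get $\Time(\Ad B)=\Time(\Ad A)+O(\numberDecapsQueries)$.

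The main (minor) obstacle is the accounting: ensuring that the query bounds claimed for both $\tilde{\Ad A}$ and \Ad B in \cref{cor:QFFP3} are indeed the ones produced by the composition, and verifying that the two $12\numberDecQueries(q_{\RO G}+4\numberDecQueries)2^{-\gamma/2}$ terms combine cleanly into $24\numberDecapsQueries(q_{\RO G}+4\numberDecapsQueries)2^{-\gamma/2}$ (using that $\numberDecQueries=\numberDecapsQueries$ in the composed reduction, since $\Ad{B}'$ inherits exactly \numberDecapsQueries decryption queries from \Ad{A} via the simulation of \oracleDecapsSimFail). No further technical work beyond this substitution is needed.
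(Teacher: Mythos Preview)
Your proposal is correct and takes exactly the same approach as the paper, which simply states that the corollary follows by ``combining Theorems \ref{thm:QFFP1} and \ref{thm:QFFP2}'' without giving further detail. Your bookkeeping for the additive $2^{-\gamma/2}$ terms and the query/runtime parameters is precisely what is needed, and the paper even remarks afterward that the factor $24$ (i.e., twice $12$) is an artifact of this modular composition.
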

Again, we remark that the additive error terms are a factor of $2$ larger due to our modular proof (in terms of Theorems \ref{thm:QFFP1} and \ref{thm:QFFP2}). It is straightforward to show that the bound of \cref{cor:QFFP3} can be proven without the factor of $2$, when directly analyzing the composition of the reductions from Theorems \ref{thm:QFFP1} and \ref{thm:QFFP2}.

While the additive error term \ifTightOnSpace depending on $\gamma$ \else that depends on the spreadness parameter $\gamma$ \fi 
improves by roughly a power 2 over the corresponding term in the security bound of \cite{DFMS21}, the only known concrete bound for \FOExplicitMess, we remark that we do not expect it to be tight. It turns out, however, that many relevant schemes have abundantly randomized ciphertexts. In \cref{sec:spreadness}, we bound the spreadness parameter for some schemes where this was relatively easy to do: the alternate candidates in the NIST post-quantum cryptography competition Frodo and HQC.

\subsection{Semi-classical OWTH in the \augQROM{f}} \label{sec:QROM:OWTH}

To further analzye \INDCPAKEM security of $\FOExplicitMess[\PKE, \RO{G},\RO{H}]$, in the \augQROM{\Encrypt},
we want to apply an \augQROM{\Encrypt} argument to show that keys encapsulated by $\FOExplicitMess[\PKE, \RO{G},\RO{H}]$ are random-looking unless the adversary can be used to attack the underlying scheme \PKE.
\ifTightOnSpace W\else In slightly more detail, w\fi e will need to argue that the challenge key $K^*:=\RO{H}(m^*)$ and the encryption randomness $\RO{G}(m^*)$ used for challenge ciphertext $c^*$ can be replaced with fresh random values, in the \augQROM{\Encrypt}.
To \ifTightOnSpace that end, we develop \else theoretically justify this argument, this section develops \fi
\augQROM{f} generalizations of the semi-classical \OWTH theorems from \cite{C:AmbHamUnr19}.

We will first describe how we model this 'replacing with fresh randomness' on a subset $\reproSet \subset \mathcal X$ for superposition oracle, and how our approach generalizes previous approaches.
Previous work (like \cite{C:AmbHamUnr19}) used two oracles \RO{O_0} and \RO{O_1} that only differ on some set \reproSet,
while algorithm \Ad{A}'s input is always defined relative to oracle \RO{O_0}.
In the case where \Ad{A}'s oracle is \RO{O_1}, the input uses fresh randomness from the adversary's point of view.
Here we meet the first \augQROM{\Encrypt}-related roadblock:
Superposition oracles have the property that initially, each value $\SRO(x)$ is in \emph{quantum superposition},
which complicates equating two oracles everywhere but on \reproSet. 
As it suffices for our purpose, we define the 'resampling' set \reproSet as follows:
We assume \Ad{A}'s input \inputVar to be classical,  generated by an algorithm \GenInput with classical access to \SupOrNotProg.
We can then define \reproSet as the set of all inputs $x$ queried by \GenInput,
e.g., for input $(c^*, K^*) \coloneqq (\Encrypt(\pk, m^*; \RO{G}(m^*))), \RO{H}(m^*))$, \reproSet is $\lbrace m^* \rbrace$.) \cm{If \GenInput can be an adaptive algorithm, do we need to include all inputs in \reproSet that are queried by \GenInput for \emph{some} RO?}
Apart from how we model \reproSet, we proceed as in \cite{C:AmbHamUnr19}:
Use \SupOrNotProg to generate \Ad{A}'s input and replace \Ad{A}'s access to \SupOrNotProg with access to \SupOrProg, an independent extractable compressed oracle.

{%
}
{%
}

Clearly, if \GenInput does not query \SupOrNotProg, the two oracles \SupOrNotProg and \SupOrProg are perfectly indistinguishable to \Ad{A}.
But what if \Ad{A}'s input depends on \SupOrNotProg?
\cite{C:AmbHamUnr19} related \Ad A's distinguishing advantage to the probability of ``\FIND'',
the event that an element of \reproSet is detected in \Ad{A}'s queries to the QRO via a quantum measurement.
This result, however, is in the (plain) QROM, and \FIND is not be the only distinction opportunity in the \augQROM{f} as there are now two oracle interfaces, \SRO and \SE.
As an example, let \Ad{A} have input $(x, t \coloneqq f(x, \SRONotProg(x)))$ for some oracle input value $x$.
\emph{Without any \SRO query}, \Ad{A} can tell the two cases apart by querying \SE on $t$:
Querying \SENotProg on $t$ results in output $x$ with overwhelming probability, while querying \SEProg on $t$ yields output $\bot$.
Extraction queries hence have to be taken into account.

Before stating this section's main theorems, we will describe our approach more formally.
Borrowing the notation from \cite{C:AmbHamUnr19}, we define \emph{`punctured'  versions} \puncture{\SupOr}{\reproSet} of \ifTightOnSpace \else extractable superposition oracles \fi \SupOr:
When an \SRO query is performed, we first apply a \emph{'semi-classical'} oracle \orSemiClassical{S},
and then oracle unitary \OrUnit.
Intuitively, \orSemiClassical{S} marks if an element of \reproSet was found in one of the query registers.
\ifTightOnSpace\else (The plural is used since we consider parallel queries.) \fi
Formally, \orSemiClassical{S} acts on the query input registers $X_1, \cdots X_w$ and \ifTightOnSpace a \else an additional \fi `flag' register $F$ that holds one qubit per oracle query,
by first mapping \ifTightOnSpace $\ket{x_1, \cdots x_w, b}$ to $\ket{x_1, \cdots x_w, b \oplus \bool{x_1 \in \reproSet \vee \cdots \vee x_w \in \reproSet} }$,
\else 
\[ \ket{x_1, \cdots x_w}_{X_1 \cdots X_w} \otimes \ket{b}_F
	\mapsto \ket{x_1, \cdots x_w}_{X_1 \cdots X_w} \otimes \ket{b \oplus \bool{x_1 \in \reproSet \vee \cdots \vee x_w \in \reproSet}}_F \enspace ,\]
\fi
and then measuring register $F$ in the computational basis.

Like in \cite{C:AmbHamUnr19},  we denote the event that any measurement of $F$ returns 1 by \FIND.
In that case, the query has collapsed to a superposition of states where at least one input register only contains elements of \reproSet.
If \FIND does not occur, then all oracle queries collapsed to states not containing any elements of \reproSet,
and in consequence, set \reproSet defining \Ad{A}'s input is effectively removed from the query input domain.
In this case, the only way to distinguish between \SupOrNotProg and \SupOrProg is to perform an extraction query where \SENotProg might returns an element of \reproSet. We will call this event \EVENTEXT.
If neither \FIND nor \EVENTEXT occur, the two scenarios are indistinguishable to \Ad{A}.

The following helper lemma formalizes the above reasoning and extends it to some other probability distances: \cref{eq:OWTH:EqualEventNoFindNoExtract} formalizes that if \Ad{A} neither triggers \FIND
\ifTightOnSpace\else (and hence never sees a random oracle value on \reproSet) \fi
nor \EVENTEXT\ifTightOnSpace\else (meaning no extraction is performed an on a critical point)\fi,
 its behaviour in the two cases is the same: %
 arbitrary events will be equally likely in both cases.
\ifTightOnSpace \cref{eq:OWTH:DiffEventNoFind,eq:OWTH:DiffFind}  have a similar interpretation. \else\cref{eq:OWTH:DiffEventNoFind} states  that if \Ad{A} does not trigger \FIND, any event will only become more likely in the resampled scenario than in the honest scenario if \EVENTEXT happens.
During the proof of one of this section's main theorems, we need to also reason about the probability of \FIND in the two cases.
\cref{eq:OWTH:DiffFind} states that the likelihood of \FIND only differs in the two scenarios if \EVENTEXT happens.
(To make this statement more intuitive, consider an adversary with input $(x, t \coloneqq f(x, \SRONotProg(x)))$ that first performs an extraction query on $t$ and then queries the oracle on the result.)
\fi
The proof of \cref{lem:OWTH:DiffEvents} is mostly reworking the probabilities by reasoning about the cases and eliminating the case where neither \FIND nor \EVENTEXT occurs.
It is given in 
\ifCameraReady
the full version.
\else 
	\cref{sec:QROM:OWTH:DiffEvents} (page~\pageref{sec:QROM:OWTH:DiffEvents}).
\fi
\begin{restatable}{lemma}{DiffEvent}\label{lem:OWTH:DiffEvents}
	Let \SupOrNotProg and \SupOrProg be two extractable superposition oracles from $\mathcal X$ to $\mathcal Y$
	for some function $f:\mathcal X\times \mathcal Y\to\mathcal T$, and let \GenInput be an algorithm with classical output \inputVar, having access to \SupOrNotProg.
	Let \reproSet be the set of elements $x \in \mathcal X$ whose oracle values are needed to compute \inputVar,
	and let $\commitmentSet_{\reproSet} \coloneqq \lbrace t \mid \exists x \in \reproSet \textnormal{ s.th. } t = f(x, \SupOrNotProg(x)) \rbrace$.
	Let \FIND be the event that flag register $F$ is ever measured to be in state 1 during a call to \Ad{A}'s punctured oracle,
	and let \EVENTEXT be the event that \Ad{A} performs an extraction query on any $t \in \commitmentSet_{\reproSet}$.
	Let \EVENT be an arbitrary (classical) event.
	Then 
	\ifTightOnSpace
	\begin{align} \label{eq:OWTH:EqualEventNoFindNoExtract}
		&\Pr[\EVENT   \wedge \neg \FIND \!\wedge\! \!\neg \EVENTEXT\!:  \Ad{A}^{\puncture{\SupOrNotProg}{\reproSet}}]\!= \!\Pr[\EVENT \!\wedge\! \!\neg \FIND \!\wedge\! \neg \EVENTEXT\! : \Ad{A}^{\puncture{\SupOrProg}{\reproSet}}],\\
&\!\!|\!
	\Pr\![\EVENT  \!\wedge \!\!\neg \FIND\! : \!\Ad{A}^{\puncture{\SupOrNotProg}{\reproSet}}]
\!	-\!  \Pr[\EVENT\! \wedge \!\!\neg \FIND\! : \! \Ad{A}^{\puncture{\SupOrProg}{\reproSet}}]
	|\! \leq\! \Pr[\EVENTEXT\!:\! \Ad{A}^{\puncture{\SupOrNotProg}{\reproSet}}], \label{eq:OWTH:DiffEventNoFind}\\
		&|
	\Pr[\FIND : \Ad{A}^{\puncture{\SupOrNotProg}{\reproSet}}]
	-  \Pr[\FIND : \Ad{A}^{\puncture{\SupOrProg}{\reproSet}}]
	|
	\leq \Pr[\EVENTEXT: \Ad{A}^{\puncture{\SupOrNotProg}{\reproSet}}] \label{eq:OWTH:DiffFind}
\end{align}
	\else
	\begin{align} \label{eq:OWTH:EqualEventNoFindNoExtract}
		\Pr[\EVENT  & \wedge \neg \FIND \wedge \neg \EVENTEXT:  \Ad{A}^{\puncture{\SupOrNotProg}{\reproSet}}]
		\nonumber \\
		 & = \Pr[\EVENT \wedge \neg \FIND \wedge \neg \EVENTEXT: \Ad{A}^{\puncture{\SupOrProg}{\reproSet}}]
		\enspace ,
	\end{align}
	\begin{align} \label{eq:OWTH:DiffEventNoFind}
		|
			\Pr[\EVENT  \wedge \neg \FIND : \Ad{A}^{\puncture{\SupOrNotProg}{\reproSet}}]
			-  \Pr[\EVENT \wedge \neg \FIND : & \Ad{A}^{\puncture{\SupOrProg}{\reproSet}}]
		|
		\nonumber \\
		& \leq \Pr[\EVENTEXT: \Ad{A}^{\puncture{\SupOrNotProg}{\reproSet}}]
		\enspace ,
	\end{align}
	\begin{align}\label{eq:OWTH:DiffFind}
		|
			\Pr[\FIND : \Ad{A}^{\puncture{\SupOrNotProg}{\reproSet}}]
			-  \Pr[\FIND : \Ad{A}^{\puncture{\SupOrProg}{\reproSet}}]
		|
		\leq \Pr[\EVENTEXT: \Ad{A}^{\puncture{\SupOrNotProg}{\reproSet}}]
		\enspace ,
	\end{align}
\fi
	where all probabilities are taken over the coins of \GenInput and the internal randomness of \Ad{A}
	and we used $\Ad{A}^{\RO{O_0}}$ as a shorthand for $\Ad{A}^{\RO{O_0}}(\inputVar)$.
	
\end{restatable}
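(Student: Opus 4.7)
The plan is to take \eqref{eq:OWTH:EqualEventNoFindNoExtract} as the core technical claim and to deduce the two difference bounds from it by elementary probability manipulations. The guiding intuition is that $\SupOrNotProg$ and $\SupOrProg$ agree on $\mathcal X\setminus\reproSet$, and $\Ad A$ has exactly two channels in the $\augQROM{f}$ by which it can probe the disagreement region $\reproSet$: either an $\SRO$ query that collapses onto $\reproSet$ (which is precisely what the semi-classical oracle $\orSemiClassical{\reproSet}$ detects, producing event $\FIND$), or an $\SE$ query on some $t\in\commitmentSet_\reproSet$ (event $\EVENTEXT$). Once both events are excluded, the two worlds must be indistinguishable.

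For \eqref{eq:OWTH:EqualEventNoFindNoExtract} I would proceed by induction over $\Ad A$'s oracle calls, tracking the (sub-normalised) joint state on $(\Ad A,\inputVar,D_{\reproSet^c})$ obtained by tracing out the $\reproSet$-part of the oracle database and projecting onto the subspace where no $\FIND$ or $\EVENTEXT$ flag has yet been raised. The induction hypothesis is that this reduced state is identical in the NP and P worlds. At initialisation it holds because $\GenInput$ only touches $D_{\reproSet}$ (by definition of $\reproSet$), so $D_{\reproSet^c}$ starts in $\ket\phi^{\otimes|\reproSet^c|}$ and is decoupled from $\inputVar$ in both worlds. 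An $\SRO$ step conditioned on flag $0$ applies $\orSemiClassical{\reproSet}$, which projects the query input registers onto $\reproSet^c$, and the subsequent $\OrUnit$ only updates $D_{\reproSet^c}$. An $\SE$ step on input $t\notin\commitmentSet_\reproSet$ uses that $\Pi^{t,x}$ vanishes on the $\reproSet$-part of $D$ in both worlds: in the NP world because $f(x,\SupOrNotProg(x))\ne t$ for every $x\in\reproSet$ by the definition of $\commitmentSet_\reproSet$, and in the P world because $D_x=\ket\bot$ and $\Pi^{t,x}\ket\bot=0$; hence $\mathcal M^t$ effectively reduces to its restriction to $D_{\reproSet^c}$, preserving the invariant and yielding the same classical outcome in both worlds. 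Summing over all trajectories compatible with a given classical event $\EVENT$ then gives \eqref{eq:OWTH:EqualEventNoFindNoExtract}.

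The main obstacle will be making the $\SE$ step rigorous, since $\mathcal M^t$ is defined globally on $D$ and its projectors $\Sigma^{t,x}$ chain sub-projectors over every register below $x$ in the lexicographic order. One must verify that under the invariant these chained sub-projectors factorise as (identity on $D_{\reproSet}$) $\otimes$ (restricted measurement on $D_{\reproSet^c}$), using both the annihilation property of $\Pi^{t,\cdot}$ on $D_{\reproSet}$ and the fact that $\bar\Pi^{t,\cdot}$ acts as the identity on the same registers. A secondary subtlety is that in adaptive settings $\reproSet$ and $\commitmentSet_\reproSet$ are random variables depending on $\GenInput$'s coins and query outcomes; one handles this by conditioning on a fixed classical transcript of $\GenInput$ and averaging at the end.

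Given \eqref{eq:OWTH:EqualEventNoFindNoExtract}, the remaining bounds follow from bookkeeping. Writing
\[
\Pr[\EVENT\wedge\neg\FIND]=\Pr[\EVENT\wedge\neg\FIND\wedge\neg\EVENTEXT]+\Pr[\EVENT\wedge\neg\FIND\wedge\EVENTEXT],
\]
the first summand agrees across worlds by \eqref{eq:OWTH:EqualEventNoFindNoExtract}, so \eqref{eq:OWTH:DiffEventNoFind} reduces to the absolute difference of the second summands, each of which is bounded by $\Pr[\EVENTEXT]$ in the respective world. Specialising \eqref{eq:OWTH:EqualEventNoFindNoExtract} to $\EVENT=\top$ additionally yields $\Pr[\FIND\vee\EVENTEXT:\mathrm{NP}]=\Pr[\FIND\vee\EVENTEXT:\mathrm{P}]$, which combined with a short case analysis tightens the bound to the NP-side quantity stated in the lemma. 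For \eqref{eq:OWTH:DiffFind} use $\Pr[\FIND]=\Pr[\FIND\vee\EVENTEXT]-\Pr[\EVENTEXT\wedge\neg\FIND]$, apply \eqref{eq:OWTH:EqualEventNoFindNoExtract} with $\EVENT=\top$ to the first term, and bound the residual $|\Pr[\EVENTEXT\wedge\neg\FIND:\mathrm{NP}]-\Pr[\EVENTEXT\wedge\neg\FIND:\mathrm{P}]|$ by $\Pr[\EVENTEXT:\mathrm{NP}]$.
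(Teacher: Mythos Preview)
Your plan matches the paper's: take \eqref{eq:OWTH:EqualEventNoFindNoExtract} as the core claim (the paper simply asserts it from the informal discussion preceding the lemma; your inductive sketch is more explicit) and derive \eqref{eq:OWTH:DiffEventNoFind} and \eqref{eq:OWTH:DiffFind} by elementary probability bookkeeping. The details of the bookkeeping differ slightly. For \eqref{eq:OWTH:DiffFind} the paper is shorter: it rewrites the difference of $\Pr[\FIND]$ as the difference of $\Pr[\neg\FIND]$ and invokes \eqref{eq:OWTH:DiffEventNoFind} with $\EVENT$ the sure event, avoiding your decomposition via $\Pr[\FIND\vee\EVENTEXT]$. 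For \eqref{eq:OWTH:DiffEventNoFind}, after cancelling the $\neg\FIND\wedge\neg\EVENTEXT$ parts, the paper passes to conditional probabilities and pulls out $\Pr[\neg\FIND\wedge\EVENTEXT]$ as a common factor, arguing it agrees across the two worlds because under $\neg\FIND$ the executions are identical up to the first $\EVENTEXT$; your route via the equality $\Pr[\FIND\vee\EVENTEXT:\Ad{A}^{\puncture{\SupOrNotProg}{\reproSet}}]=\Pr[\FIND\vee\EVENTEXT:\Ad{A}^{\puncture{\SupOrProg}{\reproSet}}]$ plus a case split is a valid alternative and arguably more robust, since it does not rely on that equality of $\Pr[\neg\FIND\wedge\EVENTEXT]$ holding after the first $\EVENTEXT$ may already have caused the worlds to diverge. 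One small caution on your inductive step for extraction queries: under the paper's literal conventions ($\ket\bot=\ket{1\,0^n}$, and $n$-qubit operators act on the last $n$ qubits of $D_x$), $\Pi^{t,x}\ket\bot$ need not vanish; the annihilation you assert reflects the intended extractor semantics but requires care with the encoding when you write it out.
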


The following theorem relates the distinguishing advantage between \SupOrNotProg and \SupOrProg to the probability that \FIND or \EVENTEXT occur. Intuitively, the theorem states that no algorithm \Ad{A} will recognize the reprogramming unless \Ad{A} %
makes a random oracle or an extraction query related to its input.
\cref{thm:OWTH:Dist-to-Find} is the \augQROM{f} counterpart of \cite[Th. 1, 'Semi-classical O2H']{C:AmbHamUnr19}.
Its proof is given in 
\ifCameraReady
	the full version.
\else 
	\cref{sec:QROM:OWTH:DistToFind} (page~\pageref{sec:QROM:OWTH:DistToFind}).
\fi
In the special case where \EVENTEXT never happens, %
e.g., when extraction queries are triggered by an oracle simulation like \oracleDecapsSim that forbids critical inputs, 
we obtain the same bound as  \cite[Th. 1]{C:AmbHamUnr19},
but in the \augQROM{f}.

\begin{restatable}[Semi-classical \OWTH in the \augQROM{f}: Distinguishing to Finding]{theorem}{DistToFind} \label{thm:OWTH:Dist-to-Find}
	Let \SupOrNotProg, \SupOrProg, \GenInput, \reproSet, \FIND and \EVENTEXT be like in \cref{lem:OWTH:DiffEvents}.
	We define the \OWTH distinguishing advantage function of \Ad{A} as
	\[ \Adv^{\OWTH}_{\augQRO{f}}(\Ad{A})
		:=	|\Pr[1 \leftarrow \Ad{A}^{\SupOrNotProg}(\inputVar)]  - \Pr[1 \leftarrow \Ad{A}^{\SupOrProg}(\inputVar)] |
	\enspace, \]
	where the probabilities are \ifTightOnSpace \else taken \fi over the coins of \GenInput and the \ifTightOnSpace\else internal \fi randomness of \Ad{A}.
	For any algorithm \Ad{A} of query depth $d$ with respect to \SRO,
	we have that
	\begin{align} \label{eq:OWTH:Dist-to-Find:General}
		\Adv^{\OWTH}_{\augQRO{f}}(\Ad{A})
			\leq  & 4 \cdot \sqrt{d \cdot \Pr[\FIND : \Ad{A}^{\puncture{\SupOrProg}{\reproSet}}]}
		\nonumber \\
		& + 2 \cdot (\sqrt{d}+1)  \cdot \sqrt{\Pr[\EVENTEXT: \Ad{A}^{\SupOr^0}]} + \Pr[\EVENTEXT: \Ad{A}^{\SupOr^1}]
		\enspace .
	\end{align}
\ifTightOnSpace If additionally \else In the special case where \fi
	$\Pr[\EVENTEXT: \Ad{A}^{\puncture{\SupOrNotProg}{\reproSet}}] = \Pr[\EVENTEXT: \Ad{A}^{\puncture{\SupOrProg}{\reproSet}}] = 0$,  we obtain
	\begin{align} \label{eq:OWTH:Dist-to-Find:NoExtract}
		\Adv^{\OWTH}_{\augQRO{f}}(\Ad{A}) \leq 4 \cdot \sqrt{d \cdot \Pr[\FIND : \Ad{A}^{\puncture{\SupOrProg}{\reproSet}}]}
		\enspace .
	\end{align}

\end{restatable}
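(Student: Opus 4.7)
My plan is to prove \cref{thm:OWTH:Dist-to-Find} via a triangle-inequality hybrid that passes through the two punctured oracles, combining a direct lift of the classical semi-classical O2H technique of \cite{C:AmbHamUnr19} (applied to the $\SRO$ interface) with \cref{lem:OWTH:DiffEvents} (to absorb the extra distinguishing power furnished by the $\SE$ interface). Setting
\begin{align*}
D_i := |\Pr[1 \leftarrow \Ad{A}^{\SupOr^i}(\inputVar)] - \Pr[1 \leftarrow \Ad{A}^{\puncture{\SupOr^i}{\reproSet}}(\inputVar)]|
\end{align*}
for $i \in \{0,1\}$, and
\begin{align*}
M := |\Pr[1 \leftarrow \Ad{A}^{\puncture{\SupOrNotProg}{\reproSet}}(\inputVar)] - \Pr[1 \leftarrow \Ad{A}^{\puncture{\SupOrProg}{\reproSet}}(\inputVar)]|,
\end{align*}
the triangle inequality yields $\Adv^{\OWTH}_{\augQRO{f}}(\Ad{A}) \leq D_0 + M + D_1$, and I will bound the three contributions separately.

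For $D_0$ and $D_1$ I will lift the AHU19 proof to the extractable setting. The crucial observation is that the semi-classical measurement $\orSemiClassical{\reproSet}$ defining the puncturing acts on the $\SRO$-query input registers, whereas the measurements implementing $\SE$ act only on the oracle's internal database; these supports are disjoint, so the two measurement families commute. Consequently, each classical $\SE$ query can be pushed past the subsequent $\SRO$-queries and absorbed into the adversary's internal computation between queries, after which the hybrid argument of \cite{C:AmbHamUnr19} over the $d$ query layers applies without modification. This produces the standard O2H estimate $D_i \leq 2\sqrt{d \cdot \Pr[\FIND : \Ad{A}^{\puncture{\SupOr^i}{\reproSet}}]}$.

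For $M$ I split according to whether \FIND occurs,
\begin{align*}
M \leq |\Pr[1 \wedge \neg\FIND : \puncture{\SupOrNotProg}{\reproSet}] - \Pr[1 \wedge \neg\FIND : \puncture{\SupOrProg}{\reproSet}]| + \Pr[\FIND : \puncture{\SupOrNotProg}{\reproSet}] + \Pr[\FIND : \puncture{\SupOrProg}{\reproSet}],
\end{align*}
bound the first summand by $\Pr[\EVENTEXT: \Ad{A}^{\SupOrNotProg}]$ using \eqref{eq:OWTH:DiffEventNoFind}, and use \eqref{eq:OWTH:DiffFind} to trade $\Pr[\FIND : \puncture{\SupOrNotProg}{\reproSet}]$ for $\Pr[\FIND : \puncture{\SupOrProg}{\reproSet}] + \Pr[\EVENTEXT: \Ad{A}^{\SupOrNotProg}]$. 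In parallel, I apply \eqref{eq:OWTH:DiffFind} inside the bound for $D_0$ via the elementary $\sqrt{x+y} \leq \sqrt{x} + \sqrt{y}$ to replace the $\Pr[\FIND]$ on the $\SupOrNotProg$ side by the one on the $\SupOrProg$ side plus a $\sqrt{d \cdot \Pr[\EVENTEXT]}$ contribution, producing the coefficient $2(\sqrt{d}+1)$. Collecting all pieces, and (where needed) absorbing a bare $\Pr[\FIND]$ into its square-root counterpart via $\Pr[\FIND] \leq \sqrt{d \cdot \Pr[\FIND]}$ for $d \geq 1$, yields \eqref{eq:OWTH:Dist-to-Find:General}. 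The special case \eqref{eq:OWTH:Dist-to-Find:NoExtract} follows immediately: with both \EVENTEXT probabilities vanishing, \eqref{eq:OWTH:DiffFind} forces the two \FIND probabilities to coincide, and \eqref{eq:OWTH:EqualEventNoFindNoExtract} makes the no-\FIND part of $M$ vanish.

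The principal obstacle I foresee lies in step 2: carefully checking that the AHU19 semi-classical O2H hybrid argument really does survive the insertion of classical $\SE$ measurements between consecutive $\SRO$-queries. The commutation observation above makes this plausible, but one must track how each classical $\SE$ query partially collapses the compressed-oracle state and verify that these collapses do not interfere with the telescoping estimate on norms of projected states that drives the AHU19 bound. The cleanest route is likely to defer all $\SE$ queries past the final $\SRO$-query (since they commute with $\SRO$-queries up to a measurement disturbance of zero on disjoint registers this introduces no error) and then invoke the unmodified AHU19 bound.
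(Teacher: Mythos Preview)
Your overall triangle-inequality strategy is close in spirit to the paper's, but there are two issues, one a genuine gap and one a constant-factor mismatch.

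\textbf{The gap: your justification for $D_i \leq 2\sqrt{d\cdot\Pr[\FIND]}$ does not go through.} You correctly observe that $\orSemiClassical{\reproSet}$ (acting on the query input registers and the flag register) and the $\SE$ measurement (acting on the database $D$) commute. But from this you conclude that ``each classical $\SE$ query can be pushed past the subsequent $\SRO$-queries'' and, later, that ``the cleanest route is to defer all $\SE$ queries past the final $\SRO$-query since they commute with $\SRO$-queries \ldots on disjoint registers''. This is false: $\SRO$ also acts on $D$, so $\SE$ and $\SRO$ do \emph{not} act on disjoint registers and do \emph{not} commute (they only almost-commute with error $8\sqrt{2\Gamma(f)/2^n}$, cf.\ \cref{lem:extractable-sim-properties}, item~2.c). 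Likewise, $\SE$ measures the oracle's register, so it cannot be ``absorbed into the adversary's internal computation''. The correct observation---which the paper exploits---is different: in the comparison between the punctured and unpunctured executions of $\SupOr^i$, the $\SE$ operations are \emph{identical} in both branches. After purifying (introducing logging registers for both \FIND and \EVENTEXT and deferring all measurements), each $\SE$ step is the \emph{same} unitary on both sides of the telescoping sum, so it contributes zero to the distance $\epsilon_j - \epsilon_{j-1}$; only the $\SRO$ steps contribute, and they contribute exactly as in \cite{C:AmbHamUnr19}. No commuting of $\SE$ past $\SRO$ is needed or used.

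\textbf{The constant: your decomposition yields $6\sqrt{d\cdot\Pr[\FIND]}$, not $4$.} In your route, the middle term $M$ inevitably picks up $\Pr[\FIND:\puncture{\SupOrNotProg}{\reproSet}]+\Pr[\FIND:\puncture{\SupOrProg}{\reproSet}]$, which after applying \eqref{eq:OWTH:DiffFind} becomes $2\Pr[\FIND:\puncture{\SupOrProg}{\reproSet}]+\Pr[\EVENTEXT]$, and the bare $2\Pr[\FIND]$ then has to be absorbed as an extra $2\sqrt{d\cdot\Pr[\FIND]}$. The paper avoids this by a different hybrid: it passes through $p_{b,\neg\EVENTEXTshort}:=\Pr[1\wedge\neg\EVENTEXT:\SupOr^b]$ and $p_{b,\neg\EVENTEXTshort,\neg\FINDshort}:=\Pr[1\wedge\neg\FIND\wedge\neg\EVENTEXT:\puncture{\SupOr^b}{\reproSet}]$. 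The point is that by \eqref{eq:OWTH:EqualEventNoFindNoExtract} one has $p_{0,\neg\EVENTEXTshort,\neg\FINDshort}=p_{1,\neg\EVENTEXTshort,\neg\FINDshort}$ \emph{exactly}, so the middle term vanishes and all of the $\sqrt{d\cdot\Pr[\FIND]}$ mass comes from the two outer terms $|p_{b,\neg\EVENTEXTshort}-p_{b,\neg\EVENTEXTshort,\neg\FINDshort}|\le 2\sqrt{d\cdot\Pr[\FIND:\puncture{\SupOr^b}{\reproSet}]}$, giving the claimed $4$. (These outer terms are bounded via the Bures-distance/deferred-measurement argument described above; note that on the unpunctured side the \FIND register is deterministically $\ket{0\cdots 0}$, so restricting to $\neg\FIND$ is free there.)
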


In many cases, a desired reduction will not know the 'resampled' set \reproSet. \ifTightOnSpace \cref{thm:OWTH:Find-to-Extract} \else  We therefore proceed by giving \cref{thm:OWTH:Find-to-Extract} which \fi  relates the probability of \FIND to the advantage of a preimage extractor \ifTightOnSpace\else algorithm \fi \Extractor that extracts an element of \reproSet without knowing \reproSet: \Extractor will \ifTightOnSpace \else simply \fi run \Ad{A} with the unpunctured oracle \SupOr and measure one of its queries to generate its output.
In one of our proofs, we %
additionally need to puncture on a set different from \reproSet.
We therefore prove \cref{thm:OWTH:Find-to-Extract} for \emph{arbitrary} sets $\reproSet''$ .

\begin{restatable}[Semi-classical \OWTH in the \augQROM{f}: Finding to Extracting]{theorem}{FindToExtract} \label{thm:OWTH:Find-to-Extract}
	Let \Ad{A} be an algorithm with access to an extractable superposition oracle \SupOr from $\mathcal X$ to $\mathcal Y$
	for some function $f:\mathcal X\times \mathcal Y\to\mathcal T$,
	with query depth $d$ with respect to \SRO, and let \GenInput 
	be like in \cref{lem:OWTH:DiffEvents}.
	Let \FIND be the event that flag register $F$ is ever measured to be in state 1 during a call to \Ad{A}'s punctured oracle,
	where the puncturing happens on a set $\reproSet''$.	
	
	Let \Extractor be the algorithm that on input \inputVar chooses $i \uni \lbrace 1, \cdots d \rbrace$,
	runs $\Ad A^{\SupOr}(\inputVar)$ until \ifTightOnSpace\else (just before) \fi the $i$-th query to \SRO; then measures all query input registers in the computational basis and outputs the set \extractionSet of measurement outcomes.
	Then
	\begin{equation}\label{eq:OWTH:Find-to-Extract}
		\Pr[\FIND : \Ad{A}^{\puncture{\SupOr}{\reproSet''}}]
		\leq 4 d \cdot \Pr[\reproSet'' \cap \extractionSet \neq \emptyset: \extractionSet \leftarrow \Extractor]
		\enspace .
	\end{equation}
\end{restatable}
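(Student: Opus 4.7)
My plan is to adapt the proof of the standard ``Search in semi-classical oracle'' theorem (Theorem~2 of \cite{C:AmbHamUnr19}) to the \augQROM{f} setting. The essential observation is that the extraction interface \SE is a classical-input, classical-output oracle implemented as a quantum operation on the oracle's internal compressed-database register. Each \SE-query thus acts as a CPTP map on the joint adversary-oracle state that is \emph{independent of the puncturing set $\reproSet''$}, since puncturing only modifies \SRO-queries (via $\orSemiClassical{\reproSet''}$, applied before \OrUnit). Consequently, \SE-queries can be absorbed into the ``free evolution'' (local unitaries and channels on auxiliary registers) between consecutive \SRO-queries, and the standard hybrid argument over \SRO-queries applies \emph{mutatis mutandis}.

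Concretely, I would write \Ad{A}'s execution as alternating \SRO-queries and free operations $\Lambda_0, \Lambda_1, \ldots, \Lambda_d$, where each $\Lambda_i$ collects local unitaries on \Ad{A}'s private registers together with any intervening \SE-queries. In the punctured run each \SRO-query is replaced by $\OrUnit \circ \orSemiClassical{\reproSet''}$, whereas in the unpunctured run (the one \Extractor simulates) only \OrUnit is applied. For each $i \in \{1, \ldots, d\}$, let $q_i$ denote the probability that a computational-basis measurement of the query input registers just before the $i$-th \SRO-query in the unpunctured run yields at least one value in $\reproSet''$; then by construction of \Extractor,
\begin{equation*}
    \Pr[\reproSet'' \cap \extractionSet \neq \emptyset : \extractionSet \leftarrow \Extractor]
    = \frac{1}{d} \sum_{i=1}^{d} q_i.
\end{equation*}

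The per-query step is the semi-classical-oracle bound of \cite{C:AmbHamUnr19}: for any state $\ket{\psi}$ on the query registers together with the oracle and auxiliary registers, the probability that $\orSemiClassical{\reproSet''}$ measures the flag to be $1$ equals the squared norm of $P_{\reproSet''}\ket{\psi}$, where $P_{\reproSet''}$ projects at least one input register onto $\reproSet''$; conditioned on the flag being $0$, the post-measurement state is at trace distance $\leq 2\|P_{\reproSet''}\ket{\psi}\|$ from $\ket{\psi}$ itself. Inductively propagating this error from query $1$ to query $i$ through the unitaries $\Lambda_j$ and remaining \OrUnit's (which preserve norms) shows that the states at the beginning of query $i$ in the two experiments (conditioned on $\neg\FIND$ so far in the punctured run) are close. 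Summing the contributions to \FIND across the $d$ queries and applying the standard Cauchy--Schwarz bookkeeping as in \cite{C:AmbHamUnr19} yields
\begin{equation*}
    \Pr[\FIND : \Ad{A}^{\puncture{\SupOr}{\reproSet''}}] \leq 4 \sum_{i=1}^{d} q_i = 4d \cdot \Pr[\reproSet'' \cap \extractionSet \neq \emptyset : \extractionSet \leftarrow \Extractor],
\end{equation*}
which is exactly \eqref{eq:OWTH:Find-to-Extract}.

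The main potential obstacle I anticipate is ensuring that the bounded-but-nonzero non-commutativity between \SE- and \SRO-queries (property~2.c of \cref{lem:extractable-sim-properties}) does not spoil the argument. However, this concern turns out to be moot: both the punctured and unpunctured experiments are runs of the \emph{same} adversary \Ad{A}, which issues \SE-queries at identical logical points in its computation in both cases, so \SE-queries need not be commuted past \SRO-queries at any point in the comparison. Only the per-\SRO-query semi-classical bound is used, and it proceeds identically whether the free evolution between \SRO-queries is a unitary or a more general CPTP map.
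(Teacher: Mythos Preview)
Your approach is correct and rests on the same key observation as the paper's: the extraction interface \SE is independent of the puncturing set $\reproSet''$, so \SE-queries can be treated as part of the adversary's local evolution between \SRO-queries. The paper, however, packages this more economically. Rather than re-running the hybrid argument of \cite[Th.~2]{C:AmbHamUnr19}, it defines a wrapper algorithm $\Ad{B}^{\orSemiClassical{\reproSet''}}$ that internally instantiates a fresh \SupOr, answers \Ad{A}'s \SE-queries locally via that internal simulation, and for each \SRO-query first forwards the input registers to its own oracle \orSemiClassical{\reproSet''} and then applies \OrUnit. This \Ad{B} is then a plain oracle algorithm making exactly $d$ queries to \orSemiClassical{\reproSet''}, so \cite[Th.~2]{C:AmbHamUnr19} applies to it as a black box; unwrapping \Ad{B} yields exactly the extractor \Extractor of the theorem statement. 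The advantage of the wrapper route is that it sidesteps the one point you (correctly) flag as needing care in your direct argument, namely that the $\Lambda_i$ are genuine CPTP maps (because \SE involves a measurement) rather than unitaries as in the original proof of \cite[Th.~2]{C:AmbHamUnr19}; with the wrapper, the \SE-measurements are simply part of \Ad{B}'s internal workings and can be purified or deferred without touching the cited theorem.
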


The proof directly follows from \cite[Th. 2, 'Search in semi-classical oracle']{C:AmbHamUnr19} since \cite[Th. 2]{C:AmbHamUnr19} gives
\ifTightOnSpace the bound of 
\else a bound with the same ride-hand side as in \fi \cref{thm:OWTH:Find-to-Extract} for algorithms \Ad{B} accessing a semi-classical oracle \orSemiClassical{\reproSet''} itself (rather than some oracle punctured on $\reproSet''$). An algorithm $\Ad{B}^{\orSemiClassical{\reproSet''}}$ hence can perfectly simulate \puncture{\SupOr}{\reproSet''} to \Ad{A} by simulating \SupOr and having the puncturing done by its own oracle \orSemiClassical{\reproSet''}.
\ifTightOnSpace 
	 It is given in
	 \ifCameraReady the full version
	 \else \cref{sec:QROM:OWTH:FindToExtract}.\fi
\else
\ifTightOnSpace %
	\section{Proof of \cref{thm:OWTH:Find-to-Extract} (Finding to Extracting)} \label{sec:QROM:OWTH:FindToExtract}
	
	For easier reference, we repeat the statement of \cref{thm:OWTH:Find-to-Extract}.
	
	\FindToExtract*
	
\else

\fi

\begin{proof}
Given an algorithm $\Ad A^{\puncture{\SupOr}{\reproSet}}$, we define an algorithm $\Ad{B}^{\orSemiClassical{\reproSet''}}$ as follows:
$\Ad{B}^{\orSemiClassical{\reproSet''}}$ initializes a fresh extractable superposition oracle simulation \SupOr.
After generating \Ad{A}'s input \inputVar,
\Ad{B} runs $\Ad A^{\puncture{\SupOr}{\reproSet}}$ by simulating \puncture{\SupOr}{\reproSet} as follows:
Extraction queries are simply answered using \SE, and random oracle queries with query registers $XY$ are answered by first performing a query to its own oracle $\orSemiClassical{\reproSet''}$ with these registers and then applying $\SRO$.

Since \Ad{B} perfectly simulates \puncture{\SupOr}{\reproSet} to \Ad{A} and since \Ad{B}'s queries to \orSemiClassical{\reproSet''} are exactly \Ad{A}'s queries to \puncture{\SupOr}{\reproSet},
\begin{equation}
	\Pr[\FIND : \Ad{A}^{\puncture{\SupOr}{\reproSet''}}] = \Pr[\FIND : \Ad{B}^{\orSemiClassical{\reproSet''}}]
	\enspace .
\end{equation}

Applying \cite[Th. 2]{C:AmbHamUnr19} to \Ad{B} yields

\begin{equation}
	\Pr[\FIND : \Ad{B}^{\orSemiClassical{\reproSet''}}]
	\leq 4 d \cdot \Pr[\reproSet'' \cap \extractionSet \neq \emptyset: \extractionSet \leftarrow \Extractor'(\Ad{B})]
	\enspace ,
\end{equation}

where $\Extractor'$ randomly measures one of \Ad{B}'s queries to generate its output.
Unwrapping \Ad{B} into $\Extractor'$ defines the theorem's extractor \Extractor that randomly measures one of \Ad{A}'s queries to generate its output.

\begin{equation}
	\Pr[\reproSet'' \cap \extractionSet \neq \emptyset: \extractionSet \leftarrow \Extractor'(\Ad{B})]
		= \Pr[\reproSet'' \cap \extractionSet \neq \emptyset: \extractionSet \leftarrow \Extractor(\Ad{A})]
	\enspace .
\end{equation}

Collecting the probabilities yields the desired bound.
\qed
\end{proof} \fi

\ifTightOnSpace If \else In the case that \fi the input \inputVar of \Ad{A} is independent of $\reproSet''$, we
\ifTightOnSpace
	also get an extraction bound, an \augQROM{f} counterpart of \cite[Cor. 1]{C:AmbHamUnr19}, which is proven in the same way.
\else
	furthermore get the following extraction bound. \cref{thm:OWTH:Find-to-Extract-Independent} is the \augQROM{f} counterpart of \cite[Cor. 1]{C:AmbHamUnr19}.
\fi

\begin{corollary}[Semi-classical \OWTH in the \augQROM{f}: Extracting independent values]  \label{thm:OWTH:Find-to-Extract-Independent}
	If $\reproSet$ and $\inputVar$ are independent, then for any algorithm $\Ad A^{\SupOr}$ issuing $q$ many queries to \SRO in total,
	\begin{equation*}
		\Pr[\FIND : \Ad{A}^{\puncture{\SupOr}{\reproSet''}}] \leq 4q \cdot p_{\textnormal{max}}
		\enspace ,
	\end{equation*}
	where $p_{\textnormal{max}} := \max_{x in X} \Pr_{\reproSet''} [x \in S]$.
	As a special case, we obtain that
	\ifTightOnSpace
	\begin{equation} \label{eq:OWTH:Find-to-Extract-Independent}
		\Pr[\FIND : \Ad{A}^{\puncture{\SupOr}{\lbrace x \rbrace}}] \leq 4q|X|^{-1}
		\enspace ,
	\end{equation}
	\else
	\begin{equation} \label{eq:OWTH:Find-to-Extract-Independent}
		\Pr[\FIND : \Ad{A}^{\puncture{\SupOr}{\lbrace x \rbrace}}] \leq \frac{4q}{|X|}
		\enspace ,
	\end{equation}
\fi
	 for $\reproSet'' = \lbrace x \rbrace$ with uniformly chosen $x \in X$, assuming that $x$ was not needed to generate the input to \Ad{A}.	
\end{corollary}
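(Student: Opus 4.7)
The plan is to invoke \cref{thm:OWTH:Find-to-Extract} as a black box and then exploit the fact that the extractor \Extractor defined there touches $\reproSet''$ only through its output set \extractionSet, in a way that is independent of $\reproSet''$ when \inputVar is. Concretely, \cref{thm:OWTH:Find-to-Extract} gives the bound
\[
\Pr[\FIND : \Ad{A}^{\puncture{\SupOr}{\reproSet''}}]
\;\le\; 4d \cdot \Pr[\reproSet'' \cap \extractionSet \neq \emptyset : \extractionSet \leftarrow \Extractor],
\]
where $d$ is the query depth and \Extractor simulates a fresh \SupOr internally, receives \inputVar, and measures the input registers of a uniformly chosen query layer. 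In particular, \Extractor never consults $\reproSet''$ itself, so the distribution of \extractionSet depends only on \inputVar and on \Extractor's internal coins.

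Since \inputVar is independent of $\reproSet''$ by hypothesis, the random variables \extractionSet and $\reproSet''$ are independent. A union bound over the elements of \extractionSet then gives
\[
\Pr[\reproSet'' \cap \extractionSet \neq \emptyset]
\;\le\; \Exp_{\extractionSet}\!\Bigl[\sum_{x \in \extractionSet} \Pr_{\reproSet''}[x \in \reproSet'']\Bigr]
\;\le\; \Exp[|\extractionSet|] \cdot p_{\textnormal{max}}.
\]
Because \extractionSet consists of the measurement outcomes of a single (possibly parallel) query layer of width at most $w$, we have $|\extractionSet|\le w$, and with $q = d\cdot w$ the total number of queries, this yields $\Pr[\FIND] \le 4dw \cdot p_{\textnormal{max}} = 4q \cdot p_{\textnormal{max}}$, as desired.

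For the special case $\reproSet'' = \{x\}$ with $x \uni \mathcal X$, uniformity gives $\Pr_{\reproSet''}[x' \in \reproSet''] = 1/|\mathcal X|$ for every $x' \in \mathcal X$, hence $p_{\textnormal{max}} = 1/|\mathcal X|$, and substituting into the general bound produces \cref{eq:OWTH:Find-to-Extract-Independent}. There is no real technical obstacle here beyond keeping the independence bookkeeping honest: the point is that \Extractor is defined without access to $\reproSet''$, so the independence of \inputVar from $\reproSet''$ propagates to independence of \extractionSet from $\reproSet''$, and the rest is a union bound.
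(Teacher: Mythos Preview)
Your approach is essentially the same as the paper's: invoke \cref{thm:OWTH:Find-to-Extract}, use that \Extractor never touches $\reproSet''$ so that independence of \inputVar from $\reproSet''$ carries over to \extractionSet, and then bound $\Pr[\reproSet''\cap\extractionSet\neq\emptyset]$ via $p_{\max}$. The only difference is in how the query-width bookkeeping is handled, and here your write-up has a small quantitative slip. You bound $|\extractionSet|\le w$ pointwise and then set $q=d\cdot w$; but $q$ is the \emph{total} number of queries, which in general only satisfies $q\le d\cdot w$, so your argument as written yields $4dw\cdot p_{\max}$, a bound that can be strictly larger than the claimed $4q\cdot p_{\max}$. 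The paper sidesteps this by first observing that one may assume without loss of generality that \Ad{A} makes no parallel queries (parallel queries can be sequentialised), so that $w=1$ and $q=d$, after which the bound follows immediately. Your route can be repaired just as easily: since \Extractor picks layer $i$ uniformly from $\{1,\dots,d\}$ and the $i$-th layer has width $w_i$, one has $\Exp[|\extractionSet|]=\tfrac{1}{d}\sum_i w_i=q/d$, and plugging this into your union-bound step gives exactly $4d\cdot(q/d)\cdot p_{\max}=4q\cdot p_{\max}$.
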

\ifTightOnSpace\else The proof is the same as in \cite{C:AmbHamUnr19}: W.l.o.g., we can assume that \Ad{A} does not perform parallel queries,
meaning that $q=d$ and the probability of \Extractor succeeding is the probability that \Extractor outputs an element $x \in \reproSet''$ that is independent of its input.
Hence $\Pr[\reproSet'' \cap \extractionSet \neq \emptyset: \extractionSet \leftarrow \Extractor(\inputVar)] \leq p_{\textnormal{max}}$,
and the corollary follows from \cref{thm:OWTH:Find-to-Extract}.\fi

\subsection{From $\INDCPA_{\PKE}$ or $\OWCPA_{\PKE}$ to $\INDCPA_{\FO[\PKE]}$}\label{sec:QROM:CPA-to-passive}

We will now use the \OWTH results from \cref{sec:QROM:OWTH} to show that the \INDCPA security of $\FOExplicitMess[\PKE, \RO{G},\RO{H}]$ can be based on the passive security of \PKE.
In \cref{thm:INDPKEToINDCPAKEM}, we base \INDCPA security of $\FOExplicitMess[\PKE, \RO{G},\RO{H}]$ on the \INDCPA security of \PKE, and \ifTightOnSpace \else for the sake of completeness, \fi we base it on \OWCPA security of \PKE in \cref{thm:OWPKEToINDCPAKEM}.
The obtained bounds are the same as their known plain QROM counterparts.

\begin{restatable}[\PKE \INDCPA \RightarrowaugQROM{\Encrypt} $\FO\lbrack\PKE\rbrack$ $\INDCPAKEM$]{theorem}{INDPKEToINDCPAKEM} \label{thm:INDPKEToINDCPAKEM}
	Let \Ad{A} be an \INDCPA adversary against  \ifTightOnSpace \KemExplicitMess \else $\KemExplicitMess \coloneqq \FOExplicitMess[\PKE, \RO{G},\RO{H}]$ \fi in the \augQROM{\Encrypt},
	issuing $q$ many queries to \SRO in total, with a query depth of $d$,
	and $q_E$ many queries to \SE, where none of them is with its challenge ciphertext.
	Then there exists an \INDCPA adversary $\Ad{B_{\INDCPA}}$ against \PKE such that
	\ifTightOnSpace
		\[\Adv^{\INDCPAKEM}_{\KemExplicitMess}(\Ad{A}) \leq 4 \cdot \sqrt{ d  \cdot \Adv^{\INDCPA}_{\PKE}(\Ad{B_{\INDCPA}})}
	+ 8q|\MSpace|^{-1/2}\enspace \ifTightOnSpace , \else . \fi 
	\]
	\else 
	\[\Adv^{\INDCPAKEM}_{\KemExplicitMess}(\Ad{A}) \leq 4 \cdot \sqrt{ d  \cdot \Adv^{\INDCPA}_{\PKE}(\Ad{B_{\INDCPA}})}
		+ \frac{8q}{\sqrt{\left|\MSpace\right|}}\enspace \ifTightOnSpace , \else . \fi 
	\]
	\fi
	\ifTightOnSpace with \else The running time and quantum memory footprint of $\Ad{B_{\INDCPA}}$ satisfy \fi 
	 $\Time(\Ad{B_{\INDCPA}})=\Time(\Ad{A})+\Time(\SupOr,q, q_E)$ and $\QMem(\Ad{B_{\INDCPA}})=\QMem(\Ad{A})+\QMem(\SupOr,q, q_E)$.
\end{restatable}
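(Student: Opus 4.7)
My plan is to apply the semi-classical \OWTH framework from \cref{sec:QROM:OWTH} to reduce the \INDCPAKEM advantage of \Ad{A} to the probability of a \FIND event, and then bound that probability via a semi-classical detection reduction to the \INDCPA security of \PKE. A key structural observation is that because \SupOr simulates the combined oracle $\RO{G}\times\RO{H}$ while the extraction interface $\SE$ is relative only to $f=\Encrypt(\pk,\cdot;\cdot)$ on the $\RO{G}$-portion of the database, a single reprogramming point $m^*$ simultaneously controls the encryption randomness $\RO{G}(m^*)$ and the session key $\RO{H}(m^*)$.

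I set Game 0 to be the \INDCPAKEM game, where \Ad{A} receives $(\pk,c^*,K^*_b)$ with $m^*\uni\MSpace$, $c^*=\Encrypt(\pk,m^*;\RO{G}(m^*))$, $K^*_0=\RO{H}(m^*)$, and $K^*_1\uni\KeySpace$. In Game 1 I instantiate the setup of \cref{lem:OWTH:DiffEvents} by taking \GenInput to compute $(c^*,K^*_0)$ from \SupOrNotProg while running \Ad{A} with an independent extractable oracle \SupOrProg, with $\reproSet=\{m^*\}$. Since \SupOrProg is independent of \SupOrNotProg, both $\RO{G}(m^*)$ and $\RO{H}(m^*)$ in the challenge look, from \Ad{A}'s vantage, like fresh uniform randomness; in particular $K^*_0$ and $K^*_1$ are identically distributed, so \Ad{A} has zero advantage in Game 1. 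To bound the difference with \cref{thm:OWTH:Dist-to-Find}, the hypothesis that \Ad{A} never queries \SE on its challenge ciphertext together with $\commitmentSet_{\{m^*\}}=\{c^*\}$ forces \EVENTEXT to have probability $0$ in both worlds, so \cref{eq:OWTH:Dist-to-Find:NoExtract} applies and yields $\Adv^{\INDCPAKEM}_{\KemExplicitMess}(\Ad{A})\le 4\sqrt{d\cdot \Pr[\FIND:\Ad{A}^{\puncture{\SupOrProg}{\{m^*\}}}]}$.

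To bound the \FIND probability I build $\Ad{B}_{\INDCPA}$ as follows: on input \pk it samples $m_0,m_1\uni\MSpace$, submits them to its \INDCPA challenger to receive $c^*=\Encrypt(\pk,m_b)$, picks $K^*\uni\KeySpace$, and runs \Ad{A} on $(\pk,c^*,K^*)$ while simulating \SupOr freshly (incurring the $\Time(\SupOr,q,q_E)$ and $\QMem(\SupOr,q,q_E)$ overheads of \cref{lem:extractable-sim-properties}) and interposing the semi-classical oracle $\orSemiClassical{\{m_0\}}$ before each \SRO query; $\Ad{B}_{\INDCPA}$ outputs $0$ if that detection ever fires and $1$ otherwise. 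When $b=0$, \Ad{A}'s view is that of Game 1 instantiated with $m^*=m_0$, so the detection triggers with probability exactly $\Pr[\FIND]$. When $b=1$, the challenge $c^*$ encrypts the independent message $m_1$, so \Ad{A}'s entire input is independent of the uniform $m_0$, and \cref{thm:OWTH:Find-to-Extract-Independent} bounds the detection probability by $4q/|\MSpace|$. Hence $\Pr[\FIND]\le \Adv^{\INDCPA}_{\PKE}(\Ad{B}_{\INDCPA})+4q/|\MSpace|$, and plugging into the \OWTH bound together with $\sqrt{x+y}\le \sqrt{x}+\sqrt{y}$ and $d\le q$ gives the claimed inequality.

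The main subtlety I anticipate is justifying that in the $b=1$ branch the semi-classical detection really is governed by the independent-input regime of \cref{thm:OWTH:Find-to-Extract-Independent}: one must argue that \Ad{A}'s input distribution depends only on $m_1$ (via $c^*$) and is statistically independent of the uniformly sampled $m_0$, and verify that $\Ad{B}_{\INDCPA}$'s interposed detection is a faithful implementation of the punctured oracle from \cref{sec:QROM:OWTH}, so the corollary is applied to quantitatively the same $\Pr[\FIND]$ quantity that features in the \OWTH step. A naive alternative reduction that extracts $m^*$ by measuring a random query (using \cref{thm:OWTH:Find-to-Extract} directly) loses an additional factor of $d$, which is exactly what using the semi-classical detection interface avoids.
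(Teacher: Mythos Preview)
Your proposal is correct and follows essentially the same approach as the paper: apply the special case \cref{eq:OWTH:Dist-to-Find:NoExtract} of \cref{thm:OWTH:Dist-to-Find} with $\reproSet=\{m^*\}$ (using that \EVENTEXT never occurs since $\commitmentSet_{\{m^*\}}=\{c^*\}$ and \Ad{A} does not extract on $c^*$), then bound $\Pr[\FIND]$ by building an \INDCPA distinguisher that punctures on one of its two random challenge messages and outputs according to whether \FIND fires, bounding the $b=1$ branch via \cref{thm:OWTH:Find-to-Extract-Independent}. The paper presents this via two intermediate games $G_2,G_3$ rather than directly inside the description of $\Ad{B}_{\INDCPA}$, and flips the output-bit convention, but the reduction and the arithmetic (including the use of $\sqrt{x+y}\le\sqrt x+\sqrt y$ and $d\le q$) are the same.
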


Note that forbidding extraction queries to \SE on $c^*$ is no limitation in
\ifTightOnSpace
our context:
\else 
in the context of the overall result: In the bigger picture,
\fi
\SE queries are only triggered by an \INDCCA adversary querying its simulated oracle \oracleDecapsSim, and \oracleDecapsSim rejects queries on $c^*$ \ifTightOnSpace. \else right away. \fi

\ifTightOnSpace A full proof is given in \cref{sec:QROM:CPA-to-passive:proof}. \fi
To summarise the proof, we first define a \game{1} like the \INDCPAKEM game \ifTightOnSpace \else for \KemExplicitMess\fi, except that encryption randomness $r^* := \RO{G}( m^*)$ and honest KEM key $K_0 := \RO{H}(m^*)$ are replaced with fresh uniform randomness.
In \game{1}, the forwarded KEM key is a uniformly random key either way, the advantage of \Ad{A} in \game{1} hence is 0.
It remains to bound the distinguishing advantage between the \INDCPAKEM game and \game{1}.
We apply \ifTightOnSpace\else the 'Distinguishing to Finding' \fi \cref{thm:OWTH:Dist-to-Find}
which bounds this distinguishing advantage in terms of the probability of event $\FIND_{m^*}$, the event that $m^*$ is detected in the adversary's random oracle queries.
To further bound $\Pr[\FIND_{m^*}]$, we use \INDCPA security of \PKE to replace \Ad{A}'s ciphertext input $c^*$ with an encryption of an independent message.
As $m^*$ now is independent of \Ad{A}'s input,
$\FIND_{m^*}$ is highly unlikely for large enough message spaces.
(This uses \ifTightOnSpace\else the 'independent values' \fi \cref{thm:OWTH:Find-to-Extract-Independent} %
.)

\ifTightOnSpace \else %
\ifTightOnSpace %
	\section{Proof of Theorems \ref{thm:INDPKEToINDCPAKEM} and \ref{thm:OWPKEToINDCPAKEM} (From $\INDCPA_{\PKE}$ or $\OWCPA_{\PKE}$ to $\INDCPA_{\FO[\PKE]}$ in the \augQROM{\Encrypt})}\label{sec:QROM:CPA-to-passive:proof}

	We will now use the \OWTH results from \cref{sec:QROM:OWTH} to prove \cref{thm:INDPKEToINDCPAKEM} and \cref{thm:OWPKEToINDCPAKEM}. We will first prove \cref{thm:INDPKEToINDCPAKEM}, which we repeat for easier reference:
	
	\INDPKEToINDCPAKEM*
	
	As sketched in the main body, the proof consists of the following steps:
	First, replace the encryption randomness $r^* := \RO{G}( m^*)$ and the honest KEM key $K_0 := \RO{H}(m^*)$ in \game{1} with fresh uniform randomness.
	Since in \game{1}, the forwarded key is uniformly random either way, an adversary has no distinguishing advantage at all and it remains to upper bound the distinguishing advantage between the \INDCPA game and \game{1}.
	To this end, we use the reprogramming theorem (\cref{thm:OWTH:Dist-to-Find}, see page~\pageref{thm:OWTH:Dist-to-Find}).
	\cref{thm:OWTH:Dist-to-Find} bounds the distinguishing advantage in terms of the probability of an event $\FIND_{m^*}$, the event that $m^*$ is detected in the adversary's random oracle queries.
	To upper bound $\Pr[\FIND_{m^*}]$, we assume \INDCPA security of \PKE to argue that the challenge ciphertext $c^*$ can be replaced with an encryption of another random message.
	After this change, $m^*$ is independent of \Ad{A}'s input, therefore $\FIND_{m^*}$ becomes highly unlikely for large enough message spaces. The last argument is made more formal using theorem \cref{thm:OWTH:Find-to-Extract-Independent} (given on page~\pageref{thm:OWTH:Find-to-Extract-Independent}).
	
\fi

\begin{proof}

	Let \Ad{A} be an adversary against the \INDCPA security of $\KemExplicit = \FOExplicitMess[\PKE, \RO{G},\RO{H}]$,
	issuing random oracle queries to both its oracles of query depth $d$, and $q$ many in total. 
	Consider the two games given in \cref{fig:games:INDPKEToINDCPAKEM}.
	
	\begin{figure}[htb] \begin{center} \fbox{\small
				
				\nicoresetlinenr
				
				\begin{minipage}[t]{5cm}	
					
					\underline{\bfgameseq{0}{1}} 
					\begin{nicodemus}
						
						\item $(\pk,\sk) \leftarrow \KG$
						\item $b \uni \bits$
						\item $m^* \uni \MSpace$
						\item $(r^*, K_0^*) := \SRO ( m^*)$ \gcom{$G_0$}
						\item $(r^*, K_0^*) \uni \RSpace \times \KeySpace$ \label{line:reprogramOracleValues} \gcom{$G_1$} 
						\item $c^* := \Encrypt(\pk, m^*; r^*)$ \label{line:Encrypt}
						\item $K_1^* \uni \KeySpace$
						\item $b'\leftarrow \Ad{A}^{\SupOr}(\pk, c^*,K_b^*)$
						\item \pcreturn $\bool{b' = b}$
						
					\end{nicodemus}
					
				\end{minipage}
	}	
	\end{center}
		\caption{\gameseq{0}{1} for the proof of \cref{thm:INDPKEToINDCPAKEM}.}			
		\label{fig:games:INDPKEToINDCPAKEM}
	\end{figure}
	
	\bfgame{0} essentially is game $\INDCPAKEM_{\KemExplicit}$, the only difference is that we combined oracles \RO{G} and \RO{H} into a single oracle \SupOr.
	\ifTightOnSpace
		Again, we point to \cref{sec:SimulatingGtimesH} that explains why 
	\else
		As discussed in the proof of \cref{thm:INDandFFPtoCCA:QROM}
		(before \cref{explanation:GtimesH}, see page~\pageref{explanation:GtimesH}),
	\fi
	this change is merely of a conceptual nature, simplifying our later reasoning about the synchronous reprogramming of $\RO{G}$ and $\RO{H}$ on $m^*$.
	\[
		\Adv^{\INDCPAKEM}_{\KemExplicit}(\Ad{A}) = |\Adv^\bfgame{0} - \frac{1}{2} |\enspace .
	\]

	\newcommand{\PrFindInGameOne}{\ensuremath{
			\Pr[\FIND_{m^*} \textnormal{ in } \bfgame{1}^{\puncture{\SupOr}{ \lbrace m^* \rbrace}} ]
	}\xspace}

	{In \bfgame{1}, we replace oracle values $r^* := \RO{G}( m^*)$ and $K_0 := \RO{H}(m^*)$
		with fresh random values (see line \ref{line:reprogramOracleValues}).
		Since $K_b^*$ is now an independent random value regardless of the challenge bit,
		\[
			\Adv^\bfgame{1} = \frac{1}{2} \enspace .
		\]

		We will now apply \cref{thm:OWTH:Dist-to-Find} to relate \Ad{A} being able to distinguish between \game{0} and \game{1} to the probability that \Ad{A}'s queries contain $m^*$,
		or more precisely, the probability that \Ad{A} would trigger event $\FIND_{m^*}$ in \game{1},
		would it be run with the punctured oracle $\puncture{\SupOr}{ \lbrace m^* \rbrace}$ that additionally measures whether any of \Ad{A}'s random oracle queries contained $m^*$ and in that case sets flag $\FIND_{m^*}$ to 1.
		We claim that
		\begin{equation}\label{eq:FO:DistToFind}
			|\Adv^\bfgame{0} - \Adv^\bfgame{1} |
				\leq  4 \cdot \sqrt{d \cdot \PrFindInGameOne } \enspace .
		\end{equation}
		
		To verify this claim, we identify each \game{b} (where $b \in \lbrace 0, 1 \rbrace$) with one of the \OWTH games defined in \cref{thm:OWTH:Dist-to-Find} as follows:
		As \GenInput, we define the algorithm that samples a key pair and a random message $m^*$,
		queries \SupOrNotProg on $m^*$ to obtain $r^*$ and $K_0^*$, and outputs as \inputVar the public key as well as $c^* \coloneqq \Encrypt(\pk, m^*; r^*)$ and $K_0^*$.
		With this identification, set \reproSet from \cref{thm:OWTH:Dist-to-Find} is $\lbrace m^*\rbrace$.
		
		As the \OWTH distinguisher, we define algorithm \Ad{D} that gets \inputVar, picks a random bit $b$ and a random key $K_1^*$ and then forwards $\pk$, $c^*$ and $K_b^*$ to \Ad{A}.
		It forwards all of \Ad{A}'s random oracle and extraction queries to its own respective oracle, and at the end, it returns 1 iff \Ad{A}'s output bit is equal to $b$.
		When \Ad{D} is run with access to \SupOrNotProg, it perfectly simulates \game{0},
		and when \Ad{D} is run with access to \SupOrProg, the input is defined relative to oracle \SupOrNotProg, while the oracle to which \Ad{A}'s queries are forwarded by \Ad{D} is \SupOrProg. Since everything except for the values $r^*$ and $K_0^*$ computed by \GenInput is now independent of the oracle \SupOrNotProg which is furthermore inaccessible to \Ad{D} and \Ad{A}, this is equivalent to simply sampling random values $r^*$ and $K_0^*$ instead, therefore		
		\[
			|\Adv^\bfgame{0} - \Adv^\bfgame{1} | = \Adv^{\OWTH}_{\augQRO{f}}(\Ad{D}) \enspace.
		\]

		Note that \EVENTEXT from \cref{thm:OWTH:Dist-to-Find} corresponds to the event that \Ad{A} queries its extraction oracle \SE on $c^*$, which we ruled out in the theorem statement as a prerequisite.
		Therefore, we can apply the special case bound \cref{eq:OWTH:Dist-to-Find:NoExtract} of \cref{thm:OWTH:Dist-to-Find},
		and since \Ad{D} has exactly the query behaviour of \Ad{A} and triggers \FIND exactly if \Ad{A} triggers \FIND,
		\[
			\Adv^{\OWTH}_{\augQRO{f}}(\Ad{D}) \leq 4 \cdot \sqrt{d \cdot \PrFindInGameOne}
			\enspace .
		\]
		
	}
	
	What we have shown so far is that
	\begin{equation}\label{eq:FO:CPAtoFIND}
		\Adv^{\INDCPAKEM}_{\KemExplicit}(\Ad{A}) \leq  4 \cdot \sqrt{d \cdot \PrFindInGameOne } \enspace .
	\end{equation}

	In order to take the last step towards our reduction, consider the two games given in \cref{fig:games:OWtoINDCPA:INDPKEToINDCPAKEM:StepTwo}.
	
	\begin{figure}[htb] \begin{center} \fbox{\small
				
				\nicoresetlinenr
				
				\begin{minipage}[t]{5.2cm}	
					
					\underline{\bfgameseq{2}{3}} 
					\begin{nicodemus}
						
						\item $(\pk,\sk) \leftarrow \KG$
						\item $m^* \uni \MSpace$
						\item $c^* \leftarrow \Encrypt(\pk, m^*)$ \gcom{$G_2$}
						\item $\tilde{m} \uni \MSpace$ \gcom{$G_3$}
						\item $c^* \leftarrow \Encrypt(\pk, \tilde{m})$ \gcom{$G_3$}
						\item $K^* \uni \KeySpace$
						\item $b'\leftarrow \Ad{A}^{\puncture{\SupOr}{ \lbrace m^* \rbrace}}(\pk, c^*,K^*)$
						\item \pcif $\FIND_{m^*}$ \pcreturn 1
						
					\end{nicodemus}
					
				\end{minipage}
				
				\quad 
				
				\begin{minipage}[t]{6cm}
					
					\underline{\textbf{Reduction} $B^1_{\INDCPA}(\pk)$} 
					\begin{nicodemus}
						
						\item $m^*, \tilde{m} \uni \MSpace$
						
						\item \pcreturn $(m^*, \tilde{m}, \state := m^*)$
						
					\end{nicodemus}
					
					\ \\
					
					\underline{\textbf{Reduction} $B^2_{\INDCPA}(\pk, c^*, \state = m^*)$} 
					\begin{nicodemus}
						
						\item $K^* \uni \KeySpace$
						
						\item $b'\leftarrow \Ad{A}^{\puncture{\SupOr}{ \lbrace m^* \rbrace}}(\pk, c^*,K^*)$
						
						\item \pcif $\FIND_{m^*}$ \pcreturn 1
						
					\end{nicodemus}
					
				\end{minipage}
				
			}	
		\end{center}
		\caption{\gameseq{2}{3} and \INDCPA reduction $B_{\INDCPA} = (B^1_{\INDCPA}, B^2_{\INDCPA})$
				for the proof of \cref{thm:INDPKEToINDCPAKEM}.}			
		\label{fig:games:OWtoINDCPA:INDPKEToINDCPAKEM:StepTwo}
	\end{figure}
	
	{\bfgame{2} exactly formalises \PrFindInGameOne.
		We cleaned up some variables that are not needed any longer - since $r^*$ is uniformly random in \game{1} and since it will be used nowhere but in line \ref{line:Encrypt} (of \game{1}), we can drop it altogether and simply write $c^* \leftarrow \Encrypt(\pk, m^*)$ instead.
		Similarly, since $K_0^*$ is uniformly random in \game{1} (as is $K_1^*$), we do not need to distinguish between 
		$K_0^*$ and $K_1^*$ any longer, thereby also rendering bit $b$ redundant.
	}
	
	\begin{equation}\label{eq:FO:FindToGameTwo}
		\PrFindInGameOne = \Adv^\bfgame{2} \enspace .
	\end{equation}

	{In \bfgame{3}, we replace $c^*$ with an encryption of another random message,
		while sticking with puncturing the oracle on $m^*$.
		With this change, $m^*$ becomes independent of \Ad{A}'s input, and using \cref{eq:OWTH:Find-to-Extract-Independent} from \cref{thm:OWTH:Find-to-Extract-Independent} yields
		\begin{equation}\label{eq:FO:FindToExtractIndependent}
			\Adv^\bfgame{3} \leq \frac{4q}{|\MSpace|} \enspace .
		\end{equation}
		
		To upper bound $|\Adv^\bfgame{2} - \Adv^\bfgame{3}|$,
		consider the reduction given in \cref{fig:games:OWtoINDCPA:INDPKEToINDCPAKEM:StepTwo}.
		Since $B_{\INDCPA}$ perfectly simulates either \game{2} or \game{3}, depending on which message is encrypted in its $\INDCPA$ challenge,
		\begin{equation}\label{eq:FO:DistToCPA}
			|\Adv^\bfgame{2} - \Adv^\bfgame{3}|	=  \Adv^{\INDCPA}_{\PKE}(\Ad{B_{\INDCPA}}) \enspace .
		\end{equation}
		
		Combining equations (\ref{eq:FO:FindToGameTwo}), (\ref{eq:FO:FindToExtractIndependent}) and (\ref{eq:FO:DistToCPA}) yields
		\begin{equation}\label{eq:FO:FindToCPA}
			\PrFindInGameOne \leq \Adv^{\INDCPA}_{\PKE}(\Ad{B_{\INDCPA}}) + \frac{4q}{|\MSpace|} \enspace .
		\end{equation}
	}

	Plugging \cref{eq:FO:FindToCPA} into \cref{eq:FO:DistToFind} and using that $d\leq q$ yields the bound claimed in \cref{thm:INDPKEToINDCPAKEM}. The statement about $B_{\INDCPA}$'s runtime is straightforward.
		
\qed
\end{proof}

\ifTightOnSpace %
\ifTightOnSpace %
	We proceed to proving \cref{thm:OWPKEToINDCPAKEM}, which we repeat for easier reference:
	
	\OWPKEToINDCPAKEM*
\fi

\begin{proof}
Let \Ad{A} again be an adversary against the \INDCPA security of $\KemExplicit$,
issuing random oracle queries of query depth $d$, and $q$ many in total.
Defining \game{0} to \game{2} exactly like in the proof of \cref{thm:INDPKEToINDCPAKEM} and combining \cref{eq:FO:CPAtoFIND} and \cref{eq:FO:FindToGameTwo}, we obtain
\begin{equation} \label{eq:FO:CPAtoGameTwo}
	\Adv^{\INDCPA}_{\KemExplicit}(\Ad{A}) \leq  4 \cdot \sqrt{d \cdot \Adv^\bfgame{2} } \enspace .
\end{equation}
To bound $\Adv^\bfgame{2}$, we use \cref{thm:OWTH:Find-to-Extract}
to relate $ \Adv^\bfgame{2}$ to the \OWCPA advantage of an algorithm that extracts $m^*$ from the oracle queries:
In order to relate $\Adv^\bfgame{2}$ to \OWCPA security using \cref{thm:OWTH:Find-to-Extract}, consider reduction $B_{\OWCPA}$ given in \cref{fig:games:OWtoINDCPA}.
$B_{\OWCPA}$ is exactly the query extractor \Extractor from \cref{thm:OWTH:Find-to-Extract} \emph{until $B_{\OWCPA}$'s last additional step},
where $B_{\OWCPA}$ randomly chooses its output from the candidate list it extracted (in line~\ref{line:pickCandidate}).
Since \game{2} exactly models the probability that \Ad{A} triggers $\FIND_{m^*}$, applying \cref{thm:OWTH:Find-to-Extract} yields
\begin{equation}\label{eq:FO:FindToExtract}
	\Adv^\bfgame{2} \leq 4 d \cdot \Pr[m^* \in \extractionSet : \extractionSet \leftarrow \Extractor(\pk, c^*)] \enspace ,
\end{equation}
where \Extractor is the query extractor from \cref{thm:OWTH:Find-to-Extract}, meaning \extractionSet is the result of running $\Ad A^{\SupOr}(\inputVar)$ until (just before) the $i$-th query, 
measuring all query input registers, and returning as \extractionSet the set of measurement outcomes.
Since $B_{\OWCPA}$ does exactly the same and then picks a random element of \extractionSet,
and since $B_{\OWCPA}$ wins if it randomly picked $m^*$ from \extractionSet,
\begin{equation}\label{eq:FO:ExtractToOW}
	\Pr[m^* \in \extractionSet : \extractionSet \leftarrow \Extractor(\pk, c^*)] \leq |\extractionSet| \cdot \Adv^{\OW}_{\PKE}(\Ad{B_{\OWCPA}}) \enspace .
\end{equation}

Combining equations (\ref{eq:FO:FindToExtract}) and (\ref{eq:FO:ExtractToOW}) yields
\begin{equation}\label{eq:FO:FindToOW}
	\Adv^\bfgame{2} \leq 4 d \cdot w \cdot \Adv^{\OW}_{\PKE}(\Ad{B_{\OWCPA}}) \enspace ,
\end{equation}
where we used that $|\extractionSet|$, the number of parallel queries issued during \Ad{A}'s $i$-th query,
can be upper bounded by $w$, the maximal query width.

Plugging \cref{eq:FO:FindToOW} into \cref{eq:FO:CPAtoGameTwo} yields the bound claimed in \cref{thm:OWPKEToINDCPAKEM}.
Again, the statement about $B_{\OWCPA}$'s runtime is straightforward.

	\begin{figure}[htb] \begin{center} 
	\resizebox{\textwidth}{!}{ 
                \fbox{\small
				
		\nicoresetlinenr
		
		\begin{minipage}[t]{4.7cm}	
			
			\underline{\bfgame{2}} 
			\begin{nicodemus}
				
				\item $(\pk,\sk) \leftarrow \KG$
				\item $m^* \uni \MSpace$
				\item $c^* \leftarrow \Encrypt(\pk, m^*)$
				\item $K^* \uni \KeySpace$
				\item $b'\leftarrow \Ad{A}^{\puncture{\SupOr}{ \lbrace m^* \rbrace}}(\pk, c^*,K^*)$
				\item \pcif $\FIND_{m^*}$ \pcreturn 1
				
			\end{nicodemus}
			
		\end{minipage}
		
		\quad
		
		\begin{minipage}[t]{7.5cm}
			
			\underline{\textbf{Reduction} $B_{\OWCPA}(\pk, c^*)$} 
			\begin{nicodemus}
				
				\item $i \uni \lbrace 1, \cdots, d \rbrace$
				
				\item $K^* \uni \KeySpace$ 
				
				\item Run $\Ad{A}^{\SupOr}(\pk, c^*, K^*)$ \\	\text{\quad }
				until its $i$-th query to $\SRO$
				
				\item $\lbrace m_1', m_2', \cdots \rbrace \leftarrow \Measure$ query input registers
				
				\item $m' \uni \lbrace m_1', m_2', \cdots \rbrace$ \label{line:pickCandidate}
				
				\item \pcreturn $m'$
				
			\end{nicodemus}
			
		\end{minipage}		
			
	}}	
	\end{center}
		\caption{\game{2} and \OWCPA reduction $B_{\OWCPA}$ for the proof of \cref{thm:OWPKEToINDCPAKEM}.}			
		\label{fig:games:OWtoINDCPA}
	\end{figure}
		
\qed
\end{proof}
 \fi

 \fi

\begin{restatable}[\PKE \OWCPA \RightarrowaugQROM{\Encrypt} $\FO\lbrack\PKE\rbrack$ $\INDCPA$]{theorem}{OWPKEToINDCPAKEM} \label{thm:OWPKEToINDCPAKEM}
	\ifTightOnSpace
		For any \INDCPA adversary \Ad{A} like in \cref{thm:INDPKEToINDCPAKEM}, with a query width of $w$,
	\else 
		For any \INDCPA adversary \Ad{A} against $\KemExplicitMess \coloneqq \FOExplicitMess[\PKE, \RO{G},\RO{H}]$ in the \augQROM{\Encrypt} that issues $q$ many queries to \SRO in total, with a query depth (width) of $d$ ($w$),
		and $q_E$ many queries to \SE, where none of them is with its challenge ciphertext.
	\fi
	there furthermore exists an \OWCPA adversary $\Ad{B_{\OWCPA}}$ such that
	\[\Adv^{\INDCPA}_{\KemExplicitMess}(\Ad{A}) \leq 8d \cdot \sqrt{ w \cdot \Adv^{\OW}_{\PKE}(\Ad{B_{\OWCPA}}) }
		\ifTightOnSpace , \else . \fi 
	\]
	\ifTightOnSpace with \else The running time and quantum memory footprint of $\Ad{B_{\OWCPA}}$ satisfy \fi 
	$\Time(\Ad{B_{\OWCPA}})=\Time(\Ad{A})+\Time(\SupOr,q, q_E)$ and $\QMem(\Ad{B_{\OWCPA}})=\Time(\Ad{A})+\QMem(\SupOr,q, q_E)$.
\end{restatable}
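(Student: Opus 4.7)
The plan is to mimic the first half of the proof of \cref{thm:INDPKEToINDCPAKEM} to reduce the \INDCPA advantage against \KemExplicitMess to the probability that $m^*$ is detected in \Ad{A}'s \SRO-queries, and then to bypass the \INDCPA hop used there by instead applying the measurement extractor of \cref{thm:OWTH:Find-to-Extract} directly, yielding an \OWCPA attacker. Concretely, I would reuse \bfgameseq{0}{2} from the proof of \cref{thm:INDPKEToINDCPAKEM} verbatim: \bfgame{0} is \INDCPAKEM for \KemExplicitMess with \RO G and \RO H combined into a single extractable simulator \SupOr, \bfgame{1} samples $(r^*,K_0^*)$ uniformly instead of setting them to $\SRO(m^*)$, and \bfgame{2} expresses the probability that $\FIND_{m^*}$ triggers in \bfgame{1} with \SupOr punctured on $\{m^*\}$. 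Since \Ad A never queries \SE on $c^*$ by hypothesis, the event \EVENTEXT in \cref{thm:OWTH:Dist-to-Find} cannot trigger, so the special-case bound \cref{eq:OWTH:Dist-to-Find:NoExtract} applies; combined with $\Adv^\bfgame{1}=\nicefrac{1}{2}$ (the forwarded KEM key is uniform irrespective of the challenge bit once $K_0^*$ has been resampled), this reproduces \cref{eq:FO:CPAtoFIND}, namely $\Adv^{\INDCPAKEM}_{\KemExplicitMess}(\Ad A)\le 4\sqrt{d\cdot\Adv^\bfgame{2}}$.

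Next, I would build the \OWCPA reduction $\Ad{B_{\OWCPA}}$ that, on input $(\pk,c^*)$, samples $i\uni\{1,\dots,d\}$ and $K^*\uni\KeySpace$, initialises a fresh extractable simulator \SupOr, simulates $\Ad A^{\SupOr}(\pk,c^*,K^*)$ up to just before its $i$-th layer of \SRO-queries, measures all query-input registers in the computational basis to obtain a candidate set \extractionSet, and finally outputs $m'\uni\extractionSet$. Up to this last uniform selection, $\Ad{B_{\OWCPA}}$ exactly implements the extractor \Extractor of \cref{thm:OWTH:Find-to-Extract} instantiated with $\reproSet''=\{m^*\}$, so that theorem yields $\Adv^\bfgame{2}\le 4d\cdot\Pr[m^*\in\extractionSet]$. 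Because $\Ad{B_{\OWCPA}}$ wins its \OWCPA game exactly when the uniform pick happens to equal $m^*$, and because $|\extractionSet|$ is bounded by the query width $w$, one gets $\Pr[m^*\in\extractionSet]\le w\cdot \Adv^{\OW}_{\PKE}(\Ad{B_{\OWCPA}})$. Chaining the inequalities produces $\Adv^{\INDCPAKEM}_{\KemExplicitMess}(\Ad A)\le 4\sqrt{d\cdot 4d\cdot w\cdot \Adv^{\OW}_{\PKE}(\Ad{B_{\OWCPA}})}=8d\sqrt{w\cdot \Adv^{\OW}_{\PKE}(\Ad{B_{\OWCPA}})}$, matching the claimed bound.

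The main obstacle is more bookkeeping than substance: one must check that measuring a uniformly chosen layer of parallel \SRO-queries faithfully realises the extractor required by \cref{thm:OWTH:Find-to-Extract} even though $\Ad{B_{\OWCPA}}$ has no access to $m^*$ and must guess uniformly among the $|\extractionSet|\le w$ measured candidates, which is precisely what introduces the extra factor of $w$ under the square root. The runtime and quantum-memory statements follow because $\Ad{B_{\OWCPA}}$ runs \Ad A at most once and simulates \SupOr with $q$ random-oracle and $q_E$ extraction queries, inheriting the simulator costs from \cref{lem:extractable-sim-properties}.
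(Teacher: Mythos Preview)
Your proposal is correct and matches the paper's own proof essentially step for step: the paper likewise reuses \bfgameseq{0}{2} from the proof of \cref{thm:INDPKEToINDCPAKEM} to obtain \cref{eq:FO:CPAtoFIND}, then defines the same $\Ad{B_{\OWCPA}}$ (sample $i\uni\{1,\dots,d\}$ and $K^*$, run \Ad A until the $i$-th \SRO layer, measure, pick uniformly from the measured set), applies \cref{thm:OWTH:Find-to-Extract} to get $\Adv^\bfgame{2}\le 4d\cdot\Pr[m^*\in\extractionSet]$, and bounds $\Pr[m^*\in\extractionSet]\le w\cdot\Adv^{\OW}_{\PKE}(\Ad{B_{\OWCPA}})$ via $|\extractionSet|\le w$.
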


\ifTightOnSpace Again, a full proof is given in \cref{sec:QROM:CPA-to-passive:proof}. \fi

\ifTightOnSpace The proof does 
\else In a nutshell, the proof proceeds by going through \fi
exactly the same steps as the one of \cref{thm:INDPKEToINDCPAKEM},
up to the point where we bound $\Pr[\FIND_{m^*}]$.
To bound $\Pr[\FIND_{m^*}]$, we use \ifTightOnSpace\else the 'Finding to Extracting' \fi \cref{thm:OWTH:Find-to-Extract}
to relate $\Pr[\FIND_{m^*}]$ to the \OWCPA advantage of an algorithm that extracts $m^*$ from \Ad{A}'s oracle queries.

\ifTightOnSpace \else %
\ifTightOnSpace %
	We proceed to proving \cref{thm:OWPKEToINDCPAKEM}, which we repeat for easier reference:
	
	\OWPKEToINDCPAKEM*
\fi

\begin{proof}
Let \Ad{A} again be an adversary against the \INDCPA security of $\KemExplicit$,
issuing random oracle queries of query depth $d$, and $q$ many in total.
Defining \game{0} to \game{2} exactly like in the proof of \cref{thm:INDPKEToINDCPAKEM} and combining \cref{eq:FO:CPAtoFIND} and \cref{eq:FO:FindToGameTwo}, we obtain
\begin{equation} \label{eq:FO:CPAtoGameTwo}
	\Adv^{\INDCPA}_{\KemExplicit}(\Ad{A}) \leq  4 \cdot \sqrt{d \cdot \Adv^\bfgame{2} } \enspace .
\end{equation}
To bound $\Adv^\bfgame{2}$, we use \cref{thm:OWTH:Find-to-Extract}
to relate $ \Adv^\bfgame{2}$ to the \OWCPA advantage of an algorithm that extracts $m^*$ from the oracle queries:
In order to relate $\Adv^\bfgame{2}$ to \OWCPA security using \cref{thm:OWTH:Find-to-Extract}, consider reduction $B_{\OWCPA}$ given in \cref{fig:games:OWtoINDCPA}.
$B_{\OWCPA}$ is exactly the query extractor \Extractor from \cref{thm:OWTH:Find-to-Extract} \emph{until $B_{\OWCPA}$'s last additional step},
where $B_{\OWCPA}$ randomly chooses its output from the candidate list it extracted (in line~\ref{line:pickCandidate}).
Since \game{2} exactly models the probability that \Ad{A} triggers $\FIND_{m^*}$, applying \cref{thm:OWTH:Find-to-Extract} yields
\begin{equation}\label{eq:FO:FindToExtract}
	\Adv^\bfgame{2} \leq 4 d \cdot \Pr[m^* \in \extractionSet : \extractionSet \leftarrow \Extractor(\pk, c^*)] \enspace ,
\end{equation}
where \Extractor is the query extractor from \cref{thm:OWTH:Find-to-Extract}, meaning \extractionSet is the result of running $\Ad A^{\SupOr}(\inputVar)$ until (just before) the $i$-th query, 
measuring all query input registers, and returning as \extractionSet the set of measurement outcomes.
Since $B_{\OWCPA}$ does exactly the same and then picks a random element of \extractionSet,
and since $B_{\OWCPA}$ wins if it randomly picked $m^*$ from \extractionSet,
\begin{equation}\label{eq:FO:ExtractToOW}
	\Pr[m^* \in \extractionSet : \extractionSet \leftarrow \Extractor(\pk, c^*)] \leq |\extractionSet| \cdot \Adv^{\OW}_{\PKE}(\Ad{B_{\OWCPA}}) \enspace .
\end{equation}

Combining equations (\ref{eq:FO:FindToExtract}) and (\ref{eq:FO:ExtractToOW}) yields
\begin{equation}\label{eq:FO:FindToOW}
	\Adv^\bfgame{2} \leq 4 d \cdot w \cdot \Adv^{\OW}_{\PKE}(\Ad{B_{\OWCPA}}) \enspace ,
\end{equation}
where we used that $|\extractionSet|$, the number of parallel queries issued during \Ad{A}'s $i$-th query,
can be upper bounded by $w$, the maximal query width.

Plugging \cref{eq:FO:FindToOW} into \cref{eq:FO:CPAtoGameTwo} yields the bound claimed in \cref{thm:OWPKEToINDCPAKEM}.
Again, the statement about $B_{\OWCPA}$'s runtime is straightforward.

	\begin{figure}[htb] \begin{center} 
	\resizebox{\textwidth}{!}{ 
                \fbox{\small
				
		\nicoresetlinenr
		
		\begin{minipage}[t]{4.7cm}	
			
			\underline{\bfgame{2}} 
			\begin{nicodemus}
				
				\item $(\pk,\sk) \leftarrow \KG$
				\item $m^* \uni \MSpace$
				\item $c^* \leftarrow \Encrypt(\pk, m^*)$
				\item $K^* \uni \KeySpace$
				\item $b'\leftarrow \Ad{A}^{\puncture{\SupOr}{ \lbrace m^* \rbrace}}(\pk, c^*,K^*)$
				\item \pcif $\FIND_{m^*}$ \pcreturn 1
				
			\end{nicodemus}
			
		\end{minipage}
		
		\quad
		
		\begin{minipage}[t]{7.5cm}
			
			\underline{\textbf{Reduction} $B_{\OWCPA}(\pk, c^*)$} 
			\begin{nicodemus}
				
				\item $i \uni \lbrace 1, \cdots, d \rbrace$
				
				\item $K^* \uni \KeySpace$ 
				
				\item Run $\Ad{A}^{\SupOr}(\pk, c^*, K^*)$ \\	\text{\quad }
				until its $i$-th query to $\SRO$
				
				\item $\lbrace m_1', m_2', \cdots \rbrace \leftarrow \Measure$ query input registers
				
				\item $m' \uni \lbrace m_1', m_2', \cdots \rbrace$ \label{line:pickCandidate}
				
				\item \pcreturn $m'$
				
			\end{nicodemus}
			
		\end{minipage}		
			
	}}	
	\end{center}
		\caption{\game{2} and \OWCPA reduction $B_{\OWCPA}$ for the proof of \cref{thm:OWPKEToINDCPAKEM}.}			
		\label{fig:games:OWtoINDCPA}
	\end{figure}
		
\qed
\end{proof}
 \fi

\section{Characterizing $\FFPCPA_{\PKEDerand}$}\label{sec:QROM:CPA-to-passive:FFPCPA} \label{sec:PKEDerand:FFPCPA}

While it may very well be that the maximal success probability in game \FFPCPA for \PKEDerand can already be bounded
for particular instantiations of \PKEDerand without too much technical overhead, even in the \augQROM{\Encrypt}, this section offers an alternative way to bound this probability:
In \cref{thm:PKEDerand:FFPCPA}, we relate the success probability in game \FFPCPA for \PKEDerand to two failure-related success probabilities that are easier to analyze. This reduction separates the \emph{computationally hard} problem of exploiting knowledge of the public key to find failing ciphertexts for \PKE, from the \emph{statistically hard} problem of searching  the QRO $\RO G$ for failing plaintexts $m$ for \PKEDerand \emph{without knowledge of the key}. 

We begin by defining these two new notions related to decryption failures:
In \cref{fig:def:FFPNoKey} we define a new variant of the \FFP game  that differs from game \FFPCPA by providing \Ad{A} not even with the public key. Since the adversary obtains \underline{N}o \underline{K}ey whatsoever, the game is called \FFPNoKey, and we define the advantage of an \FFPNoKey adversary \Ad{A} against \PKE as
\[\Adv^{\FFPNoKey}_{\PKE}(\Ad{A}) \coloneqq \Pr [\FFPNoKey^{\Ad{A}}_\PKE \Rightarrow 1 ] \enspace .\]
Furthermore, we define a \underline{F}ind \underline{n}on-\underline{g}enerically \underline{F}ailing \underline{P}laintext (\FngFPCPA) game, also in \cref{fig:FngFP}. %
In this game, the adversary gets a public key $\pk_0$ as input and is allowed to issue a single message-randomness pair to a \underline{F}ailure \underline{C}hecking \underline{O}racle \FCO that is defined either relative to $(\sk_0,\pk_0)$, the key pair whose public key constitutes \Ad{A}'s input, or relative to a key pair $(\sk_1,\pk_1)$ which is an independent key pair.
We define the advantage of an \FngFPCPA adversary \Ad{A} against \PKE as
\ifTightOnSpace
\[\Adv^{\FngFPCPA}_{\PKE}(\Ad{A}) \coloneqq \left|\Pr [\FngFPCPA^{\Ad{A}}_\PKE \Rightarrow 1 ] - 1/2 \right| \enspace .\]
\else
\[\Adv^{\FngFPCPA}_{\PKE}(\Ad{A}) \coloneqq \left|\Pr [\FngFPCPA^{\Ad{A}}_\PKE \Rightarrow 1 ] - \frac{1}{2} \right| \enspace .\]
\fi
While the game is formalized as an oracle distinguishing game, \Ad{A} can only win the game with an advantage over random guessing if it queries oracle \FCO on a message-randomness pair that fails with a different probability with respect to key pair $(\sk_0,\pk_0)$ than with respect to key pair $(\sk_1,\pk_1)$, \ifTightOnSpace\else the latter being \fi a key pair about which \Ad{B} can only gather information by its query to \FCO. We expect this game to be a more palatable target for both provable security and cryptanalysis compared to $\FFPCPA_{\PKEDerand} $ or  correctness-related games from the existing literature.

\begin{figure}[tb]\begin{center}
		
		\nicoresetlinenr
		
		\fbox{\small
			
			\begin{minipage}[t]{3cm}	
				
				\underline{{\bf Game} \FFPNoKey}
				\begin{nicodemus}
					\item $m \leftarrow \Ad{A}$
					\item $(\pk, \sk) \leftarrow \KG$
					\item $c \coloneqq \Encrypt(\pk, m)$
					\item $m' \coloneqq \Decrypt(\sk,c)$
					\item \pcreturn $\bool{m' \neq m}$
				\end{nicodemus}
				
			\end{minipage}
		
			\quad 
			
			\begin{minipage}[t]{3.5cm}	
				
				\underline{\textbf{Game} $\FngFPCPA$}
				\begin{nicodemus}
					\item $(\sk_0,\pk_0)\leftarrow\KG$
					\item $(\sk_1,\pk_1)\leftarrow\KG$
					\item $b\leftarrow\{0,1\}$
					\item $b'\leftarrow\A^{\FCO_b}(\pk_0)$
					\item \pcreturn $\bool{b=b'}$
				\end{nicodemus}
				
			\end{minipage}
			\;
			
			\begin{minipage}[t]{4cm}	
				
				\underline{$\FCO_b(m;r)$} \gcom{one query}
				\begin{nicodemus}
					\item $c\leftarrow \Encrypt(\pk_b, m;r)$
					\item $m' \coloneqq \Decrypt(\sk_b, c)$
					\item \pcreturn $\bool{m\neq m'}$
				\end{nicodemus}
			\end{minipage}
			
		}	
		\ifTightOnSpace \vspace{-3pt} \fi
		\caption{
			Key-independent game \FFPNoKey for deterministic schemes \PKE, and the find non-generically failing ciphertexts game \FngFPCPA, for \PKE. \Ad{A} can make at most one query to $\FCO_{\sk_b}$.
			\tinka{Optional: Switch the \FngFPCPA notion from 'guess' to 'left-or-right' $\Rightarrow$ Even simpler proof of \cref{thm:QFngFPCPA} that does not lose a factor of 2 anymore. (If we switch: also update \cref{sec:final-result}.)}
		}
		\ifTightOnSpace \vspace{-23pt} \fi
		\label{fig:FngFP} \label{fig:def:FFPNoKey}
	\end{center}
\end{figure}

\begin{restatable}[\PKE \FngFPCPA and \PKEDerand \FFPNoKey $\Rightarrow$ \PKEDerand \FFPCPA]{theorem}{PKEDerandFFPCPA}\label{thm:QFngFPCPA}\label{thm:PKEDerand:FFPCPA}
	Let \PKE be a public-key encryption scheme.
	For any \FFPCPA adversary \Ad{A} in the \augQROM{\Encrypt} against \PKEDerand making $q_{R}$ and $q_E$ queries to $\SRO$ and $\SE$, respectively, there exist an \FFPNoKey adversary \Ad{C} in the \augQROM{\Encrypt} against \PKEDerand and an \FngFPCPA adversary \Ad{B} against \PKE with %
	\[
		\Adv^{\FFPCPA}_{\PKEDerand}(\Ad{A})\le 2 \Adv^{\FngFPCPA}_{\PKE}(\Ad{B}) + \Adv^{\FFPNoKey }_{\PKEDerand}(\Ad{C}) \enspace .
	\]
The running time of $\Ad C$ is about that of $\Ad A$, $\Time(\Ad B)=\Time(A)+\Time(\SupOr, q_{RO}, q_E)$ and  $\QMem(\Ad B)=\QMem(A)+\QMem(\SupOr, q_{RO}, q_E)$.%
\end{restatable}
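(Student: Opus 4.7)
The plan is to interpolate between game \FFPCPA for \PKEDerand and game \FFPNoKey for \PKEDerand via a single hybrid, and then to bound the resulting game hop by \FngFPCPA.

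\bfgame{0} is $\FFPCPA_{\PKEDerand}(\Ad A)$, so \Ad A is run on $\pk$ with access to the extractable oracle $\SupOr$ (whose extraction interface \SE is defined with respect to $\Encrypt(\pk,\cdot;\cdot)$), outputs $m$, and the game returns $1$ iff $\Decrypt(\sk,\Encrypt(\pk,m;\SRO(m)))\neq m$.

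\bfgame{1} is identical to \bfgame{0}, except that the final failure check is performed with respect to an \emph{independent} key pair: sample $(\pk^*,\sk^*)\leftarrow\KG$ after \Ad A terminates and return $1$ iff $\Decrypt(\sk^*,\Encrypt(\pk^*,m;\SRO(m)))\neq m$. The key $\pk$ given to \Ad A and the extraction function of $\SupOr$ are unchanged, so \Ad A's view is identical to that in \bfgame{0}. We bound $\Adv^{\bfgame{1}}$ by \FFPNoKey: adversary \Ad C samples $(\pk,\sk)\leftarrow\KG$ itself, runs \Ad A on $\pk$ while locally simulating $\SupOr$ with extraction function $\Encrypt(\pk,\cdot;\cdot)$, queries $r:=\SRO(m)$ for \Ad A's output $m$, and returns $m':=\Encrypt(\pk,m;r)$\footnote{More precisely, in the \augQROM{\Encrypt} formulation of \FFPNoKey, \Ad{C} simply outputs $m$ and the game's fresh key replaces $(\pk^*,\sk^*)$.}; the test performed by the \FFPNoKey game on a fresh independent key is then exactly the \bfgame{1} test, yielding $\Adv^{\bfgame{1}}\le\Adv^{\FFPNoKey}_{\PKEDerand}(\Ad C)$.

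It remains to bound $|\Adv^{\bfgame{0}}-\Adv^{\bfgame{1}}|$. We construct an \FngFPCPA adversary \Ad B against \PKE that receives $\pk_0$ and has one query to $\FCO_b$. \Ad B simulates $\SupOr$ with extraction function $\Encrypt(\pk_0,\cdot;\cdot)$, runs $m\leftarrow\Ad A^{\SupOr}(\pk_0)$, computes $r:=\SRO(m)$, queries $o\leftarrow\FCO_b(m;r)$, and outputs $b':=0$ if $o=1$ and $b':=1$ if $o=0$. When $b=0$, \Ad B perfectly simulates \bfgame{0} and wins iff $o=1$, giving $\Pr[b'=b\mid b=0]=\Adv^{\bfgame{0}}$; when $b=1$, the key $\pk_1$ used by \FCO is independent of \Ad A's view, so \Ad B perfectly simulates \bfgame{1} and wins iff $o=0$, giving $\Pr[b'=b\mid b=1]=1-\Adv^{\bfgame{1}}$. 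Hence
\[
\Adv^{\FngFPCPA}_{\PKE}(\Ad B)=\left|\tfrac12\Adv^{\bfgame{0}}+\tfrac12(1-\Adv^{\bfgame{1}})-\tfrac12\right|=\tfrac12\bigl|\Adv^{\bfgame{0}}-\Adv^{\bfgame{1}}\bigr|,
\]
and combining with the previous bound gives
\[
\Adv^{\FFPCPA}_{\PKEDerand}(\Ad A)=\Adv^{\bfgame{0}}\le\Adv^{\bfgame{1}}+2\Adv^{\FngFPCPA}_{\PKE}(\Ad B)\le 2\Adv^{\FngFPCPA}_{\PKE}(\Ad B)+\Adv^{\FFPNoKey}_{\PKEDerand}(\Ad C).
\]
The only subtle point is the bookkeeping for the \augQROM{\Encrypt}: both \Ad B and \Ad C know the public key that determines the extraction function, so they can simulate $\SupOr$ to \Ad A exactly, with the stated query and resource overhead (one additional $\SRO$-query to obtain the randomness $r$ passed to \FCO in the case of \Ad B). The remaining claims on runtime and memory follow from the efficiency bounds on $\SupOr$ in \cref{lem:extractable-sim-properties}.
\qed
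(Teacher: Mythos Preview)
Your proof is correct and follows essentially the same route as the paper: build \Ad{B} by running \Ad{A} on $\pk_0$, computing $r=\SRO(m)$, querying $\FCO_b(m;r)$, and using the output to distinguish; then build \Ad{C} by sampling its own key pair, running \Ad{A}, and forwarding $m$ to the \FFPNoKey game whose fresh key plays the role of your $(\pk^*,\sk^*)$. The paper's \Ad{B} outputs $\FCO_b(m;r)$ directly and argues via $\Pr[1\leftarrow\Ad B\mid b=0]\le\Pr[1\leftarrow\Ad B\mid b=1]+2\Adv^{\FngFPCPA}$, whereas you flip the bit and compute $\Pr[b'=b]$ explicitly; these are equivalent.

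Two small cleanups: in your description of \Ad{C} you write ``returns $m':=\Encrypt(\pk,m;r)$'', which is a slip (the \FFPNoKey output is the message $m$, as your footnote already says). Also, to match the theorem's claim that $\Time(\Ad C)$ is about $\Time(\Ad A)$, \Ad{C} should forward \Ad{A}'s oracle calls to the \FFPNoKey game's \SupOr rather than simulate it locally; your footnote's ``\augQROM{\Encrypt} formulation'' is exactly the paper's choice and is what makes the runtime statement hold.
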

\ifTightOnSpace
	The proof is conceptually simple: Apply the definition of the \FngFPCPA advantage to argue that in game \FFPCPA, the key pair can be replaced with an independent one whose public key is not given to \Ad{A}.
	If \Ad{A} wins after this change, \Ad{A} solves \FFPNoKey for \PKEDerand.
	The full proof is given in \cref{PKEDerand:FFPCPA}.
\else
\ifTightOnSpace
	\section{Proof of \cref{thm:PKEDerand:FFPCPA} (From FngFPCPA and \FFPNoKey to \FFPCPA) } \label{PKEDerand:FFPCPA}
	
	For easier reference, we repeat the statement of \cref{thm:PKEDerand:FFPCPA}.
	
	\PKEDerandFFPCPA*
	
\fi

\begin{proof}
	By definition of the \FFPCPA advantage, we have
	\begin{align*}
		\Adv^{\FFPCPA}_{\PKEDerand}(\Ad{A}) =\Pr_{m\leftarrow \Ad A^{\SupOr}(\pk)}[(m, \SRO(m))\text{ fails wrt. }(\sk, \pk)]
		\enspace .
	\end{align*}	
	To upper bound this probability, we begin by defining \FngFPCPA adversary \Ad{B}:
	On input $\pk$, \Ad{B} runs $\Ad{A}(\pk)$, simulating \SupOr to \Ad{A}. %
	When \Ad{A} finishes by outputting its message $m$, \Ad{B} computes $r \coloneqq \SRO(m)$,
	uses its failure-checking oracle to compute $b' \coloneqq \FCO_b(m,r)$ and outputs $b'$.
	In the case where the challenge bit $b$ of \Ad{B}'s \FngFPCPA game is 0, \Ad{B} perfectly simulates the \FFPCPA game to \Ad{A} and wins iff \Ad{A} wins in game \FFPCPA.
	Therefore,
	\begin{align*}
		\Pr_{m\leftarrow \Ad A^{\SupOr}(\pk)} & [(m, \SRO(m))\text{ fails wrt. }(\sk, \pk)]
			=\Pr[1\leftarrow\Ad B(\pk)|b=0]\\
		&\le \Pr[1\leftarrow\Ad B(\pk)|b=1] + 2\Adv^{\FngFPCPA}_{\PKE}(\Ad{B})
		\enspace ,
	\end{align*}
	where the last line used the definition of the \FngFPCPA advantage.
	
	To upper bound $\Pr[1\leftarrow\Ad B(\pk)|b=1]$, note that this probability formalizes \Ad{A} outputting a message that fails to decrypt, but under an independently drawn key pair $(\sk', \pk')$:
	\begin{align} \label{eq:FFP-no-CPA}
		\Pr[1\leftarrow\Ad B(\pk)|b=1]
		&=\ifTightOnSpace \!\!\!\! \fi\Pr_{m\leftarrow \Ad A^{\SupOr}(\pk)}[(m, \SRO(m))\text{ fails wrt. }(\sk', \pk')]\ifTightOnSpace \else	\enspace \fi,
	\end{align}
	where the probability is taken additionally over $(\sk', \pk')\leftarrow \KG$.
	
	To upper bound this probability, we define \FFPNoKey adversary $\Ad{C}^\SupOr$ against \PKEDerand:
	Upon initialisation, \Ad{C} computes a key pair $(\pk, \sk)$ on its own and runs $\Ad{A}^\SupOr(\pk)$.
	When \Ad{A} finishes by outputting its message $m$, \Ad{C} forwards the message to its own game.
	Since \Ad{C} perfectly simulates the game in \cref{eq:FFP-no-CPA} to \Ad{A} and wins iff \Ad{A} wins,
	\[
		\Pr_{m\leftarrow \Ad A^{\SupOr}(\pk)}[(m, \SRO(m))\text{ fails wrt. }(\sk', \pk')]
		= \Adv^{\FFPNoKey }_{\PKEDerand}(\Ad{C}) \enspace .
	\]
\vspace{-1cm}\\
\qed
\end{proof} \fi

\subsection{Characterizing $\FFPNoKey_{\PKEDerand}$}\label{sec:PKEDerand:FFPNoKey}

In the last section, we have related the success probability of an adversary in game \FFPCPA for \PKEDerand to the success property of an adversary in game \FFPNoKey for \PKEDerand, in the \augQROM{\Encrypt}.
Intuitively, an adversary in game \FFPNoKey will succeed %
if it can find oracle inputs $m$ such that $m$ and $r \coloneqq \SRO(m)$ satisfy a certain predicate%
, i.e., the predicate that $(m,r)$ fails with respect to $\pk$.
To prove the upper bound we provide in \cref{thm:PKEDerand:FFPNoKey}, we therefore generically bound the success probability for a certain search problem in \cref{sec:QROM:FindLargeValues}.
While we note that the search bound might be of independent interest, it in particular allows us to characterize the maximal advantage in game \FFPNoKey in terms of two statistical values for the underlying randomised scheme \PKE.%
\ifTightOnSpace
\else

\fi
We begin with the definitions of \deltaRandKey and $\sigma_\deltaRandKey$: Below, we define the worst-case decryption error rate \deltaRandKey \emph{under independent keys}, and the maximal variance of the decryption error rate $\sigma_\deltaRandKey$.

\begin{definition}[worst-case independent-key decryption error rate, maximal decryption error variance%
	]
	We define the \emph{worst-case decryption error rate under independent keys} $\deltaRandKey$ and the \emph{maximal decryption error variance under independent keys} $\sigma_{\deltaRandKey}$ of a public-key encryption scheme \PKE as
	\begin{align*}
		\deltaRandKey&\coloneqq \max_{m\in\MSpace}[\Pr_{(\sk,\pk),r}[(m,r)\text{ fails}]]
		=\max_{m\in\MSpace}\mathbb E_r[\Pr_{(\sk,\pk)}[(m,r)\text{ fails}]] \enspace ,\text{ and}\\
		\sigma^2_\deltaRandKey& \coloneqq \max_{m\in\MSpace}\mathbb V_r[\Pr_{(\sk,\pk)}[(m,r)\text{ fails}]]\ifTightOnSpace \ \ \text{both for uniformly random $r$.}\fi
	\end{align*}
\ifTightOnSpace\else	for uniformly random $r$.\fi%
\end{definition}

We want to stress that \emph{\deltaRandKey differs from the worst-case term \deltaWorstCase } that was introduced in \cite{TCC:HofHovKil17} (there denoted by $\delta$) since \deltaWorstCase is defined by
\[ 
	\deltaWorstCase \coloneqq \mathbb E_\KG \max_{m \in \MSpace} \Pr_{r \uni \RSpace} [(m,r)\text{ fails}]
\enspace . \]
Intuitively, \deltaWorstCase is the best possible advantage of an an adversary, trying to find the message most likely to fail for a given key pair, while for \deltaRandKey, %
the key pair will be randomly sampled \emph{after} the adversary had made its choice $m$. On a formal level, it is easy to verify that \deltaWorstCase serves as an upper bound for \deltaRandKey.

\begin{theorem}[Upper bound for \FFPNoKey of \PKEDerand]\label{thm:PKEDerand:FFPNoKey}
	Let \PKE be a public-key encryption scheme with worst-case independent-key decryption error rate \deltaRandKey and decryption error rate variance $\sigma_\deltaRandKey$.
	For any \FFPNoKey adversary \Ad{A} in the \augQROM{\Encrypt} against \PKEDerand, setting $C=304$, we have that
\ifTightOnSpace
	\begin{equation*}
	\Adv^{\FFPNoKey }_{\PKEDerand}(\Ad{A})
	\leq \deltaRandKey + 3\sqrt Cq\sigma_\deltaRandKey
	+ 2Cq^2\sigma_\deltaRandKey^2\deltaRandKey(-\log\sqrt Cq\sigma_\deltaRandKey)
	\enspace ,
\end{equation*}
\else
	\begin{equation*}
			\Adv^{\FFPNoKey }_{\PKEDerand}(\Ad{A})
			\leq \deltaRandKey + 3\sqrt Cq\sigma_\deltaRandKey
			 					+ 2Cq^2\sigma_\deltaRandKey^2\deltaRandKey\left(-\log\sqrt Cq\sigma_\deltaRandKey\right)
			\enspace ,
	\end{equation*}
\fi
\end{theorem}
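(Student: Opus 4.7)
The plan is to exploit the structure of game $\FFPNoKey$: because the key pair is sampled \emph{after} the adversary commits to a message $m$, the winning probability equals
\[
\Exp_{m \leftarrow \Ad{A}^{\SupOr}}\bigl[\,p\bigl(m,\SRO(m)\bigr)\bigr], \qquad p(m,r) \coloneqq \Pr_{(\pk,\sk)\leftarrow \KG}[(m,r)\text{ fails wrt.~}(\pk,\sk)].
\]
By the definitions of $\deltaRandKey$ and $\sigma_\deltaRandKey$, for every fixed $m$ the random variable $p_m(r) \coloneqq p(m,r)$ (over uniform $r$) has mean at most $\deltaRandKey$ and variance at most $\sigma_\deltaRandKey^2$, so Chebyshev's inequality yields the key tail bound
\[
\Pr_r[p_m(r) > \deltaRandKey + t] \;\le\; \sigma_\deltaRandKey^2/t^2 \qquad \forall\,t>0.
\]
Hence the problem reduces to bounding the expected value of the random-oracle predicate $p(m,\SRO(m))$ attainable by a $q$-query algorithm whose oracle values concentrate around $\deltaRandKey$.

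To turn this tail behaviour into a bound on $\Exp[p(m,\SRO(m))]$, I would apply the generic QROM ``figure-of-merit search'' lemma from \cref{sec:QROM:FindLargeValues}: for any threshold $\tau$ and any $q$-query algorithm accessing $\SupOr$, the probability of outputting $m$ with $p(m,\SRO(m)) > \tau$ is at most $C q^2 \cdot \max_m \Pr_r[p_m(r) > \tau]$ (up to constants absorbed into $C=304$), where the $q^2$ factor reflects the optimal Grover-type speed-up. Combined with Chebyshev, this gives, for $\tau = \deltaRandKey + u$,
\[
\Pr\!\left[p(m,\SRO(m)) > \deltaRandKey + u\right] \;\le\; \min\!\left(1,\; C q^2\sigma_\deltaRandKey^2/u^2\right).
\]

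To extract the expectation from this tail bound I will integrate using a dyadic shell decomposition. Let $u_0 \coloneqq \sqrt{C}\,q\,\sigma_\deltaRandKey$ and $u_k \coloneqq 2^k u_0$ for $k=0,1,\ldots,K$ with $K \approx -\log_2 u_0$ chosen so that $u_K \approx 1$ (since $p(m,r)\le 1$ trivially). Writing
\[
\Exp[p(m,\SRO(m))] \;\le\; \deltaRandKey + u_0 + \sum_{k=0}^{K-1} u_{k+1}\cdot \Pr\!\left[p(m,\SRO(m)) - \deltaRandKey > u_k\right]
\]
and using the tail bound $Cq^2\sigma_\deltaRandKey^2/u_k^2 = 1/4^k$ within each shell, each shell contributes at most $u_{k+1}/4^k \cdot \min(1, \ldots) = 2u_0/2^k$ weighted by an additional $\deltaRandKey$ whenever $p \le \deltaRandKey$ is the residual. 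Summing the geometric series and accounting for the $K \sim -\log u_0$ shells that cap off at $p \le 1$ gives the three stated terms: a base $\deltaRandKey$ (the mean), the linear Grover-type term $3\sqrt{C}q\sigma_\deltaRandKey$, and the logarithmic correction $2Cq^2\sigma_\deltaRandKey^2\deltaRandKey(-\log \sqrt{C}q\sigma_\deltaRandKey)$ from the $K$ shells straddling the regime where Chebyshev is tight.

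The main obstacle is the generic QROM search lemma of \cref{sec:QROM:FindLargeValues}: it must apply uniformly over all thresholds $\tau$ to an \emph{extractable} compressed oracle (so both $\SRO$ and $\SE$ queries are allowed) and must deliver the $Cq^2$ factor with an explicit constant, rather than the weaker $q\sqrt{p_{\max}}$ one-way-to-hiding form. Once that lemma is in place, the bookkeeping of the dyadic sum and the precise numerical constant $C=304$ is routine, but choosing the shell boundaries (in particular the first shell $u_0 = \sqrt{C}q\sigma_\deltaRandKey$, which determines where Chebyshev becomes nontrivial) is what pins down the exact form of the three-term bound.
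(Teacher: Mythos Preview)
Your high-level strategy matches the paper exactly: recognize that $\Adv^{\FFPNoKey}_{\PKEDerand}(\Ad A) = \Exp_{m\leftarrow\Ad A^{\SupOr}}[F(m,\SRO(m))]$ with $F(m,r)=\Pr_{(\pk,\sk)}[(m,r)\text{ fails}]$, use Chebyshev to get the uniform tail envelope $G(t)=\sigma_\deltaRandKey^2/(t-\deltaRandKey)^2$, apply the search lemma of \cref{sec:QROM:FindLargeValues} (\cref{lem:search-in-aug-QROM}) to bound $\Pr[F\ge t]\le Cq^2G(t)$ in the \augQROM{\Encrypt}, and then convert the tail bound into an expectation bound. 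This is precisely what the paper does via \cref{thm:opt-augQROM} and \cref{cor:optbound-chebyshev}; the proof of \cref{thm:PKEDerand:FFPNoKey} is literally ``apply \cref{cor:optbound-chebyshev} with $\mu=\deltaRandKey$, $\sigma=\sigma_\deltaRandKey$.''

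The only genuine difference is in the tail-to-expectation step. The paper uses the layer-cake identity $\Exp[F]=t_1+\sum_{i\ge 2}\Pr[F\ge t_i](t_i-t_{i-1})$, splits at the crossover $t_{\kappa_q}$ where $Cq^2G(t_{\kappa_q})=1$, and bounds the tail sum by an Abel-summation/integral computation; the logarithmic third term arises from the $\mu/u$ piece of the integrand after substituting $u=t-\mu$. Your dyadic shell decomposition is a perfectly valid alternative, but your account of the third term is garbled: with $u_k=2^ku_0$ and $\Pr[p>\deltaRandKey+u_k]\le 4^{-k}$, the shell sum $\sum_k u_{k+1}\cdot 4^{-k}$ is a \emph{convergent geometric series} $\le 4u_0$, not something that produces a $\log$. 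So your approach as written actually yields $\deltaRandKey+O(\sqrt C\,q\,\sigma_\deltaRandKey)$ directly, which already implies the stated bound (the third term is nonnegative), but does not reproduce the stated three-term form for the reason you give. The ``$K$ shells straddling the regime where Chebyshev is tight'' explanation is not where the $\deltaRandKey$-weighted log comes from.
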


In \ifCameraReady the full version\else \cref{sec:envelope} \fi, we give an alternative bound that grows with the logarithm of the number of RO queries, assuming a \emph{Gaussian \ifTightOnSpace\else-shaped \fi tail bound} for the decryption error \ifTightOnSpace \else probability\fi distribution.
\begin{proof}
	The claimed bounds result from applying \cref{cor:optbound-chebyshev,cor:optbound-gaussian-tail} that we give in \cref{sec:QROM:FindLargeValues} below:
	The success probability of \Ad{A} in game \FFPNoKey is the probability that \Ad{A}'s output message fails to decrypt.
	If we define function $F$ by setting $F(m,r) \coloneqq \Pr[(m;r) \text{ fails wrt. }(\sk, \pk)]$,
	we can alternatively describe \Ad{A}'s task as the task to find a superposition oracle input $m$ such that $F(m, \SRO(m))$
	is large, having access to %
	 the extractable oracle simulation \SupOr %
	 .
	We prove general upper bounds for the success probability in finding large values for arbitrary functions $F$ in \cref{sec:QROM:FindLargeValues}, %
	which immediately yields the claimed  bound. %
\qed\end{proof}

\subsection{Finding large values of a function in the \augQROM{f}}\label{sec:QROM:FindLargeValues}

In this section, we provide the technical results for the \augQROM{f} that we need to prove \cref{thm:PKEDerand:FFPNoKey}.
Throughout this section, $f$ is a fixed function such that \augQROM{f} is well-defined.
We begin by providing a bound for the success probability of an %
 algorithm in the \augQROM{f} that searches for a value $x$ that, %
together with its oracle value $\SRO(x)$, satisfies a relation $R$.
In the lemma below \ifTightOnSpace\else that provides this upper bound\fi, we will use the quantity $\Gamma_R$ that was defined in \cref{eq:Gamma_R} (see page~\pageref{eq:Gamma_R}).
\begin{lemma}[Slight generalization of {\cite[Proposition 3.5]{DFMS21}}]\label{lem:search-in-aug-QROM}
	Let $R\subset \mathcal X\times\mathcal Y$ be a relation and $\Ad A^{\SupOr}$ an algorithm with access to \augQRO{f} from $\mathcal X$ to $\mathcal Y$ for some function $f:\mathcal X\times \mathcal Y\to\mathcal T$, making $q$ queries to $\SRO$%
	. Then
\ifTightOnSpace
\begin{equation}
	{\Pr}_{x\leftarrow\Ad A^{\SupOr}}[R(x, \SRO(x))]\le 152 (q+1)^2\ifTightOnSpace\Gamma_R|\mathcal Y|^{-1} \else \frac{\Gamma_R}{|\mathcal Y|}\fi ,
\end{equation}	
\else
	\begin{equation}
		\Pr_{x\leftarrow\Ad A^{\SupOr}}[R(x, \SRO(x))]\le 152 (q+1)^2\ifTightOnSpace\Gamma_R|\mathcal Y|^{-1} \else \frac{\Gamma_R}{|\mathcal Y|}\fi ,
	\end{equation}	
\fi
independently of the number of queries $\Ad A$ makes to $\SE$. Here it is understood that $\SRO$ is queried once in the very end to determine $\SRO(x)$.
\end{lemma}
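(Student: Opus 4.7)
The plan is to adapt the compressed-oracle progress-measure argument from \cite[Proposition 3.5]{DFMS21} to the setting of an arbitrary relation $R$ and to verify that the extraction interface $\SE$ (defined with respect to $f$) does not help. Throughout, I would work with the (non-sparse) compressed oracle representation introduced in \cref{sec:prels:compressed}, keeping track of the joint state of the adversary $\Ad A$ and the database $D = D_{x_1}\cdots D_{x_{|\mathcal X|}}$.

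First I would rewrite the success probability in a form amenable to database analysis. Without loss of generality, append the final $\SRO$-query mentioned in the lemma statement to $\Ad A$ (so $\Ad A$ makes $q+1$ queries and outputs $x$ together with the loaded value $y = \SRO(x)$). Let $\Pi^R$ be the projector that selects, on the joint state, those basis states in which the adversary's output register contains $x$ and the corresponding database register $D_x$ contains some $y$ with $(x,y)\in R$; formally,
\begin{equation*}
\Pi^R \;=\; \sum_{x\in\mathcal X} \proj{x}_{X_{\mathrm{out}}} \otimes \Bigl(\sum_{y:\,R(x,y)} \proj{y}\Bigr)_{D_x}.
\end{equation*}
Then $\Pr[R(x,\SRO(x))] = \|\Pi^R\ket{\Psi_{\mathrm{final}}}\|^2$, where $\ket{\Psi_{\mathrm{final}}}$ is the joint state after all queries and the uncompression/read-out of $D_x$.

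The main steps are then (i) bounding the progress per $\SRO$-query and (ii) showing $\SE$ queries do not help. For (i), I would use the standard compressed-oracle commutator estimate: for any fixed $x$, the query unitary $O_{XYD}$ only adds amplitude onto $\Pi^R$-states by populating $D_x$ with a new uniformly sampled value, and the local ``good set'' has size at most $\Gamma_R$ out of $|\mathcal Y|$. A Zhandry/DFMS-style bound then yields $\|\Pi^R O_{XYD}\ket{\Psi}\| \le \|\Pi^R\ket{\Psi}\| + c\sqrt{\Gamma_R/|\mathcal Y|}$ for a small absolute constant $c$ (this is precisely the content of the technical lemma underlying \cite[Prop.~3.5]{DFMS21}, which nowhere uses that $R$ arises from an encryption relation). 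For (ii), the key observation is that $\SE_f$ is a projective measurement whose outcome projectors $\{\Sigma^{t,x}\}_{x\in\mathcal X\cup\{\emptyset\}}$, as defined in \cref{eq:extraction-measurement}, act on the $D$-registers by products of the diagonal projectors $\Pi^{t,x'}, \bar\Pi^{t,x'}$ tied to the function $f$. Since the ``good'' projector $\Pi^R$ is itself diagonal in the $D$-basis, it commutes with every $\Sigma^{t,x}$; consequently measuring $\SE_f$ leaves $\|\Pi^R\ket{\Psi}\|^2$ pointwise invariant on each outcome branch and hence in expectation. Thus arbitrary interleaving of $\SE$-queries makes no contribution, independent of their number.

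Combining the two ingredients, the progress grows by at most $c\sqrt{\Gamma_R/|\mathcal Y|}$ per $\SRO$-query and not at all per $\SE$-query; starting from $\|\Pi^R\ket{\Psi_0}\|=0$ and triangle-inequality-ing over the $q+1$ random-oracle queries gives
\begin{equation*}
\|\Pi^R \ket{\Psi_{\mathrm{final}}}\| \;\le\; c(q+1)\sqrt{\Gamma_R/|\mathcal Y|},
\end{equation*}
so $\Pr[R(x,\SRO(x))] \le c^2(q+1)^2 \Gamma_R/|\mathcal Y|$. Tracking the explicit constants through the commutator bound of \cite[Lem.~3.3]{DFMS21} (which is where the $\sqrt{152}$-type constant comes from in DFMS) yields the stated $152$.

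The main obstacle, and the only step that actually needs checking beyond citing DFMS, is (ii): verifying that the extraction measurements $\SE_f$, although they depend on the unrelated function $f$, cannot sneak information about $R$-compliance of $D$-entries into the adversary's state. The clean argument above — exploiting that both $\Pi^R$ and the $\Sigma^{t,\cdot}$ are diagonal in the $D$-basis — is what makes the generalization truly ``slight'': the relation $R$ and the extraction function $f$ interact only through the commutative structure of computational-basis projectors on the database registers, so the original DFMS argument carries through verbatim with $R$ replacing their concrete relation.
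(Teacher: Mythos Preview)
Your proposal is correct and follows essentially the same approach as the paper: both defer to the DFMS progress-measure argument for the $\SRO$ part and argue that $\SE$ queries cannot help because they commute with the progress measure. The only cosmetic difference is in how the commutation is justified: you argue that $\Pi^R$ and the $\Sigma^{t,x}$ are all diagonal in the computational basis on $D$, whereas the paper phrases it as ``both $M$ and $\SE$ are controlled unitaries controlling on the database register with disjoint target registers''—two equivalent ways of seeing that the extraction measurement leaves $\|\Pi^R\ket{\Psi}\|$ invariant.
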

\ifTightOnSpace \else The generalization consists of allowing $\Ad A$ to query $\SE$ as well.\fi
\begin{proof}
The only difference between \cite[Proposition 3.5]{DFMS21} and \cref{lem:search-in-aug-QROM} is that \Ad{A} now additionally has access to \SE.
	The proof is thus the same as %
	 for  \cite[Proposition 3.5]{DFMS21}, with the additional observation that queries to \SE commute with the progress measure operator $M$ for any relation $R$. This is because i) both $M$ and the operator applied upon an \SE query are controlled unitaries controlling on the database register of the compressed oracle database of the \augQRO{f}, and ii) the target registers of $M$ and \SE are disjoint.
\qed
\end{proof}

According to \cref{lem:search-in-aug-QROM}, it is hard to search a random oracle,  even given extraction %
access. We will now use \cref{lem:search-in-aug-QROM} to show that it is also hard to produce an input to the oracle so that the resulting input-output pair has a large value under a function $F$, in expectation%
.
To state a theorem making this intuition precise and quantitative, let $F:X\times Y\to I\subset [0,1]$,
and let $I$ be ordered as $I=\{t_1,...,t_R\}$ with $t_i>t_{i-1}$.
The hardness of the task of finding large values is related to a ``tail bound'' $G(t)$ for %
 the probability of $F(x,r)$ being larger than $t$
.

\begin{theorem}\label{thm:opt-augQROM}
	Let $F$ and $I$ be as above. Let further $G:[0,1]\to[0,1]$ be non-increasing such that $G(t)\ge \Pr_{r\leftarrow Y}[F(x,r)\ge t]$ for all $x$. 
	Let $C \coloneqq 304$, $\Delta G(i) \coloneqq G(t_i)-G(t_{i+1})$ (setting formally $G(t_{R+1})=0$),
	and let $\kappa_q \coloneqq  \min\{i|C q^2G(t_i)\le1\}$.
	Then for any algorithm $\Ad A^{\SupOr}$ making at most $q\ge 1$ %
	queries to $\SRO$, 
	\begin{equation}
			\mathbb E_{x\leftarrow \Ad A^\SupOr}[F(x,\SRO(x))]\le t_{\kappa_q}+C q^2\sum_{i=\kappa_q+1}^Rt_i\Delta G(i)
			\enspace .
	\end{equation}
 $\SRO$ is queried once in the %
end to determine $\SRO(x)$.
\end{theorem}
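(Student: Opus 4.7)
The plan is to combine a layer-cake decomposition of $F$ with the search bound of \cref{lem:search-in-aug-QROM} applied to the threshold relations $R_i(x,r) :\Leftrightarrow F(x,r)\ge t_i$. First I would write, using $t_0 := 0$ and Abel summation on the pointwise identity $F(x,r) = \sum_{i=1}^R t_i\,\mathbb{1}[F(x,r)=t_i]$,
\begin{equation*}
	\mathbb{E}_{x\leftarrow \Ad A^\SupOr}[F(x,\SRO(x))]
	= \sum_{i=1}^R (t_i - t_{i-1})\, p_i,
	\quad p_i := \Pr_{x\leftarrow \Ad A^\SupOr}[F(x,\SRO(x))\ge t_i].
\end{equation*}

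Next I would control each $p_i$ via \cref{lem:search-in-aug-QROM}. For each $i$ the relation $R_i$ satisfies $\Gamma_{R_i} = \max_x |\{r: F(x,r)\ge t_i\}| \le G(t_i)\,|\mathcal Y|$, by the very definition of the tail bound $G$. Hence \cref{lem:search-in-aug-QROM} yields $p_i \le 152(q+1)^2\,G(t_i) \le C q^2\, G(t_i)$ for an appropriate absolute constant $C$ (the choice $C=304$ works in the regime of interest; in any case $p_i\le 1$ holds trivially). By the choice of $\kappa_q$, one has $Cq^2 G(t_i)\le 1$ for all $i\ge \kappa_q$, so it is natural to split the sum at $\kappa_q$:
\begin{equation*}
	\mathbb{E}[F(x,\SRO(x))]
	\;\le\; \sum_{i=1}^{\kappa_q}(t_i-t_{i-1}) \;+\; Cq^2\sum_{i=\kappa_q+1}^R(t_i-t_{i-1})G(t_i)
	\;=\; t_{\kappa_q} + Cq^2\sum_{i=\kappa_q+1}^R(t_i-t_{i-1})G(t_i).
\end{equation*}

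Finally I would convert the remaining sum into the stated form via summation by parts. A short calculation (using $G(t_{R+1})=0$) gives
\begin{equation*}
	\sum_{i=\kappa_q+1}^R(t_i-t_{i-1})G(t_i)
	\;=\; \sum_{i=\kappa_q+1}^R t_i\,\Delta G(i) \;-\; t_{\kappa_q}\,G(t_{\kappa_q+1})
	\;\le\; \sum_{i=\kappa_q+1}^R t_i\,\Delta G(i),
\end{equation*}
since the boundary term $t_{\kappa_q}G(t_{\kappa_q+1})$ is non-negative. Combining the two displays yields the claimed bound. The only non-routine step is the first one: invoking \cref{lem:search-in-aug-QROM} with $R_i$ requires observing that the extraction interface $\SE$ has no effect on the bound, which is exactly what that lemma provides; the rest is bookkeeping. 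Getting the absolute constant sharp (and in particular matching $C=304$ uniformly for all $q\ge 1$) is the one mildly fiddly point, but it only affects constants, not the structure.
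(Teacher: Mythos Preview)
Your proposal is correct and follows essentially the same route as the paper: layer-cake decomposition of the expectation into $\sum_i (t_i-t_{i-1})p_i$, bounding each $p_i$ by $\min(1,Cq^2G(t_i))$ via \cref{lem:search-in-aug-QROM} applied to the threshold relation, splitting at $\kappa_q$, and a final Abel summation to pass from $(t_i-t_{i-1})G(t_i)$ to $t_i\Delta G(i)$. Your remark about the constant is apt; the paper makes the same silent simplification $152(q+1)^2\le 304q^2$, which only holds for $q\ge 3$, so this is a shared cosmetic looseness rather than a defect in your argument.
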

\begin{proof}
	Let $x\leftarrow \A^\SupOr$. %
	We bound
	\ifTightOnSpace
	\begin{align*}
	&	\mathbb E\left[F(x, \SRO(x))\right]={\sum}_{i=1}^Rt_i\Pr[F(x,\SRO(x))=t_i]\\
	&	={\sum}_{i=1}^Rt_i\left(\Pr[F(x,\SRO(x))\ge t_i]-\Pr[F(x, \SRO(x))\ge t_{i+1}]\right)\\
	&=t_1+{\sum}_{i=2}^R\Pr[F(x, \SRO(x))\ge t_i](t_i-t_{i-1})
\end{align*}
\begin{align*}
	&\le t_1+{\sum}_{i=2}^R\min(1,Cq^2G(t_i))(t_i-t_{i-1})=t_{\kappa_q}+Cq^2{\sum}_{i=\kappa_q+1}^RG(t_i)(t_i-t_{i-1}),
	\end{align*}
	\else
	\begin{align*}
		\mathbb E\left[F(x, \SRO(x))\right]&=\sum_{i=1}^Rt_i\Pr[F(x,\SRO(x))=t_i]\\
		&=\sum_{i=1}^Rt_i\left(\Pr[F(x,\SRO(x))\ge t_i]-\Pr[F(x, \SRO(x))\ge t_{i+1}]\right)\\
		&=t_1+\sum_{i=2}^R\Pr[F(x, \SRO(x))\ge t_i](t_i-t_{i-1})\\
		&\le t_1+\sum_{i=2}^R\min(1,Cq^2G(t_i))(t_i-t_{i-1})\\
		&=t_{\kappa_q}+Cq^2\sum_{i=\kappa_q+1}^RG(t_i)(t_i-t_{i-1}),
	\end{align*}
\fi
where we have used \cref{lem:search-in-aug-QROM} with the relation $R_{f, \ge t_i}$ defined by $R_{f, \ge t_i}(x,y):\Leftrightarrow f(x,y)\ge t_i$ in the second-to-last line.
\ifTightOnSpace
\else 

\fi
 We further bound
 \ifTightOnSpace
 \begin{align*}
 	{\sum}_{i=\kappa_q+1}^R\!\!\!\!G(t_i)(t_i-t_{i-1})&=-G(t_{\kappa_q+1})t_{\kappa_q}\!\!+\!{\sum}_{i=\kappa_q+1}^R\!\!\!\!\!\!t_i\Delta G(i)\le {\sum}_{i=\kappa_q+1}^Rt_i\Delta G(i).
 \end{align*}
 \else
\begin{align*}
	\sum_{i=\kappa_q+1}^RG(t_i)(t_i-t_{i-1})&=-G(t_{\kappa_q+1})t_{\kappa_q}+\sum_{i=\kappa_q+1}^Rt_i\Delta G(i)\\
&\le \sum_{i=\kappa_q+1}^Rt_i\Delta G(i),
\end{align*}
finishing the proof.
\fi
\qed\end{proof}
We provide a corollary for the case where $G$ is given by Chebyshev's inequality%
.
\begin{corollary}\label{cor:optbound-chebyshev}
	Let $F$, $I$, and $C$ be as in \cref{thm:opt-augQROM}, and let the expectation values and variances of $F(x,r)$ for  random $r\leftarrow\mathcal Y$ be bounded as $ \mathbb E_r[F(x,r)]\le\mu$ and $\mathbb V_r[F(x,r)]\le\sigma^2$, respectively.  Then, for an algorithm $\Ad A^{\SupOr}$ making at most $q\ge 1$ quantum queries to $\SRO$,
	\ifTightOnSpace
	\begin{equation}
		\mathbb E_{x\leftarrow A^{\SupOr}}[F(x, \SRO(x))]\le  \mu+3\sqrt Cq\sigma+2Cq^2\sigma^2\mu(-\log (\sqrt Cq\sigma)).
	\end{equation}
	\else
	\begin{equation}
		\mathbb E_{x\leftarrow A^{\SupOr}}[F(x, \SRO(x))]\le  \mu+3\sqrt Cq\sigma+2Cq^2\sigma^2\mu\log\frac{1}{\sqrt Cq\sigma}.
	\end{equation}
\fi
\end{corollary}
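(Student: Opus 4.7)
The plan is to apply Theorem \ref{thm:opt-augQROM} with a tail bound $G$ derived from Chebyshev's inequality. By hypothesis, for every $x \in \mathcal X$ the random variable $F(x,r)$ (with $r$ uniform on $\mathcal Y$) has mean at most $\mu$ and variance at most $\sigma^2$. So for any $t > \mu$,
\begin{equation*}
\Pr_r[F(x,r) \geq t] \leq \Pr_r\bigl[|F(x,r) - \mathbb E_r[F(x,r)]| \geq t - \mu\bigr] \leq \sigma^2/(t-\mu)^2,
\end{equation*}
where the first inequality uses $\mathbb E_r[F(x,r)] \leq \mu$, and of course this probability is also bounded by $1$. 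I would therefore choose $G(t) := \min\{1,\,\sigma^2/(t-\mu)^2\}$ (with $G(t)=1$ for $t \leq \mu$), which is non-increasing and dominates $\Pr_r[F(x,r)\geq t]$ uniformly in $x$, as required by Theorem \ref{thm:opt-augQROM}.

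Next I would identify the threshold index $\kappa_q$. The condition $Cq^2 G(t_i) \leq 1$ becomes $(t_i - \mu)^2 \geq Cq^2 \sigma^2$, so $t_{\kappa_q}$ is the smallest value in $I$ that is at least $\mu + \sqrt{C}q\sigma$; up to the discretization, $t_{\kappa_q} \leq \mu + \sqrt{C}q\sigma$, which gives the first two terms of the claimed bound. It remains to estimate the tail sum $Cq^2 \sum_{i>\kappa_q} t_i \Delta G(i)$, which I would bound by its continuous Riemann--Stieltjes analogue:
\begin{equation*}
\sum_{i>\kappa_q} t_i \Delta G(i) \leq \int_{t_{\kappa_q}}^{1} t \cdot (-G'(t))\,dt = \int_{t_{\kappa_q}}^{1} \frac{2\sigma^2 t}{(t-\mu)^3}\,dt.
\end{equation*}
Substituting $u = t - \mu$ and writing $t = u + \mu$ splits this into $\int 2\sigma^2 u^{-2}\,du$ and $2\sigma^2\mu\int u^{-3}\,du$. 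The first integral is elementary and, after multiplication by $Cq^2$, contributes the $\sqrt{C}q\sigma$ piece; the second, being $\mu$-weighted, produces the remaining $\mu$-dependent correction.

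The main technical obstacle is reproducing the precise form of the $\mu$-dependent correction --- in particular the emergence of the factor $-\log(\sqrt C q \sigma)$ in the third term. A naive evaluation of $\mu \int u^{-3}\,du$ yields a $1/a^2$-type term rather than a logarithm, so the refinement must come from either (i) sharpening the tail bound in an intermediate range by combining Chebyshev with the pointwise bound $F(x,r)\leq 1$ and/or a Markov bound $\mu/t$ that governs $G$ where $t$ is close to $\mu$, or (ii) a careful discrete treatment of $\Delta G(i)$ near the transition $G=1$, where a geometric-style decomposition of the interval $[t_{\kappa_q},1]$ into dyadic pieces of the form $[\mu + 2^k a,\,\mu + 2^{k+1}a]$ naturally yields a sum of length logarithmic in $1/a$. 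Once the right decomposition is chosen the remaining calculation is bookkeeping; one should also observe that the bound is vacuous unless $\sqrt C q \sigma < 1$, so this condition may be assumed when evaluating $-\log(\sqrt C q \sigma)$.
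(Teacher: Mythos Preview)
Your setup is exactly the paper's: Chebyshev gives $G(t)=\sigma^2/(t-\mu)^2$, the threshold satisfies $t_{\kappa_q}\approx\mu+\sqrt C q\sigma$, and the tail sum is bounded by $-\int_{t_{\kappa_q}}^1 tG'(t)\,dt$. Your derivative $-G'(t)=2\sigma^2/(t-\mu)^3$ is correct.

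The logarithm you are hunting for does not come from any of the refinements you conjecture. In the paper's calculation the integral is written as $2\sigma^2\int_{t_{\kappa_q}}^1 \frac{t}{t-\mu}\,dt$, i.e.\ with $(t-\mu)$ in the denominator rather than $(t-\mu)^3$; integrating $\frac{u+\mu}{u}$ then produces $\mu\log\frac{1-\mu}{t_{\kappa_q}-\mu}$, and the subsequent simplification (using $Cq^2\sigma^2\le\sqrt C q\sigma$ when $\sqrt C q\sigma\le 1$) yields the stated bound. So the log term is an artifact of that step, not of a dyadic or Markov argument.

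If you complete \emph{your} integral you get $Cq^2\cdot\bigl(\tfrac{2\sigma}{\sqrt C q}+\tfrac{\mu}{Cq^2}\bigr)=2\sqrt C q\sigma+\mu$, hence the bound $2\mu+3\sqrt C q\sigma$. This is valid but strictly \emph{weaker} than the stated corollary (since $2x^2\log(1/x)\le e^{-1}<1$ for $x=\sqrt C q\sigma\in(0,1)$), so it does not establish the claim as written. A cleaner route that does: go back one step in the proof of Theorem~\ref{thm:opt-augQROM} to the form $t_{\kappa_q}+Cq^2\sum_{i>\kappa_q}G(t_i)(t_i-t_{i-1})$ (before the Abel summation), bound this Riemann sum by $Cq^2\int_{t_{\kappa_q}}^1 G(t)\,dt\le Cq^2\cdot\sigma^2/(t_{\kappa_q}-\mu)\le\sqrt C q\sigma$, and obtain $\mu+2\sqrt C q\sigma$, which implies the stated inequality.
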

\begin{proof}
	 By Chebyshev's inequality, we can set 
\ifTightOnSpace
$	G(t)=\sigma^2(t-\mu)^{-2}.$
\else
\begin{equation}
		G(t)=\frac{\sigma^2}{(t-\mu)^2}.
	\end{equation}
\fi
	We thus obtain $t_{\kappa_q}\le \sqrt C q\sigma+\mu$. We bound
\ifTightOnSpace
\begin{align}
	&{\sum}_{i=\kappa_q+1}^Rt_i\Delta G(i)=- {\sum}_{i=\kappa_q+1}^Rt_i\int_{t_i}^{t_{i+1}}G'(t)\mathrm dt\le-\int_{t_{\kappa_q}}^1t G'(t)\mathrm dt\\
	=&2\sigma^2\!\!\int_{t_{\kappa_q}}^1\frac t {t-\mu}\mathrm dt=2\sigma^2\!\!\int_{t_{\kappa_q}-\mu}^{1-\mu}\frac{u+\mu} {u}\mathrm du=2\sigma^2\left(1-t_{\kappa_q}+\mu\log\frac{1-\mu}{t_{\kappa_q}-\mu}\right).
\end{align}
\else
	\begin{align}
		\sum_{i=\kappa_q+1}^Rt_i\Delta G(i)=&- \sum_{i=\kappa_q+1}^Rt_i\int_{t_i}^{t_{i+1}}G'(t)\mathrm dt\\
		\le&-\int_{t_{\kappa_q}}^1t G'(t)\mathrm dt\\
		=&2\sigma^2\int_{t_{\kappa_q}}^1\frac t {t-\mu}\mathrm dt\\
		=&2\sigma^2\int_{t_{\kappa_q}-\mu}^{1-\mu}\frac{u+\mu} {u}\mathrm du\\
		=&2\sigma^2\left(1-t_{\kappa_q}+\mu\log\frac{1-\mu}{t_{\kappa_q}-\mu}\right).
	\end{align}
\fi
	We arrive at the bound
	\ifTightOnSpace
	\begin{align*}
		\mathbb E_{x\leftarrow A^{\SupOr}}[F(x,\SRO(x))]&\le \mu\!+\!\sqrt Cq\sigma\!+\!2Cq^2\sigma^2(1\!+\!\mu(\log(1\!-\!\mu)\!-\!\log(\sqrt Cq\sigma))).
	\end{align*}
	\else
	\begin{align*}
		\mathbb E_{x\leftarrow A^{\SupOr}}[F(x,\SRO(x))]&\le \mu+\sqrt Cq\sigma+2Cq^2\sigma^2\left(1+\mu\log\frac{1-\mu}{\sqrt Cq\sigma}\right).
	\end{align*}
\fi
If $\sqrt Cq\sigma\ge 1$, the claimed bound trivially holds, else $\sqrt Cq\sigma\ge Cq^2\sigma^2$ and thus
\ifTightOnSpace
\begin{align*}
	\mathbb E_{x\leftarrow A^{\SupOr}}[F(x,\!\SRO(x))]&\!\le \!\mu\!+\!3\sqrt Cq\sigma\!+\!2Cq^2\sigma^2\mu\log(\log(1\!-\!\mu)\!-\!\log(\sqrt Cq\sigma)).
\end{align*}
	\else
	\begin{align*}
	\mathbb E_{x\leftarrow A^{\SupOr}}[F(x,\SRO(x))]&\le \mu+3\sqrt Cq\sigma+2Cq^2\sigma^2\mu\log\frac{1-\mu}{\sqrt Cq\sigma}.
\end{align*}
\fi
\vspace{-1cm}\\
\qed\end{proof}

\ifTightOnSpace \else %
\ifTightOnSpace %
	
	\section{Alternative bound for  \FFPNoKey based on a stronger tail bound} \label{sec:envelope}
	
	In this appendix, we show how to use a stronger uniform tail bound in place of Chebyshev's inequality to obtain a stronger bound for the adversarial advantage in \FFPNoKey.
	
\else
	
	\subsection{Alternative bound for  \FFPNoKey based on a stronger tail bound} \label{sec:envelope}
	In this subsection, we show how to use a stronger uniform tail bound in place of Chebyshev's inequality to obtain a stronger bound for the adversarial advantage in \FFPNoKey.
	
\fi

We begin by defining the decryption error tail envelope.
\begin{definition}[decryption error tail envelope]
	We define the %
	\emph{decryption error tail envelope} as
	\[ 
	\tau(t) \coloneqq \max_m\Pr_{r\leftarrow \RSpace}\left[\Pr_{(\sk,\pk)}[(m,r)\text{ fails}]\ge t\right]
	\enspace . \]
\end{definition}
We obtain the following stronger bound for \FFPNoKey that scales logarithmically with the adversary's random oracle queries.

\begin{theorem}[Upper bound for \FFPNoKey of \PKEDerand]%
	\label{thm:QFFPNoKey-gauss}
	Let \PKE be a public-key encryption scheme with worst-case random-key decryption error rate \deltaRandKey and decryption error tail envelope $\tau$.
	For any \FFPNoKey adversary \Ad{A} in the \augQROM{\Encrypt} against \PKEDerand, setting $C=304$, we have that
	\begin{equation*}
		\Adv^{\FFPNoKey}_{\PKEDerand}(\Ad{A})
		\leq \deltaRandKey+2\beta^{-1/2}\sqrt{\ln(2C\sqrt\beta )+2\ln(q)}.
	\end{equation*}
\end{theorem}
The above theorem follows directly by an application of \cref{cor:optbound-gaussian-tail} given below. Combining \cref{thm:QFFPNoKey-gauss} with the reductions from \cref{sec:QROM:CCA-to-CPA-KEM,sec:QROM:CPA-to-passive} we get the following alternative to \cref{cor:main-result}.
\begin{corollary}[\PKE \FngFPCPA and pass. secure $\Rightarrow \FOExplicitMess\lbrack\PKE\rbrack$ $\INDCCA$] \label{cor:main-result-2}
	\ifTightOnSpace
	Let \PKE and \Ad{A} be like in \cref{cor:main-result:withFFPCPA}, and let \PKE furthermore have worst-case random-key decryption error rate \deltaRandKey, decryption error rate variance $\sigma_{\deltaRandKey}$ and decryption error tail envelope $\tau$.
	\else
	Let \PKE be a (randomized) \PKE scheme that is $\gamma$-spread and with worst-case random-key decryption error rate \deltaRandKey, decryption error rate variance $\sigma_{\deltaRandKey}$ and decryption error tail envelope $\tau$.
	Let \Ad{A} be an \INDCCAKEM adversary (in the QROM) against $\KemExplicitMess \coloneqq \FOExplicitMess[\PKE, \RO{G},\RO H]$, issuing at most $q_\RO{G}$ many queries to its oracle \RO{G}, $q_\RO{H}$ many queries to its oracle \RO{H}, and at most \numberDecapsQueries many queries to its decapsulation oracle \oracleDecaps. Let $q=q_\RO{G}+q_\RO{H}$, and let $d$ and $w$ be the query depth and query width of the combined queries to $\RO G$ and $\RO H$.
	\fi
	Set $C=304$ and assume $\sqrt Cq_{\RO{G}}\sigma_\deltaRandKey\le 1/2$.
	Then there exist an \INDCPA adversary $\Ad{B}_\IND$, a \OWCPA adversary $\Ad{B}_\OW$ and an \FngFPCPA adversary \Ad{C} against $\PKE$,  such that
	\begin{align}
		\Adv^{\INDCCAKEM}_{\KemExplicitMess}(\Ad{A})
		\le&\widetilde{\Adv}_{\PKE}
		+
		(\numberDecapsQueries+1) \left(2\Adv^{\FngFPCPA}_{\PKE}(\Ad{C})+\eps_{\deltaRandKey}\right)
		+\eps_{\gamma}\label{eq:QFFP3}
	\end{align}
	with
	\ifTightOnSpace
	$\widetilde{\Adv}_{\PKE}$ and $\eps_{\gamma}$ like in \cref{cor:main-result:withFFPCPA}.
	\else
	\begin{equation}
		\widetilde{\Adv}_{\PKE}
		=\begin{cases}
			4 \cdot \sqrt{ \left(d+\numberDecapsQueries\right)  \cdot \Adv^{\INDCPA}_{\PKE}(\Ad{B}_\IND)}
			+ \frac{8\left(q+\numberDecapsQueries\right)}{\sqrt{\left|\MSpace\right|}}&\text{ or}\\
			8\left(d+\numberDecapsQueries\right) \cdot \sqrt{ w \cdot \Adv^{\OW}_{\PKE}(\Ad{B}_\OW) }. &
		\end{cases}
	\end{equation}	
	\fi
	The additive error term $\eps_{\deltaRandKey}$ is given by
	\begin{equation}\label{eq:epsdelta-chebyshev}
		\eps_{\deltaRandKey} \le \deltaRandKey
		+\left(3+2\delta_{rk}\right) \sqrt Cq_{\RO{G}}\sigma_\deltaRandKey
		\enspace \ifTightOnSpace . \else , \fi
	\end{equation}
	\ifTightOnSpace \else
	and the additive error term $\eps_{\gamma}$ is given by
	\begin{equation*}
		\eps_{\gamma}=24%
		\numberDecQueries(q_{\RO G}+2\numberDecQueries)2^{-\gamma/2}+4\numberDecQueries \cdot2^{-\gamma}.
	\end{equation*}
	\fi
	
	Here, $\deltaRandKey, \sigma_\deltaRandKey$ and $\gamma$ are the worst-case random-key decryption error rate,
	the maximal decryption failure variance under random keys, and the ciphertext spreadness parameter, respectively.
	If the Gaussian tail bound 
	\begin{equation*}
		\max_m\Pr_{r\leftarrow \RSpace}\left[\Pr_{(\sk,\pk)}[\Decrypt(\sk, \Encrypt(\pk, m;r))\neq m]\ge t\right]\le \exp\left(-\beta(t-\deltaRandKey)^2\right)
	\end{equation*}
	holds for some parameter $\beta$, the dependency of $\epsilon_\deltaRandKey$ on $q_{\RO{G}}$ can be improved to
	\begin{equation}\label{eq:epsdelta-Gausstail}
		\eps_{\deltaRandKey}\le \deltaRandKey+2\eta_1\sqrt{\ln\left(\eta_2 q^2_{\RO{G}}\right)}%
	\end{equation}
	with $\eta_1=\beta ^{-1/2}$ and $\eta_2= 2C\sqrt\beta $.
	\ifTightOnSpace $\Ad{B}_\IND$'s, $\Ad{B}_\OW$'s and \Ad{C}'s running time \else The running time of the adversaries $\Ad{B}_\IND$, $\Ad{B}_\OW$ and \Ad{C} \fi are all bounded by
	\begin{equation*}
		\Time(A)+\Time(\SupOr, q_{\RO G}+q_{\RO H}+\numberDecapsQueries)+O(\numberDecapsQueries).
	\end{equation*}
\end{corollary}

We continue to prove the corollary of \cref{thm:opt-augQROM} which yields \cref{thm:QFFPNoKey-gauss}
\begin{corollary}\label{cor:optbound-gaussian-tail}
	Let $F$, $I$, and $C$ be as in \cref{thm:opt-augQROM}. Let furthermore $ \mathbb E[F(x,H(x))]\le\mu$ for some $\mu\in[0,1]$ and suppose in addition that we can set $G(t)=c \exp(-\beta (t-\mu)^2)$ with $\beta\ge e/(2C)$ .  Then, for an algorithm $\Ad A^\SupOr$ making at most $q\ge 1$ quantum queries to $\SRO$
	\begin{equation}
		\mathbb E_{x\leftarrow A^{\SupOr}}[F(x,\SRO)]\le \mu+2\beta^{-1/2}\sqrt{\ln(2C\sqrt\beta )+2\ln(q)}
	\end{equation}
\end{corollary}
\begin{proof}
	Here, we directly use \cref{lem:search-in-aug-QROM} for simplicity (a slightly tighter but less pretty bound can be obtained from \cref{thm:opt-augQROM}). For any $a\in[0,1]$, we have
	\begin{equation}
		\mathbb E_{x\leftarrow A^{\SupOr}}[F(x, \SRO(x))]\le a+\Pr_{x\leftarrow A^{\SupOr}}[F(x, \SRO(x))\ge a].
	\end{equation}
	Setting $a=\mu+\hat a$ and using the definition of $G$ as well as \cref{lem:search-in-aug-QROM} (in the same way as in the proof of \cref{thm:opt-augQROM}), we obtain
	\begin{equation}
		\Pr_{x\leftarrow A^{\SupOr}}[F(x, \SRO(x))\ge \mu+\hat a]\le Cq^2\exp(-\beta \hat a^2)
	\end{equation}
	Setting $\hat a=\sqrt{\ln(2Cq^2\sqrt{\beta})/\beta}$ and using $\ln(2Cq^2\sqrt{\beta})\ge 1$, we obtain
	\begin{align}
		\mathbb E_{x\leftarrow A^{\SupOr}}[F(x,\SRO(x))]\le &\mu+\beta^{-1/2}\left(1+\sqrt{\ln(2Cq^2\sqrt{\beta})}\right)\\
		\le &\mu+2\beta^{-1/2}\sqrt{\ln(2C\sqrt\beta )+2\ln(q)},
	\end{align}
	where $\ln$ is the natural logarithm.
	\qed\end{proof}

 \fi

\section{Tying everything together}\label{sec:final-result}

Combining the reductions from \cref{sec:QROM:CCA-to-CPA-KEM,sec:QROM:CPA-to-passive}, we obtain a first corollary that still relies on \FFPCPA of \PKEDerand .

\begin{corollary}[\PKEDerand \FFPCPA and \PKE pass. secure $\Rightarrow \FOExplicitMess\lbrack\PKE\rbrack$ $\INDCCA$] \label{cor:main-result:withFFPCPA}
	\ifTightOnSpace
		Let \PKE and \INDCCAKEM \Ad{A} against \KemExplicitMess be like in \cref{thm:QFFP1} (on page~\pageref{thm:QFFP1}).
	\else
		Let \PKE be a (randomized) \PKE scheme that is $\gamma$-spread,
		and let \Ad{A} be an \INDCCAKEM adversary (in the QROM) against $\KemExplicit \coloneqq \FOExplicitMess[\PKE, \RO{G},\RO H]$, issuing at most $q_\RO{G}$ many queries to its oracle \RO{G}, $q_\RO{H}$ many queries to its oracle \RO{H}, and at most \numberDecapsQueries many queries to its decapsulation oracle \oracleDecaps. Let $q=q_\RO{G}+q_\RO{H}$, and let $d$ and $w$ be the query depth and query width of the combined queries to $\RO G$ and $\RO H$. 
	\fi
	Then there exist an \INDCPA adversary $\Ad{B}_\IND$, a \OWCPA adversary $\Ad{B}_\OW$ and an \FFPCPA adversary \Ad{C} against \PKEDerand in the \augQROM{\Encrypt} such that
	\begin{align}
		&\Adv^{\INDCCAKEM}_{\KemExplicitMess}(\Ad{A})
		\le\widetilde{\Adv}_{\PKE} + (\numberDecapsQueries+1) \Adv^{\FFPCPA}_{\PKE}(\Ad{C}) +\eps_{\gamma}, \text{ with }\label{eq:QFFP3}\\
	&	\widetilde{\Adv}_{\PKE}
	=\begin{cases}
			4 \cdot \sqrt{ \left(d+\numberDecapsQueries\right)  \cdot \Adv^{\INDCPA}_{\PKE}(\Ad{B}_\IND)}
			+ \frac{8\left(q+\numberDecapsQueries\right)}{\sqrt{\left|\MSpace\right|}}&\text{ or}\\
			8\left(d+\numberDecapsQueries\right) \cdot \sqrt{ w \cdot \Adv^{\OW}_{\PKE}(\Ad{B}_\OW) }. &
		\end{cases}
\end{align}
	The additive error term is given by
	\ifTightOnSpace
	$\eps_{\gamma}=24%
		\numberDecQueries(q_{\RO G}+4\numberDecQueries)2^{-\gamma/2},$
        \else
	\begin{equation*}
		\eps_{\gamma}=24%
		\numberDecQueries(q_{\RO G}+4\numberDecQueries)2^{-\gamma/2}%
		\enspace .
	\end{equation*}
	\fi
\Ad C makes $q_{\RO G}+q_{\RO H}+\numberDecapsQueries$ queries to \SRO and $\numberDecapsQueries$ to \SE.
\ifTightOnSpace $\Ad{B}_\IND$'s, $\Ad{B}_\OW$'s and \Ad{C}'s running time \else The running time of the adversaries $\Ad{B}_\IND$, $\Ad{B}_\OW$ and \Ad{C} \fi are bounded as $\Time(\Ad{B}_{\IND/\OW})=\Time(A)+\Time(\SupOr, q_{\RO G}+q_{\RO H}+\numberDecapsQueries)+O(\numberDecapsQueries)$ and $\Time(\Ad C)=\Time(\Ad A)+O(\numberDecapsQueries)$.
\end{corollary}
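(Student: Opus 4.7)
The plan is to compose three previously established reductions; no new technical argument is needed, only careful bookkeeping of query counts, query depths/widths, and running times. First I would apply \cref{cor:QFFP3} directly to $\Ad{A}$. This yields an $\INDCPAKEM$ adversary $\tilde{\Ad{A}}$ against $\KemExplicit$ and an $\FFPCPA$ adversary $\Ad{C}$ against $\PKEDerand$, both in the $\augQROM{\Encrypt}$, such that
\[
\Adv^{\INDCCAKEM}_{\KemExplicit}(\Ad{A}) \leq \Adv^{\INDCPAKEM}_{\KemExplicit}(\tilde{\Ad{A}}) + (\numberDecapsQueries+1)\Adv^{\FFPCPA}_{\PKEDerand}(\Ad{C}) + \eps_\gamma,
\]
with $\eps_\gamma = 24 \numberDecapsQueries(q_{\RO G}+4\numberDecapsQueries) 2^{-\gamma/2}$ absorbing the spreadness-related losses from \cref{thm:QFFP1,thm:QFFP2}. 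The query complexity of $\tilde{\Ad{A}}$ (namely $q+\numberDecapsQueries$ $\SRO$-queries with combined depth $d+\numberDecapsQueries$ and width $w$, plus $\numberDecapsQueries$ queries to $\SE$) and of $\Ad{C}$, as well as the runtime bound $\Time(\tilde{\Ad{A}}) = \Time(\Ad{A}) + O(\numberDecapsQueries)$, are inherited verbatim from the statement of \cref{cor:QFFP3}.

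Next I would bound $\Adv^{\INDCPAKEM}_{\KemExplicit}(\tilde{\Ad{A}})$ by applying \emph{either} \cref{thm:INDPKEToINDCPAKEM} to extract an $\INDCPA$ adversary $\Ad{B}_\IND$ against $\PKE$, \emph{or} \cref{thm:OWPKEToINDCPAKEM} to extract an $\OWCPA$ adversary $\Ad{B}_\OW$. A prerequisite of both theorems is that $\tilde{\Ad{A}}$ never queries $\SE$ on its challenge ciphertext $c^*$. This is satisfied by construction: the only $\SE$-queries issued inside $\tilde{\Ad{A}}$ originate from the simulated decapsulation oracle $\oracleDecapsSim$, which (like $\oracleDecaps$ in the $\INDCCAKEM$ game) is only invoked on ciphertexts $c \neq c^*$. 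Substituting the two bounds yields the two cases of $\widetilde{\Adv}_{\PKE}$; the combined query parameters $d+\numberDecapsQueries$, $w$, and $q+\numberDecapsQueries$ that appear in the final bound are simply those of $\tilde{\Ad{A}}$.

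The main thing to verify — and the only step that requires any care — is the runtime and memory accounting for $\Ad{B}_\IND$ and $\Ad{B}_\OW$. These reductions are stated relative to an adversary in the $\augQROM{\Encrypt}$, so they must themselves simulate $\SupOr$ to $\tilde{\Ad{A}}$, incurring the overhead $\Time(\SupOr, q_{\RO G}+q_{\RO H}+\numberDecapsQueries, \numberDecapsQueries)$ bounded in \cref{lem:extractable-sim-properties}. Adding the $O(\numberDecapsQueries)$ overhead of simulating the $\oracleDecapsSim$ decapsulation calls inside $\tilde{\Ad{A}}$ gives precisely the claimed combined runtime. The bound for $\Ad{C}$ is similarly straightforward since $\Ad{C}$ is still in the $\augQROM{\Encrypt}$ and does not itself have to simulate $\SupOr$. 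I do not foresee any genuine obstacle: the corollary is essentially a consolidation lemma that chains \cref{cor:QFFP3} with either \cref{thm:INDPKEToINDCPAKEM} or \cref{thm:OWPKEToINDCPAKEM}.
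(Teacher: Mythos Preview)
Your proposal is correct and matches the paper's approach exactly: the paper presents this corollary as an immediate consequence of combining \cref{cor:QFFP3} with \cref{thm:INDPKEToINDCPAKEM} or \cref{thm:OWPKEToINDCPAKEM}, without even spelling out a proof. Your write-up is in fact more explicit than the paper, in particular in checking the prerequisite that $\tilde{\Ad{A}}$ never queries $\SE$ on $c^*$ (which the paper only remarks on right after the statement of \cref{thm:INDPKEToINDCPAKEM}) and in tracking the query and runtime parameters through the composition.
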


Combining \cref{cor:main-result:withFFPCPA} with \cref{thm:PKEDerand:FFPCPA} from \cref{sec:PKEDerand:FFPCPA} and \cref{thm:PKEDerand:FFPNoKey} from  \cref{sec:PKEDerand:FFPNoKey}, we now obtain our main result as a corollary.

\begin{corollary}[\PKE \FngFPCPA and pass. secure $\Rightarrow \FOExplicitMess\lbrack\PKE\rbrack$ $\INDCCA$] \label{cor:main-result}
	\ifTightOnSpace
		Let \PKE and \Ad{A} be like in \cref{thm:QFFP1}, and let \PKE furthermore have worst-case random-key decryption error rate \deltaRandKey, decryption error rate variance $\sigma_{\deltaRandKey}$ and decryption error tail envelope $\tau$.
	\else
		Let \PKE be a (randomized) \PKE scheme that is $\gamma$-spread and with worst-case random-key decryption error rate \deltaRandKey, decryption error rate variance $\sigma_{\deltaRandKey}$ and decryption error tail envelope $\tau$.
		Let \Ad{A} be an \INDCCAKEM adversary (in the QROM) against $\KemExplicitMess \coloneqq \FOExplicitMess[\PKE, \RO{G},\RO H]$, issuing at most $q_\RO{G}$ many queries to its oracle \RO{G}, $q_\RO{H}$ many queries to its oracle \RO{H}, and at most \numberDecapsQueries many queries to its decapsulation oracle \oracleDecaps. Let $q=q_\RO{G}+q_\RO{H}$, and let $d$ and $w$ be the query depth and query width of the combined queries to $\RO G$ and $\RO H$.
	\fi
	Set $C=304$ and assume $\sqrt Cq_{\RO{G}}\sigma_\deltaRandKey\le 1/2$.
	\ifTightOnSpace
		Then there exists an \FngFPCPA adversary \Ad{C} against \PKE such that
	\else
		Then there exist an \INDCPA adversary $\Ad{B}_\IND$, a \OWCPA adversary $\Ad{B}_\OW$ and an \FngFPCPA adversary \Ad{C} against \PKE such that
	\fi
	\begin{align}
		\Adv^{\INDCCAKEM}_{\KemExplicitMess}(\Ad{A})
			\le&\widetilde{\Adv}_{\PKE}
		+
		(\numberDecapsQueries+1) \left(2\Adv^{\FngFPCPA}_{\PKE}(\Ad{C})+\eps_{\deltaRandKey}\right)
		+\eps_{\gamma}\label{eq:QFFP3}
	\end{align}
with
\ifTightOnSpace
	$\widetilde{\Adv}_{\PKE}$ and $\eps_{\gamma}$ like in \cref{cor:main-result:withFFPCPA}.
\else
	\begin{equation}
		\widetilde{\Adv}_{\PKE}
		=\begin{cases}
			4 \cdot \sqrt{ \left(d+\numberDecapsQueries\right)  \cdot \Adv^{\INDCPA}_{\PKE}(\Ad{B}_\IND)}
			+ \frac{8\left(q+\numberDecapsQueries\right)}{\sqrt{\left|\MSpace\right|}}&\text{ and}\\
			8\left(d+\numberDecapsQueries\right) \cdot \sqrt{ w \cdot \Adv^{\OW}_{\PKE}(\Ad{B}_\OW) }. &
		\end{cases}
	\end{equation}	
\fi
The additive error term $\eps_{\deltaRandKey}$ is given by
\ifTightOnSpace
$\eps_{\deltaRandKey} \le \deltaRandKey
				+\left(3+2\delta_{rk}\right) \sqrt Cq_{\RO{G}}\sigma_\deltaRandKey .$
\else
\begin{equation}\label{eq:epsdelta-chebyshev}
	\eps_{\deltaRandKey} \le \deltaRandKey
				+\left(3+2\deltaRandKey\right) \sqrt Cq_{\RO{G}}\sigma_\deltaRandKey
	\enspace ,
\end{equation}
	and the additive error term $\eps_{\gamma}$ is given by
	\begin{equation*}
		\eps_{\gamma}=24%
		\numberDecQueries(q_{\RO G}+2\numberDecQueries)2^{-\gamma/2}+4\numberDecQueries \cdot2^{-\gamma}.
	\end{equation*}
\fi
\ifTightOnSpace \Ad{C}'s running time \else The running time of the adversaries $\Ad{B}_\IND$, $\Ad{B}_\OW$ and \Ad{C} \fi is bounded by
\ifTightOnSpace
$\Time(A)+\Time(\SupOr, q_{\RO G}+q_{\RO H}+\numberDecapsQueries)+O(\numberDecapsQueries)$.
\else 
\begin{equation*}
	\Time(A)+\Time(\SupOr, q_{\RO G}+q_{\RO H}+\numberDecapsQueries)+O(\numberDecapsQueries).
\end{equation*}
\fi
\end{corollary}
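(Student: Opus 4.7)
The plan is to prove \cref{cor:main-result} by chaining together the three main results whose arrows into the bottom-right corner of \cref{fig:Intro} it realises. Concretely, apply \cref{cor:main-result:withFFPCPA} to the \INDCCAKEM adversary \Ad{A} first; this yields the desired $\widetilde{\Adv}_{\PKE}$ term (either the \INDCPA-based or the \OWCPA-based variant, using the corresponding adversaries $\Ad B_{\IND}$ or $\Ad B_{\OW}$ produced there), the additive error term $\eps_\gamma$, and a residual \FFPCPA adversary \Ad{D} against \PKEDerand in the \augQROM{\Encrypt}, whose advantage enters with a prefactor $\numberDecapsQueries+1$. By the query statement of \cref{cor:main-result:withFFPCPA}, \Ad{D} makes at most $q_{\RO G}+q_{\RO H}+\numberDecapsQueries$ queries to \SRO and \numberDecapsQueries many to \SE, and has essentially the same runtime as \Ad{A}.

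Next, I would invoke \cref{thm:PKEDerand:FFPCPA} on \Ad{D}, obtaining an \FngFPCPA adversary \Ad{C} against \PKE and an \FFPNoKey adversary \Ad{C}' against \PKEDerand in the \augQROM{\Encrypt} such that
\[
\Adv^{\FFPCPA}_{\PKEDerand}(\Ad D)\le 2\,\Adv^{\FngFPCPA}_{\PKE}(\Ad C)+\Adv^{\FFPNoKey}_{\PKEDerand}(\Ad C'),
\]
where \Ad{C}' inherits \Ad{D}'s query profile (in particular, it makes at most $q_{\RO G}+q_{\RO H}+\numberDecapsQueries$ \SRO-queries). Finally, apply \cref{thm:PKEDerand:FFPNoKey} to \Ad{C}'. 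Plugging the resulting bound on $\Adv^{\FFPNoKey}_{\PKEDerand}(\Ad C')$ into the previous inequality, and then into the bound from \cref{cor:main-result:withFFPCPA}, gives \cref{eq:QFFP3}. The runtime and memory claims follow from composing the three overhead terms, each of which is bounded as stated in the corresponding theorem.

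The only genuine (though minor) calculation is to verify that the $q$-dependence claimed for $\eps_{\deltaRandKey}$ in \cref{eq:epsdelta-chebyshev} really is what \cref{thm:PKEDerand:FFPNoKey} produces when its number of oracle queries is set to that of \Ad{C}'. Since the sole non-trivial random oracle relevant to the failure predicate is $\RO G$ (the key-derivation hash $\RO H$ does not enter the failure relation), the effective $q$ in \cref{thm:PKEDerand:FFPNoKey} can be taken to be $q_{\RO G}$ (up to the additive contribution from the final evaluation $\SRO(m)$, which is absorbed into $q_{\RO G}$ by the ``$+1$'' convention). Using the assumption $\sqrt C\,q_{\RO G}\sigma_{\deltaRandKey}\le 1/2$ to simplify the logarithmic factor $-\log(\sqrt C q_{\RO G}\sigma_{\deltaRandKey})$, and bounding $2Cq_{\RO G}^2\sigma_{\deltaRandKey}^2\deltaRandKey\cdot(-\log\sqrt C q_{\RO G}\sigma_{\deltaRandKey})\le 2\deltaRandKey\cdot\sqrt C q_{\RO G}\sigma_{\deltaRandKey}$ (using $x(-\log x)\le 1$ type estimates together with the assumption), yields exactly the stated form $\deltaRandKey+(3+2\deltaRandKey)\sqrt C q_{\RO G}\sigma_{\deltaRandKey}$.

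The main obstacle in this proof is not any hard new argument but careful bookkeeping: one has to confirm that the query counts, depths, widths, and memory footprints announced by \cref{cor:main-result:withFFPCPA,thm:PKEDerand:FFPCPA,thm:PKEDerand:FFPNoKey} line up consistently, and that the simplification of the Chebyshev-type bound from \cref{thm:PKEDerand:FFPNoKey} under $\sqrt C q_{\RO G}\sigma_{\deltaRandKey}\le 1/2$ indeed absorbs the $q_{\RO G}^2\sigma_{\deltaRandKey}^2\log$ term into the linear term advertised in \cref{eq:epsdelta-chebyshev}. Nothing else is at stake: the structural content of \cref{cor:main-result} is entirely contained in the three cited statements.\qed
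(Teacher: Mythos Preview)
Your proposal is correct and follows essentially the same route as the paper: combine \cref{cor:main-result:withFFPCPA} with \cref{thm:PKEDerand:FFPCPA} and \cref{thm:PKEDerand:FFPNoKey}, then simplify the Chebyshev-type term using the assumption $\sqrt C\,q_{\RO G}\sigma_{\deltaRandKey}\le 1/2$. Your treatment is in fact slightly more explicit than the paper's own proof, which merely cites the constituent theorems and invokes the elementary inequality (written there as ``$x^2/\log(x)\le x$ for $x\le 1/2$'', i.e., $x^2(-\log x)\le x$) without spelling out the bookkeeping on query counts or the observation that only $\RO G$-queries matter for the failure predicate.
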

In \ref{sec:envelope} we give an alternative corollary with an $\eps_{\deltaRandKey}$ that only grows logarithmically with the number of RO queries, assuming a \emph{Gaussian-shaped tail bound} for the decryption error probability distribution.
\begin{proof}
	The corollary follows by combining \cref{cor:QFFP3,thm:INDPKEToINDCPAKEM,thm:OWPKEToINDCPAKEM,thm:QFngFPCPA}. Exploiting the very mild condition $\sqrt Cq_{\RO{G}}\sigma_\deltaRandKey\le 1/2$\footnote{Without it the bound involving $\sigma_{\deltaRandKey}$ from \cref{thm:QFngFPCPA} is almost trivial} we have used the inequality $x^2/\log(x)\le x$ for $x\le 1/2$  for $x=\sqrt Cq_{\RO{G}}\sigma_\deltaRandKey$ to simplify the error term $\eps_{\deltaRandKey}$ from \cref{thm:QFngFPCPA}.
\qed\end{proof}
We remark that the two alternative bounds in \cref{eq:epsdelta-chebyshev,eq:epsdelta-Gausstail} are just examples. If, e.g., an exponential tail bound is available instead of a Gaussian one, the techniques from \cref{sec:QROM:FindLargeValues} can be used to prove a similar, intermediate bound.
The above result has two main advantages over previous theorems for the FO transformation:
\ifTightOnSpace        
i) The additive loss $(\numberDecQueries+1) (\Adv^{\FngFPCPA}_{\PKE}(\Ad{C})+\eps_\deltaRandKey)$, with the two alternative bounds for $\eps_\deltaRandKey$ given in \cref{eq:epsdelta-chebyshev,eq:epsdelta-Gausstail}, can be much smaller than the additive loss of roughly $q_{\RO{G}}^2\deltaWorstCase$ that is present in all previous bounds for the FO transformation. In particular, instead of the \emph{quadratic} dependence on the number of hash queries $q_{\RO{G}}$, the asymptotic dependence is at most \emph{linear}. If an appropriate tail bound can be proven, it is even logarithmic. ii) It holds for the explicit rejection variant of the transformation, while the bounds are competitive with previous ones in the literature that were limited to the implicit rejection variant. 
\else
\begin{itemize}
	\item The additive loss $(\numberDecQueries+1) \left(\Adv^{\FngFPCPA}_{\PKE}(\Ad{C})+\eps_\deltaRandKey\right)$, with the two alternative bounds for $\eps_\deltaRandKey$ given in \cref{eq:epsdelta-chebyshev,eq:epsdelta-Gausstail}, can be much smaller than the additive loss of roughly $q_{\RO{G}}^2\deltaWorstCase$ that is present in all previous bounds for the FO transformation. In particular, instead of the \emph{quadratic} dependence on the number of hash queries $q_{\RO{G}}$, the asymptotic dependence is at most \emph{linear}. %
If an appropriate tail bound can be proven, it is even logarithmic.
	\item It holds for the explicit rejection variant of the transformation, while the bounds are competitive with previous ones in the literature that were limited to the implicit rejection variant.
\end{itemize}
\fi
 
\ifTightOnSpace \else \ifCameraReady \else  %
\section{$\gamma$-Spreadness of selected NIST proposals} \label{sec:spreadness}

\cref{thm:QFFP1} provides a tight reduction of \INDCCAKEM to \INDCPAKEM and \FFPCCA, albeit at the cost of an additive error depending on the spreadness factor $\gamma$ of the underlying PKE.
In this section, we will analyze the spreadness of some of the alternates candidates of the NIST post-quantum competition.
Since this work is considered with schemes that exhibit decryption failure and get derandomized to a scheme \PKEDerand,
we do not consider ClassicMcEliece, NTRU, NTRU prime and SIKE (since they are perfectly correct) 
and BIKE (as BIKE encrypts deterministically without incorporating \PKEDerand).
We chose our two examples, \HQCPKE and \FrodoPKE, because computing $\gamma$  for these two examples requires little additional technical overhead. 
Computing $\gamma$ for other submissions to the NIST PQC standardisation process, like, e.g., Kyber or Saber, is out of the scope of this work.

If \numberDecQueries is upper bounded by $2^{64}$ as in NIST’s CFP, we can give a simpler upper bound for the term showing up in \cref{thm:QFFP1}  by computing
\begin{align*}
	\numberDecQueries \cdot (q_{\RO G} + 2\numberDecQueries) \cdot 2^{-\gamma/2}
	\leq 2^{64} \cdot (q_{\RO G} + 2^{65}) \cdot 2^{-\gamma/2}
	\leq q_{\RO G} \cdot 2^{65-\gamma/2}
	\enspace .
\end{align*}

The following lemma makes the bound above explicit for \FrodoPKE.
\begin{restatable}[$\gamma$-Spreadness of \FrodoPKE]{lemma}{SpreadnessFrodo} \label{lem:Spreadness:Frodo}
	\FrodoPKE-\instance is $\gamma$-spread for
	\begin{align*}
		\gamma = \begin{cases}
			10752 & \instance = 1344\\
			15616 & \instance = 976 \\
			10240 & \instance = 640
		\end{cases}
	\enspace ,
	\end{align*}
	hence
	\begin{align*}
		q_{\RO G} \cdot 2^{65-\gamma/2}
			\leq  \begin{cases}
			q_{\RO G} \cdot 2^{- 5311} & \instance = 1344\\
			q_{\RO G} \cdot 2^{- 7743} & \instance = 976 \\
			q_{\RO G} \cdot 2^{- 5055} & \instance = 640
		\end{cases}
		\enspace .
	\end{align*}
\end{restatable}

\begin{proof}
	Let $(\pk = (\seedA, B), \sk) \in \supp(\FrodoKG)$, let $m \in \FrodoMSpace$, and let $c = (B', V') \in \FrodoCSpace$.
	According to the definition of \FrodoEnc, we have that
	\begin{align*}
		\Pr_\FrodoEnc[& \FrodoEnc(\pk, m) = (B', V')] \\
			 &= \Pr_{S', E' \leftarrow \chi^{\overline{m} \times n}, E'' \leftarrow \chi^{\overline{m} \times \overline{n}}}
					[S'A + E' = B' \ \wedge \ S'B + E'' + \FrodoEncode(m) =  V']
			\\ &
			\leq \Pr_{S', E' \leftarrow \chi^{\overline{m} \times n}} [S'A + E' = B']
			\\ &
			= \sum_{s' \in \supp(\chi^{\overline{m} \times n})}
						\Pr_{S', E' \leftarrow \chi^{\overline{m} \times n}} [S'A + E' = B' \wedge S' = s'] 
			\\ &
			= \sum_{s' \in \supp(\chi^{\overline{m} \times n})}
				\Pr_{E' \leftarrow \chi^{\overline{m} \times n}} [s'A + E' = B' ] 
					\cdot \Pr_{S' \leftarrow \chi^{\overline{m} \times n}} [S' = s'] 
			\\ &
			\leq \sum_{s' \in \supp(\chi^{\overline{m} \times n})}
				\Pr_{E' \leftarrow \chi^{\overline{m} \times n}} [E' = 0 ] 
				\cdot \Pr_{S' \leftarrow \chi^{\overline{m} \times n}} [S' = s'] 
			\\ &
			= \Pr_{E' \leftarrow \chi^{\overline{m} \times n}} [E' = 0 ] 
			\leq \left(\Pr_{x \leftarrow \chi} [x = 0 ]\right)^{\overline{m} \times n} 
			\enspace ,
	\end{align*}
	where we applied the law of total probability and used the fact that $\chi$ is a symmetric distribution centered at zero.
	
	We will now plug in the parameters of \FrodoPKE-\instance: For all instantiations of \instance as specified in \cite{FrodoSpec},
	$\overline{m} = 8$ and $n = i$.
	According to table 3 of \cite{FrodoSpec}, we furthermore have that
	\begin{align*}
		\Pr_{x \leftarrow \chi} [x = 0 ]
			= 2^{-16} \cdot  
				\begin{cases}
					18286  & \instance = 1344\\
					11278  & \instance = 976 \\
					9288  & \instance = 640
				\end{cases}
			< \begin{cases}
				2^{-1} & \instance = 1344\\
				2^{-2}  & \instance \in \lbrace 976, 640 \rbrace 
			\end{cases}
		\enspace .
	\end{align*} 
	
	Hence we obtain 		
	
	\begin{align*}
		\max_{c \in \FrodoCSpace} \Pr_\FrodoEnc[& \FrodoEnc(\pk, m) = c] %
		\leq  
		\begin{cases}
			2^{-8 \cdot 1344} & \instance = 1344\\
			2^{-16 \cdot \instance}  & \instance \in \lbrace 976, 640 \rbrace 
		\end{cases}
		\enspace .
	\end{align*} 
\end{proof}

The following lemma makes the bound above explicit for \HQCPKE.

\begin{restatable}[$\gamma$-Spreadness of \HQCPKE]{lemma}{SpreadnessHQC} \label{lem:Spreadness:HQC}
	\HQCPKE-\instance is $\gamma$-spread for
	\begin{align*}
		\gamma = &2 \cdot 
		\begin{cases}
			\log_2 {57600 \choose 149} > 1490   & \instance = 256 \\
			\log_2 {35840 \choose 114} > 1105  & \instance = 192 \\
			\log_2 {17664 \choose 75} > 694   & \instance = 128
		\end{cases}
		\enspace ,
	\end{align*}
	
	hence
	\begin{align*}
		q_{\RO G} \cdot 2^{65-\gamma/2}
		\leq  
			\begin{cases}
                 q_{\RO G} \cdot  2^{-1425}   & \instance = 256 \\
				 q_{\RO G} \cdot 2^{-1040}   & \instance = 192 \\
				 q_{\RO G} \cdot 2^{-629}   & \instance = 128
			\end{cases}
		\enspace .
	\end{align*}

\end{restatable}

\begin{proof}
	Let $(\pk = (h, s), \sk) \in \supp(\HQCKG)$, let $m \in \HQCMSpace$, and let $c = (u, v) \in \HQCCSpace$.
	According to the definition of \HQCEnc, we have that
	\begin{align*}
		\Pr_\HQCEnc[& \HQCEnc(\pk, m) = (u, v)] \\
		&= \Pr_{R_1, R_2 \leftarrow \mathcal{U}(S_{w_r}^{n_1 \cdot n_2}), E \leftarrow \mathcal{U}(S_{w_e}^{n_1 \cdot n_2})}
		[R_1 + h \cdot R_2 = u \ \wedge \ mG + s \cdot R_2 + E =  v] \enspace ,
	\end{align*}
	where $S_w^{n_1 \cdot n_2}$ denotes the subset of elements of hamming weight $w$ in $\bits^{n_1 \cdot n_2}$.
	
	By the law of total probability,
	\begin{align*}
		& \Pr_{R_1, R_2 \leftarrow \mathcal{U}(S_{w_r}^{n_1 \cdot n_2}), E \leftarrow \mathcal{U}(S_{w_e}^{n_1 \cdot n_2})}
		 [R_1 + h \cdot R_2 = u \ \wedge \ mG + s \cdot R_2 + E =  v] 
		\\ &
		= \sum_{r_2 \in S_{w_r}^{n_1 \cdot n_2}}
			\Pr_{R_1 \leftarrow \mathcal{U}(S_{w_r}^{n_1 \cdot n_2}), E \leftarrow \mathcal{U}(S_{w_e}^{n_1 \cdot n_2})}
					[R_1 = u - h \cdot r_2 \ \wedge \ E =  v - (mG + s \cdot r_2 ) ]
		\\ & \quad \quad \quad \quad \quad \quad 
			\cdot \Pr_{R_2 \leftarrow \mathcal{U}(S_{w_r}^{n_1 \cdot n_2})} [R_2 = r_2] 
		\\ &
		\leq \sum_{r_2 \in S_{w_r}^{n_1 \cdot n_2}} \frac{1}{ {n_1 \cdot n_2\choose w_r}} \cdot \frac{1}{ {n_1 \cdot n_2\choose w_e}} \cdot \Pr_{R_2 \leftarrow \mathcal{U}(S_{w_r}^{n_1 \cdot n_2})} [R_2 = r_2] 
		= \frac{1}{ {n_1 \cdot n_2 \choose w_r}} \cdot \frac{1}{ {n_1 \cdot n_2\choose w_e}} 
		\enspace ,
	\end{align*}
	where we used the fact that $|S_w^{N}| = {N\choose w}$ in the last line.
	
	We will now plug in the parameters of \HQCPKE-\instance: For the instantiations of \instance as specified in \cite[Section 2.7]{HQCSpec},
	we have that
	\begin{align*}
		w_e = w_r = 
		\begin{cases}
			149  & \instance = 256\\
			114  & \instance = 192 \\
			75  & \instance = 128
		\end{cases}
		\enspace ,
	\end{align*} 
	and that
	\begin{align*}
		n_1 \cdot n_2 =  
		\begin{cases}
			90 \cdot 640  & \instance = 256\\
			56 \cdot 640 & \instance = 192 \\
			46 \cdot 384 & \instance = 128
		\end{cases}
		\ = \
		\begin{cases}
			57600  & \instance = 256\\
			35840  & \instance = 192 \\
			17664  & \instance = 128
		\end{cases}
		\enspace .
	\end{align*} 
	
	Hence we obtain 		
	\begin{align*}
		\max_{c \in \HQCCSpace} \Pr_\HQCEnc[& \HQCEnc(\pk, m) = c] 
		\leq
		\begin{cases}
			(\frac{1}{ {57600 \choose 149}})^2   & \instance = 256 \\
			(\frac{1}{ {35840 \choose 114}})^2   & \instance = 192 \\
			(\frac{1}{ {17664 \choose 75}})^2   & \instance = 128
		\end{cases}
	 \enspace .
	\end{align*}
	
\end{proof}
  \fi\fi

\bibliographystyle{alpha}

\ifTightOnSpace
\newcommand{\etalchar}[1]{$^{#1}$}

\else
\newcommand{\etalchar}[1]{$^{#1}$}

\fi

\ifCameraReady \else								%
	\ifSupplementaryMaterial
		\newpage
		\bigskip\noindent{\Huge\textbf{Supplementary material}}\vspace{1cm}
	\fi
	
	\appendix

	\ifTightOnSpace%
\ifTightOnSpace
	\section{The Fujisaki-Okamoto transformation with explicit rejection} \label{sec:prels:FO}
\else
	\subsection{The Fujisaki-Okamoto transformation with explicit rejection} \label{sec:prels:FO}
\fi

This section recalls the definition of \FOExplicitMess.
To a public-key encryption scheme $\PKE = (\KG, \Encrypt, \Decrypt)$
with message space $\MSpace$, randomness space $\RSpace$, and 
hash functions $\RO{G}:  \MSpace \rightarrow \RSpace$
and
$\RO{H}: \{0,1\}^* \rightarrow \{0,1\}^n$,
we associate 
\begin{eqnarray*}
	\KemExplicitMess & := & \FOExplicitMess[\PKE,\RO{G},\RO{H}] :=  (\KG, \Encaps, \Decaps)\enspace .
\end{eqnarray*}
Its constituting algorithms are given in \cref{fig:Def-FOExplicit}.
\FOExplicitMess uses the underlying scheme \PKE in a derandomized way by using $\RO{G}(m)$ as the encryption coins (see line \ref{line:Def-FO:Derandomise})
and checks during decapsulation whether the decrypted plaintext does re-encrypt to the ciphertext (see line \ref{line:Def-FO:Reencrypt}).
This building block of \FOExplicitMess, i.e., the derandomisation of \PKE and performing a reencryption check, is incorporated in the following transformation \Tone:
\begin{eqnarray*}
	\PKEDerand & := & \Tone[\PKE,\RO{G}] :=  (\KG, \EncryptDerand, \DecryptDerand) \enspace ,
\end{eqnarray*}
with its constituting algorithm given in \cref{fig:Def-Derandomized-PKE}.

\begin{figure}[b]\begin{center} 
		
	\nicoresetlinenr
	
	\fbox{\small
		
		\begin{minipage}[t]{3.7cm}	
			\underline{$\Encaps(\pk)$}
			\begin{nicodemus}
				\item $m \uni \MSpace$
				\item $c := \Encrypt(\pk,m; \RO{G}(m))$ \label{line:Def-FO:Derandomise}
				\item $K:=\RO{H}(m)$ \label{line:KeyDerivationModeEncaps}
				\item \pcreturn $(K, c)$
			\end{nicodemus}
		\end{minipage}
		
		\quad 
		
		\begin{minipage}[t]{6.1cm}	
			
			\underline{$\Decaps(\sk,c)$}
			\begin{nicodemus}
				\item $m' := \Decrypt(\sk,c)$
				\item \pcif $m' = \bot$ \pcor $c \neq \Encrypt(\pk, m'; \RO{G}(m'))$\label{line:Def-FO:Reencrypt}
				\item \quad \pcreturn $\bot$
				
				\item \pcelse
				\item \quad \pcreturn $K:=\RO{H}(m')$  
			\end{nicodemus}
			
		\end{minipage}
	}	
\end{center}
	\caption{Key encapsulation mechanism $\KemExplicitMess = (\KemGen,\Encaps, \Decaps)$,
		obtained from $\PKE= (\KG, \Encrypt, \Decrypt)$ by setting  $\KemExplicitMess \coloneqq \FOExplicitMess[\PKE, \RO{G}, \RO{H}]$.}
	\label{fig:Def-FOExplicit}
\end{figure}

\begin{figure}[tb]\begin{center}
		
	\nicoresetlinenr
		
	\fbox{\small
			
		\begin{minipage}[t]{3.7cm}	
			\underline{$\Encrypt^{\RO G}(\pk)$} 
			\begin{nicodemus}
				\item $m \uni \MSpace$
				\item $c := \Encrypt(\pk,m; \RO{G}(m))$
				\item \pcreturn $c$
			\end{nicodemus}
		\end{minipage}
		
		\;
		
		\begin{minipage}[t]{5.7cm}	
			
			\underline{$\Decrypt^{\RO G}(\sk,c)$}
			\begin{nicodemus}
				\item $m' := \Decrypt(\sk,c)$
				\item \pcif $m' = \bot$ \pcor $c \neq \Encrypt(\pk, m'; \RO{G}(m'))$
				\item \quad \pcreturn $\bot$ %
				\item \pcelse
				\item \quad \pcreturn $m'$  %
			\end{nicodemus}
		\end{minipage}
	
	}	
		
	\caption{Derandomized \PKE scheme $\PKE^\RO{G}=(\KG, \Encrypt^{\RO G}, \Decrypt^\RO{G})$,
		obtained from \PKE $=(\KG,\Encrypt,\Decrypt)$ by encrypting a message $m$ with randomness $\RO G(m)$ for a random oracle $\RO G$, and incorporating a re-encryption check during $\Decrypt^{\RO G}$.}
	\label{fig:Def-Derandomized-PKE}
	\end{center}
\end{figure}
\fi

\section{Overview: Relations between FO-like transformations} \label{sec:appendix:FO}
There exists a plethora of FO-like transformations, and one might wonder if a result for transformation variant X also is applicable to transformation variant Y. In order to systematize existing knowledge and to simplify such considerations,
this section recaps known relations between the security properties of FO variants on a high level.

We will now revisit other well-known variants for the FO transformation, introduced by \cite{TCC:HofHovKil17} as \FOImplicitMess, \FOImplicit and \FOExplicitBoth.
In all variants, the ${}^\explicitReject$ and ${}^\implicitReject$ stands for the way in which the KEMs reject ciphertexts that are not well-formed, i.e., ciphertexts that either fail to decrypt or whose decrypted plaintexts fail to re-encrypt:
\FOExplicitMess and \FOExplicitBoth will return a dedicated failure symbol $\explicitReject$,
\FOImplicitMess and \FOImplicitBoth will instead use an additional hash function to compute from the ciphertext a deterministic, but pseudorandom value.
Since this pseudorandom value does not communicate explicitly that the ciphertext was rejected, \FOImplicitMess and \FOImplicitBoth are often called \emph{FO with implicit rejection} (or a 'silent' KEM), and \FOExplicitMess and \FOExplicitBoth are called \emph{FO with explicit rejection}.
In both \FOExplicitMess and \FOImplicitMess, the ${}_{m}$ represents how the KEM computes its keys:
the key is computed by simply feeding the message $m$ into the key derivation oracle.
In \FOExplicitBoth and \FOImplicitBoth, the key instead is computed by including both message $m$ and ciphertext $c$ into the key derivation oracle's input.
\FOExplicitBoth and \FOImplicitBoth are hence also called \emph{ciphertext-contributing} variants.

At the time \cite{TCC:HofHovKil17} was written, all transformations above only had proofs in the classical ROM.
In order to facilitate a proof that also holds against \underline{q}uantum attackers, \cite{TCC:HofHovKil17} further modified transformations \FOExplicitMess and \FOImplicitMess
and denoted these modifications by $\mathsf{\underline{Q}FO}^\explicitReject_\hashOnlyMessage$ and $\mathsf{\underline{Q}FO}^\implicitReject_\hashOnlyMessage$, respectively.
The only difference between \FOExplicitMess/\FOImplicitMess and their $\mathsf{Q}$ counterpart is that during encapsulation, the ciphertext is concatenated with the hash value of $m$ (using a length-preserving hash function), which is then used during decapsulation to perform an additional validity check.
This additional hash value is often called \emph{key confirmation tag}, and \FOquantumExplicit and \FOquantumImplicit are often called \emph{FO with key confirmation}.
Since appending a length-preserving hash value induces communicative overhead, and since the original proofs were highly non-tight, a lot of effort has been invested into improving on both aspects.

\begin{figure}[bt]
\centering
\fbox{
\begin{tikzpicture}
        \node (ind-cca-U-bot) at (0,1.5) {\begin{tabular}{c}$\INDCCA$ \\ KEM $U^{\bot}$\end{tabular}};
        \node (ind-cca-U-notbot) at (0,0) {\begin{tabular}{c}$\INDCCA$ \\ KEM $U^{\not\bot}$\end{tabular}};
        \node (ind-cca-U-m-bot) at (4.5,1.5) {\begin{tabular}{c}$\INDCCA$ \\ KEM $U_m^{\bot}$ \end{tabular}};
        \node (ind-cca-U-m-notbot) at (4.5,0) {\begin{tabular}{c}$\INDCCA$ \\ KEM $U_m^{\not\bot}$\end{tabular}};
        \node (ind-cca-U-m-bot-and-C) at (9,1.5) {\begin{tabular}{c}$\INDCCA$\\ KEM $U_m^{\bot}$+keyconf\end{tabular}};

        \draw[<->] (ind-cca-U-bot) to node[above] {\cite[Thm. 5]{TCC:BHHHP19}} (ind-cca-U-m-bot);
        \draw[<->] (ind-cca-U-notbot) to node[above] {\cite[Thm. 5]{TCC:BHHHP19}} (ind-cca-U-m-notbot);
        \draw[->] (ind-cca-U-m-bot) to node[right] {\cite[Thm. 3]{TCC:BHHHP19}} (ind-cca-U-m-notbot);
        \draw[right hook->] (5.5,0) to node[right] {\cite[Thm. 4]{TCC:BHHHP19}} (9,1);
\end{tikzpicture}
}
\caption{Relations between the security of different types of $U$-constructions as shown in \cite{TCC:BHHHP19}. The hooked arrow indicates a theorem with an $\epsilon$-injectivity constraint on the underlying deterministic scheme. Figure taken from~\cite{TCC:BHHHP19} with updated references.}
	\label{fig:summary_implications}
\end{figure}
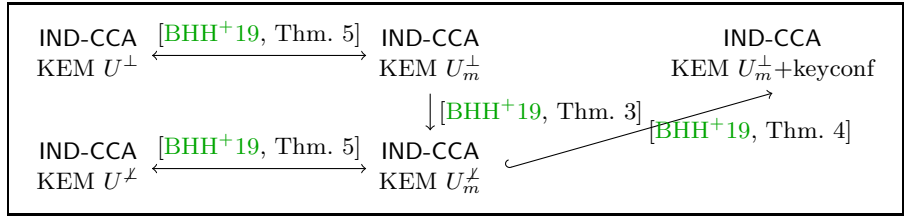
 
Fortunately, the situation can be simplified a bit: It was proven in \cite{TCC:BHHHP19} that for either rejection variant $\FO \in \lbrace \FOImplicit, \FOExplicit \rbrace$, it does not matter which mode of key derivation is chosen,
since \FOMessGeneral is as secure as \FOBothGeneral and vice versa. (A summary of the proven relations is given in \cref{fig:summary_implications}.)
We can hence neglect this distinction and drop the subscript for the rest of this discussion.
It was furthermore shown first in \cite{EC:SaiXagYam18} that \FOImplicit is secure even against quantum attackers with bounds similar to \FOquantumImplicit, 
assuming that the underlying encryption scheme is perfectly correct.
For schemes that are not perfectly correct, established strategies can be used to generalise the result from \cite{EC:SaiXagYam18} (e.g., see \cite{PKC:HKSU20}).
To achieve security against quantum attackers, we can hence dispense with the more costly 'key confirmation variant' \FOquantumImplicit and simply use \FOImplicit.
One might wonder if a similar result could be achieved for the explicit rejection variant \FOquantumExplicit, and while an asymptotic security proof for \FOExplicit has already been established \cite{C:Zhandry19, DFMS21},
giving a proof for \FOExplicit with bounds comparable to the bounds for \FOImplicit was still an open problem until now.
While it has been proven in \cite{TCC:BHHHP19} that security of \FOExplicit implies security of \FOImplicit,
and that security of \FOImplicit implies security of \FOquantumExplicit,
it was hence not clear until now whether explicit rejection variants might not turn out to be less robust against quantum attackers than their implicit rejection counterparts.

	\ifEprint \else									%
\ifEprint %
	\subsection{Security Notions for Public-Key Encryption} \label{sec:Prels:PKE:Security}
\else %
	\section{Security Notions for Public-Key Encryption} \label{sec:Prels:PKE:Security}
\fi

We also consider all security games in the (quantum) random oracle model,
where \PKE and adversary \Ad{A} are given access to (quantum) random oracles.
(How we model quantum access is made explicit in \cref{sec:prels:compressed}.)

\ifEprint %
	\subsubsection{Definitions for \PKE}
\else %
	\subsection{Definitions for \PKE}
\fi

\begin{definition}[$\gamma$-spreadness]
	We say that \PKE is $\gamma$-spread iff for all key pairs $(\pk, \sk) \in \supp(\KG)$
	and all messages $m \in \MSpace$ it holds that
	\[ \max_{c \in \CSpace} \Pr[\Encrypt(\pk, m)= c] \leq 2^{- \gamma} \enspace , \]
	where the probability is taken over the internal randomness \Encrypt.
\end{definition}

We also recall two standard security notions for public-key encryption:
\underline{O}ne-\underline{W}ayness under
\underline{C}hosen \underline{P}laintext \underline{A}ttacks (\OWCPA)
and
\underline{Ind}istinguishability under \underline{C}hosen-\underline{P}laintext \underline{A}ttacks (\INDCPA).
\begin{definition}[\OWCPA, \INDCPA]
	Let $\PKE = (\KG,\Encrypt,\Decrypt)$ be a public-key encryption scheme with message space \MSpace.
	We define the \OWCPA game as in \cref{fig:Def:PKE:passive} and the \OWCPA \textit{advantage function of an adversary \Ad{A} against \PKE} as
	\[ \Adv^{\OWCPA}_{\PKE}(\Ad{A}) := \Pr [\OWCPA^{\Ad{A}}_\PKE \Rightarrow 1 ] \enspace .\]

	Furthermore, we define the 'left-or-right' version of \INDCPA by defining games $\INDCPA_b$, where $b\in \lbrace 0,1 \rbrace$ (also in \cref{fig:Def:PKE:passive}),
	and the \INDCPA \textit{advantage function of an adversary $\Ad{A} = (\Ad{A}_1, \Ad{A}_2)$ against \PKE}
	(where \(\Ad{A}_2\) has binary output)
	as
	\[ \Adv^{\INDCPA}_{\PKE}(\Ad{A}) := |\Pr [ \INDCPA_0^{\Ad{A}} \Rightarrow 1 ] - \Pr [ \INDCPA_1^{\Ad{A}} \Rightarrow 1 ]| \enspace . \]
	
	\begin{figure}[h!]\begin{center}\fbox{\small
		
		\nicoresetlinenr
		
		\begin{minipage}[t]{3.5cm}	
			\underline{{\bf Game} \OWCPA}
			\begin{nicodemus}
				\item $(\pk, \sk) \leftarrow \KG$
				\item $m^* \uni \MSpace$
				\item $c^* \leftarrow \Encrypt(\pk,m^*)$
				\item $m' \leftarrow \Ad{A}(\pk, c^*)$
				\item \pcreturn $\bool{m' = m^*}$
			\end{nicodemus}
		\end{minipage}
		
		\quad
		
		\begin{minipage}[t]{3.7cm}
			\underline{{\bf Game} $\INDCPA_b$}
			\begin{nicodemus}
				\item $(\pk, \sk) \leftarrow \KG$
				\item $(m^*_0, m^*_1, \state) \leftarrow \Ad{A}_1(\pk)$
				\item $c^* \leftarrow \Encrypt(\pk, m^*_b)$
				\item $b' \leftarrow \Ad{A}_2(\pk, c^*, \state)$
				\item \pcreturn $b'$
			\end{nicodemus}	
		\end{minipage}%
	}
	\end{center}
		\caption{Games \OWCPA and $\INDCPA_b$ for \PKE.}
		\label{fig:Def:PKE:passive}
	\end{figure}
	
\end{definition}

\ifEprint %
	\subsubsection{Standard notions for \KEM}
\else %
	\subsection{Standard notions for \KEM}
\fi

We now define \underline{Ind}istinguishability under \underline{C}hosen-\underline{P}laintext \underline{A}ttacks (\INDCPA)
and under \underline{C}hosen-\underline{C}iphertext \underline{A}ttacks (\INDCCA). 

\begin{definition}[\INDCPA, \INDCCA]\label{def:KEM:CPACCA}
	Let $\KEM = (\KemGen, \Encaps, \Decaps)$ be a key encapsulation mechanism with key space \KeySpace.
	For $\atk \in \lbrace \CPA, \CCA\rbrace$, we define \INDATKKEM games as in \cref{fig:Def:KEM:CPACCA}, where
	\[\mathsf{O}_\atk := \left\{
	\begin{array}{ll}
		-								&	\atk = \CPA \\
		\textnormal{\oracleDecaps}		&	\atk = \CCA
	\end{array}	\right. \enspace .	\]
	
	We define the \INDATKKEM \textit{advantage function of an adversary \Ad{A} against \KEM} as
	\[ \Adv^{\INDATKKEM}_{\KEM}(\Ad{A}) := |\Pr[\INDATKKEM^{\Ad{A}} \Rightarrow 1] - \nicefrac{1}{2}| \enspace. \]
	
	\begin{figure}[h!]\begin{center}\fbox{\small
				
		\nicoresetlinenr
				
		\begin{minipage}[t]{3.9cm}
					\underline{{\bf Game} \INDATKKEM}
					\begin{nicodemus}
						\item $(\pk,\sk) \leftarrow \KemGen$
						\item $b \uni \bits$
						\item $(K_0^*,c^*) \leftarrow \Encaps(\pk)$
						\item $K_1^* \uni \KeySpace$
						\item $b'\leftarrow \Ad{A}^{\mathsf{O}_\atk}(\pk, c^*,K_b^*)$
						\item \pcreturn $\bool{b' = b}$
					\end{nicodemus}%
				\end{minipage}%
				\;
				\begin{minipage}[t]{3.2cm}
					\underline{$\oracleDecaps(c \neq c^*)$}
					\begin{nicodemus}
						\item $K := \Decaps(\sk,c)$
						\item \pcreturn $K$
					\end{nicodemus}%
				\end{minipage}%
			}
		\end{center}
		\caption{Game \INDATKKEM for \KEM, where $\atk \in \lbrace \CPA, \CCA\rbrace$ and
			$\mathsf{O}_\atk$ is defined in \cref{def:KEM:CPACCA}.}
		\label{fig:Def:KEM:CPACCA}
	\end{figure}
	
\end{definition}
 	\fi
	
	\ifTightOnSpace %
\ifTightOnSpace
	\section{Proof of \cref{lem:pr-guess}} \label{sec:proof:GuessGROM}
	
	For easier reference, we repeat the statement of \cref{lem:pr-guess}.
	
	\GuessROM*
	
\fi

\begin{proof}
	The event \EventGuessedCT, i.e. the case that \oracleDecapsSim (\oracleDecryptSim)
	rejects on a ciphertext $c$ where \oracleDecaps (\oracleDecrypt) does not,
	requires that $c = \Encrypt(\pk, m; \RO G(m))$ for $m \coloneqq \Decrypt(\sk, c)$,
	and that \RO{G'} was not yet queried on $m$.
	Let $c$ be any ciphertext queried by the adversary for which \oracleDecaps does not reject,
	and let $m \coloneqq \Decrypt(\sk,c)$. We can bound
	\begin{align*}
		\Pr \left[ \oracleDecapsSim(c) = \bot \right]
		\le & \Pr[\Encrypt(\pk, m, \RO G' (m)) = c \wedge \RO G' \text{ not yet queried on } m] \\ 
		\le & \Pr_{r \uni \RSpace}\left[\Encrypt(\pk, m; r) = c\right]
		\le 2^{-\gamma},
	\end{align*}
	where the penultimate step used that $\RO{G}'$ has the same distribution as random oracle \RO{G} and that $\RO{G}(m)$ has not yet been sampled, and the last step used that \PKE scheme is $\gamma$-spread.
	Applying a union bound, we conclude that
	\begin{align*}
		\Pr\left[\EventGuessedCT\right]\le \numberDecOrDecapsQueries\cdot 2^{-\gamma}.
	\end{align*}
	\vspace{-1.1cm}{}\\
	\qed
\end{proof} %
 \fi

\ifTightOnSpace %
	
	\section{$\INDCPA$/$\OWCPA$ to $\INDCPA_{\FO[\PKE]}$ via \cite{TCC:HofHovKil17}} \label{sec:INDCPAHHK}
	
	This section explains how concrete bounds for \INDCPA security of $ \KemExplicitMess \coloneqq \FOExplicitMess[\PKE, \RO{G},\RO H]$ can be easily obtained from \cite{TCC:HofHovKil17}.
	We claim the following
	
	\begin{restatable}[\PKE \OWCPA or \INDCPA \RightarrowROM $\FOExplicitMess\lbrack\PKE\rbrack$ \INDCPA] {theorem}{INDCPAHHK}\label{thm:PKEpassiveToINDCPAKEM:ROM}
		\ifTightOnSpace\else
		Let \PKE be a \PKE scheme and $\KemExplicitMess \coloneqq \FOExplicitMess[\PKE, \RO{G},\RO H]$.
		\fi
		For any \INDCPA adversary \Ad{A} against $\KemExplicitMess$, issuing at most 
		\ifTightOnSpace $q_\RO{G}$/$q_\RO{H}$ many queries to its oracles \RO{G}/\RO{H},
		\else $q_\RO{G}$ many queries to its oracle \RO{G} and $q_\RO{H}$ many queries to its oracle \RO{H}, \fi	
		there exist an \OWCPA adversary $\Ad{B_{\OWCPA}}$ and an \INDCPA adversary $\Ad{B_{\INDCPA}}$ of roughly the same running time such that
		\[\Adv^{\INDCPA}_{\KemExplicitMess}(\Ad{A}) \leq (q_\RO{G} + q_\RO{H} +1 ) \cdot \Adv^{\OW}_{\PKE}(\Ad{B_{\OWCPA}})\ifTightOnSpace \text{ and}\fi  \]
		\ifTightOnSpace
		\else 
		and 
		\fi
		\[\Adv^{\INDCPA}_{\KemExplicitMess}(\Ad{A}) \leq 3 \cdot \Adv^{\INDCPA}_{\PKE}(\Ad{B_{\INDCPA}}) 
		+ \frac{2\cdot(q_\RO{G} + q_\RO{H} )+1}{|\MSpace|} \enspace .
		\]
	\end{restatable}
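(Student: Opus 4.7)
The plan is to reduce INDCPA-KEM security of $\KemExplicitMess$ to the task of extracting the challenge plaintext $m^*$ from the adversary's random oracle queries, via a single initial game hop. Starting from the INDCPA-KEM game, the first step is to replace the honest session key $K_0^* := \RO{H}(m^*)$ with a uniformly random key. In the modified game both $K_0^*$ and $K_1^*$ are independent of the challenge bit $b$, so the adversary has zero advantage; the distinguishing advantage for this hop is bounded by the probability $p$ that $\Ad{A}$ queries $\RO{H}$ on $m^*$ during the original game. Hence $\Adv^{\INDCPA}_{\KemExplicitMess}(\Ad{A}) \le p$, and everything reduces to bounding $p$.

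For the OWCPA bound, construct $\Ad{B}_{\OWCPA}$ that receives an OWCPA challenge $(\pk, c^*)$ where $c^* = \Encrypt(\pk, m^*; r^*)$ with $r^*$ uniformly random, samples a uniform $K^*$, runs $\Ad{A}(\pk, c^*, K^*)$ while lazily simulating $\RO{G}$ and $\RO{H}$, and outputs a uniformly chosen element from the combined list of $q_\RO{G} + q_\RO{H}$ oracle queries augmented with $\Ad{A}$'s final output, yielding $q_\RO{G}+q_\RO{H}+1$ candidates. The simulation is perfect from $\Ad{A}$'s point of view because the uniform $r^*$ picked by the OWCPA challenger is consistent with a freshly lazy-sampled value of $\RO{G}(m^*)$ that $\Ad{A}$ never observes unless $\Ad{A}$ actually queries $\RO{G}(m^*)$, in which case $m^*$ already appears in the query list. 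Hence $\Ad{B}_{\OWCPA}$ wins with probability at least $p/(q_\RO{G}+q_\RO{H}+1)$, which rearranges to the first bound.

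For the INDCPA bound, bound $p$ via a further chain of game hops following the approach of HHK17's analysis of the $T$ transform. First, reprogram $\RO{G}(m^*)$ to a fresh uniform value $r^*$ that is used as the coins for $c^*$; the reprogramming is invisible unless $\Ad{A}$ queries $\RO{G}(m^*)$, and that event can be absorbed symmetrically into the final extraction count. Then, in the resulting game the challenge is $c^* = \Encrypt(\pk, m^*; r^*)$ with $r^*$ truly random, and one invokes INDCPA security of $\PKE$ in the left-or-right flavour to replace $m^*$ inside $c^*$ by an independent uniform $\tilde m \leftarrow \MSpace$. After this hop $m^*$ is fully independent of $\Ad{A}$'s view, so $\Pr[m^* \in \text{queries}] \le (q_\RO{G}+q_\RO{H})/|\MSpace|$. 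Summing the costs yields the claimed $3 \cdot \Adv^{\INDCPA}_{\PKE}(\Ad{B}_{\INDCPA}) + (2(q_\RO{G}+q_\RO{H})+1)/|\MSpace|$.

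The main obstacle will be the precise tracking of the constant $3$: a naive two-hop argument that treats $\RO{G}(m^*)$ and $\RO{H}(m^*)$ sequentially only yields a factor of $2$, so one has to go through HHK17's uniform-message left-or-right framing, where the extra invocation of INDCPA emerges from symmetrizing the $\RO{G}(m^*)$ query event. All remaining ingredients (lazy sampling of the two random oracles, uniform extraction from a query list, and reprogramming on a never-queried point) are standard ROM manipulations already implicit in HHK17's composed analysis of the $T$ and $U^{\bot}$ transforms, and are reproduced here essentially verbatim with the KEM key $K^*$ taking the role of the would-be OW-CPA challenge target.
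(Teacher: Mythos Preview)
Your direct game-hop argument is sound and reaches the stated bounds, but it is not how the paper proceeds. The paper gives a pure citation proof: it observes that $\KemExplicitMess = \TtwoExplicit[\PKEDerand,\RO{H}]$ and then composes three results from \cite{TCC:HofHovKil17} through the intermediate notion $\OWVA$ for $\PKEDerand$. Concretely, \cite[Thm.~3.5]{TCC:HofHovKil17} gives $\Adv^{\INDCPA}_{\KemExplicitMess}(\Ad{A}) \le \Adv^{\OWVA}_{\PKEDerand}(\tilde{\Ad{A}})$ (setting $q_D=0$ and dropping all decapsulation-simulation terms), and then \cite[Thms.~3.1 and 3.2]{TCC:HofHovKil17} bound $\Adv^{\OWVA}_{\PKEDerand}(\tilde{\Ad{A}})$ by the two right-hand sides, again after stripping the terms that come from simulating the validity-checking oracle (which $\tilde{\Ad{A}}$ never calls). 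No new game hops are performed; the constants $(q_\RO{G}+q_\RO{H}+1)$ and $3$ are inherited verbatim from those theorems.

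Your route instead fuses these two layers into a single chain of hops, bypassing the $\OWVA$ notion entirely. This is perfectly legitimate and arguably more transparent, since it makes explicit where each term originates (the $\RO{H}(m^*)$ hop gives the initial bound by the query event, the $\RO{G}(m^*)$ reprogramming and the \INDCPA hop account for the factor $3$ and the $|\MSpace|$ terms). The trade-off is that you must redo the bookkeeping that \cite{TCC:HofHovKil17} already did---in particular, as you note yourself, recovering the exact factor $3$ requires reproducing their symmetrized argument rather than a naive two-hop count. One small cosmetic point: in your $\Ad{B}_{\OWCPA}$ you add ``$\Ad{A}$'s final output'' to the candidate list to manufacture the $+1$, but $\Ad{A}$ outputs a bit, not a message; the $+1$ in the paper's bound is an artefact of the HHK composition (the $\OWVA$ adversary's output is a message), not something your direct reduction needs.
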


\else %
	\section{Proof of \cref{thm:PKEpassiveToINDCPAKEM:ROM} (From $\INDCPA_{\PKE}$ or $\OWCPA_{\PKE}$ to $\INDCPA_{\FO[\PKE]}$)} \label{sec:INDCPAHHK}
	
	For easier reference, we repeat the statement of \cref{thm:PKEpassiveToINDCPAKEM:ROM}.
	
	\INDCPAHHK*
\fi

\begin{proof}
	In \cite{TCC:HofHovKil17}, the security proof for $\KemImplicitMess = \FOExplicitMess[\PKE, \RO{G},\RO H] = \TtwoExplicit[\PKEDerand, \RO{H}]$ is modularized
	into one proof for \PKEDerand and one proof for transformation \TtwoExplicitMess, as sketched in \cref{fig:FO:HHK}.
	Here, \underline{O}ne-\underline{W}ayness under \underline{V}alidity checking \underline{A}ttacks (\OWVA)
	is an intermediate helper notion that models \OWCPA security in the presence of an additional \underline{C}iphertext \underline{V}alidity \underline{O}racle $\mathsf{CVO}$ that tells the attacker whether a ciphertext is valid. To avoid confusion when looking up the theorems, Theorems 3.1 and 3.2 actually prove a security notion stronger than \OWVA security, called \OW-$\mathsf{PCVA}$ security. The \OW-$\mathsf{PCVA}$ game is like the \OWVA one except that it provides one more additional oracle to the adversary. Since \OW-$\mathsf{PCVA}$ security immediately implies \OWVA security by dismissing the additional oracle, and since Theorem 3.5 only requires \OWVA security, we omitt further details on \OW-$\mathsf{PCVA}$ security.

	\cite[Theorem 3.5]{TCC:HofHovKil17} states that \INDCCA security of $\KemImplicitMess = \TtwoExplicit[\PKEDerand, \RO{H}]$ can be based on \OWVA security of \PKEDerand, tightly.
	Clearly, the same holds when \INDCCA is replaced with \INDCPA, as one can simply set the number $q_D$ of decapsulation queries to 0. In fact, when we only need \INDCPA security, we can disregard all terms in the bound of \cite[Theorem 3.5]{TCC:HofHovKil17} that stem from how the random oracle and the decapsulation oracle were changed during the proof in order for the the decapsulation oracle to be simulatable without the secret key.
	Dismissing the respective changes, we obtain from \cite[Theorem 3.5]{TCC:HofHovKil17} that for any \INDCPA adversary \Ad{A} against \KemExplicit, issuing at most $q_\RO{G}$/$q_\RO{H}$ many queries to its respective random oracles, 
	there exist an \OWVA adversary $\tilde{\Ad{A}}$ of roughly the same running time, issuing no queries to its oracle  $\mathsf{CVO}$, such that
	\[\Adv^{\INDCPA}_{\KemImplicitMess }(\Ad{A}) \leq  \Adv^{\OWVA}_{\PKEDerand}(\tilde{\Ad{A}})\enspace .\]

	\cite[Theorem 3.1]{TCC:HofHovKil17} states that \OW-$\mathsf{PCVA}$ security of \PKEDerand can be based on \OWCPA security of \PKE, non-tightly. Since in our use case, we are only considering adversaries $\tilde{\Ad{A}}$ that do not pose any queries to oracle $\mathsf{CVO}$ or the other oracle present in the \OW-$\mathsf{PCVA}$ game, we can again disregard all terms in the bound of \cite[Theorem 3.1]{TCC:HofHovKil17} that stem from how the additional oracles got simulated during the proof.
	Dismissing the simulation of the redundant additional oracles, we obtain from \cite[Theorem 3.1]{TCC:HofHovKil17} that for any \OWVA adversary $\tilde{\Ad{A}}$ against \PKEDerand as the reduction above, 
	there exist an \OWCPA adversary $\Ad{B_{\OWCPA}}$ of roughly the same running time such that
	\[ \Adv^{\OWVA}_{\PKEDerand}(\tilde{\Ad{A}}) \leq (q_\RO{G} +  q_\RO{H} +1 )\cdot \Adv^{\OWVA}_{\PKE}(\Ad{B_{\OWCPA}}) \enspace .\]
	
	\cite[Theorem 3.2]{TCC:HofHovKil17} states that \OW-$\mathsf{PCVA}$ security of \PKEDerand can be based on \INDCPA security of \PKE, tightly. Again, we can dismiss the simulation of the redundant additional oracles and obtain from \cite[Theorem 3.2]{TCC:HofHovKil17} that for any \OWVA adversary $\tilde{\Ad{A}}$ against \PKEDerand as the reduction above,
	there exist an \INDCPA adversary $\Ad{B_{\OWCPA}}$ of roughly the same running time such that
	\[ \Adv^{\OWVA}_{\PKEDerand}(\tilde{\Ad{A}}) \leq  3 \cdot \Adv^{\INDCPA}_{\PKE}(\Ad{B_{\INDCPA}}) 
	+ \frac{2\cdot(q_\RO{G} + q_\RO{H} )+1}{|\MSpace|} \enspace .\]
	
	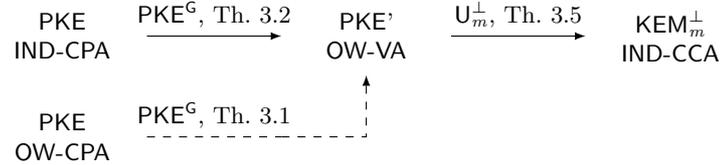
\begin{figure}[htb] \begin{center} \begin{tikzpicture} \small
		\node (PKEind) {\begin{minipage}{2cm}\centering \PKE \\ \INDCPA \end{minipage}};
		\node (PKEow) [below = 0.5cm of PKEind] {\begin{minipage}{2cm}\centering \PKE \\ \OWCPA \end{minipage}};
		
		\node at ($(PKEind) +(4,0)$) (dPKE) {\begin{minipage}{2cm}\centering \PKE' \\ \OWVA \end{minipage}};
		
		\node  (belowdPKE) [below = 0.5cm of dPKE] {\begin{minipage}{2cm}\centering  \phantom{X} \\ \phantom{X} \end{minipage}};
		
		\node at ($(dPKE)+(4,0)$) (KEMindCCA) {\begin{minipage}{2cm}\centering \KemExplicitMess \\ \INDCCA \end{minipage}};

		\draw[>=latex,->] (PKEind.east) -- ($(dPKE.west)$) node [draw=none, midway, above] {\PKEDerand, Th. 3.2};
		
		\draw[>=latex, dashed] (PKEow.east) -- ($(belowdPKE.west)$)	node [draw=none, midway, above] {\PKEDerand, Th. 3.1};			
		
		\draw[>=latex, dashed, ->] ($(belowdPKE.west)$) -| ($(dPKE.south)-(0,0.1)$);

		\draw[>=latex, ->] (dPKE.east) -- (KEMindCCA.west) node [draw=none, midway, above] {\TtwoExplicitMess, Th. 3.5};
				
	\end{tikzpicture} \end{center}
	\caption{
		Modular approach in \cite{TCC:HofHovKil17} for transformation \FOExplicitMess, in the ROM. Solid arrows indicate tight reductions, dashed arrows indicate non-tight reductions. The used theorem numbers are the respective theorem numbers in \cite{TCC:HofHovKil17}.
	}
	\label{fig:FO:HHK}
	\end{figure}

\end{proof}
 	
	\ifTightOnSpace %
\ifTightOnSpace

	\section{Simulating $\RO{G} \times \RO{H}$ as an The Fujisaki-Okamoto transformation with explicit rejection} \label{sec:SimulatingGtimesH}
	
	This section discusses why combining oracles \RO{G} and \RO{H} into a single oracle \SupOr is merely of a conceptual nature,
	as it does not introduce any differences in the probabilities:
\fi
For any algorithm \Ad{A} expecting an \augQRO{\Encrypt}-modelled oracle \RO{G} and a QRO \RO{H},
one can define an algorithm $\tilde{A}$ with access to a single oracle \SupOr whose oracle interface \SRO represents $\RO{G} \times \RO{H}$ and whose extraction interface is only relative to \RO{G}, that perfectly simulates \Ad{A}'s view.
Whenever \Ad{A} issues a query to \RO{G}, $\tilde{A}$ prepares an additional output register $H_{\textnormal{out}}$ for \RO{H} in a uniform superposition, queries \SRO, uncomputes $H_{\textnormal{out}}$ by applying the Hadamard transform to $H_{\textnormal{out}}$,
and forwards the input-output registers belonging to \RO{G} to \Ad{A}. The same idea with reversed oracle roles can be used to answer queries to \RO{H}.
The extraction oracle \SE represents an extraction interface for $\Encrypt(\pk, \cdot;\cdot)$ with respect to \RO{G}:
This is possible as the oracle database for $\RO{G} \times \RO{H}: \MSpace \rightarrow \RSpace \times \KeySpace$ consists of registers $D_m$,
of which each register $D_m$ now consist of
one register $R_m$ to accommodate a superposition of elements in \RSpace (or $\bot$) and
one register $K_m$ to accommodate a superposition of elements in \KeySpace (or $\bot$).
The projectors of the measurements performed by extraction interface \SE can hence be defined in a way such that when \SE is queried on some ciphertext $c$, 
they select the message $m$ where $D_m$ is the first (in lexicographical order) register whose register $R_m$ contains an $r$ such that $\Encrypt(\pk, m; r) = c$. %
 \fi

\section{Proof of \cref{lem:OWTH:DiffEvents}  (\OWTH: event prob. distances if $\neg \FIND$ etc)} \label{sec:QROM:OWTH:DiffEvents}

For easier reference, we repeat the statement of \cref{lem:OWTH:DiffEvents}.

\DiffEvent*

\begin{proof}
	
	During this proof, we use $\Ad{A}^{\puncture{\SupOr}{S}}$ as a shorthand for $\Ad{A}^{\puncture{\SupOr}{S}}(\inputVar)$ and \FINDshort for \FIND.
	As argued in \cref{sec:QROM:OWTH}, \Ad{A}'s view is exactly the same in both games unless \FIND or \EVENTEXT occur,
	therefore \cref{eq:OWTH:EqualEventNoFindNoExtract} holds.
	We will first use \cref{eq:OWTH:EqualEventNoFindNoExtract} to prove \cref{eq:OWTH:DiffEventNoFind}:
	We have
	\begin{align*}
		& \left| \Pr[\EVENT  \wedge \neg \FINDshort : \Ad{A}^{\puncture{\SupOrNotProg}{S}}] - 
			\Pr[\EVENT  \wedge \neg \FINDshort : \Ad{A}^{\puncture{\SupOrProg}{S}} ] \right| \\
		& \quad  = \left|  \Pr[\EVENT  \wedge \neg \FINDshort \wedge \EVENTEXTshort: \Ad{A}^{\puncture{\SupOrNotProg}{S}}]
			- \Pr[\EVENT  \wedge \neg \FINDshort \wedge \EVENTEXTshort: \Ad{A}^{\puncture{\SupOrProg}{S}}] \right| \\
		& \quad = \left|  \Pr[\EVENT : \Ad{A}^{\puncture{\SupOrNotProg}{S}}| \neg \FINDshort \wedge \EVENTEXTshort] \cdot 
					 \Pr[\neg \FINDshort \wedge \EVENTEXTshort: \Ad{A}^{\puncture{\SupOrNotProg}{S}}] \right. \\
		& \quad \quad \left. - \Pr[\EVENT : \Ad{A}^{\puncture{\SupOrProg}{S}}|  \neg \FINDshort \wedge \EVENTEXTshort] \cdot 
						\Pr[ \neg \FINDshort \wedge \EVENTEXTshort : \Ad{A}^{\puncture{\SupOrProg}{S}}] \right| \\
		& \quad \stackrel{(*)}{=}
				\left|  \Pr[\EVENT : \Ad{A}^{\puncture{\SupOrNotProg}{S}}| \neg \FINDshort \wedge \EVENTEXTshort] 
					-  \Pr[\EVENT : \Ad{A}^{\puncture{\SupOrProg}{S}}|  \neg \FINDshort \wedge \EVENTEXTshort] \right| \\
		& \quad \quad \quad \cdot \Pr[\neg \FINDshort \wedge \EVENTEXTshort: \Ad{A}^{\puncture{\SupOrNotProg}{S}}] \\
		& \quad \leq \Pr[\neg \FINDshort \wedge \EVENTEXTshort: \Ad{A}^{\puncture{\SupOrNotProg}{S}}] 
				\leq \Pr[\EVENTEXTshort: \Ad{A}^{\puncture{\SupOrNotProg}{S}}] 
		\enspace ,
	\end{align*}
	where (*) used that if \FIND does not occur, all case-depending information is hidden from \Ad{A} until \EVENTEXT occurs,
	hence \EVENTEXT is equally likely in that case and the common factor can hence be moved to outside of the absolute value.
	
	To prove \cref{eq:OWTH:DiffFind}, it is sufficient to instead upper bound the difference between the probabilities of event $\neg \FIND$ for the two oracles: since the equation $\Pr[E] = 1 - \Pr[\neg E]$ holds for arbitrary events,
	we have that
	\begin{align*}
		|\Pr[\FINDshort : \Ad{A}^{\puncture{\SupOrNotProg}{\reproSet}}] -  \Pr[\FINDshort : \Ad{A}^{\puncture{\SupOrProg}{\reproSet}}]|
		= |\Pr[\neg \FINDshort : \Ad{A}^{\puncture{\SupOrNotProg}{\reproSet}}] -  \Pr[\neg \FINDshort : \Ad{A}^{\puncture{\SupOrProg}{\reproSet}}]| \enspace .
	\end{align*}
	The bound then follows directly from \cref{eq:OWTH:DiffEventNoFind}: We have that
	\begin{align*}
		& |\Pr[\neg \FINDshort : \Ad{A}^{\puncture{\SupOrNotProg}{S}}]
			- \Pr[\neg \FINDshort :  \Ad{A}^{\puncture{\SupOrProg}{S}}]|
		\\
		& \quad = | \Pr[\mathsf{true} \wedge \neg \FINDshort : \Ad{A}^{\puncture{\SupOrNotProg}{S}}] 
					- \Pr[\mathsf{true} \wedge \neg \FINDshort : \Ad{A}^{\puncture{\SupOrProg}{S}}] |
		\\
		& \quad \stackrel{\cref{eq:OWTH:DiffEventNoFind}}{\leq} \Pr[\EVENTEXTshort: \Ad{A}^{\puncture{\SupOrNotProg}{S}}] 
		\enspace .
	\end{align*}

\end{proof}

\section{Proof of \cref{thm:OWTH:Dist-to-Find} (\OWTH: Distinguishing to Finding)} \label{sec:QROM:OWTH:DistToFind}

For easier reference, we repeat the statement of \cref{thm:OWTH:Dist-to-Find}.

\DistToFind*

\begin{proof}

In the following helper definitions, we will again use $\Ad{A}^{\RO{O}}$ as a shorthand for $\Ad{A}^{\RO{O}}(\inputVar)$. For either  oracle $\SupOr \in \lbrace \SupOrNotProg, \SupOrProg \rbrace$,
we let 
\begin{align*} 
	p_b 										&:= \Pr[1 \leftarrow \Ad{A}^{\SupOr^b}]  \\
	p_{b, \neg \EVENTEXTshort}					&:= \Pr[b'= 1 \wedge \neg \EVENTEXT: b' \leftarrow \Ad{A}^{\SupOr^b}]\\
	p_{b, \neg \EVENTEXTshort, \neg \FINDshort}	&:= \Pr[b'= 1 \wedge \neg \FIND \wedge \neg \EVENTEXT : b' \leftarrow \Ad{A}^{\puncture{\SupOr^b}{\reproSet}}] \\
	p_{b, \neg \EVENTEXTshort, \FINDshort}		&:= \Pr[\FIND \wedge \neg \EVENTEXT : \Ad{A}^{\puncture{\SupOr^b}{\reproSet}}]
	\enspace .
\end{align*}

In order to prove \cref{thm:OWTH:Dist-to-Find}, we want to bound $\Adv^{\OWTH}_{\augQRO{f}}(\Ad{A}) = | p_0 - p_1|$.
Applying the triangle inequality yields
\begin{align}\label{eq:OWTH:DiffToDiffIfNoExtract}
	| p_0 - p_1|
	\leq & | p_0 - p_{0, \neg \EVENTEXTshort}| + | p_1 - p_{1, \neg \EVENTEXTshort}| 
			+ | p_{0, \neg \EVENTEXTshort}  - p_{1, \neg \EVENTEXTshort}|
\nonumber \\
	\stackrel{(*)}{\leq } &
		\Pr[\EVENTEXT: \Ad{A}^{\SupOr^0}] + \Pr[\EVENTEXT: \Ad{A}^{\SupOr^1}]
\nonumber \\
	& \quad  + | p_{0, \neg \EVENTEXTshort}  - p_{1, \neg \EVENTEXTshort}|
	\enspace ,	
\end{align}
where (*) used that $| p_b - p_{b, \neg \EVENTEXTshort}| = \Pr[b' = 1 \wedge \EVENTEXT: b' \leftarrow \Ad{A}^{\SupOr^b}] \leq \Pr[\EVENTEXT: \Ad{A}^{\SupOr^b}]$,
it hence remains to bound  $| p_{0, \neg \EVENTEXTshort}  - p_{1, \neg \EVENTEXTshort}|$.

We claim that for either oracle $\SupOr \in \lbrace \SupOrNotProg, \SupOrProg \rbrace$, we have that
\begin{align}\label{eq:OWTH:ClaimDistToFindWithPuncturing}
	|p_{b, \neg \EVENTEXTshort} - p_{b, \neg \EVENTEXTshort, \neg \FINDshort}|
	\leq 2 \cdot \sqrt{d \cdot \Pr[\FIND : \Ad{A}^{\puncture{\SupOr^b}{\reproSet}}]}
	\enspace .
\end{align}

Assuming that claim (\ref{eq:OWTH:ClaimDistToFindWithPuncturing}) is true,
we can then once more apply the triangle inequality to obtain
\begin{align} \label{eq:OWTH:DiffIfNoExtract}
| p_{0, \neg \EVENTEXTshort}  - & p_{1, \neg \EVENTEXTshort}|
	\leq  |p_{0, \neg \EVENTEXTshort} - p_{0, \neg \EVENTEXTshort, \neg \FINDshort}|
		+ | p_{1, \neg \EVENTEXTshort} - p_{0, \neg \EVENTEXTshort, \neg \FINDshort}|
\nonumber \\
\stackrel{(*)}{=} & 
	|p_{0, \neg \EVENTEXTshort} - p_{0, \neg \EVENTEXTshort, \neg \FINDshort}|
	+ | p_{1, \neg \EVENTEXTshort} - p_{1, \neg \EVENTEXTshort, \neg \FINDshort}|
\nonumber \\
\stackrel{(\ref{eq:OWTH:ClaimDistToFindWithPuncturing})}{\leq} & 
	2 \cdot \sqrt{d \cdot  \Pr[\FIND : \Ad{A}^{\puncture{\SupOr^0}{\reproSet}}]}
		+ 2 \cdot \sqrt{ d \cdot  \Pr[\FIND  : \Ad{A}^{\puncture{\SupOr^1}{\reproSet}}]}
\nonumber \\
\stackrel{(**)}{\leq } & 
	2 \cdot \sqrt{d \cdot ( \Pr[\FIND : \Ad{A}^{\puncture{\SupOr^1}{\reproSet}}] + \Pr[\EVENTEXT: \Ad{A}^{\SupOr^0}]) }
	+ 2 \cdot \sqrt{ d \cdot  \Pr[\FIND  : \Ad{A}^{\puncture{\SupOr^1}{\reproSet}}]}
\nonumber \\
\leq &
	4 \cdot \sqrt{ d \cdot \Pr[\FIND : \Ad{A}^{\puncture{\SupOr^1}{\reproSet}}]}
	+ 2 \cdot \sqrt{ d \cdot \Pr[\EVENTEXT: \Ad{A}^{\SupOr^0}]}
\enspace .
\end{align}
Here, (*) replaced $p_{0, \neg \EVENTEXTshort, \neg \FINDshort}$ with $p_{1, \neg \EVENTEXTshort, \neg \FINDshort}$ in the last term, using \cref{eq:OWTH:EqualEventNoFindNoExtract} from \cref{lem:OWTH:DiffEvents} which states that all events are equally likely regardless which oracle is used if neither \EVENTEXT nor \FIND occur.
(**) used \cref{eq:OWTH:DiffFind} from \cref{lem:OWTH:DiffEvents}
which states that $\Pr[\FIND : \Ad{A}^{\puncture{\SupOr^0}{\reproSet}}]$ can be upper bounded by $\Pr[\FIND : \Ad{A}^{\puncture{\SupOr^1}{\reproSet}}] + \Pr[\EVENTEXT: \Ad{A}^{\SupOr^0}]$,
and then used that the square root function is monotone increasing.

Plugging \cref{eq:OWTH:DiffIfNoExtract} into \cref{eq:OWTH:DiffToDiffIfNoExtract} yields the bound claimed in \cref{thm:OWTH:Dist-to-Find}, it hence remains to prove \cref{eq:OWTH:ClaimDistToFindWithPuncturing}, which we break down into the following steps:
Due to the deferred measurement principle, both the puncturing operation \orSemiClassical{S} and extraction oracle \SE can be rewritten such that they consist of a unitary, acting on the adversary-oracle registers and an additional measurement outcome register, and a final measurement of the outcome register at the end of the execution of \Ad{A}. We will denote the respective outcome registers by \logRegForFind (for 'finding') and \logRegForExtract (for extractions).
Second, show that it suffices to bound the distance of the states before this final measurement of \logRegForFind and \logRegForExtract.
Third, show that it suffices to bound the distance of the states for any fixed instantiation of set $S$, oracle values $(y_x)_{x \in S}$, and input string $\inputVar$.
Lastly, prove the distance bound for any fixed instantiation by considering that the two states that emerge from the same initial state;
and that the two chains of state transitions only increase the distance in terms of the probability that \FIND occurs.

To flesh out this summary, we will first write \orSemiClassical{S} as a concrete combination of unitaries and measurements:
A ´logging' register \logRegForFind holding bitstrings of length $d$ is initialised in state $\ket{0 \cdots 0}$.
Intuitively, \logRegForFind will log at its $i$-th position if the $i$-th query triggered \FIND:
Whenever \Ad{A} performs an oracle query, say it is the $i$-th, we slot in a unitary $\unitaryForFind{i}$ that marks in the $i$-th position of \logRegForFind whether the query register holds an element of $S$.
More formally,  $\unitaryForFind{i}$ acts on query register $X=X_1 \cdots X_w$ (recall that \Ad{A} can issue parallel oracle queries) and logging register \logRegForFind by 
\begin{equation*}
	\unitaryForFind{i} \ket{x_1, \cdots, x_w}_X \ket{b_1, \cdots, b_d}_\logRegForFind:=
		\begin{cases}
			\ket{x_1, \cdots, x_w}_X \ket{b_1, \cdots, b_d}_\logRegForFind 						& x_j \notin S \,\forall j \\
			\ket{x_1, \cdots, x_w}_X \ket{\textnormal{flip}_i(b_1, \cdots, b_d)}_\logRegForFind	& \exists j: x_j \in S
		\end{cases} \enspace .		
\end{equation*}
Processing oracle queries according to \puncture{\SupOr}{\reproSet} consists of first applying $\unitaryForFind{i}$ to $X$ and \logRegForFind,
then measuring \logRegForFind in the computational basis,
and then applying the oracle unitary \OrUnit (see \cref{sec:compressed-prels} for a brief description how parallel queries are answered). %

Next, we also write \SE for function $f: X \times Y \to \{0,1\}^\ell$ as a concrete combination of unitaries and measurements:
Let $A$ be the register that holds the state of \Ad{A}, and let $D$ be the oracle database register. Note that $A$ contains a register \logRegForExtract that accommodates \qExtract many elements of $X$,
which is used to log the outcome of the $i$-th query to \SE at its $i$-th position.
Whenever \Ad{A} performs a query to \SE, say it is the $i$-th, we apply a unitary \unitaryForExtract{i} that adds to the $i$-th position of \logRegForExtract the extraction outcome.
More formally, \unitaryForExtract{i} acts on query register $T$, database register $D$ and register \logRegForExtract by 
\begin{equation*}
	\unitaryForExtract{i} \ket{t}_T 
		:= \ket{t}_T \otimes \sum_{x \in X \cup \lbrace \bot \rbrace} \varSigma_{x, t} \otimes \unitaryForExtractX{i}{x}
	\enspace ,		
\end{equation*}
where $\varSigma_{x, t}$ acts on $D$ and is defined by
\begin{equation*}
	\varSigma_{x, t} := \begin{cases}
		\bigotimes_{x' < x } \left( \sum_{y \in Y: f(x,y) \neq t} \proj{y}_{D_{x'}} \right)
			\bigotimes \left( \sum_{y \in Y: f(x,y) = t } \proj{y}_{D_{x}} \right) 
		& x \in X
		\\
		\id - \sum_{x \in X} \varSigma_{x, t} & x = \bot
	\end{cases}
	\enspace ,		
\end{equation*}
and $\unitaryForExtractX{i}{x}$ acts on \logRegForExtract by 
\begin{equation*}
	\unitaryForExtractX{i}{x} \ket{x_1, \cdots, x_\qExtract}_\logRegForExtract
	:= \ket{x_1, \cdots, x_{i-1}, x_i + x, x_{i+1}, \cdots, x_\qExtract}_\logRegForExtract
	\enspace .		
\end{equation*}
Processing the $i$-th extraction query according to \SE consists of first applying \unitaryForExtract{i} to $T$, $D$ and \logRegForExtract, and then measuring \logRegForExtract in the computational basis.

We can now lift the final joint adversary-oracle state \FinalStateAD of \Ad{A}, when run with access to original oracle \SupOrNotProg,
to the joint adversary-oracle-log state $\FinalStateNotPuncWithExtract := \FinalStateAD \otimes \proj{0\cdots0}_\logRegForFind$.
(Note that \logRegForFind is initialised to and will maintain to be in state $\ket{0 \cdots 0}$.)
We will furthermore denote by \FinalStatePuncWithExtract the joint adversary-oracle-log state when \Ad{A} is run with access to \puncture{\SupOrProg}{\reproSet}.
This means that \FinalStateAD is the final state of \Ad{A} \emph{without} puncturing, and \FinalStatePuncWithExtract is the final state of \Ad{A} \emph{with} puncturing.
Let $M$ be the measurement that measures, given the registers $A$, $D$, \logRegForFind,
whether \Ad{A} outputs 1, \EVENTEXT did not occur, and $\logRegForFind$ is equal to $\ket{0\cdots0}$, the latter meaning that $\FIND$ did not happen.
Let $P_M(\Phi)$ denote the probability that $M$ returns 1 when measuring a state $\Phi$.
As our arguments will work for both oracle cases, we will simply write $p_{\neg \EVENTEXTshort}$ instead of $p_{b, \neg \EVENTEXTshort}$ and $p_{\EVENTEXTshort, \neg \FINDshort}$ instead of $p_{b, \neg \EVENTEXTshort, \neg \FINDshort}$.
We have that $p_{\neg \EVENTEXTshort} = P_M(\FinalStateNotPuncWithExtract)$ and that $p_{\neg \EVENTEXTshort, \neg \FINDshort} = P_M(\FinalStatePuncWithExtract)$,
hence we want to upper bound 
\begin{align}\label{eq:ProbToProbMeasure}
	| p_{\neg \EVENTEXTshort} - p_{\neg \EVENTEXTshort, \neg \FINDshort}| = | P_M(\FinalStateNotPuncWithExtract) - P_M(\FinalStatePuncWithExtract)|\enspace ,
\end{align}
and due to \cite[Lemma 4]{C:AmbHamUnr19}, we know that
\begin{align} \label{ProbMeasure:ProbToBures}
	| P_M(\FinalStateNotPuncWithExtract) - P_M(\FinalStatePuncWithExtract)| \leq  B(\FinalStateNotPuncWithExtract, \FinalStatePuncWithExtract) \enspace ,
\end{align}
where $B$ is the Bures distance. I.e., for two density operators $\tau_1$ and $\tau_2$,
$B(\tau_1,\tau_2) \coloneqq \sqrt{2-2F(\tau_1,\tau_2)}$, and the fidelity $F$ is defined by
$F(\tau_1,\tau_2) \coloneqq \Tr \sqrt{\sqrt{\tau_1} \tau_2 \sqrt{\tau_1}} $.
According to the definition of the Bures distance,
\begin{align*}
	B(\FinalStateNotPuncWithExtract, \FinalStatePuncWithExtract)^2 = 2(1-F(\FinalStateNotPuncWithExtract, \FinalStatePuncWithExtract))\enspace .
\end{align*}

Combining \cref{ProbMeasure:ProbToBures}) with \cref{ProbMeasure:ProbToBures} and plugging in the definition of the Bures distance hence yields
\begin{align*}
	| p_{\neg \EVENTEXTshort} - p_{\neg \EVENTEXTshort, \neg \FINDshort}|
		\leq   \sqrt{2(1-F(\FinalStateNotPuncWithExtract, \FinalStatePuncWithExtract))}  \enspace .
\end{align*}

To show that $|p_{b, \neg \EVENTEXTshort} - p_{b, \neg \EVENTEXTshort, \neg \FINDshort}| \leq 2 \cdot \sqrt{d \cdot \Pr[\FIND : \Ad{A}^{\puncture{\SupOr^b}{\reproSet}}]}$, it hence suffices to prove that 
\begin{equation}\label{eq:OWTH:Claim-Fidelity}
	F(\FinalStateNotPuncWithExtract, \FinalStatePuncWithExtract) \geq 1 - 2d \cdot \Pr[\FIND: \Ad{A}^{\puncture{\SupOr}{\reproSet}}(\inputVar)]
	\enspace .
\end{equation}
To lower bound $F(\FinalStateNotPuncWithExtract, \FinalStatePuncWithExtract)$, we make the following observation:
The measurements performed by \orSemiClassical{S} and \SE can be delayed, i.e.,
when processing an oracle query, we apply the respective unitary \unitaryForFind{i}, but do not perform the measurement of \logRegForFind.
Similarly, when performing extraction queries, we apply the respective unitary \unitaryForExtract{i}, but do not perform the measurement of \logRegForFind.
Instead, we perform a measurement of \logRegForFind and \logRegForExtract in the end, which we will denote by \MeasureLogs.
Let \FinalStateNotPuncWithoutExtract denote the final state of \Ad{A}, when run with access to original oracle \SupOrNotProg,
but without the extraction measurements.
Since the 'FIND' register \logRegForFind of \FinalStateNotPuncWithExtract will never be touched as \FinalStateNotPuncWithExtract represents the case where no puncturing is performed, \FinalStateNotPuncWithExtract is stable under 'FIND' measurements,
we hence have that  $\FinalStateNotPuncWithExtract = \MeasureLogs(\FinalStateNotPuncWithoutExtract)$.
Let \FinalStatePuncWithoutExtract denote the final state of \Ad{A} when run with access to $\puncture{\SupOr}{\reproSet}$,
but without the final measurement \MeasureLogs,
meaning $\FinalStatePuncWithExtract = \MeasureLogs(\FinalStatePuncWithoutExtract)$.
Using monotonicity of fidelity, we obtain
\begin{equation*}
	F(\FinalStateNotPuncWithExtract, \FinalStatePuncWithExtract)
	= F(\MeasureLogs(\FinalStateNotPuncWithoutExtract), \MeasureLogs(\FinalStatePuncWithoutExtract))
	\geq F(\FinalStateNotPuncWithoutExtract, \FinalStatePuncWithoutExtract)
	\enspace .
\end{equation*}

We will now break down the fidelity term $F(\FinalStateNotPuncWithoutExtract, \FinalStatePuncWithoutExtract)$ into an expected value for instances of set $S$,
oracle values $\overrightarrow{y} := (y_x)_{x \in S}$ and input $\inputVar$:
Let \GenInstance denote the sampling of an instance $\instanceOWTH := (S, (y_x)_{x \in S}, \inputVar)$ according to their distribution.
Let $\ket{\FinalStatePureNotPuncIndexed}$ be the pure state corresponding to $\FinalStateNotPuncWithoutExtract$ that would be obtained by running \Ad{A} with a fixed instance $\instanceOWTH$.
Similarly, let $\ket{\FinalStatePureKeepTrackIndexed}$ be the state corresponding to $\FinalStatePuncWithoutExtract$.
Then $\FinalStateNotPuncWithoutExtract = \Exp_{\instanceOWTH} [\proj{\ket{\FinalStatePureNotPuncIndexed}}]$,
and $\FinalStatePuncWithoutExtract = \Exp_{\instanceOWTH} [\proj{\ket{\FinalStatePureKeepTrackIndexed}}]$.
Hence we can identify
\begin{align*}
	F(\FinalStateNotPuncWithoutExtract, \FinalStatePuncWithoutExtract) &= F(	\Exp_{\instanceOWTH} \proj{\FinalStatePureNotPuncIndexed}, 
									\Exp_{\instanceOWTH} \proj{\FinalStatePureKeepTrackIndexed}) \nonumber \\
	& \stackrel{(*)}{\geq} \Exp_{\instanceOWTH} F( \proj{\FinalStatePureNotPuncIndexed},
																	  \proj{\FinalStatePureKeepTrackIndexed})  \nonumber 
	\stackrel{(**)}{\geq}  1 - \frac{1}{2} \Exp_{\instanceOWTH}
												\left\lVert \ket{\FinalStatePureNotPuncIndexed}
															- \ket{\FinalStatePureKeepTrackIndexed} \right \rVert^2
	\enspace ,
\end{align*}
Here, (*) follows from the joint concavity of the fidelity, and (**) uses the fact that for any two normalised states $\ket{\Psi}$ and $\ket{\Phi}$,
we have that $F( \proj{\Psi}, \proj{\Phi} ) \geq 1 - \frac{1}{2} \left\lVert \ket{\Psi} - \ket{\Phi} \right \rVert^2$.
(This was proven in \cite[Lemma 3]{C:AmbHamUnr19}).

In order to prove \cref{eq:OWTH:Claim-Fidelity}, it hence remains to show that for any instantiation $\instanceOWTH = (S, \overrightarrow{y}, \inputVar)$,
it holds that
\begin{equation}\label{eq:OWTH:Claim-Norm}
	\left\lVert \ket{\FinalStatePureNotPuncIndexed} - \ket{\FinalStatePureKeepTrackIndexed} \right \rVert^2
	 \leq 4 d \cdot P_\FIND^{\instanceOWTH}
 \enspace ,
\end{equation}
where $P_\FIND^{\instanceOWTH}$ denotes the probability of measuring the $\logRegForFind$ register of the final state $\ket{\FinalStatePureKeepTrackIndexed}$ resulting in anything else than $\ket{0, \cdots, 0}$.
For the rest of the proof, we hence consider $\instanceOWTH = (S, \overrightarrow{y}, \inputVar)$ to be fixed and omit the indices from our notation.

Both final states $\ket{\FinalStatePureNotPunc}$ and $\ket{\FinalStatePureKeepTrack}$ result from a chain of state transitions,
applied to the same initial state \InitialState.
Out of these transitions, $d$ many represent oracle queries and hence are either of the form $T_0^{(j)} = U_A \circ \OrUnit$
(to end up with $\ket{\FinalStatePureNotPunc}$),
where $U_A$ models the adversary's behaviour,
or of the form $T_1^{(j)} = U_A \circ \OrUnit \circ \unitaryForFind{i_j}$ for some $i_j$ (to end up with $\ket{\FinalStatePureKeepTrack}$).
The remaining \qExtract many transitions represent extraction queries and are of the form $T_0^{(j)} = T_1^{(j)} = U_A \circ \unitaryForExtract{i_j}$ for some $i_j$, where $U_A$ again models the adversary's behaviour.
Let \InbetweenStateNotPunc{j} denote the $j$-th intermediate state on the way to final state $\ket{\FinalStatePureNotPunc}$,
i.e., let $\InbetweenStateNotPunc{j} = T_0^{(j)} \circ T_0^{(j-1)} \circ \cdots \circ T_0^{(0)} \InitialState$,
and let \InbetweenStateKeepTrack{j} denote the $j$-th intermediate state on the way to final state $\ket{\FinalStatePureKeepTrack}$,
i.e., let $\InbetweenStateKeepTrack{j} = T_1^{(j)} \circ T_1^{(j-1)} \circ \cdots \circ T_1^{(1)}  \InitialState$.
Furthermore, let $\epsilon_j$ denote the distance between these intermediate states,
i.e., $\epsilon_j := \left\lVert \InbetweenStateNotPunc{j} - \InbetweenStateKeepTrack{j} \right \rVert$.
With this notation, we have that 
\begin{align*}
	\left\lVert \ket{\FinalStatePureNotPunc} - \ket{\FinalStatePureKeepTrack} \right \rVert
	= \epsilon_{d  + \qExtract} = \sum_{j=1}^{d  + \qExtract} \epsilon_{j} - \epsilon_{j-1} \enspace ,
\end{align*}
hence
\begin{align}\label{eq:OWTH:Distance-To-Sum}
	\left\lVert \ket{\FinalStatePureNotPunc} - \ket{\FinalStatePureKeepTrack} \right \rVert^2
	& \leq \left( \sum_{j=1}^{d  + \qExtract} | \epsilon_{j} - \epsilon_{j-1} | \right)^2 
	\enspace .
\end{align}

We will now bound this sum by bounding the summands $| \epsilon_{j} - \epsilon_{j-1} |$, depending on which kind of query they represent.
To this end, let $Q_{\textnormal{Or}}\subset\lbrace 1, \cdots, d + \qExtract \rbrace$ be the index set of oracle queries,
i.e., the set of indices $j$ such that $T_b^{(j)} = U_A \circ \OrUnit \circ (\unitaryForFind{i_j})^b$ for some $i_j$,
and let $Q_{\textnormal{Ext}} \subset \lbrace 1, \cdots, d + \qExtract \rbrace$ be the index set of extraction queries,
i.e., the set of indices $j$ such that $T_b^{(j)} = U_A \circ \unitaryForExtract{i_j}$ for some $i_j$.
We claim that for any oracle query, i.e., for any $j \in Q_{\textnormal{Or}}$, we have that
\begin{equation}\label{eq:OWTH:Claim-Intermediate-Oracle}
	| \epsilon_{j} - \epsilon_{j-1} | \leq 2 \cdot \left\lVert P_S \InbetweenStateKeepTrack{j-1} \right \rVert
	\enspace ,
\end{equation}
where $P_S$ is the projector unto the subspace spanned by $S$.
For any extraction query, i.e., for any $j \in Q_{\textnormal{Ext}}$, we furthermore claim that
\begin{equation}\label{eq:OWTH:Claim-Intermediate-Extraction}
	\epsilon_{j} = \epsilon_{j-1} \enspace .
\end{equation}

Plugging claims (\ref{eq:OWTH:Claim-Intermediate-Oracle}) and (\ref{eq:OWTH:Claim-Intermediate-Extraction}) into \cref{eq:OWTH:Distance-To-Sum}, we obtain
\begin{align*}
	\left\lVert \ket{\FinalStatePureNotPunc} - \ket{\FinalStatePureKeepTrack} \right \rVert^2
	& 	\leq \left( \sum_{j=1}^{d  + \qExtract} | \epsilon_{j} - \epsilon_{j-1} | \right)^2 
			\leq \left( \sum_{j \in Q_{\textnormal{Or}}} 2 \cdot  \left\lVert P_S \InbetweenStateKeepTrack{j-1} \right \rVert \right)^2 \\
	&  \stackrel{(*)}{\leq} 4 d  \cdot \left( \sum_{j \in Q_{\textnormal{Or}}}\left\lVert P_S \InbetweenStateKeepTrack{j-1} \right \rVert^2 \right)
		\stackrel{(**)}{\leq} 4 d  \cdot (P_\FIND)
	\enspace ,
\end{align*}
where (*) used Jensen's inequality;
and (**) used that $P_S$ is precisely the measurement operator corresponding to the event \FIND.

It hence remains to prove claims (\ref{eq:OWTH:Claim-Intermediate-Oracle}) and (\ref{eq:OWTH:Claim-Intermediate-Extraction}).
In order to prove claim (\ref{eq:OWTH:Claim-Intermediate-Oracle}), note that for $j \in Q_{\textnormal{Or}}$, we have that
\begin{align*}
	\epsilon_{j} &
		= \left\lVert \InbetweenStateNotPunc{j} - \InbetweenStateKeepTrack{j} \right \rVert 
		= \left\lVert U_A \circ \OrUnit \InbetweenStateNotPunc{j-1} - U_A \circ \OrUnit \circ \unitaryForFind{i_j} \InbetweenStateKeepTrack{j-1} \right\rVert
	\\ &
		= \left\lVert U_A \circ \OrUnit \left( \InbetweenStateNotPunc{j-1} - \unitaryForFind{i_j} \InbetweenStateKeepTrack{j-1} \right) \right \rVert
		\stackrel{(*)}{=} \left\lVert \InbetweenStateNotPunc{j-1} - \unitaryForFind{i_j} \InbetweenStateKeepTrack{j-1} \right \rVert 
	\\ &
		\leq \left\lVert \InbetweenStateNotPunc{j-1} - \InbetweenStateKeepTrack{j-1} \right \rVert
						+ \left\lVert \InbetweenStateKeepTrack{j-1} - \unitaryForFind{i_j} \InbetweenStateKeepTrack{j-1} \right \rVert 
		= \epsilon_{j-1} + \left\lVert (\id - \unitaryForFind{i_j}) \InbetweenStateKeepTrack{j-1} \right \rVert
	\enspace ,
\end{align*}
where (*) used that $U_A$ and \OrUnit are unitaries. Using that $\id$ and $\unitaryForFind{i_j}$ coincide on the image of $(\id - P_S)$, we can identify
\begin{align*}
	\left\lVert (\id - \unitaryForFind{i_j}) \InbetweenStateKeepTrack{j-1} \right \rVert 
	& = \left\lVert (\id - \unitaryForFind{i_j}) P_S \InbetweenStateKeepTrack{j-1} \right \rVert
	\leq \left\lVert (\id - \unitaryForFind{i_j})\right \rVert _\infty\left\lVert P_S \InbetweenStateKeepTrack{j-1} \right \rVert
	\leq 2 \cdot \left\lVert P_S \InbetweenStateKeepTrack{j-1} \right \rVert
	\enspace ,
\end{align*}
where the second-to-last inequality holds by definition of the operator norm, and the last follows from the triangle inequality.

In order to prove claim  (\ref{eq:OWTH:Claim-Intermediate-Extraction}), note that
\begin{align*}
	\epsilon_{j} &
	= \left\lVert \InbetweenStateNotPunc{j} - \InbetweenStateKeepTrack{j} \right \rVert 
	= \left\lVert U_A \circ \unitaryForExtract{i_j} \InbetweenStateNotPunc{j-1} - U_A \circ \unitaryForExtract{i_j} \InbetweenStateKeepTrack{j-1} \right \rVert 
	\stackrel{(*)}{=} \left\lVert \InbetweenStateNotPunc{j-1} - \InbetweenStateKeepTrack{j-1} \right \rVert 
	\enspace ,
\end{align*}
where (*) used that $U_A$ and \unitaryForExtract{i} are unitaries.

\qed
\end{proof}
 	
	\ifTightOnSpace %
\ifTightOnSpace %
	\section{Proof of \cref{thm:OWTH:Find-to-Extract} (Finding to Extracting)} \label{sec:QROM:OWTH:FindToExtract}
	
	For easier reference, we repeat the statement of \cref{thm:OWTH:Find-to-Extract}.
	
	\FindToExtract*
	
\else

\fi

\begin{proof}
Given an algorithm $\Ad A^{\puncture{\SupOr}{\reproSet}}$, we define an algorithm $\Ad{B}^{\orSemiClassical{\reproSet''}}$ as follows:
$\Ad{B}^{\orSemiClassical{\reproSet''}}$ initializes a fresh extractable superposition oracle simulation \SupOr.
After generating \Ad{A}'s input \inputVar,
\Ad{B} runs $\Ad A^{\puncture{\SupOr}{\reproSet}}$ by simulating \puncture{\SupOr}{\reproSet} as follows:
Extraction queries are simply answered using \SE, and random oracle queries with query registers $XY$ are answered by first performing a query to its own oracle $\orSemiClassical{\reproSet''}$ with these registers and then applying $\SRO$.

Since \Ad{B} perfectly simulates \puncture{\SupOr}{\reproSet} to \Ad{A} and since \Ad{B}'s queries to \orSemiClassical{\reproSet''} are exactly \Ad{A}'s queries to \puncture{\SupOr}{\reproSet},
\begin{equation}
	\Pr[\FIND : \Ad{A}^{\puncture{\SupOr}{\reproSet''}}] = \Pr[\FIND : \Ad{B}^{\orSemiClassical{\reproSet''}}]
	\enspace .
\end{equation}

Applying \cite[Th. 2]{C:AmbHamUnr19} to \Ad{B} yields

\begin{equation}
	\Pr[\FIND : \Ad{B}^{\orSemiClassical{\reproSet''}}]
	\leq 4 d \cdot \Pr[\reproSet'' \cap \extractionSet \neq \emptyset: \extractionSet \leftarrow \Extractor'(\Ad{B})]
	\enspace ,
\end{equation}

where $\Extractor'$ randomly measures one of \Ad{B}'s queries to generate its output.
Unwrapping \Ad{B} into $\Extractor'$ defines the theorem's extractor \Extractor that randomly measures one of \Ad{A}'s queries to generate its output.

\begin{equation}
	\Pr[\reproSet'' \cap \extractionSet \neq \emptyset: \extractionSet \leftarrow \Extractor'(\Ad{B})]
		= \Pr[\reproSet'' \cap \extractionSet \neq \emptyset: \extractionSet \leftarrow \Extractor(\Ad{A})]
	\enspace .
\end{equation}

Collecting the probabilities yields the desired bound.
\qed
\end{proof} %
 \fi
	
	\ifTightOnSpace %
\ifTightOnSpace %
	\section{Proof of Theorems \ref{thm:INDPKEToINDCPAKEM} and \ref{thm:OWPKEToINDCPAKEM} (From $\INDCPA_{\PKE}$ or $\OWCPA_{\PKE}$ to $\INDCPA_{\FO[\PKE]}$ in the \augQROM{\Encrypt})}\label{sec:QROM:CPA-to-passive:proof}

	We will now use the \OWTH results from \cref{sec:QROM:OWTH} to prove \cref{thm:INDPKEToINDCPAKEM} and \cref{thm:OWPKEToINDCPAKEM}. We will first prove \cref{thm:INDPKEToINDCPAKEM}, which we repeat for easier reference:
	
	\INDPKEToINDCPAKEM*
	
	As sketched in the main body, the proof consists of the following steps:
	First, replace the encryption randomness $r^* := \RO{G}( m^*)$ and the honest KEM key $K_0 := \RO{H}(m^*)$ in \game{1} with fresh uniform randomness.
	Since in \game{1}, the forwarded key is uniformly random either way, an adversary has no distinguishing advantage at all and it remains to upper bound the distinguishing advantage between the \INDCPA game and \game{1}.
	To this end, we use the reprogramming theorem (\cref{thm:OWTH:Dist-to-Find}, see page~\pageref{thm:OWTH:Dist-to-Find}).
	\cref{thm:OWTH:Dist-to-Find} bounds the distinguishing advantage in terms of the probability of an event $\FIND_{m^*}$, the event that $m^*$ is detected in the adversary's random oracle queries.
	To upper bound $\Pr[\FIND_{m^*}]$, we assume \INDCPA security of \PKE to argue that the challenge ciphertext $c^*$ can be replaced with an encryption of another random message.
	After this change, $m^*$ is independent of \Ad{A}'s input, therefore $\FIND_{m^*}$ becomes highly unlikely for large enough message spaces. The last argument is made more formal using theorem \cref{thm:OWTH:Find-to-Extract-Independent} (given on page~\pageref{thm:OWTH:Find-to-Extract-Independent}).
	
\fi

\begin{proof}

	Let \Ad{A} be an adversary against the \INDCPA security of $\KemExplicit = \FOExplicitMess[\PKE, \RO{G},\RO{H}]$,
	issuing random oracle queries to both its oracles of query depth $d$, and $q$ many in total. 
	Consider the two games given in \cref{fig:games:INDPKEToINDCPAKEM}.
	
	\begin{figure}[htb] \begin{center} \fbox{\small
				
				\nicoresetlinenr
				
				\begin{minipage}[t]{5cm}	
					
					\underline{\bfgameseq{0}{1}} 
					\begin{nicodemus}
						
						\item $(\pk,\sk) \leftarrow \KG$
						\item $b \uni \bits$
						\item $m^* \uni \MSpace$
						\item $(r^*, K_0^*) := \SRO ( m^*)$ \gcom{$G_0$}
						\item $(r^*, K_0^*) \uni \RSpace \times \KeySpace$ \label{line:reprogramOracleValues} \gcom{$G_1$} 
						\item $c^* := \Encrypt(\pk, m^*; r^*)$ \label{line:Encrypt}
						\item $K_1^* \uni \KeySpace$
						\item $b'\leftarrow \Ad{A}^{\SupOr}(\pk, c^*,K_b^*)$
						\item \pcreturn $\bool{b' = b}$
						
					\end{nicodemus}
					
				\end{minipage}
	}	
	\end{center}
		\caption{\gameseq{0}{1} for the proof of \cref{thm:INDPKEToINDCPAKEM}.}			
		\label{fig:games:INDPKEToINDCPAKEM}
	\end{figure}
	
	\bfgame{0} essentially is game $\INDCPAKEM_{\KemExplicit}$, the only difference is that we combined oracles \RO{G} and \RO{H} into a single oracle \SupOr.
	\ifTightOnSpace
		Again, we point to \cref{sec:SimulatingGtimesH} that explains why 
	\else
		As discussed in the proof of \cref{thm:INDandFFPtoCCA:QROM}
		(before \cref{explanation:GtimesH}, see page~\pageref{explanation:GtimesH}),
	\fi
	this change is merely of a conceptual nature, simplifying our later reasoning about the synchronous reprogramming of $\RO{G}$ and $\RO{H}$ on $m^*$.
	\[
		\Adv^{\INDCPAKEM}_{\KemExplicit}(\Ad{A}) = |\Adv^\bfgame{0} - \frac{1}{2} |\enspace .
	\]

	\newcommand{\PrFindInGameOne}{\ensuremath{
			\Pr[\FIND_{m^*} \textnormal{ in } \bfgame{1}^{\puncture{\SupOr}{ \lbrace m^* \rbrace}} ]
	}\xspace}

	{In \bfgame{1}, we replace oracle values $r^* := \RO{G}( m^*)$ and $K_0 := \RO{H}(m^*)$
		with fresh random values (see line \ref{line:reprogramOracleValues}).
		Since $K_b^*$ is now an independent random value regardless of the challenge bit,
		\[
			\Adv^\bfgame{1} = \frac{1}{2} \enspace .
		\]

		We will now apply \cref{thm:OWTH:Dist-to-Find} to relate \Ad{A} being able to distinguish between \game{0} and \game{1} to the probability that \Ad{A}'s queries contain $m^*$,
		or more precisely, the probability that \Ad{A} would trigger event $\FIND_{m^*}$ in \game{1},
		would it be run with the punctured oracle $\puncture{\SupOr}{ \lbrace m^* \rbrace}$ that additionally measures whether any of \Ad{A}'s random oracle queries contained $m^*$ and in that case sets flag $\FIND_{m^*}$ to 1.
		We claim that
		\begin{equation}\label{eq:FO:DistToFind}
			|\Adv^\bfgame{0} - \Adv^\bfgame{1} |
				\leq  4 \cdot \sqrt{d \cdot \PrFindInGameOne } \enspace .
		\end{equation}
		
		To verify this claim, we identify each \game{b} (where $b \in \lbrace 0, 1 \rbrace$) with one of the \OWTH games defined in \cref{thm:OWTH:Dist-to-Find} as follows:
		As \GenInput, we define the algorithm that samples a key pair and a random message $m^*$,
		queries \SupOrNotProg on $m^*$ to obtain $r^*$ and $K_0^*$, and outputs as \inputVar the public key as well as $c^* \coloneqq \Encrypt(\pk, m^*; r^*)$ and $K_0^*$.
		With this identification, set \reproSet from \cref{thm:OWTH:Dist-to-Find} is $\lbrace m^*\rbrace$.
		
		As the \OWTH distinguisher, we define algorithm \Ad{D} that gets \inputVar, picks a random bit $b$ and a random key $K_1^*$ and then forwards $\pk$, $c^*$ and $K_b^*$ to \Ad{A}.
		It forwards all of \Ad{A}'s random oracle and extraction queries to its own respective oracle, and at the end, it returns 1 iff \Ad{A}'s output bit is equal to $b$.
		When \Ad{D} is run with access to \SupOrNotProg, it perfectly simulates \game{0},
		and when \Ad{D} is run with access to \SupOrProg, the input is defined relative to oracle \SupOrNotProg, while the oracle to which \Ad{A}'s queries are forwarded by \Ad{D} is \SupOrProg. Since everything except for the values $r^*$ and $K_0^*$ computed by \GenInput is now independent of the oracle \SupOrNotProg which is furthermore inaccessible to \Ad{D} and \Ad{A}, this is equivalent to simply sampling random values $r^*$ and $K_0^*$ instead, therefore		
		\[
			|\Adv^\bfgame{0} - \Adv^\bfgame{1} | = \Adv^{\OWTH}_{\augQRO{f}}(\Ad{D}) \enspace.
		\]

		Note that \EVENTEXT from \cref{thm:OWTH:Dist-to-Find} corresponds to the event that \Ad{A} queries its extraction oracle \SE on $c^*$, which we ruled out in the theorem statement as a prerequisite.
		Therefore, we can apply the special case bound \cref{eq:OWTH:Dist-to-Find:NoExtract} of \cref{thm:OWTH:Dist-to-Find},
		and since \Ad{D} has exactly the query behaviour of \Ad{A} and triggers \FIND exactly if \Ad{A} triggers \FIND,
		\[
			\Adv^{\OWTH}_{\augQRO{f}}(\Ad{D}) \leq 4 \cdot \sqrt{d \cdot \PrFindInGameOne}
			\enspace .
		\]
		
	}
	
	What we have shown so far is that
	\begin{equation}\label{eq:FO:CPAtoFIND}
		\Adv^{\INDCPAKEM}_{\KemExplicit}(\Ad{A}) \leq  4 \cdot \sqrt{d \cdot \PrFindInGameOne } \enspace .
	\end{equation}

	In order to take the last step towards our reduction, consider the two games given in \cref{fig:games:OWtoINDCPA:INDPKEToINDCPAKEM:StepTwo}.
	
	\begin{figure}[htb] \begin{center} \fbox{\small
				
				\nicoresetlinenr
				
				\begin{minipage}[t]{5.2cm}	
					
					\underline{\bfgameseq{2}{3}} 
					\begin{nicodemus}
						
						\item $(\pk,\sk) \leftarrow \KG$
						\item $m^* \uni \MSpace$
						\item $c^* \leftarrow \Encrypt(\pk, m^*)$ \gcom{$G_2$}
						\item $\tilde{m} \uni \MSpace$ \gcom{$G_3$}
						\item $c^* \leftarrow \Encrypt(\pk, \tilde{m})$ \gcom{$G_3$}
						\item $K^* \uni \KeySpace$
						\item $b'\leftarrow \Ad{A}^{\puncture{\SupOr}{ \lbrace m^* \rbrace}}(\pk, c^*,K^*)$
						\item \pcif $\FIND_{m^*}$ \pcreturn 1
						
					\end{nicodemus}
					
				\end{minipage}
				
				\quad 
				
				\begin{minipage}[t]{6cm}
					
					\underline{\textbf{Reduction} $B^1_{\INDCPA}(\pk)$} 
					\begin{nicodemus}
						
						\item $m^*, \tilde{m} \uni \MSpace$
						
						\item \pcreturn $(m^*, \tilde{m}, \state := m^*)$
						
					\end{nicodemus}
					
					\ \\
					
					\underline{\textbf{Reduction} $B^2_{\INDCPA}(\pk, c^*, \state = m^*)$} 
					\begin{nicodemus}
						
						\item $K^* \uni \KeySpace$
						
						\item $b'\leftarrow \Ad{A}^{\puncture{\SupOr}{ \lbrace m^* \rbrace}}(\pk, c^*,K^*)$
						
						\item \pcif $\FIND_{m^*}$ \pcreturn 1
						
					\end{nicodemus}
					
				\end{minipage}
				
			}	
		\end{center}
		\caption{\gameseq{2}{3} and \INDCPA reduction $B_{\INDCPA} = (B^1_{\INDCPA}, B^2_{\INDCPA})$
				for the proof of \cref{thm:INDPKEToINDCPAKEM}.}			
		\label{fig:games:OWtoINDCPA:INDPKEToINDCPAKEM:StepTwo}
	\end{figure}
	
	{\bfgame{2} exactly formalises \PrFindInGameOne.
		We cleaned up some variables that are not needed any longer - since $r^*$ is uniformly random in \game{1} and since it will be used nowhere but in line \ref{line:Encrypt} (of \game{1}), we can drop it altogether and simply write $c^* \leftarrow \Encrypt(\pk, m^*)$ instead.
		Similarly, since $K_0^*$ is uniformly random in \game{1} (as is $K_1^*$), we do not need to distinguish between 
		$K_0^*$ and $K_1^*$ any longer, thereby also rendering bit $b$ redundant.
	}
	
	\begin{equation}\label{eq:FO:FindToGameTwo}
		\PrFindInGameOne = \Adv^\bfgame{2} \enspace .
	\end{equation}

	{In \bfgame{3}, we replace $c^*$ with an encryption of another random message,
		while sticking with puncturing the oracle on $m^*$.
		With this change, $m^*$ becomes independent of \Ad{A}'s input, and using \cref{eq:OWTH:Find-to-Extract-Independent} from \cref{thm:OWTH:Find-to-Extract-Independent} yields
		\begin{equation}\label{eq:FO:FindToExtractIndependent}
			\Adv^\bfgame{3} \leq \frac{4q}{|\MSpace|} \enspace .
		\end{equation}
		
		To upper bound $|\Adv^\bfgame{2} - \Adv^\bfgame{3}|$,
		consider the reduction given in \cref{fig:games:OWtoINDCPA:INDPKEToINDCPAKEM:StepTwo}.
		Since $B_{\INDCPA}$ perfectly simulates either \game{2} or \game{3}, depending on which message is encrypted in its $\INDCPA$ challenge,
		\begin{equation}\label{eq:FO:DistToCPA}
			|\Adv^\bfgame{2} - \Adv^\bfgame{3}|	=  \Adv^{\INDCPA}_{\PKE}(\Ad{B_{\INDCPA}}) \enspace .
		\end{equation}
		
		Combining equations (\ref{eq:FO:FindToGameTwo}), (\ref{eq:FO:FindToExtractIndependent}) and (\ref{eq:FO:DistToCPA}) yields
		\begin{equation}\label{eq:FO:FindToCPA}
			\PrFindInGameOne \leq \Adv^{\INDCPA}_{\PKE}(\Ad{B_{\INDCPA}}) + \frac{4q}{|\MSpace|} \enspace .
		\end{equation}
	}

	Plugging \cref{eq:FO:FindToCPA} into \cref{eq:FO:DistToFind} and using that $d\leq q$ yields the bound claimed in \cref{thm:INDPKEToINDCPAKEM}. The statement about $B_{\INDCPA}$'s runtime is straightforward.
		
\qed
\end{proof}

\ifTightOnSpace %
\ifTightOnSpace %
	We proceed to proving \cref{thm:OWPKEToINDCPAKEM}, which we repeat for easier reference:
	
	\OWPKEToINDCPAKEM*
\fi

\begin{proof}
Let \Ad{A} again be an adversary against the \INDCPA security of $\KemExplicit$,
issuing random oracle queries of query depth $d$, and $q$ many in total.
Defining \game{0} to \game{2} exactly like in the proof of \cref{thm:INDPKEToINDCPAKEM} and combining \cref{eq:FO:CPAtoFIND} and \cref{eq:FO:FindToGameTwo}, we obtain
\begin{equation} \label{eq:FO:CPAtoGameTwo}
	\Adv^{\INDCPA}_{\KemExplicit}(\Ad{A}) \leq  4 \cdot \sqrt{d \cdot \Adv^\bfgame{2} } \enspace .
\end{equation}
To bound $\Adv^\bfgame{2}$, we use \cref{thm:OWTH:Find-to-Extract}
to relate $ \Adv^\bfgame{2}$ to the \OWCPA advantage of an algorithm that extracts $m^*$ from the oracle queries:
In order to relate $\Adv^\bfgame{2}$ to \OWCPA security using \cref{thm:OWTH:Find-to-Extract}, consider reduction $B_{\OWCPA}$ given in \cref{fig:games:OWtoINDCPA}.
$B_{\OWCPA}$ is exactly the query extractor \Extractor from \cref{thm:OWTH:Find-to-Extract} \emph{until $B_{\OWCPA}$'s last additional step},
where $B_{\OWCPA}$ randomly chooses its output from the candidate list it extracted (in line~\ref{line:pickCandidate}).
Since \game{2} exactly models the probability that \Ad{A} triggers $\FIND_{m^*}$, applying \cref{thm:OWTH:Find-to-Extract} yields
\begin{equation}\label{eq:FO:FindToExtract}
	\Adv^\bfgame{2} \leq 4 d \cdot \Pr[m^* \in \extractionSet : \extractionSet \leftarrow \Extractor(\pk, c^*)] \enspace ,
\end{equation}
where \Extractor is the query extractor from \cref{thm:OWTH:Find-to-Extract}, meaning \extractionSet is the result of running $\Ad A^{\SupOr}(\inputVar)$ until (just before) the $i$-th query, 
measuring all query input registers, and returning as \extractionSet the set of measurement outcomes.
Since $B_{\OWCPA}$ does exactly the same and then picks a random element of \extractionSet,
and since $B_{\OWCPA}$ wins if it randomly picked $m^*$ from \extractionSet,
\begin{equation}\label{eq:FO:ExtractToOW}
	\Pr[m^* \in \extractionSet : \extractionSet \leftarrow \Extractor(\pk, c^*)] \leq |\extractionSet| \cdot \Adv^{\OW}_{\PKE}(\Ad{B_{\OWCPA}}) \enspace .
\end{equation}

Combining equations (\ref{eq:FO:FindToExtract}) and (\ref{eq:FO:ExtractToOW}) yields
\begin{equation}\label{eq:FO:FindToOW}
	\Adv^\bfgame{2} \leq 4 d \cdot w \cdot \Adv^{\OW}_{\PKE}(\Ad{B_{\OWCPA}}) \enspace ,
\end{equation}
where we used that $|\extractionSet|$, the number of parallel queries issued during \Ad{A}'s $i$-th query,
can be upper bounded by $w$, the maximal query width.

Plugging \cref{eq:FO:FindToOW} into \cref{eq:FO:CPAtoGameTwo} yields the bound claimed in \cref{thm:OWPKEToINDCPAKEM}.
Again, the statement about $B_{\OWCPA}$'s runtime is straightforward.

	\begin{figure}[htb] \begin{center} 
	\resizebox{\textwidth}{!}{ 
                \fbox{\small
				
		\nicoresetlinenr
		
		\begin{minipage}[t]{4.7cm}	
			
			\underline{\bfgame{2}} 
			\begin{nicodemus}
				
				\item $(\pk,\sk) \leftarrow \KG$
				\item $m^* \uni \MSpace$
				\item $c^* \leftarrow \Encrypt(\pk, m^*)$
				\item $K^* \uni \KeySpace$
				\item $b'\leftarrow \Ad{A}^{\puncture{\SupOr}{ \lbrace m^* \rbrace}}(\pk, c^*,K^*)$
				\item \pcif $\FIND_{m^*}$ \pcreturn 1
				
			\end{nicodemus}
			
		\end{minipage}
		
		\quad
		
		\begin{minipage}[t]{7.5cm}
			
			\underline{\textbf{Reduction} $B_{\OWCPA}(\pk, c^*)$} 
			\begin{nicodemus}
				
				\item $i \uni \lbrace 1, \cdots, d \rbrace$
				
				\item $K^* \uni \KeySpace$ 
				
				\item Run $\Ad{A}^{\SupOr}(\pk, c^*, K^*)$ \\	\text{\quad }
				until its $i$-th query to $\SRO$
				
				\item $\lbrace m_1', m_2', \cdots \rbrace \leftarrow \Measure$ query input registers
				
				\item $m' \uni \lbrace m_1', m_2', \cdots \rbrace$ \label{line:pickCandidate}
				
				\item \pcreturn $m'$
				
			\end{nicodemus}
			
		\end{minipage}		
			
	}}	
	\end{center}
		\caption{\game{2} and \OWCPA reduction $B_{\OWCPA}$ for the proof of \cref{thm:OWPKEToINDCPAKEM}.}			
		\label{fig:games:OWtoINDCPA}
	\end{figure}
		
\qed
\end{proof}
 \fi

 \fi

\ifTightOnSpace
	\section{Proof of \cref{thm:PKEDerand:FFPCPA} (From FngFPCPA and \FFPNoKey to \FFPCPA) } \label{PKEDerand:FFPCPA}
	
	For easier reference, we repeat the statement of \cref{thm:PKEDerand:FFPCPA}.
	
	\PKEDerandFFPCPA*
	
\fi

\begin{proof}
	By definition of the \FFPCPA advantage, we have
	\begin{align*}
		\Adv^{\FFPCPA}_{\PKEDerand}(\Ad{A}) =\Pr_{m\leftarrow \Ad A^{\SupOr}(\pk)}[(m, \SRO(m))\text{ fails wrt. }(\sk, \pk)]
		\enspace .
	\end{align*}	
	To upper bound this probability, we begin by defining \FngFPCPA adversary \Ad{B}:
	On input $\pk$, \Ad{B} runs $\Ad{A}(\pk)$, simulating \SupOr to \Ad{A}. %
	When \Ad{A} finishes by outputting its message $m$, \Ad{B} computes $r \coloneqq \SRO(m)$,
	uses its failure-checking oracle to compute $b' \coloneqq \FCO_b(m,r)$ and outputs $b'$.
	In the case where the challenge bit $b$ of \Ad{B}'s \FngFPCPA game is 0, \Ad{B} perfectly simulates the \FFPCPA game to \Ad{A} and wins iff \Ad{A} wins in game \FFPCPA.
	Therefore,
	\begin{align*}
		\Pr_{m\leftarrow \Ad A^{\SupOr}(\pk)} & [(m, \SRO(m))\text{ fails wrt. }(\sk, \pk)]
			=\Pr[1\leftarrow\Ad B(\pk)|b=0]\\
		&\le \Pr[1\leftarrow\Ad B(\pk)|b=1] + 2\Adv^{\FngFPCPA}_{\PKE}(\Ad{B})
		\enspace ,
	\end{align*}
	where the last line used the definition of the \FngFPCPA advantage.
	
	To upper bound $\Pr[1\leftarrow\Ad B(\pk)|b=1]$, note that this probability formalizes \Ad{A} outputting a message that fails to decrypt, but under an independently drawn key pair $(\sk', \pk')$:
	\begin{align} \label{eq:FFP-no-CPA}
		\Pr[1\leftarrow\Ad B(\pk)|b=1]
		&=\ifTightOnSpace \!\!\!\! \fi\Pr_{m\leftarrow \Ad A^{\SupOr}(\pk)}[(m, \SRO(m))\text{ fails wrt. }(\sk', \pk')]\ifTightOnSpace \else	\enspace \fi,
	\end{align}
	where the probability is taken additionally over $(\sk', \pk')\leftarrow \KG$.
	
	To upper bound this probability, we define \FFPNoKey adversary $\Ad{C}^\SupOr$ against \PKEDerand:
	Upon initialisation, \Ad{C} computes a key pair $(\pk, \sk)$ on its own and runs $\Ad{A}^\SupOr(\pk)$.
	When \Ad{A} finishes by outputting its message $m$, \Ad{C} forwards the message to its own game.
	Since \Ad{C} perfectly simulates the game in \cref{eq:FFP-no-CPA} to \Ad{A} and wins iff \Ad{A} wins,
	\[
		\Pr_{m\leftarrow \Ad A^{\SupOr}(\pk)}[(m, \SRO(m))\text{ fails wrt. }(\sk', \pk')]
		= \Adv^{\FFPNoKey }_{\PKEDerand}(\Ad{C}) \enspace .
	\]
\vspace{-1cm}\\
\qed
\end{proof} 	
	\ifTightOnSpace %
\section{$\gamma$-Spreadness of selected NIST proposals} \label{sec:spreadness}

\cref{thm:QFFP1} provides a tight reduction of \INDCCAKEM to \INDCPAKEM and \FFPCCA, albeit at the cost of an additive error depending on the spreadness factor $\gamma$ of the underlying PKE.
In this section, we will analyze the spreadness of some of the alternates candidates of the NIST post-quantum competition.
Since this work is considered with schemes that exhibit decryption failure and get derandomized to a scheme \PKEDerand,
we do not consider ClassicMcEliece, NTRU, NTRU prime and SIKE (since they are perfectly correct) 
and BIKE (as BIKE encrypts deterministically without incorporating \PKEDerand).
We chose our two examples, \HQCPKE and \FrodoPKE, because computing $\gamma$  for these two examples requires little additional technical overhead. 
Computing $\gamma$ for other submissions to the NIST PQC standardisation process, like, e.g., Kyber or Saber, is out of the scope of this work.

If \numberDecQueries is upper bounded by $2^{64}$ as in NIST’s CFP, we can give a simpler upper bound for the term showing up in \cref{thm:QFFP1}  by computing
\begin{align*}
	\numberDecQueries \cdot (q_{\RO G} + 2\numberDecQueries) \cdot 2^{-\gamma/2}
	\leq 2^{64} \cdot (q_{\RO G} + 2^{65}) \cdot 2^{-\gamma/2}
	\leq q_{\RO G} \cdot 2^{65-\gamma/2}
	\enspace .
\end{align*}

The following lemma makes the bound above explicit for \FrodoPKE.
\begin{restatable}[$\gamma$-Spreadness of \FrodoPKE]{lemma}{SpreadnessFrodo} \label{lem:Spreadness:Frodo}
	\FrodoPKE-\instance is $\gamma$-spread for
	\begin{align*}
		\gamma = \begin{cases}
			10752 & \instance = 1344\\
			15616 & \instance = 976 \\
			10240 & \instance = 640
		\end{cases}
	\enspace ,
	\end{align*}
	hence
	\begin{align*}
		q_{\RO G} \cdot 2^{65-\gamma/2}
			\leq  \begin{cases}
			q_{\RO G} \cdot 2^{- 5311} & \instance = 1344\\
			q_{\RO G} \cdot 2^{- 7743} & \instance = 976 \\
			q_{\RO G} \cdot 2^{- 5055} & \instance = 640
		\end{cases}
		\enspace .
	\end{align*}
\end{restatable}

\begin{proof}
	Let $(\pk = (\seedA, B), \sk) \in \supp(\FrodoKG)$, let $m \in \FrodoMSpace$, and let $c = (B', V') \in \FrodoCSpace$.
	According to the definition of \FrodoEnc, we have that
	\begin{align*}
		\Pr_\FrodoEnc[& \FrodoEnc(\pk, m) = (B', V')] \\
			 &= \Pr_{S', E' \leftarrow \chi^{\overline{m} \times n}, E'' \leftarrow \chi^{\overline{m} \times \overline{n}}}
					[S'A + E' = B' \ \wedge \ S'B + E'' + \FrodoEncode(m) =  V']
			\\ &
			\leq \Pr_{S', E' \leftarrow \chi^{\overline{m} \times n}} [S'A + E' = B']
			\\ &
			= \sum_{s' \in \supp(\chi^{\overline{m} \times n})}
						\Pr_{S', E' \leftarrow \chi^{\overline{m} \times n}} [S'A + E' = B' \wedge S' = s'] 
			\\ &
			= \sum_{s' \in \supp(\chi^{\overline{m} \times n})}
				\Pr_{E' \leftarrow \chi^{\overline{m} \times n}} [s'A + E' = B' ] 
					\cdot \Pr_{S' \leftarrow \chi^{\overline{m} \times n}} [S' = s'] 
			\\ &
			\leq \sum_{s' \in \supp(\chi^{\overline{m} \times n})}
				\Pr_{E' \leftarrow \chi^{\overline{m} \times n}} [E' = 0 ] 
				\cdot \Pr_{S' \leftarrow \chi^{\overline{m} \times n}} [S' = s'] 
			\\ &
			= \Pr_{E' \leftarrow \chi^{\overline{m} \times n}} [E' = 0 ] 
			\leq \left(\Pr_{x \leftarrow \chi} [x = 0 ]\right)^{\overline{m} \times n} 
			\enspace ,
	\end{align*}
	where we applied the law of total probability and used the fact that $\chi$ is a symmetric distribution centered at zero.
	
	We will now plug in the parameters of \FrodoPKE-\instance: For all instantiations of \instance as specified in \cite{FrodoSpec},
	$\overline{m} = 8$ and $n = i$.
	According to table 3 of \cite{FrodoSpec}, we furthermore have that
	\begin{align*}
		\Pr_{x \leftarrow \chi} [x = 0 ]
			= 2^{-16} \cdot  
				\begin{cases}
					18286  & \instance = 1344\\
					11278  & \instance = 976 \\
					9288  & \instance = 640
				\end{cases}
			< \begin{cases}
				2^{-1} & \instance = 1344\\
				2^{-2}  & \instance \in \lbrace 976, 640 \rbrace 
			\end{cases}
		\enspace .
	\end{align*} 
	
	Hence we obtain 		
	
	\begin{align*}
		\max_{c \in \FrodoCSpace} \Pr_\FrodoEnc[& \FrodoEnc(\pk, m) = c] %
		\leq  
		\begin{cases}
			2^{-8 \cdot 1344} & \instance = 1344\\
			2^{-16 \cdot \instance}  & \instance \in \lbrace 976, 640 \rbrace 
		\end{cases}
		\enspace .
	\end{align*} 
\end{proof}

The following lemma makes the bound above explicit for \HQCPKE.

\begin{restatable}[$\gamma$-Spreadness of \HQCPKE]{lemma}{SpreadnessHQC} \label{lem:Spreadness:HQC}
	\HQCPKE-\instance is $\gamma$-spread for
	\begin{align*}
		\gamma = &2 \cdot 
		\begin{cases}
			\log_2 {57600 \choose 149} > 1490   & \instance = 256 \\
			\log_2 {35840 \choose 114} > 1105  & \instance = 192 \\
			\log_2 {17664 \choose 75} > 694   & \instance = 128
		\end{cases}
		\enspace ,
	\end{align*}
	
	hence
	\begin{align*}
		q_{\RO G} \cdot 2^{65-\gamma/2}
		\leq  
			\begin{cases}
                 q_{\RO G} \cdot  2^{-1425}   & \instance = 256 \\
				 q_{\RO G} \cdot 2^{-1040}   & \instance = 192 \\
				 q_{\RO G} \cdot 2^{-629}   & \instance = 128
			\end{cases}
		\enspace .
	\end{align*}

\end{restatable}

\begin{proof}
	Let $(\pk = (h, s), \sk) \in \supp(\HQCKG)$, let $m \in \HQCMSpace$, and let $c = (u, v) \in \HQCCSpace$.
	According to the definition of \HQCEnc, we have that
	\begin{align*}
		\Pr_\HQCEnc[& \HQCEnc(\pk, m) = (u, v)] \\
		&= \Pr_{R_1, R_2 \leftarrow \mathcal{U}(S_{w_r}^{n_1 \cdot n_2}), E \leftarrow \mathcal{U}(S_{w_e}^{n_1 \cdot n_2})}
		[R_1 + h \cdot R_2 = u \ \wedge \ mG + s \cdot R_2 + E =  v] \enspace ,
	\end{align*}
	where $S_w^{n_1 \cdot n_2}$ denotes the subset of elements of hamming weight $w$ in $\bits^{n_1 \cdot n_2}$.
	
	By the law of total probability,
	\begin{align*}
		& \Pr_{R_1, R_2 \leftarrow \mathcal{U}(S_{w_r}^{n_1 \cdot n_2}), E \leftarrow \mathcal{U}(S_{w_e}^{n_1 \cdot n_2})}
		 [R_1 + h \cdot R_2 = u \ \wedge \ mG + s \cdot R_2 + E =  v] 
		\\ &
		= \sum_{r_2 \in S_{w_r}^{n_1 \cdot n_2}}
			\Pr_{R_1 \leftarrow \mathcal{U}(S_{w_r}^{n_1 \cdot n_2}), E \leftarrow \mathcal{U}(S_{w_e}^{n_1 \cdot n_2})}
					[R_1 = u - h \cdot r_2 \ \wedge \ E =  v - (mG + s \cdot r_2 ) ]
		\\ & \quad \quad \quad \quad \quad \quad 
			\cdot \Pr_{R_2 \leftarrow \mathcal{U}(S_{w_r}^{n_1 \cdot n_2})} [R_2 = r_2] 
		\\ &
		\leq \sum_{r_2 \in S_{w_r}^{n_1 \cdot n_2}} \frac{1}{ {n_1 \cdot n_2\choose w_r}} \cdot \frac{1}{ {n_1 \cdot n_2\choose w_e}} \cdot \Pr_{R_2 \leftarrow \mathcal{U}(S_{w_r}^{n_1 \cdot n_2})} [R_2 = r_2] 
		= \frac{1}{ {n_1 \cdot n_2 \choose w_r}} \cdot \frac{1}{ {n_1 \cdot n_2\choose w_e}} 
		\enspace ,
	\end{align*}
	where we used the fact that $|S_w^{N}| = {N\choose w}$ in the last line.
	
	We will now plug in the parameters of \HQCPKE-\instance: For the instantiations of \instance as specified in \cite[Section 2.7]{HQCSpec},
	we have that
	\begin{align*}
		w_e = w_r = 
		\begin{cases}
			149  & \instance = 256\\
			114  & \instance = 192 \\
			75  & \instance = 128
		\end{cases}
		\enspace ,
	\end{align*} 
	and that
	\begin{align*}
		n_1 \cdot n_2 =  
		\begin{cases}
			90 \cdot 640  & \instance = 256\\
			56 \cdot 640 & \instance = 192 \\
			46 \cdot 384 & \instance = 128
		\end{cases}
		\ = \
		\begin{cases}
			57600  & \instance = 256\\
			35840  & \instance = 192 \\
			17664  & \instance = 128
		\end{cases}
		\enspace .
	\end{align*} 
	
	Hence we obtain 		
	\begin{align*}
		\max_{c \in \HQCCSpace} \Pr_\HQCEnc[& \HQCEnc(\pk, m) = c] 
		\leq
		\begin{cases}
			(\frac{1}{ {57600 \choose 149}})^2   & \instance = 256 \\
			(\frac{1}{ {35840 \choose 114}})^2   & \instance = 192 \\
			(\frac{1}{ {17664 \choose 75}})^2   & \instance = 128
		\end{cases}
	 \enspace .
	\end{align*}
	
\end{proof}
 \fi
	
	\ifTightOnSpace %
\ifTightOnSpace %
	
	\section{Alternative bound for  \FFPNoKey based on a stronger tail bound} \label{sec:envelope}
	
	In this appendix, we show how to use a stronger uniform tail bound in place of Chebyshev's inequality to obtain a stronger bound for the adversarial advantage in \FFPNoKey.
	
\else
	
	\subsection{Alternative bound for  \FFPNoKey based on a stronger tail bound} \label{sec:envelope}
	In this subsection, we show how to use a stronger uniform tail bound in place of Chebyshev's inequality to obtain a stronger bound for the adversarial advantage in \FFPNoKey.
	
\fi

We begin by defining the decryption error tail envelope.
\begin{definition}[decryption error tail envelope]
	We define the %
	\emph{decryption error tail envelope} as
	\[ 
	\tau(t) \coloneqq \max_m\Pr_{r\leftarrow \RSpace}\left[\Pr_{(\sk,\pk)}[(m,r)\text{ fails}]\ge t\right]
	\enspace . \]
\end{definition}
We obtain the following stronger bound for \FFPNoKey that scales logarithmically with the adversary's random oracle queries.

\begin{theorem}[Upper bound for \FFPNoKey of \PKEDerand]%
	\label{thm:QFFPNoKey-gauss}
	Let \PKE be a public-key encryption scheme with worst-case random-key decryption error rate \deltaRandKey and decryption error tail envelope $\tau$.
	For any \FFPNoKey adversary \Ad{A} in the \augQROM{\Encrypt} against \PKEDerand, setting $C=304$, we have that
	\begin{equation*}
		\Adv^{\FFPNoKey}_{\PKEDerand}(\Ad{A})
		\leq \deltaRandKey+2\beta^{-1/2}\sqrt{\ln(2C\sqrt\beta )+2\ln(q)}.
	\end{equation*}
\end{theorem}
The above theorem follows directly by an application of \cref{cor:optbound-gaussian-tail} given below. Combining \cref{thm:QFFPNoKey-gauss} with the reductions from \cref{sec:QROM:CCA-to-CPA-KEM,sec:QROM:CPA-to-passive} we get the following alternative to \cref{cor:main-result}.
\begin{corollary}[\PKE \FngFPCPA and pass. secure $\Rightarrow \FOExplicitMess\lbrack\PKE\rbrack$ $\INDCCA$] \label{cor:main-result-2}
	\ifTightOnSpace
	Let \PKE and \Ad{A} be like in \cref{cor:main-result:withFFPCPA}, and let \PKE furthermore have worst-case random-key decryption error rate \deltaRandKey, decryption error rate variance $\sigma_{\deltaRandKey}$ and decryption error tail envelope $\tau$.
	\else
	Let \PKE be a (randomized) \PKE scheme that is $\gamma$-spread and with worst-case random-key decryption error rate \deltaRandKey, decryption error rate variance $\sigma_{\deltaRandKey}$ and decryption error tail envelope $\tau$.
	Let \Ad{A} be an \INDCCAKEM adversary (in the QROM) against $\KemExplicitMess \coloneqq \FOExplicitMess[\PKE, \RO{G},\RO H]$, issuing at most $q_\RO{G}$ many queries to its oracle \RO{G}, $q_\RO{H}$ many queries to its oracle \RO{H}, and at most \numberDecapsQueries many queries to its decapsulation oracle \oracleDecaps. Let $q=q_\RO{G}+q_\RO{H}$, and let $d$ and $w$ be the query depth and query width of the combined queries to $\RO G$ and $\RO H$.
	\fi
	Set $C=304$ and assume $\sqrt Cq_{\RO{G}}\sigma_\deltaRandKey\le 1/2$.
	Then there exist an \INDCPA adversary $\Ad{B}_\IND$, a \OWCPA adversary $\Ad{B}_\OW$ and an \FngFPCPA adversary \Ad{C} against $\PKE$,  such that
	\begin{align}
		\Adv^{\INDCCAKEM}_{\KemExplicitMess}(\Ad{A})
		\le&\widetilde{\Adv}_{\PKE}
		+
		(\numberDecapsQueries+1) \left(2\Adv^{\FngFPCPA}_{\PKE}(\Ad{C})+\eps_{\deltaRandKey}\right)
		+\eps_{\gamma}\label{eq:QFFP3}
	\end{align}
	with
	\ifTightOnSpace
	$\widetilde{\Adv}_{\PKE}$ and $\eps_{\gamma}$ like in \cref{cor:main-result:withFFPCPA}.
	\else
	\begin{equation}
		\widetilde{\Adv}_{\PKE}
		=\begin{cases}
			4 \cdot \sqrt{ \left(d+\numberDecapsQueries\right)  \cdot \Adv^{\INDCPA}_{\PKE}(\Ad{B}_\IND)}
			+ \frac{8\left(q+\numberDecapsQueries\right)}{\sqrt{\left|\MSpace\right|}}&\text{ or}\\
			8\left(d+\numberDecapsQueries\right) \cdot \sqrt{ w \cdot \Adv^{\OW}_{\PKE}(\Ad{B}_\OW) }. &
		\end{cases}
	\end{equation}	
	\fi
	The additive error term $\eps_{\deltaRandKey}$ is given by
	\begin{equation}\label{eq:epsdelta-chebyshev}
		\eps_{\deltaRandKey} \le \deltaRandKey
		+\left(3+2\delta_{rk}\right) \sqrt Cq_{\RO{G}}\sigma_\deltaRandKey
		\enspace \ifTightOnSpace . \else , \fi
	\end{equation}
	\ifTightOnSpace \else
	and the additive error term $\eps_{\gamma}$ is given by
	\begin{equation*}
		\eps_{\gamma}=24%
		\numberDecQueries(q_{\RO G}+2\numberDecQueries)2^{-\gamma/2}+4\numberDecQueries \cdot2^{-\gamma}.
	\end{equation*}
	\fi
	
	Here, $\deltaRandKey, \sigma_\deltaRandKey$ and $\gamma$ are the worst-case random-key decryption error rate,
	the maximal decryption failure variance under random keys, and the ciphertext spreadness parameter, respectively.
	If the Gaussian tail bound 
	\begin{equation*}
		\max_m\Pr_{r\leftarrow \RSpace}\left[\Pr_{(\sk,\pk)}[\Decrypt(\sk, \Encrypt(\pk, m;r))\neq m]\ge t\right]\le \exp\left(-\beta(t-\deltaRandKey)^2\right)
	\end{equation*}
	holds for some parameter $\beta$, the dependency of $\epsilon_\deltaRandKey$ on $q_{\RO{G}}$ can be improved to
	\begin{equation}\label{eq:epsdelta-Gausstail}
		\eps_{\deltaRandKey}\le \deltaRandKey+2\eta_1\sqrt{\ln\left(\eta_2 q^2_{\RO{G}}\right)}%
	\end{equation}
	with $\eta_1=\beta ^{-1/2}$ and $\eta_2= 2C\sqrt\beta $.
	\ifTightOnSpace $\Ad{B}_\IND$'s, $\Ad{B}_\OW$'s and \Ad{C}'s running time \else The running time of the adversaries $\Ad{B}_\IND$, $\Ad{B}_\OW$ and \Ad{C} \fi are all bounded by
	\begin{equation*}
		\Time(A)+\Time(\SupOr, q_{\RO G}+q_{\RO H}+\numberDecapsQueries)+O(\numberDecapsQueries).
	\end{equation*}
\end{corollary}

We continue to prove the corollary of \cref{thm:opt-augQROM} which yields \cref{thm:QFFPNoKey-gauss}
\begin{corollary}\label{cor:optbound-gaussian-tail}
	Let $F$, $I$, and $C$ be as in \cref{thm:opt-augQROM}. Let furthermore $ \mathbb E[F(x,H(x))]\le\mu$ for some $\mu\in[0,1]$ and suppose in addition that we can set $G(t)=c \exp(-\beta (t-\mu)^2)$ with $\beta\ge e/(2C)$ .  Then, for an algorithm $\Ad A^\SupOr$ making at most $q\ge 1$ quantum queries to $\SRO$
	\begin{equation}
		\mathbb E_{x\leftarrow A^{\SupOr}}[F(x,\SRO)]\le \mu+2\beta^{-1/2}\sqrt{\ln(2C\sqrt\beta )+2\ln(q)}
	\end{equation}
\end{corollary}
\begin{proof}
	Here, we directly use \cref{lem:search-in-aug-QROM} for simplicity (a slightly tighter but less pretty bound can be obtained from \cref{thm:opt-augQROM}). For any $a\in[0,1]$, we have
	\begin{equation}
		\mathbb E_{x\leftarrow A^{\SupOr}}[F(x, \SRO(x))]\le a+\Pr_{x\leftarrow A^{\SupOr}}[F(x, \SRO(x))\ge a].
	\end{equation}
	Setting $a=\mu+\hat a$ and using the definition of $G$ as well as \cref{lem:search-in-aug-QROM} (in the same way as in the proof of \cref{thm:opt-augQROM}), we obtain
	\begin{equation}
		\Pr_{x\leftarrow A^{\SupOr}}[F(x, \SRO(x))\ge \mu+\hat a]\le Cq^2\exp(-\beta \hat a^2)
	\end{equation}
	Setting $\hat a=\sqrt{\ln(2Cq^2\sqrt{\beta})/\beta}$ and using $\ln(2Cq^2\sqrt{\beta})\ge 1$, we obtain
	\begin{align}
		\mathbb E_{x\leftarrow A^{\SupOr}}[F(x,\SRO(x))]\le &\mu+\beta^{-1/2}\left(1+\sqrt{\ln(2Cq^2\sqrt{\beta})}\right)\\
		\le &\mu+2\beta^{-1/2}\sqrt{\ln(2C\sqrt\beta )+2\ln(q)},
	\end{align}
	where $\ln$ is the natural logarithm.
	\qed\end{proof}

 \fi
	
	\ifTightOnSpace %
\ifTightOnSpace %
	\section{More details about the extractable QRO simulator \SupOr} \label{sec:eCO}
	In this section we include some more details about the extractable QRO simulator \SupOr for the reader's convenience.	

	In section \ref{sec:QROM:OWTH}, we will use the superposition oracle to analyze algorithms that make parallel (quantum) queries to a random oracle. For a standard quantum oracle for a function \RO{H}, an algorithm that makes $w$ parallel queries sends $2w$ quantum regisers $X_i, Y_i$, $i=1,...,w$ to the oracle. The query is then processed by applying the oracle unitary $U_H$ to each pair $X_i, Y_i$. We can think of this parallel-query oracle as being implemented by a simulator with query access to the non-parallel oracle for \RO{H}:
	upon input regisers $X_i, Y_i$, $i=1,...,w$ the simulator sends the register pairs $X_i, Y_i$ to its own oracle sequentially. Using this trivial reformulation, it is clear how parallel queries can be handled when \RO{H} is a random function and the oracle for \RO{H} is simulated using the compressed oracle. 
\fi 

We now state the parts of \cite[Theorem 3.4]{DFMS21} that we will use in our proofs.
\begin{lemma}[Part of theorem 3.4 in \cite{DFMS21}]\label{lem:extractable-sim-properties}
	The extractable RO simulator \SupOr described above, with interfaces \SRO and \SE, satisfies the following properties. 
	\begin{itemize}\vspace{-1ex}\setlength{\parskip}{0.5ex}
		\item[1.\!] If $\SE$ is unused, $\SupOr$ is perfectly indistinguishable from a random oracle.  \\[-2ex]
		\item[2.a] Any two subsequent independent queries to $\SRO$ %
		commute. In particular,  two subsequent {\em classical} $\SRO$-queries with the same input $x$ give identical responses. 
		\item[2.b] Any two subsequent independent queries to  $\SE$ %
		commute. In particular,  two subsequent %
		$\SE$-queries with the same input $t$ give identical responses. 
		\item[2.c] Any two subsequent independent queries to $\SE$ and $\SRO$ %
		$8\sqrt{2\Gamma(f)/2^n}$-almost-commute. \\[-2ex]
	\end{itemize}
	Furthermore, the total runtime and quantum memory footprint of $\SupOr$, when using the sparse representation of the compressed oracle, are bounded as
	\begin{align*}
		\Time(\SupOr,q_{RO}, q_E)&= O\bigl(q_{RO} \cdot q_E\cdot \mathrm{Time}[f] + q_{RO}^2\bigr),\text{ and }\\
		\QMem(\SupOr,q_{RO}, q_E)&=O\bigl( q_{RO}\bigr)  .
	\end{align*}
	\ifTightOnSpace
	where $q_E(q_{RO})$ is the number of queries to $\SE(\SRO)$%
	.
	\else
	where $q_E$ and $q_{RO}$ are the number of queries to $\SE$ and $\SRO$, respectively%
	.
	\fi
\end{lemma} %
 \fi

\fi
\end{document}